\DeclareMathOperator{\OO}{O}
\DeclareMathOperator{\oo}{o}
\DeclareMathOperator{\bv}{\mathbf{v}}
\DeclareMathOperator{\bu}{\mathbf{u}}
\DeclareMathOperator{\tr}{Tr}
\DeclareMathOperator{\re}{Re}
\DeclareMathOperator{\im}{Im}
\theoremstyle{plain}
\theoremstyle{remark}
\newcommand{\ub}{\mathbf{u}}
\newcommand{\Ub}{\mathbf{U}}
\newcommand{\ctG}{\widetilde{\mathcal{G}}}
\newcommand{\db}{\mathbf{d}}
\newcommand{\Db}{\mathbf{D}}
\newcommand{\vb}{\mathbf{v}}
\newcommand{\wb}{\mathbf{w}}
\newcommand{\T}{\mathrm{T}}
\newcommand{\ri}{\mathrm{i}}
\newcommand{\wtG}{\widetilde{G}}
\newcommand{\norm}[1]{\left\lVert#1\right\rVert}
\newcommand{\wt}{\widetilde}
\newcommand{\bxi}{\bm{\xi}}
\newcommand{\bzeta}{\bm{\zeta}}
\newcommand{\fa}{{\mathfrak a}}
\newcommand{\fb}{{\mathfrak b}}
\theoremstyle{plain} 
\newtheorem{theorem}{Theorem}[section]
\newtheorem*{theorem*}{Theorem}
\newtheorem{lemma}[theorem]{Lemma}
\newtheorem{assumption}[theorem]{Assumption}
\newtheorem*{lemma*}{Lemma}
\newtheorem*{corollary*}{Corollary}
\newtheorem{proposition}[theorem]{Proposition}
\newtheorem*{proposition*}{Proposition}
\newtheorem{definition}[theorem]{Definition}
\newtheorem*{definition*}{Definition}
\theoremstyle{remark}
\newtheorem*{example*}{Example}
\newtheorem*{remark*}{Remark}
\newtheorem*{remarks*}{Remarks}
\DeclareMathOperator*{\argmin}{arg\,min}
\begin{document}

\title[OS under noise with separable covariance structure]{Data-Driven optimal shrinkage of singular values under high-dimensional noise with separable covariance structure}
	
	\author{Pei-Chun Su}

	\address{Department of Mathematics, Duke University, Durham, NC, USA}

		\author{Hau-Tieng Wu}

	\address{Department of Mathematics and Department of Statistical Science, Duke University, Durham, NC, USA}

	\maketitle
	
\begin{abstract}
We develop a data-driven optimal shrinkage algorithm for matrix denoising in the presence of high-dimensional noise with a separable covariance structure; that is, the noise is colored and dependent across samples. The algorithm, coined {\em extended OptShrink} (eOptShrink) depends on the asymptotic behavior of singular values and singular vectors of the random matrix associated with the noisy data. Based on the developed theory, including the sticking property of non-outlier singular values and delocalization of the non-outlier singular vectors associated with weak signals with a convergence rate, and the spectral behavior of outlier singular values and vectors, we develop three estimators, each of these has its own interest. 
First, we design a novel rank estimator, based on which we provide 
an estimator for the spectral distribution of the pure noise matrix, and hence the optimal shrinker called eOptShrink. In this algorithm we do not need to estimate the separable covariance structure of the noise. A theoretical guarantee of these estimators with a convergence rate is given. 
On the application side, in addition to a series of numerical simulations with a comparison with various state-of-the-art optimal shrinkage algorithms, we apply eOptShrink to extract maternal and fetal electrocardiograms from the single channel trans-abdominal maternal electrocardiogram.
\end{abstract}

\noindent%
{\it Keywords:} matrix denoising; random matrix; high dimensional noise noise; spike model; separable covariance.

\section{Introduction}
\label{sec:intro}
We {aim to denoise a $p \times n$ data matrix $\widetilde{S}$, comprised of $n\in\mathbb{N}$ noisy samples of dimension $p\in\mathbb{N}$. The data matrix} is modeled as:
\begin{equation}\label{eq_model}
\widetilde{S}=S+Z=\sum_{i=1}^r d_i \ub_i \vb_i^\top+Z\in \mathbb{R}^{p\times n},
\end{equation}
where $Z$ is a noise-only random matrix, potentially with a {dependence structure that will be detailed later}, $S$ denotes a low-rank signal matrix with the singular value decomposition (SVD) $\sum_{i=1}^r d_i \ub_i \vb_i^\top$, where $r\geq 1$ is assumed to be small compared with $p$ and $n$, $\ub_i\in\mathbb{R}^p$ and $\vb_i\in\mathbb{R}^n$ are left and right singular vectors respectively prescribing the signal, and $d_i>0$ are the associated singular values describing signal strength that may depend on $n$. {To simplify the discussion, we call the pair $\ub_i$ and $\vb_i$ the $i$-th signal and $d_i$ the $i$-th signal strength hereafter.}
{A weighting approach to recovering $S$ from $\wt S$ by SVD, widely known in the literature as the {\em singular value shrinkage}, for recovering $S$ has been actively studied, which was first mentioned, to the best of our knowledge, in \cite{eckart1936approximation,mirsky1960symmetric,golub1965calculating}.} %
The idea is to select a proper function $\varphi: [0, \infty) \rightarrow [0,\infty)$, {often nonlinear, and construct}  
\begin{equation}\label{eq_shrink}
\widehat{S}_{\varphi}=\sum_{i=1}^{p \wedge n} \varphi(\wt\lambda_i) \wt{\bxi}_i \wt{\bzeta}_i^\top
\end{equation} 
{as an estimate of $S$},
where ${\wt\lambda_1} \geq {\wt\lambda_2} \geq \cdots \geq {\wt\lambda_{p \wedge n}}\geq 0$ are common eigenvalues of $\widetilde{S}\widetilde{S}^\top$ and $\widetilde{S}^\top\widetilde{S}$ and $\{\wt{\bxi}_i\}$ and $\{\wt{\bzeta}_i\}$ are the left and right singular vectors of $\widetilde{S}$ respectively. $\varphi$ is termed as a shrinker.
{ By employing a loss function $L_n:\mathbb{R}^{p\times n}\times \mathbb{R}^{p\times n}\to \mathbb{R}_+$ to} quantify the discrepancy between $\widehat S_{\varphi}$ and $S$, the associated {\em optimal shrinker}, if exists, is defined as $\varphi^* := \argmin_{\phi} \lim_{n\to \infty} L_n(\widehat S_{\phi}, S)$. {Common loss functions include the Frobenius norm and operator norm of $\widehat S_{\varphi}-S$.} 
{This approach is termed as \textit{Optimal Shrinkage} (OS), as named in previous literature \cite{donoho2018optimal, GD}.}

{In this paper, we investigate the matrix denoising problem and develop the associated OS under the model \eqref{eq_model} in} the {\em high dimensional} setup, where $p=p(n)$ and $p/n\to \beta\in (0,\infty)$ as $n\to\infty$. 
We begin by considering the white noise as a special example, where $Z = X$ in \eqref{eq_model}, and the entries of $X$ are i.i.d. with zero mean, variance $\sigma^2/n$ with $\sigma>0$, and finite fourth moment. Under this setup, asymptotically the empirical spectral distribution (ESD) of $ZZ^\top$ follows the Marchenko-Pastur (MP) law \cite{marvcenko1967distribution}. 
A phase transition occurs when the signal strength $d_i$ exceeds a critical value, usually referred to as the \textit{BBP (Baik-Ben Arous-P\`ech\`e) phase transition} named after the authors of \cite{baik2005phase}. Due to these peculiar behaviors, we need to modify the traditional SVD truncation scheme \cite{golub1965calculating} to recover $S$ from $\wt S$. 
In \cite{GD}, the optimal shrinker $\varphi^*$ with various loss functions is derived using the closed form of the rightmost bulk edge $\lambda_+$, and $\varphi^*$ is determined solely by $\wt\lambda_i$, $\sigma$ and $\beta$. {See an earlier work \cite{shabalin2013reconstruction} as well. Note that we do not mention the similar but different OS for the covariance structure. Overall,} such OS approach under the white noise assumption (referred to as TRAD hereafter) has found wide applications, such as fetal electrocardiogram (fECG) extraction from the trans-abdominal maternal ECG (ta-mECG) \cite{su2019recovery}, ECG T-wave quality evaluation \cite{su2020robust}, otoacoustic emission signal denoising \cite{liu2021denoising}, stimulation artifact removal from intracranial electroencephalogram (EEG) \cite{alagapan2019diffusion}, and cardiogenic artifact removal from EEG \cite{chiu2022get}. 

We illustrate the application of { TRAD} to the fECG extraction problem in Figure \ref{Figure1}. The ta-mECG recorded from the mother's abdomen during pregnancy is shown in Figure \ref{Figure1}(a). It is truncated into pieces marked by the red boxes and aligned based on the maternal R peaks. The fetal R peaks are labeled by blue crosses. Figure \ref{Figure1}(b) illustrates the associated data matrix $\widetilde S$, where the maternal ECG (mECG) is considered the signal and saved as matrix $S$, while the fECG and inevitable noise are jointly considered as the noise and saved as matrix $Z$. The results of TRAD, represented by $\widehat{S}$, are shown in red curves in Figure \ref{Figure1}(c), representing the recovered mECG after denoise. In \ref{Figure1}(d), $S-\widehat{S}$ is depicted, representing the recovered fECG. The decomposition of mECG and fECG is accomplished, allowing for clearer visualization of the fetal R peaks in Figure \ref{Figure1}(d). However, there are noticeable poorly recovered segments, as indicated by the black box, which is magnified in Figure \ref{Figure1}(e). For further details, a literature review of relevant algorithms, and clinical applications, readers are referred to \cite{su2019recovery}.

While TRAD has been successfully applied to {various scenarios, the assumption of white noise is overly restrictive}. For instance, physiologically, the fECG (considered noise) typically deviates from white characteristics. {Within a sample, it is evident that the fECG exhibits dependencies due to electrophysiology. Moreover, across different samples, there is dependence due to the long-range dependence in fECG recordings. In essence, in the fECG problem, noise manifests both intra-sample and inter-sample dependencies. Refer to Figure \ref{Figure1}(e) for an illustration, and Figure \ref{Figure1}(f) for the nontrivial estimated covariance structure of the fECG.} Another source of non-white noise arises from the application of different filters. Even if we assume the noise to be white, commonly employed filters can disrupt this white structure. Therefore, a modification of the white noise model and TRAD to handle scenarios with more complicated noise is needed.

{A natural generalization is considering the high dimensional setting with a dependent noise structure in \eqref{eq_model}.} This model is a generalization of the traditional high dimensional spiked model \cite{bai2008central,bai2012sample, baik2006eigenvalues}. It has been shown in \cite{limitpaper} that if the ESD of singular values from $Z$ converges to a non-random compactly supported probability measure, the top singular values of $\widetilde{S}$ and associated left and right singular vectors are all biased from those of $S$. These biases converge to a closed form depending on the $D$-transform \cite{limitpaper} and the Stieltjes transform \cite{widder1938stieltjes} of the limiting singular value distribution of the noise only matrix $Z$. {Moreover, we also have the BBP transition. Under this setup,} the optimal shrinker when $L_n$ is the Frobenius norm loss is derived in \cite{nadakuditi2014optshrink} (See Proposition \ref{prop_optimal_shrinker} below) under the following assumptions. First, the ESD of $Z$ asymptotically converges to a non-random compactly supported probability measure $\mu_Z$ so that the derivative of the $D$-transform of $\mu_Z$ is $-\infty$ at the right most edge of the compact support. Second, a critical delocalization conjecture of singular vectors holds. An OS algorithm named \textit{OptShrink} is provided in \cite{nadakuditi2014optshrink} by approximating the $D$-transform with respect to the truncated singular value distribution of $\wt S$ assuming the knowledge of rank. OptShrink is very general, but to our knowledge, it is challenging to directly apply it to real-world problems, unless some assumptions are imposed and we know the rank.

{Inspired by the OptShrink algorithm and driven by practical requirements and common dependence structures encountered, such as the fECG extraction problem described above,} this paper concentrates on noise exhibiting a separable covariance structure. \cite{paul2009no}:
\begin{equation}\label{colored and dependent noise model}
Z = A^{1/2}XB^{1/2}\,, 
\end{equation}
where {$X$ is a random matrix with independent entries and some moment conditions that will be specified later in Section \ref{subsec_ass}}, and $A$ and $B$ are respectively $p \times p$ and $n \times n$ deterministic positive-definite matrices that describe the colorness and dependence structure of noise, respectively. A random matrix satisfying the separable covariance structure has been studied and applied to many problems, like spatiotemporal analysis, wireless communication and several recent applications \cite{efron2009set,hoff2016limitations,gerard2015equivariant,drton2021existence}. When $B=I_n$, it is known that the ESD of $ZZ^\top$ converges to the {\em deformed MP} law \cite{marvcenko1967distribution}, and the distribution of the largest eigenvalue follows the Tracy-Widom law \cite{TW,TW1},  or commonly referred to as the {\em edge universality}. The edge universality has been proved for $A = I_p$ \cite{johnstone2001distribution,
pillai2014universality} and for general $A$ \cite{el2007tracy, onatski2008tracy, bao2015universality, lee2016tracy, ding2018necessary, Knowles2017} under various moment assumptions on the entries for $X$. For the sample singular vectors, the delocalization \cite{Knowles2017, o2016eigenvectors} and ergodicity \cite{principal} have been {shown}. When the general separable covariance structure is assumed, the convergence of the ESD to a limiting law is shown in \cite{paul2009no, wang2014limiting, lixin2007spectral}, the edge universality and delocalization of eigenvectors is established in \cite{ding2018necessary, yang2019edge}, and the local law (local estimates of the resolvents or the Green's functions) of $ZZ^\top$ is recently proved in \cite{xi2020convergence, yang2019edge}.

{Building upon the aforementioned theoretical results under various random matrix assumptions, significant efforts have been dedicated to designing matrix denoising algorithms tailored to the separable covariance noise model}. The special $B=I$ is discussed in \cite{leeb2021optimal} and the general $B$ is discussed in  \cite{leeb2021matrix}. They propose applying the whitening technique before applying TRAD; that is, performing TRAD on the whitened matrix $\widehat A^{-1/2}\wt S \widehat B^{-1/2}$, where $\widehat A$ and $\widehat B$ are the estimated $A$ and $B$ respectively. {This process yields $\tilde Q$, from which $S$ is estimated as $\widehat A^{1/2}\tilde Q\widehat B^{1/2}$, which is unwhitening.} This approach is effective when accurate estimates of $A$ and $B$ are obtainable, such as when both are diagonal. A fast numerical algorithm for $\lambda_+$ and the Stieltjes transform is available \cite{leeb2021rapid} when $Z=A^{1/2}XB^{1/2}$ and accurate estimates of the asymptotic spectral distribution of $A$ and $B$ are feasible. In \cite{donoho2020screenot}, if the upper bound of $r$ is known, $d_i$'s are distinct and several assumptions about the asymptotic ESD of $Z$ hold, an algorithm named {\em ScreeNOT} {({\em Scree} reminds the commonly used Scree plot \cite{cattell1966scree} in practice and {\em NOT} stands for Noise-adaptive Optimal Thresholding) is introduced, which evaluates the {\em optimal threshold} $\vartheta_{\texttt{SN}}>0$ via optimizing the Frobenius norm of the recovered data matrix and the clean data matrix and a technique termed \textit{imputation}. The optimal threshold could be applied to
discard weak components that are very close to the noise bulk edge, and approximates the underlying signal matrix,} or denosing the matrix, by retaining the top $\widehat r_{\texttt{SN}}$ singular values via $\widehat S_{\texttt{SN}}:=\sum_{i=1}^{\widehat r_{\texttt{SN}}}\sqrt{\widetilde{\lambda}_i}\widetilde{\boldsymbol\xi_i}\widetilde{\boldsymbol\eta_i}^\top$ \cite[(1.2)]{donoho2020screenot}, where
\begin{equation}\label{eq_SN}
\widehat r_{\texttt{SN}} = | \{\wt \lambda_i|\wt\lambda_i>\vartheta_{\texttt{SN}}\}|
\end{equation}
could be viewed as an estimated rank of the clean data matrix. 
The authors in \cite{donoho2020screenot} also provide a quantitatively interpretation of the Scree plot heuristic \cite{cattell1966scree}.  

\begin{figure}[hbt!]
\centering
\includegraphics[trim=120 20 140 60, clip, width=0.96\linewidth]{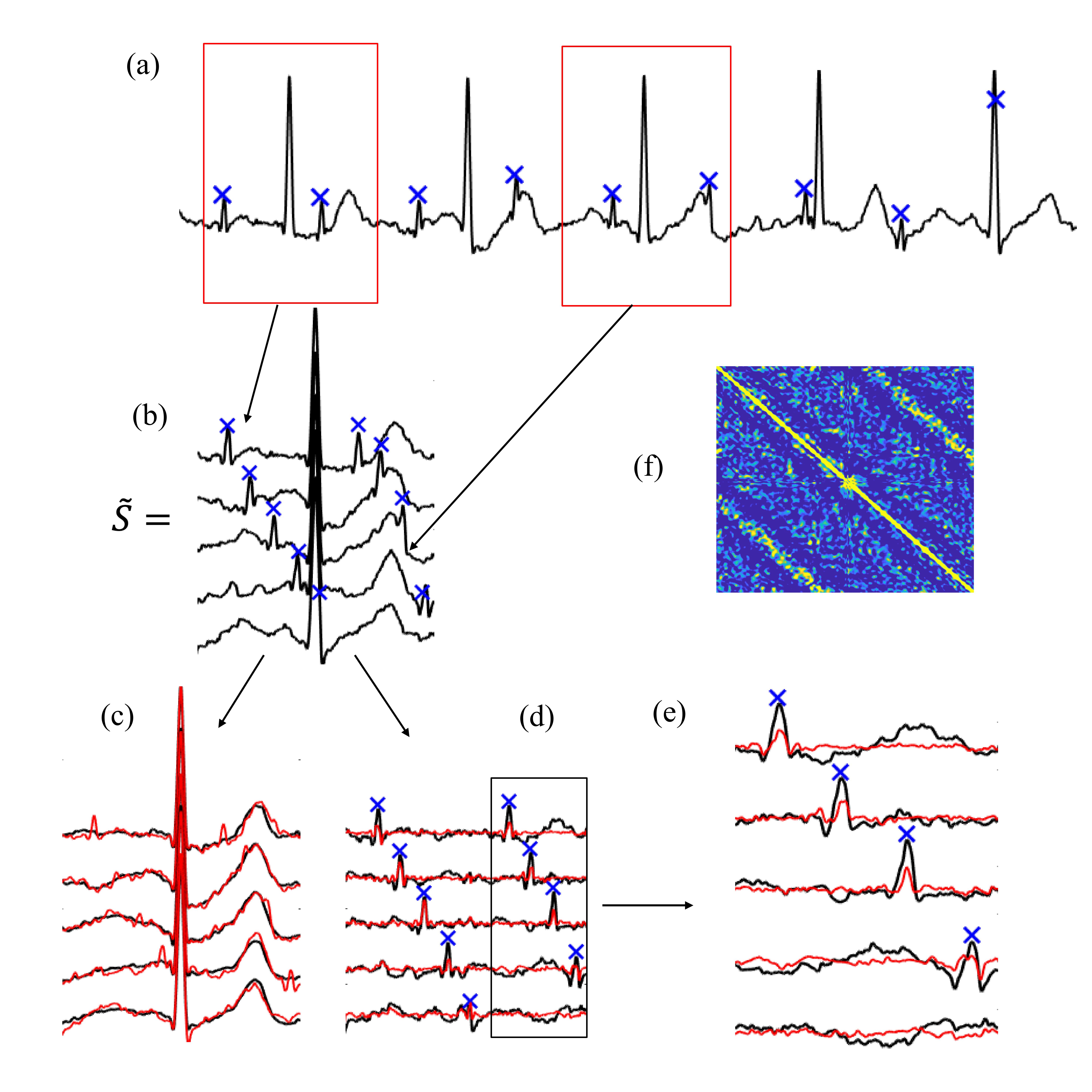}
\caption{\small  An illustration of extracting fECG from the ta-mECG shown in (a), where the fetal R peaks labeled by experts are marked as blue crosses. (b) is the data matrix including the pieces truncated from the ta-mECG and aligned by the maternal R peaks. The results of TRAD and our eOptShrink (both with the operator norm) as the estimated mECG are shown as red and black curves shown in (c) respectively.  By subtracting the estimated mECG from the ta-mECG, we obtain the estimated fECG shown in (d). The mECG estimation error could be visualized in the estimated fECG indicated by the black box, zoomed in in (e). The covariance structure of the fECG as noise is shown in (f).} \label{Figure1}
\end{figure}

Inspired by the success and limitation of existing algorithms and their broad application in data science, 
{within the separable covariance noise model \eqref{colored and dependent noise model}, in this paper} we introduce a novel {OS algorithm for matrix denoisng tailored} for real world data. This algorithm extends OptShrink \cite{nadakuditi2014optshrink} and is coined {\em extended OptShrink} (eOptShrink). Our contributions are multifold. 
First, we extend the results in \cite{benaych2011fluctuations} and derive the asymptotic behavior of the outlier singular values and associated biased singular vectors of the noisy data matrix $\wt S = S + A^{1/2}XB^{1/2}$. This includes the BBP phase transition, the sticking properties of  non-outlier  singular values, and the delocalization of the non-outlier singular vectors, with established convergence rate. These results {align with} those for deformed Wigner matrices \cite{knowles2013isotropic, knowles2014outliers}, deformed rectangular matrices \cite{limitpaper, ding2020high} and spiked covariance matrices \cite{paul2007asymptotics, principal, DY2019,ding2021spiked}, although they have not yet been established for our setup to the best of our knowledge. Recall that the delocalization is an essential condition needed for OptShrink \cite{nadakuditi2014optshrink}. 
Second, based on the developed theorems, we propose a fully data-driven OS algorithm leveraging these results. {
While OptShrink \cite{nadakuditi2014optshrink} typically requires spike rank information or an overestimated rank, our approach substitutes this requirement with a} data-driven {\em { effective} rank} estimation from $\wt S$. {Numerically we show that the proposed effective rank estimation outperforms $\widehat r_{\texttt{SN}}$ in \eqref{eq_SN} by ScreeNOT, and the rank estimation has its own interest. Moreover, whereas OptShrink estimates the $D$-transform based on the ``truncated'' singular value distribution,
our approach replaces this process with} a novel method to estimate the spectral density distribution of $Z$. This method relies on accurately recovering the singular values of $Z$ from those of $\wt S$ that are perturbed by signals. With these two components, we develop a more precise estimate of the Stieltjes transform and $D$-transform compared to existing imputation \cite{donoho2020screenot} and truncation \cite{nadakuditi2014optshrink} methods. Additionally, we extend OptShrink \cite{nadakuditi2014optshrink} to accommodate various loss functions beyond the Frobenius norm. %
Numerically, we evaluate the proposed algorithm by comparing it with existing algorithms \cite{nadakuditi2014optshrink,GD,donoho2020screenot} in simulated datasets and a real-world ta-mECG database. Overall, eOptShrink outperforms other algorithms. 

{
The paper is structured as follows. Section \ref{section prelim} provides essential mathematical background and summarizes optimal shrinkage, along with outlining the precise model assumptions. In Section \ref{section main result}, we elaborate on the main theoretical results presented in this study. Section \ref{section proposed algo} introduces our proposed algorithm, eOptShrink. Section \ref{section numreical eva} conducts a series of numerical evaluations of eOptShrink, including comparisons with existing algorithms using both simulated and real databases. Additional numerical simulations, technical details, and proofs of theorems can be found in the appendix.
}

NOTATION: For any random variable ${X}$, denote ${\wt X}$ as the perturbed ${X}$, and ${\widehat X}$ as the estimator of ${X}$. 
$C$ denotes a generic positive constant, whose value may change from one line to another. For sequences $\{a_n\}$ and $\{b_n\}$ indexed by $n$, $a_n = \OO(b_n)$ means that $|a_n| \le C|b_n|$ for some constant $C>0$ as $n\to \infty$ and $a_n=\oo(b_n)$ means that $|a_n| \le c_n |b_n|$ for some positive sequence $c_n\downarrow 0$ as $n\to \infty$. We also use $a_n \lesssim b_n$ if $a_n = \OO(b_n)$, $|a_n|\ll|b_n|$ if $a_n=\oo(b_n)$, and $a_n \asymp b_n$ if $a_n = \OO(b_n)$ and $b_n = \OO(a_n)$. For a matrix $M$, $\|M\|$ means its operator norm and we may abuse the notation and write $M=\OO(a)$ if $\|M\|=\OO(a)$. $I$ is reserved for the identity matrix of any dimension. For $a,b\in \mathbb{R}$, $a\vee b$ and $a\wedge b$ mean the maximal and minimal value of $a$ and $b$ respectively. We  reserve $E+i\eta$ to indicate a point in $\mathbb{C}^+$, the upper half plan of $\mathbb{C}$. See Tables \ref{tab:notations} and \ref{tab:notations2} for a list of notations. 

\section{Preliminaries}\label{section prelim}

\subsection{Background on random matrix theory}
{In this subsection, we review the following results necessary for our exploration.}
The Stieltjes transform of a probability measure $\nu$ {on $\mathbb{R}$} is $m_\nu(z):=\int{\frac{1}{\lambda-z}}d\nu(\lambda)$, where $z \in \mathbb C^+$. The ESD of an $n \times n$ symmetric matrix $H$ is defined as $\pi_H:=\frac{1}{n} \sum_{i=1}^n \delta_{\ell_i}$, where $\ell_1 \geq \ell_2 \geq \cdots \geq \ell_n$ are the eigenvalues of $H$ and $\delta$ means the Dirac delta measure. Denote the common eigenvalues of $ZZ^\top$ and $Z^\top Z$ (respectively $\widetilde{S}\widetilde{S}^\top$ and $\widetilde{S}^\top\widetilde{S}$) by $\{\lambda_i\}_{i=1}^{p \wedge n}$ (respectively $\{\wt\lambda_i\}_{i=1}^{p \wedge n}$).
For $z \in \mathbb{C}^+,$ denote the Green functions {(or resolvents)} of  $ZZ^\top$ and $Z^\top Z$ as 
\begin{equation}\label{eq_g1g2}
\mathcal{G}_1(z):=(ZZ^\top-z{I_p})^{-1} \ \mbox{ and }\  \mathcal{G}_2(z):=(Z^\top Z-z{I_n})^{-1}
\end{equation}
respectively.
The Stieltjes transforms of ESDs of $ZZ^\top$ and $Z^\top Z$ are formulated respectively by the Green functions $\mathcal{G}_1(z)$ and $\mathcal{G}_2(z)$ via
\begin{equation}\label{eq_green}
m_1(z):=\frac{1}{p} \text{Tr} \mathcal{G}_1(z)\ \mbox{ and }\  m_2(z):=\frac{1}{n} \text{Tr} \mathcal{G}_2(z). 
\end{equation}
Denote $\beta_n:=\frac{p}{n}.$ By a direct calculation, we have the relationship $m_2(z)=-\frac{1-\beta_n}{z}+\beta_n m_1(z)$. 
Similarly, we denote the Green functions of  $\wt S \wt S^\top$ and $\wt S^\top \wt S$ as 
\begin{equation}\label{eq_wtg1g2}
\mathcal{\wt G}_1(z):=(\wt S \wt S^\top-z{I_p})^{-1}\ \mbox{ and }\  \mathcal{\wt G}_2(z):=(\wt S ^\top \wt S-z{I_n})^{-1}
\end{equation}
respectively and the Stieltjes transforms of ESDs of $\wt S \wt S^\top$ and $\wt S^\top \wt S$ as 
\begin{equation}\label{eq_wtgreen}
\wt m_1(z):=\frac{1}{p} \text{Tr} \mathcal{\wt G}_1(z)\ \mbox{ and }\  \wt m_2(z):=\frac{1}{n} \text{Tr} \mathcal{\wt G}_2(z) \,.
\end{equation}

Next we summarize {existing knowledge about the behavior of $m_1(z)$ and $m_2(z)$ under mild assumptions. Assume} when $n \to \infty$, if $\beta_n \to \beta \in (0,\infty)$ and $\pi_A$ and $\pi_B$ converge weakly to deterministic probability distributions {$\rho_{A\infty}\neq \delta_0$ and $\rho_{B\infty}\neq \delta_0$}, and the entries of $X$ are independent with mean $0$, the same variance and finite fourth moments. {The first result is from \cite[Chapter 4]{lixin2007spectral} (See also \cite[Proposition 1.1]{couillet2014analysis} for a summary), which says that when $n\to\infty$, almost surely $\pi_{ZZ^\top}$ and $\pi_{Z^\top Z}$ converge to deterministic distributions, denoted as $\rho_{1 \infty}$ and $\rho_{2 \infty}$, respectively {with desirable properties. Without loss of generality, we detail this result assuming $\tau\leq \beta_n\leq 1$ for all $n$ and $\beta_n\to \beta$, where $\tau\in (0,1)$, and the case $1\leq \beta_n\leq \tau^{-1}$ holds similarly by taking a transpose.} This result comes from analyzing the Stieltjes transforms of available ESDs. {For any $z\in \mathbb{C}^+$,} denote $(\mathsf{M}_{1\infty}(z),\mathsf{M}_{2\infty}(z))\in \mathbb C^+\times \mathbb C^+$ as} the unique solution to the following system of self-consistent equations  \cite[Chapter 4]{lixin2007spectral}
{
\begin{equation}
\mathsf{M}_{1\infty}(z) = \beta\int\frac{x}{-z[1+x\mathsf{M}_{2\infty}(z) ]} \rho_{A\infty}(d x) \ \mbox{ and } \  \mathsf{M}_{2\infty}(z) =  \int\frac{x}{-z[1+x\mathsf{M}_{1\infty}(z)]} \rho_{B\infty}(d x) \,.
\end{equation}
Instead of working on the above two equations, it is sometimes more convenient to} consider a function $f_\infty$ on $\mathbb{C}^+\times \mathbb{C}^+$ satisfying
\begin{equation}\label{separable_MP}
f_\infty(z,m):=- m + \int\frac{x}{-z+x\beta \int\frac{t}{1+tm} \rho_{A\infty}(d t)} \rho_{B\infty}(d x)\,,
\end{equation}
{so that $\mathsf{M}_{2\infty}(z)$ is the unique solution to the equation $f_\infty(z,m)=0$; that is,}
$f_\infty(z,\mathsf{M}_{2\infty}(z))=0$ 
for $z\in \mathbb C^+$. {
With the above quantities, we define 
\begin{equation}\label{eq_m1infm2inf}
m_{1\infty}(z):= \int\frac{1}{-z[1+x\mathsf{M}_{2\infty}(z) ]} \rho_{A\infty}(d x)\ \mbox{ and }\ 
m_{2\infty}(z):= \int\frac{1}{-z[1+x\mathsf{M}_{1\infty}(z)]} \rho_{B\infty}(d x)\,,
\end{equation}
{where $m_{1\infty}$ is the Stieltje's transform of a deterministic probability measure $\mu_{1\infty}$ on $\mathbb{R}$ with the support in $[0,\infty)$ and $m_{2\infty}$ is the Stieltje's transform of a deterministic probability measure $\mu_{2\infty}=\beta\mu_{1\infty}+(1-\beta)\delta_0$  on $\mathbb{R}$ with the support in $[0,\infty)$. Moreover, for all $E\in \mathbb{R}\backslash\{0\}$, $\lim_{\eta\downarrow 0} m_{1 \infty}(E+i \eta)$ and $\lim_{\eta\downarrow 0} m_{2 \infty}(E+i \eta)$ exist, and if we denote 
\begin{equation}\label{extension of m1c and m2c to real}
m_{1\infty}(E):=\lim_{\eta\downarrow 0} m_{1 \infty}(E+i \eta)\ \mbox{ and }\ m_{2\infty}(E):=\lim_{\eta\downarrow 0} m_{2 \infty}(E+i \eta)\,, 
\end{equation}
it is shown in \cite[Theorem 3.1]{couillet2014analysis} that $\operatorname{Im}\, m_{1\infty}(E)$ and $\operatorname{Im}\, m_{2\infty}(E)$ are continuous on $E\in \mathbb{R}\backslash\{0\}$, and the probability measure $\mu_{1 \infty}$ has a continuous density function, which is denoted as $\rho_{1\infty}$ on $E\in (0,\infty)$. It can be further shown that $\rho_{1\infty}$ is analytic for every $E\in (0,\infty)$ for which $\rho_{1\infty}(E)>0$ \cite[Theorem 3.2]{couillet2014analysis}.  Since $\mu_{2\infty}=\beta\mu_{1\infty}+(1-\beta)\delta_0$ on $\mathbb{R}$, we similarly have the continuous density function $\rho_{2\infty}$ on $\mathbb{R}\backslash\{0\}$ associated with the probability measure $\mu_{2\infty}$.} 
See \cite[Section 2]{couillet2014analysis} for other properties of $\mu_{1\infty}$. We also know that when $n\to \infty$, almost surely $\pi_{ZZ^\top}$ and $\pi_{Z^\top Z}$ converge weakly to $\rho_{1\infty}(x)dx$ and $\rho_{2\infty}(x)dx$ respectively \cite[Chapter 4]{lixin2007spectral} (See also \cite[Proposition 1.1]{couillet2014analysis}).
Moreover, we have the edge universality and the averaged local law \cite{yang2019edge}}.

{
Several properties of $\mathsf{M}_{1\infty}$ and $\mathsf{M}_{2\infty}$ have been established as well. $\mathsf{M}_{1\infty}$ ($\mathsf{M}_{2\infty}$ respectively) is the Stieltje's transform of a Radon positive measure $q_{1\infty}$ ($q_{2\infty}$ respectively) on $\mathbb{R}$ with the support in $[0,\infty)$ \cite[Proposition 1.2]{couillet2014analysis}. For all $E\in \mathbb{R}\backslash\{0\}$, $\lim_{\eta\downarrow 0} \mathsf{M}_{1\infty}(E+i \eta)$ and $\lim_{\eta\downarrow 0} \mathsf{M}_{2\infty}(E+i \eta)$ exist, and if we denote 
\[
\mathsf{M}_{1\infty}(E):=\lim_{\eta\downarrow 0} \mathsf{M}_{1\infty}(E+i \eta)\ \mbox{ and }\ \mathsf{M}_{2\infty}(E):=\lim_{\eta\downarrow 0} \mathsf{M}_{2\infty}(E+i \eta)\,, 
\]
$\operatorname{Im}\, \mathsf{M}_{1\infty}(E)$ and $\operatorname{Im}\,\mathsf{M}_{1\infty}(E)$ are continuous on $E\in \mathbb{R}\backslash\{0\}$ \cite[Theorem 3.1]{couillet2014analysis}. Also, the measures $q_{1\infty}$ and  $q_{2\infty}$ have continuous derivatives on $\mathbb{R}\backslash\{0\}$, which are denoted as $\wp_{1\infty}$ and $\wp_{2\infty}$ on $E\in \mathbb{R}\backslash\{0\}$, and we have \cite[Theorem 3.1]{couillet2014analysis}
$\texttt{supp}(\rho_{1\infty})\cap(0,\infty)=\texttt{supp}(\wp_{1\infty})\cap(0,\infty)=\texttt{supp}(\wp_{2\infty})\cap(0,\infty)$. 

The above results have parallel results under the finite $n$ setup \cite[Theorem 2.3, Theorem 2.4, Lemma 2.5] {yang2019edge}, using tools from \cite{Knowles2017,10.1214/17-ECP97,10.1214/17-EJP42}. We will develop our algorithm and theory based on these results, which we summarize here.
}
Similar to the asymptotic statement above, {for any $z\in \mathbb{C}^+$,} define $(\mathsf{M}_{1c}(z),\mathsf{M}_{2c}(z))\in \mathbb C^+\times \mathbb C^+$ as the unique solution to the following system of self-consistent equations \cite[Theorem 2.3]{yang2019edge}
\begin{equation*}
\mathsf{M}_{1c}(z) = \beta_n \int\frac{x}{-z[1+x\mathsf{M}_{2c}(z) ]} \pi_A(d x) \ \mbox{ and } \  \mathsf{M}_{2c}(z) =  \int\frac{x}{-z[1+x\mathsf{M}_{1c}(z)]} \pi_B(d x) \,.
\end{equation*}
Similarly, $\mathsf{M}_{2c}(z)$ is the unique solution to the function $f$ defined on $\mathbb{C}^+\times \mathbb{C}^+$ by
\[
f(z,m):=- m + \int\frac{x}{-z+x\beta_n \int\frac{t}{1+tm} \pi_A(d t)} \pi_B(d x)
\]
for $z\in \mathbb C^+$. 
Define 
\begin{equation}\label{eq_m1cm2c}
m_{1c}(z):= \int\frac{1}{-z[1+x\mathsf{M}_{2c}(z) ]} \pi_A(d x)\ \mbox{ and }\ 
m_{2c}(z):= \int\frac{1}{-z[1+x\mathsf{M}_{1c}(z)]} \pi_B(d x)\,,
\end{equation}
which are the Stieltjes transform of probability measures $\mu_{1c}$ and $\mu_{2c}$ with continuous density functions $\rho_{1c}$ and $\rho_{2c}$ \cite[Theorem 2.3]{yang2019edge}.
Moreover, $\mathsf{M}_{1c}(z)$ and $\mathsf{M}_{2c}(z)$ are the Stieltjes transforms of two Radon positive measures with the densities $\wp_{1c}$ and $\wp_{2c}$ with the same support of $\rho_{1c}$ on $\mathbb{R}\backslash \{0\}$. 
It has been known that when $n\to \infty$, $m_{1c}(z)-m_{1}(z)$ and $m_{2c}(z)-m_{2}(z)$ converge to zero uniformly {over properly chosen $z$} \cite[Theorem S.3.9(b)]{DY2019}.

{Next, we summarize relevant results for $\rho_{1c}$, $\rho_{2c}$, $\wp_{1c}$ and $\wp_{2c}$ when $\pi_{A}$ and $\pi_{B}$ are compactly supported with mild assumptions in the following lemma \cite[Lemma 2.5] {yang2019edge}.}

\begin{lemma}\label{lemma compact support of mu result}
Let $A$ and $B$ be deterministic symmetric matrices with eigenvalues $\sigma_1^a \geq \sigma_2^a \geq \ldots \geq \sigma_p^a$ and  $\sigma_1^b \geq \sigma_2^b \geq \ldots \geq \sigma_n^b$ respectively and satisfy
$\sigma_1^a\vee  \sigma_1^b \le \tau^{-1}$  and $\pi_A([0,\tau])\vee  \pi_B([0,\tau]) \le 1 - \tau$ for a small $\tau>0$.
Then, $\texttt{supp}(\rho_{1c})$ is compact with 
\begin{equation}
\texttt{supp}(\wp_{1c})=\texttt{supp}(\wp_{2c})=\texttt{supp}(\rho_{1c})=\bigcup_{k=1}^{L} [e_{2k}, e_{2k-1}]\,,
\end{equation}
where $e_1>e_2>\ldots>e_{2L}$.
Moreover, $(x,m)=(e_k, \mathsf{M}_{2c}(e_k))$ are the real solutions to the equations 
\[
f(x,m)=0\ \ \mbox{ and }\ \ \frac{\partial f}{\partial m}(x,m) = 0\,.
\] 
Finally, $e_1$ is bounded, $\mathsf{M}_{1c}(e_1) \in (-(\sigma_1^b)^{-1}, 0)$ and $\mathsf{M}_{2c}(e_1) \in (-(\sigma_1^a)^{-1}, 0)$.
\end{lemma}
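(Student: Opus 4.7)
The plan is to follow the Silverstein--Choi style analysis of the support of spectral distributions defined via self-consistent equations, adapted to the separable-covariance finite-$n$ setting. The argument proceeds in three main steps.

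First, I would establish that $\mathsf{M}_{1c}$ and $\mathsf{M}_{2c}$ extend continuously from $\mathbb{C}^+$ to $\mathbb{R}\setminus\{0\}$. The compactness assumption $\sigma_1^a \vee \sigma_1^b \le \tau^{-1}$ together with the mass bounds $\pi_A([0,\tau]) \vee \pi_B([0,\tau]) \le 1-\tau$ guarantees that the integrals defining $\mathsf{M}_{1c}$ and $\mathsf{M}_{2c}$ are uniformly controlled, so the Stieltjes inversion formula yields bounded densities $\wp_{1c},\wp_{2c}$ as the imaginary boundary values. Similarly $\rho_{1c}$ and $\rho_{2c}$ are obtained from the boundary values of $m_{1c}, m_{2c}$. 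The coincidence $\texttt{supp}(\wp_{1c})\cap(0,\infty)=\texttt{supp}(\wp_{2c})\cap(0,\infty)=\texttt{supp}(\rho_{1c})\cap(0,\infty)$ follows from the fact that all four quantities vanish at $E$ simultaneously, because the relations \eqref{eq_m1cm2c} show that $\operatorname{Im} m_{1c}(E)$ and $\operatorname{Im}\mathsf{M}_{2c}(E)$ vanish together (and similarly for the dual pair).

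Second, I would characterize the support. A point $E>0$ lies in $\mathbb{R}^+\setminus\texttt{supp}(\rho_{1c})$ iff $\operatorname{Im}\mathsf{M}_{2c}(E+i0)=0$, so on the complement of the support, $m=\mathsf{M}_{2c}(E)$ is a smooth, real-valued branch of the analytic equation $f(E,m)=0$. As $E$ approaches an edge, this real branch can only terminate when two real branches collide and become a complex conjugate pair — an event governed by the implicit function theorem, which requires $\partial_m f(E,m)=0$ precisely at the edges. This simultaneously yields the finite-union-of-intervals structure $\texttt{supp}(\rho_{1c})=\bigcup_{k=1}^L[e_{2k},e_{2k-1}]$ (finiteness follows because $f(x,m)$ is rational in both variables after integrating out the discrete measures $\pi_A,\pi_B$ by approximation) and the characterization of $(e_k,\mathsf{M}_{2c}(e_k))$ as joint real solutions of $f=0$ and $\partial_m f=0$.

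Third, I would prove the three bounds on $e_1$ and its associated values. For large real $E$, the self-consistent equations force $\mathsf{M}_{1c}(E), \mathsf{M}_{2c}(E) \in (-\infty,0)$, with $\mathsf{M}_{1c}(E),\mathsf{M}_{2c}(E)\to 0^-$ as $E\to\infty$. By continuity along the real branch for $E\ge e_1$, these quantities remain negative down to $e_1$. The strict lower bounds $\mathsf{M}_{1c}(e_1)>-(\sigma_1^b)^{-1}$ and $\mathsf{M}_{2c}(e_1)>-(\sigma_1^a)^{-1}$ follow because the integrand $x/(1+x\mathsf{M}_{1c})$ in the defining equation for $\mathsf{M}_{2c}$ (with $x$ in $\texttt{supp}\pi_A \subset [0,\sigma_1^a]$) must remain integrable and yield a finite real value — a violation of the bound would create a singularity at $x=\sigma_1^a$ forcing $\mathsf{M}_{2c}(e_1)$ off the real line. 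Boundedness of $e_1$ follows from the bounds on $\sigma_1^a,\sigma_1^b,\beta_n$ and the tail behavior of the self-consistent equations.

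The main obstacle is the rigorous critical-point analysis in step two: one must verify that real branches cannot terminate except at double roots of $f(\cdot,\cdot)$, which requires careful use of the implicit function theorem on the algebraic curve $\{f=0\}$ together with monotonicity of $E\mapsto\mathsf{M}_{2c}(E)$ on each connected component of the complement of the support. The argument here closely parallels \cite[Lemma 2.5]{yang2019edge} and the classical Silverstein--Choi analysis, with adjustments to accommodate the two simultaneous self-consistent equations arising from the separable covariance structure.
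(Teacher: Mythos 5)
The paper gives no proof of this lemma: it is quoted directly from \cite[Lemma 2.5]{yang2019edge}, with the accompanying remark that it parallels \cite[Propositions 3.3--3.4, Theorem 3.3]{couillet2014analysis}. Your Silverstein--Choi-style plan (boundary extension of $\mathsf{M}_{1c},\mathsf{M}_{2c}$ to the real line, support characterized through the real branches of $f(\cdot,\cdot)=0$ with edges as the double roots, and the sign constraints at the rightmost edge) is essentially the route of that cited proof, so there is no genuinely different approach to report. Two cautions if you were to write it out in full: the ``real branches collide'' step is exactly the crux, and in the rigorous treatments it is handled not by tracking branch collisions but by the monotonicity characterization of the complement of the support (outside the support $E\mapsto\mathsf{M}_{2c}(E)$ is a real, strictly increasing solution, and the complement is described as the image of the set where the inverse map $m\mapsto x(m)$ is increasing), which is what turns the statement ``edges are precisely the real solutions of $f=0$ and $\partial_m f=0$'' from a heuristic into a theorem --- your own ``main obstacle'' paragraph correctly flags this. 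Also, since $\pi_A$ and $\pi_B$ are the atomic spectral measures of the fixed matrices $A$ and $B$, $f(x,m)$ is exactly rational at each $n$ (no approximation is needed for the finiteness of $L$), and the strict bound $\mathsf{M}_{2c}(e_1)>-(\sigma_1^a)^{-1}$ is obtained by noting that $\sigma_1^a$ is an atom of $\pi_A$, so letting $\mathsf{M}_{2c}$ reach $-(\sigma_1^a)^{-1}$ would make the defining integral for $\mathsf{M}_{1c}$ diverge while $\mathsf{M}_{1c}$ must stay finite off the support; the mechanism is this divergence, not $\mathsf{M}_{2c}(e_1)$ being ``forced off the real line.''
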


Note that this is the parallel result of that shown in \cite[Proposition 3.3,\ Proposition 3.4,\ Theorem 3.3]{couillet2014analysis}, which says that if $\texttt{supp}(\rho_{A\infty})\cap (0,\infty)$ and $\texttt{supp}(\rho_{B\infty})\cap (0,\infty)$ consist of $K$ and $\tilde{K}$ connected components respectively, then $\texttt{supp}(\rho_{1\infty})\cap (0,\infty)$ consists of $L$ connected components, where $L\leq K\tilde K$ depends only on $\rho_{A\infty}$ and $\rho_{B\infty}$. 
Second, $\texttt{supp}(\rho_{A\infty})$ and $\texttt{supp}(\rho_{B\infty})$ are compact if and only if $\texttt{supp}(\rho_{1\infty})$ is compact with 
$\texttt{supp}(\rho_{1\infty})=\bigcup_{k=1}^{L} [e_{2k,\infty}, e_{2k-1,\infty}]$,
where $e_{1,\infty}>e_{2,\infty}>\ldots>e_{2L,\infty}$.
Moreover, $(x,m)=(e_k, \mathsf{M}_{2\infty}(e_k))$ are the real solutions to the equations 
$f_\infty(x,m)=0\ \ \mbox{ and }\ \ \frac{\partial f_\infty}{\partial m}(x,m) = 0$.
Finally, $e_{1,\infty}$ is bounded, $\mathsf{M}_{1\infty}(e_{1,\infty}) \in (-(\max\texttt{supp}(\rho_{B\infty}))^{-1}, 0)$ and $\mathsf{M}_{2\infty}(e_{1,\infty}) \in (-(\max\texttt{supp}(\rho_{A\infty}))^{-1}, 0)$.

{From now on, we will carry out analysis following \cite{Knowles2017,principal,yang2019edge,DY2019}.} We call $e_k$ spectral edges {or bulk edges}. In this paper, we focus on the rightmost edge 
\[
\lambda_+ := e_1\,. 
\]
$[e_{2L},\,e_1]$ could be viewed as the ``spectral spreading'' of the noise, and intuitively, the signal should be sufficiently strong compared with $\lambda_+$ ``in some way'' so that the signal can be observed. This ``strength'' will be precisely described below.

For $z \in \mathbb C^+$, denote the $D$-transform of $\rho_{1c}$ as 
\begin{equation}\label{eq_defnt}
 \mathcal{T}(z):=zm_{1c}(z) m_{2c}(z)\,.
 \end{equation}
{The $D$-transform is the key tool we count on to design our algorithm, and here are some of its basic properties. Similar to the asymptotic case in \eqref{extension of m1c and m2c to real},} $m_{1 c}$ and $m_{2 c}$ can be extended {from $\mathbb{C}^+$} to $x> \lambda_+$ via
\begin{equation}\label{eq_defm1m2}
    m_{1 c}(x) = \int_0^{\lambda_+} \frac{\rho_{1 c}(t)}{t-x}dt\  \mbox{ and }\  m_{2c}(x) = \int_0^{\lambda_+} \frac{\rho_{2 c}(t)}{t-x}dt\,.
\end{equation}
Hence, when $x>\lambda_+$, $m_{1c}(x)$, $m_{2c}(x)$ and $\mathcal{T}(x)$ are well-defined. Moreover, by a direct calculation, both $m_{1c}(x)$ and $m_{2c}(x)$ are negative and monotonically increasing  {over  $x>\lambda_+$} and hence $\mathcal{T}(x)$ is monotonically decreasing {over $x>\lambda_+$}, such that $m_{1c}(x)$, $m_{2c}(x)$ and $\mathcal{T}(x)$ are invertible {over  $x>\lambda_+$}. This is the key property we count on to study the signal deformation under the high dimensional noise. {See \cite[Section 2.5]{limitpaper} for more discussion of the $D$-transform.} 
Denote the $j$-th {\em classical location} of the probability density $\rho_{2c}$, where $j=1,\ldots,n$, as
\begin{equation}\label{eq_classical} 
    \gamma_j := \sup_x\left\{\int_x^{+\infty}\rho_{2c}(x)dx > \frac{j-1}{n}\right\}\,.
\end{equation}
In particular, we have $\gamma_1 = \lambda_+$.

\subsection{Model Assumption}\label{subsec_ass}
In this subsection we impose assumptions on the model \eqref{eq_model}.
We follow \cite{yang2019edge,ding2021spiked} and consider the following bounded support condition to simplify the following discussion.

\begin{definition}\label{bsc}(Bounded support condition).
A random matrix $X=[x_{ij}]_{\tiny\substack{i=1,\ldots,p\\ j=1,\ldots,n}}$ is said to have a bounded support $\phi_n>0$ if $\max_{i,j}|x_{i,j}| \leq \phi_n$, where $\phi_n= n^{2/a-1/2}$  for some constant $a>4$ is a deterministic parameter. 
\end{definition}

The bounded support assumption is introduced to simplify the discussion and it can be easily removed. {Recall that for a random matrix $X$ whose entries are independent random variables fulfilling $\mathbb{E}x_{i,j}=0$, $\mathbb{E}|x_{i,j}|^2=n^{-1}$ and $\max_{i,j}\mathbb{E}|\sqrt{n}x_{i,j}|^a\leq C$ for $C>0$ and $a > 4$, it can be truncated to a random matrix $\widetilde X$ with bounded support $\phi_n = n^{2/a-1/2+\varepsilon'}$, where $\varepsilon'>0$ is a constant that can be arbitrarily small, fulfilling Assumption \ref{assum_main}(i), so that $\mathbb{P}(\widetilde X=X)=1 - O(n^{-a\varepsilon'})$. Refer to \cite[Section 4.3.2]{bai2010spectral} and \cite[Corollary 3.19]{ding2021spiked} for details. Consequently, all arguments and theoretical results with the bounded support condition can be extended to the setup without the bounded support condition with probability $1-o(1)$.}

\begin{assumption}\label{assum_main}
Fix a small constant $0<\tau<1$. We need the following assumptions for the model \eqref{eq_model}:
\begin{itemize}
\item[(i)] (Assumption on $x_{ij}$).  Suppose $X=[x_{ij}]_{\tiny\substack{i=1,\ldots,p\\ j=1,\ldots,n}}\in \mathbb{R}^{p\times n}$ has a bounded support $\phi_n$ and its entries are independent and satisfy
\begin{align}
\max_{i,j}\left|\mathbb{E} x_{ij}\right|   \le n^{-2-\tau} \,, \   \max_{i,j}\left|\mathbb{E} | x_{ij} |^2  - n^{-1}\right|   \le n^{-2-\tau} \label{entry_assm1}
\end{align}
and
\begin{equation}
\max_{i,j}\mathbb{E}|\sqrt{n} x_{ij}|^4 \leq C_4
\end{equation}
for a constant $C_4>0$.

\item[(ii)] (Assumptions on $p/n$). $p=p(n)$ satisfies
\begin{equation}
\tau<\frac{p}{n} <\tau^{-1}\,.   
\end{equation}

\item[(iii)] (Assumption on $A$ and $B$). Assume $A$ and $B$ are deterministic symmetric matrices with eigendecompositions
\begin{equation}\label{eigen}
A= Q^a\Sigma^a (Q^a)^\top\ \mbox{ and }\  B= Q^b \Sigma^b (Q^b)^\top 
\end{equation}
respectively, where
$\Sigma^a=\textup{diag}(\sigma_1^a, \ldots, \sigma_p^a)$,  $\Sigma^b=\textup{diag}( \sigma_1^b, \ldots,  \sigma_n^b)$,
$\sigma_1^a \geq \sigma_2^a \geq \ldots \geq \sigma_p^a$,  $\sigma_1^b \geq \sigma_2^b \geq \ldots \geq \sigma_n^b,$
$Q^a= ({\bf q}^a_1, \cdots, {\bf q}^a_p)\in O(p)$ and $Q^b= ({\bf q}^b_1, \cdots, {\bf q}^b_n)\in O(n)$.
We assume that {for all sufficiently large $n$,} 
\begin{equation}\label{ass3_eq1}
1+\mathsf{M}_{1c}(\lambda_+) \sigma_1^b \geq \tau\ \mbox{ and } \ 1+\mathsf{M}_{2c}(\lambda_+)  \sigma_1^a \geq \tau
\end{equation}
and 
\begin{equation}\label{ass3_eq2}
\sigma_1^a\vee  \sigma_1^b \le \tau^{-1}\ \mbox{ and } \ \pi_A([0,\tau])\vee  \pi_B([0,\tau]) \le 1 - \tau .
\end{equation}

\item[(iv)] (Assumption on the signal strength).  
  We assume \begin{equation}
      d_1 \geq d_2 \geq \cdots \geq d_r > 0
  \end{equation}
for some $r\geq 1$ and $d_1<\tau^{-1}$. {We allow the signal strength $d_i$ to depend on $n$ but $r$ is independent of $n$.} Denote a fixed value $\alpha>0$ as 
\begin{equation}\label{eq_defnalpha}
\alpha:=1/\sqrt{\mathcal{T}(\lambda_+)}\,.
\end{equation} 
We allow the singular values $d_k$ to depend on $n$ under the condition that there exists an integer $1\leq r^+\leq r${, called the effective rank of $\wt S$,} such that
\begin{equation}\label{eq_assm_41}
d_k -\alpha > \phi_n + n^{-1/3} \ \mbox{ if and only if }\ 1\leq k\leq r^+\,,
\end{equation}

\item[(v)] (Assumption on the distribution of singular vectors). 
Let $G_\ub^p \in \mathbb{R}^{p \times r}$ and $G_\vb^n \in \mathbb{R}^{n \times r}$ be two independent matrices with i.i.d entries distributed according to a fixed probability measure $\nu$ on $\mathbb{R}$ with mean zero and variance one, and satisfy the log-Sobolev inequality (see below). We assume that the left and right singular vectors, $\ub_i \in \mathbb{R}^p$ and $\vb_i \in \mathbb{R}^n$, {are either 
\begin{enumerate}
\item[a.]{ the $i$-th column of $\frac{1}{\sqrt{p}} G_\ub^p$ and $\frac{1}{\sqrt{n}} G_\vb^n$ respectively, or}
\item[b.]{ the $i$-th columns obtained from the Gram-Schmidt (or QR factorization) of $G_\ub^p$ and $G_\vb^n$ respectively.} 
\end{enumerate}
}
\end{itemize}
\end{assumption}

{We comment on these assumptions. First, in \textit{(iii)}}, \eqref{ass3_eq1} guarantees a regular square-root behavior of the spectral densities $\rho_{1c}$ and $\rho_{2c}$ near $\lambda_+$ (See Lemma \ref{lambdar_sqrt}){, which is based on the behavior of $\mathsf M_{1c}(e_1)$ and $\mathsf M_{2c}(e_1)$ in Lemma \ref{lemma compact support of mu result},}
and rules out the existence of outlier eigenvalues of $ZZ^\top$. Therefore, it also makes sure that $\mathcal{T}(\lambda_+)$ is well-defined so as $\alpha$ in \textit{(iv)}. \eqref{ass3_eq2} means that the spectra of $A$ and $B$ cannot concentrate at zero.

In \textit{(iv)}, $d_i$ fulfilling \eqref{eq_assm_41} is considered to be the singular value associated with a  {``sufficiently strong'' signal. Recall that the signal is described by singular vectors in \eqref{eq_model}}. Below, we will show that when a signal is sufficiently strong, it leads to a singular value {outside the spectral bulk} of $Z$. It is the reason that it is possible to study the signal. If $d_i$ does not satisfy \eqref{eq_assm_41}, they are considered ``weak'', associated with a ``weak signal''. In Theorem \ref{thm_value}, we will show that weak signals will be {``sticked'' to the bulk edge of $Z$}, which is related to the \textit {BBP phase transition}. This is why $r^+$ is called the {{\em effective rank}} of $\wt S$ in \eqref{eq_assm_41}. See Figure \ref{IMG_9221} for an illustration of the relationship of $\mathcal{T}$, $\alpha$ and $\lambda_+$. {From now on, we call the top $r^+$ singular values of $\wt S$ outlier signal strengths or simply {\em outliers}, and the corresponding singular vectors the {\em outlier signals}, and call the $r^++1,\ldots,r$-th singular values of $\wt S$ non-outlier signal strengths or simply {\em non-outliers}, and the corresponding singular vectors the {\em non-outlier signals}. Similarly, we call the top $r^+$ eigenvalues of $\wt S\wt S^\top$ {\em outliers}, and call the $r^++1,\ldots,r$-th eigenvalues of $\wt S\wt S^\top$ {\em non-outliers}.}

{The random signal setup in \textit{(v)} is the same as that in \cite{limitpaper} and others.
The log-Sobolev inequality (see \cite[Section 2.3.2]{anderson2010introduction} for the definition) implies that entries of $\ub_i$ and $\vb_i$ have sub-Gaussian tails, which leads to the desired concentration property  \cite[Proposition 6.2]{benaych2011fluctuations} that is summarized in Lemma \ref{hanson}.  
 If $\bu_i$ and $\bv_i$ have deterministic entries and $Z$ is random but has a bi-unitarily invariant distribution, then we are in the same setting as the second model of (v) \cite[Remark 2.6]{limitpaper}. However, we know that $Z$ is bi-unitarily invariant if $A$ and $B$ are both unitarily invariant, which is not applicable to our case. How to handle the deterministic signal setup will be explored in our future work.}

\subsection{Optimal shrinkers for various loss functions}
Recall the definition of asymptotic loss and optimal shrinker provided in \cite[Definitions 1 and 2]{GD} . 
\begin{definition}[\bf Asymptotic loss]\label{def_1}
Let $\mathcal{L}:=\{L_{p,n}| p, n \in \mathbb{N}\}$ be a family of loss functions, where each $L_{p,n}: M_{p \times n} \times M_{p \times n} \rightarrow [0,\infty)$ is a loss function obeying { 
that $\widehat S \to  L_{p,n}(S, \widehat S)$ is continuous and
$L_{p,n}(S,S)=0.$} Suppose $p=p(n)$ and $\lim_{n \to \infty} p(n)/n \to \beta>0$.  Let $\varphi: [0, \infty) \rightarrow [0, \infty)$ be a nonlinear function and consider $\widehat{S}_{\varphi}$ to be the singular shinkage estimate \eqref{eq_shrink}. When $\lim_{n \rightarrow \infty} L_{p,n}$ exists, we define the asymptotic loss of the shrinker $\varphi$ with respect to $L_{p, n}$ with the signal $\db=(d_1, \cdots, d_r)$ as 
$L_{\infty}(\varphi |\db) = \lim_{n \rightarrow \infty} L_{p,n}(S, \widehat{S}_{\varphi})$, where $S$ is defined in \eqref{eq_model}.
\end{definition}
\begin{definition}[\bf Optimal shrinker]\label{def_2}  Let $L_{\infty}$ and $\varphi$ as defined in Definition \ref{def_1}. If a shrinker $\varphi^*$ has an asymptotic loss that satisfies 
$L_{\infty} (\varphi^*| \db) \leq L_{\infty} (\varphi | \db)$
for any other shrinker $\varphi$, any $r \geq 1$, and any $\db \in \mathbb{R}^r,$ then we say that $\varphi^*$ is unique asymptotically admissible (or simply ``optimal'') for the loss family $\mathcal{L}$ and that class of shrinkers. \end{definition}

{From now on, we denote the optimal shrinker of $\wt\lambda_i$ as 
\begin{equation}
\varphi^*_i :=\varphi^*(\wt\lambda_i) 
\end{equation}
for convenience.} In \cite[Sections IV. A and C]{GD}, the optimal shrinkers under different loss functions were computed. 
When the loss function is the operator norm, the optimal shrinker was proved to be $\varphi^*_i=d_i$ when $p=n$ in \cite[Section IV.B]{GD}. Later in \cite[Lemma 5.1]{leeb2020optimal}, it is shown that $\varphi^*_i=d_i\sqrt{\frac{a_{1,i}\wedge a_{2,i}}{a_{1,i}\vee a_{2,i}}}$ when $p \neq n$, where $a_{1,i} := \lim_{n\to\infty}\langle \ub_i, \wt\bxi_{i}  \rangle^2$ and $a_{2,i} :=   \lim_{n\to\infty}\langle \vb_i, \wt\bzeta_{i}  \rangle^2$.
We list the results here for readers' convenience. 

\begin{proposition}[\cite{GD,leeb2020optimal}]\label{prop_optimal_shrinker}
When $d_i \geq \alpha$, the optimal shrinker is 
$\varphi^*_i=d_i \sqrt{a_{1,i} a_{2,i}}$, $\varphi^*_i=d_i\sqrt{\frac{a_{1,i}\wedge a_{2,i}}{a_{1,i}\vee a_{2,i}}}$ and $\varphi^*_i=d_i\big(\sqrt{a_{1,i} a_{2,i}}-\sqrt{(1-a_{1,i})(1-a_{2,i})}\ \big)$ when the
Frobenius norm, operator norm and nuclear norm are considered in the loss function respectively. When $d_i < \alpha$, for any loss function, we have 
$\varphi^*_i=0$.
\end{proposition}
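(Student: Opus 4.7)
The plan is to reduce each loss to a finite-dimensional minimization by exploiting the asymptotic geometry of the perturbed SVD. The first step is to invoke the convergence results that the paper's main theorems will establish: for each outlier index $i\le r^+$ the inner products $\langle \ub_i,\wt\bxi_i\rangle^2\to a_{1,i}$ and $\langle \vb_i,\wt\bzeta_i\rangle^2\to a_{2,i}$, while for distinct outliers $i\neq j$ the cross inner products $\langle \ub_i,\wt\bxi_j\rangle$ and $\langle \vb_i,\wt\bzeta_j\rangle$ vanish. Moreover, delocalization of non-outlier singular vectors forces $\langle \ub_i,\wt\bxi_j\rangle\to 0$ whenever $j>r^+$. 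Consequently only the $r^+$ outlier components of $\widehat S_\varphi$ are asymptotically correlated with $S$, and the remaining components contribute only a quadratic penalty in $\varphi_i$.

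For the Frobenius loss, I would expand $\|\widehat S_\varphi - S\|_F^2 = \tr\bigl((\widehat S_\varphi - S)(\widehat S_\varphi - S)^\top\bigr)$ and use orthonormality of $\{\wt\bxi_i\}$, $\{\wt\bzeta_i\}$, $\{\ub_i\}$, $\{\vb_i\}$ together with the limiting cross inner products. This yields the separable asymptotic form
\[
L_\infty(\varphi\mid \db) \;=\; \sum_{i=1}^{r}\Bigl(\varphi_i^2 + d_i^2 - 2\varphi_i d_i\sqrt{a_{1,i}a_{2,i}}\Bigr),
\]
which is minimized coordinate-wise at $\varphi^*_i = d_i\sqrt{a_{1,i}a_{2,i}}$.

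For the operator and nuclear norms the analysis is local to each outlier. Writing $\wt\bxi_i = \sqrt{a_{1,i}}\,\ub_i + \sqrt{1-a_{1,i}}\,\ub_i^{\perp} + \oo(1)$ and similarly for $\wt\bzeta_i$, the $i$-th signal contribution $\varphi_i\wt\bxi_i\wt\bzeta_i^\top - d_i\ub_i\vb_i^\top$ restricted to the span of $\{\ub_i,\ub_i^{\perp}\}$ on the left and $\{\vb_i,\vb_i^{\perp}\}$ on the right is a $2\times 2$ block whose singular values can be computed in closed form as functions of $\varphi_i, d_i, a_{1,i}, a_{2,i}$. Because different outliers live in asymptotically orthogonal row and column subspaces, the global operator norm of $\widehat S_\varphi - S$ converges to the maximum over $i$ of these block operator norms, and the global nuclear norm converges to their sum. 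Minimizing the $i$-th block operator norm in $\varphi_i$ gives $\varphi^*_i = d_i\sqrt{(a_{1,i}\wedge a_{2,i})/(a_{1,i}\vee a_{2,i})}$, and minimizing the block nuclear norm gives $\varphi^*_i = d_i\bigl(\sqrt{a_{1,i}a_{2,i}}-\sqrt{(1-a_{1,i})(1-a_{2,i})}\bigr)_+$. For $d_i<\alpha$, the BBP transition makes $\wt\lambda_i\to\lambda_+$ with $a_{1,i}=a_{2,i}=0$; substituting into any of the three block expressions shows that $\varphi_i=0$ is strictly optimal.

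The main obstacle I anticipate is the cross-term control used to obtain the diagonal form of $L_\infty$: one must show $\langle \ub_i,\wt\bxi_j\rangle\to 0$ and $\langle \vb_i,\wt\bzeta_j\rangle\to 0$ for $i\neq j$ with $i,j\le r^+$, which is not automatic when the signal strengths $d_i$ are allowed to coincide or cluster. When the $d_i$'s are simple and separated, this falls out of the isotropic local law / second-order perturbation arguments used to prove the individual inner-product limits; in the degenerate case one should replace the basis of outlier signals $\{\ub_i\}$ by an adapted basis diagonalizing the limiting projector onto the span of the corresponding $\{\wt\bxi_j\}$. A secondary subtlety is justifying that the operator and nuclear norms genuinely decouple across outliers, which requires the asymptotic orthogonality of the perturbed pairs $(\wt\bxi_i,\wt\bzeta_i)$ viewed as rank-one matrices plus a control on the off-block remainder coming from the bulk (handled by the delocalization of non-outlier singular vectors).
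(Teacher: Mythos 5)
The paper does not actually prove this proposition — it is quoted verbatim from \cite{GD,leeb2020optimal} — and your sketch follows essentially the same strategy as those cited proofs: use the limiting overlaps $a_{1,i},a_{2,i}$, the asymptotic orthogonality between distinct outlier directions, and delocalization of the bulk singular vectors to reduce the asymptotic loss to decoupled rank-two ($2\times 2$) blocks, then minimize each block in closed form for the Frobenius, operator, and nuclear norms (max over blocks for the operator norm, sum for Frobenius/nuclear), with $\varphi^*_i=0$ in the subcritical case where $a_{1,i}=a_{2,i}=0$. The one step you leave implicit is the sign alignment $\langle \ub_i,\wt\bxi_i\rangle\langle \vb_i,\wt\bzeta_i\rangle\to +\sqrt{a_{1,i}a_{2,i}}$ — the squared overlaps alone do not fix the sign of the Frobenius cross term, and writing both $\wt\bxi_i$ and $\wt\bzeta_i$ with positive coefficients on $\ub_i,\vb_i$ presumes it — but this is standard (it follows, e.g., from $\sqrt{\wt\lambda_i}=\wt\bxi_i^\top\wt S\wt\bzeta_i>\|Z\|+\oo(1)$ for supercritical spikes) and is handled in the cited works, while the overlap asymptotics needed under the separable-covariance model are exactly what Theorems \ref{thm_noneve} and \ref{thm_vector} of this paper supply.
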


With Proposition \ref{prop_optimal_shrinker}, if we can estimate $d_i$, $a_{1,i}$ and $a_{2,i}$ using the eigenstructure of the noisy matrix $\wt S\wt S^\top$, we could obtain the desired optimal shrinkers. For example, as shown in \cite{limitpaper}, if the ESD of $Z$ converges almost surely weakly to a compactly supported probability measure, then when $d_i>\alpha$, we have that $d_i=\frac{1}{\sqrt{\mathcal{T}(\wt\lambda_i)}}$, $a_{1,i}=\frac{m_{1c}(\wt\lambda_i)}{d_i^2 \mathcal{T}'(\wt\lambda_i)}$ and $a_{2,i}=\frac{m_{2c}(\wt\lambda_i)}{d_i^2 \mathcal{T}'(\wt\lambda_i)}$, based on which the optimal shrinker with respect to the Frobenius norm is derived by replacing $d_i\sqrt{a_{1,i}a_{2,i}}$ with the corresponding values in \cite{nadakuditi2014optshrink}. Moreover, as shown in \cite{GD}, when the noise is white with $\sigma>0$, and $X$ have independent entries with zero mean, unit variance, and finite fourth moment, if $d_i>\beta^{1/4}$, we have $\wt\lambda_i>\sigma(1+\sqrt{\beta})$  as $n \to \infty$. Denote $y_i = \wt\lambda_i/\sigma^2$. We have $d_i = \ell_i$, $a_{1,i} = {\frac{\ell_i^4-\beta}{\ell_i^4+\beta\ell_i^2}}$ and $a_{2,i} = {\frac{\ell_i^4-\beta}{\ell_i^4+\ell_i^2}}$,
where $\ell_i:= \frac{1}{\sqrt{2}} \sqrt{y_i-\beta-1+\sqrt{({y_i}-\beta-1)^2-4\beta}}.$
In this special case, when $\wt\lambda_i>\sigma(1+\sqrt{\beta})$, the optimal shrinker has the closed form 
${\varphi}_i= \frac{\sigma}{y_i} \sqrt{(y_i^2-\beta-1)^2-4\beta}$, 
${\varphi}_i=\sigma\ell_i$ and ${\varphi}_i= \frac{\sigma}{\ell_i^2 y_i}(\ell_i^4-\beta-\sqrt{\beta}\ell_i y_i)$ respectively when the Frobenius norm, operator norm, and nuclear norm respectively is considered,
and when $\wt\lambda_i \leq \sigma(1+\sqrt{\beta})$, $\varphi^*_i = 0$ for any loss function. The rank can be decided by how many $\wt\lambda_i>\sigma(1+\sqrt{\beta})$. If $\sigma$ is unknown, it is suggested in \cite{donoho2013optimal}, {among others, e.g., \cite{shabalin2013reconstruction},} to estimate $\sigma$ by 
$\check\sigma(\wt S) := s_{med}/\sqrt{\mu_{\beta}}$,
where $s_{med}$ is a median singular value of $\wt S$ and $\mu_\beta$ is the median of the MP distribution. To extend OptShrink under our noise setup \eqref{colored and dependent noise model}, we study the asymptotic behavior of singular values and vectors and their relation with $d_i$, $a_{1,i}$ and $a_{2,i}$ with a convergence rate.

\section{Main Results}\label{section main result}
In this section we state the main theoretical results about the biased singular values and vectors, including the limiting behavior and the associated convergence rate. These results form the foundation of the proposed eOptShrink algorithm {and could be of its own interest}.
To simplify the presentation of our results and their proofs, we apply the notion of stochastic domination, which is a systematic framework to state results of the form ``$\xi$ is bounded by $\zeta$ with high probability up to a small power of $n$" \cite{erdHos2013averaging}. %

\begin{definition}[\bf Stochastic domination]\label{stoch_domination}
Let $\xi=\left(\xi^{(n)}(u):n\in\mathbb{N}, u\in U^{(n)}\right)$ and  $\zeta=\left(\zeta^{(n)}(u):n\in\mathbb{N}, u\in U^{(n)}\right)$
be two families of nonnegative random variables, where $U^{(n)}$ is a possibly $n$-dependent parameter set. We say $\xi$ is stochastically dominated by $\zeta$, uniformly in $u$, if for any fixed (small) $\epsilon>0$ and (large) $D>0$, 
$\sup_{u\in U^{(n)}}\mathbb{P}\left[\xi^{(n)}(u)>n^\epsilon\zeta^{(n)}(u)\right]\le n^{-D}$
for large enough $n \ge n_0(\epsilon, D)$, and we shall use the notation $\xi\prec\zeta$ or $\xi=O_\prec(\zeta)$. Throughout this paper, the stochastic domination will always be uniform in all parameters that are not explicitly fixed, such as matrix indices, and $z$ that takes values in some compact set. Note that $n_0(\epsilon, D)$ may depend on quantities that are explicitly constant, such as $\tau$ in Assumption \ref{assum_main}. 
Moreover, we say an event $\Xi$ holds with high probability if for any constant $D>0$, $\mathbb P(\Xi)\ge 1- n^{-D}$, when $n$ is sufficiently large.
\end{definition}

We denote $\{\bxi_i\}_{i=1}^p$ and $\{\bzeta_i\}_{i=1}^n$ as respectively the left and right singular vectors of the noise matrix $Z$, and $\{\widetilde{\bxi}_i\}_{i=1}^p$ and $\{\widetilde{\bzeta}_i\}_{i=1}^n$ as respectively the left and right singular vectors of $\widetilde{S}$, which can be viewed as a perturbation of $Z$ by adding $S$.  
For $x > \alpha$, denote $\theta(x)$, $a_1(x)$ and $a_2(x)$ by 
\begin{equation}\label{eq_functions}
\theta(x):=\mathcal{T}^{-1}(x^{-2}), \  a_1(x)=\frac{m_{1c}(\theta(x))}{x^2 \mathcal{T}'(\theta(x))}, \   a_2(x)=\frac{m_{2c}(\theta(x))}{x^2 \mathcal{T}'(\theta(x))}\,.
\end{equation} 
{Clearly, on $(\lambda_+,\infty)$, $\mathcal{T}(x)$ is monotonically decreasing as $x\to \infty$, so $\theta(x)$ is monotonically increasing as $x\to \infty$.} These terms are used to estimate the signal strength $d_i$ and inner products of the clean and noisy left singular vectors $ \langle \ub_i, \wt \bxi_j \rangle$ and clean and noisy right singular vectors $\langle \vb_i, \wt \bzeta_j \rangle$, which we will detail in the following theorems.

\subsection{Results of singular values}

We first state the results for singular values. Define 
\begin{equation}
\mathbb O_+ = \{1,\cdots,r^+\}
\ \mbox{ and } \  
    \Delta(d_i) := |d_i-\alpha|^{1/2}\,.
\end{equation}
{In other words, the outlier singular values are indexed by $\mathbb O_+$}. 
The followings are our main theorems, and their proofs are postponed to  the supplementary. Part of the proof of the following theorems is motivated by \cite{yang2019edge} and \cite{DY2019}, where the focus is the covariance matrix analysis. The main difference comes from the fact that in general, the covariance model in \cite{yang2019edge,DY2019} cannot be directly decomposed into the summation of two independent matrices, like $S$ and $Z = A^{1/2}XB^{1/2}$ in our model, so the results cannot be directly applied.
We first state the location of the outlier eigenvalues and the first a few non-outlier eigenvalues of $\wt S\wt S^\top$.
\begin{theorem} \label{thm_value} Suppose Assumption \ref{assum_main} holds. Then we have for $1 \leq i \leq r^+,$
\begin{equation}\label{eq_evoutlier}
|\widetilde{\lambda}_i-\theta(d_i)| \prec  \phi_n \Delta(d_i)^2+n^{-1/2}\Delta(d_i) \,. 
\end{equation} 
Furthermore, for a fixed integer $\varpi>r^+$, we have for $r^++1 \leq i \leq \varpi$,
\begin{equation}\label{eq_evbulk}
|\widetilde{\lambda}_i-\lambda_+| \prec \phi_n^2+n^{-2/3}\,.
\end{equation}
\end{theorem}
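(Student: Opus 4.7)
The plan is to reduce the statement to a deterministic implicit equation for the outlier locations via a Weinstein--Aronszajn/Schur complement identity, then convert this into sharp probabilistic estimates using the anisotropic local laws for separable covariance matrices from \cite{yang2019edge} together with the concentration for the random signal vectors given by Lemma \ref{hanson}. The near-edge sticking bound will follow from edge rigidity of $ZZ^\top$ combined with Cauchy interlacing for the rank-$2r$ perturbation $\wt S \wt S^\top - ZZ^\top$, plus a small additional input for weak signals from the master equation.

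\textbf{Master equation and local law.} Writing $S = UDV^\top$ with $U = [\ub_1,\dots,\ub_r]$, $V = [\vb_1,\dots,\vb_r]$ and $D = \textup{diag}(d_1,\dots,d_r)$, standard low-rank perturbation identities (as in \cite{benaych2011fluctuations,limitpaper,nadakuditi2014optshrink}) express the eigenvalues of $\wt S \wt S^\top$ lying outside the spectrum of $ZZ^\top$ as the zeros of $\det \mathcal{M}(z) = 0$, where $\mathcal{M}(z)$ is a $2r \times 2r$ block matrix built from $D^{-1}$ together with the bilinear forms $\ub_i^\top \mathcal{G}_1(z) \ub_j$, $\vb_i^\top \mathcal{G}_2(z) \vb_j$, $\ub_i^\top \mathcal{G}_1(z) Z \vb_j$ and $\vb_j^\top Z^\top \mathcal{G}_1(z) \ub_i$. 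Using the anisotropic local laws of \cite{yang2019edge}, each of these bilinear forms is approximated by a deterministic counterpart (built from $m_{1c}$ and $m_{2c}$ and the spectral data of $A$ and $B$) up to an error $\prec \Psi_n(z) = \phi_n + (n \eta)^{-1/2}$, where $\eta$ is the distance from $z$ to $\lambda_+$. Applying the concentration of $\ub_i$ and $\vb_i$ (Lemma \ref{hanson}) collapses the entries of $\mathcal{M}(z)$ to their isotropic averages: diagonal blocks become $m_{1c}(z) I_r$, $m_{2c}(z) I_r$, the cross block becomes $\sqrt{z\, m_{1c}(z) m_{2c}(z)} I_r$, and off-diagonals vanish. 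The resulting limiting equation factorizes as $\prod_{i=1}^r (d_i^{-2} - \mathcal{T}(z)) = 0$, whose solutions outside $[0,\lambda_+]$ are precisely $\theta(d_i)$ for $i \in \mathbb{O}_+$.

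\textbf{Sharp rate and sticking.} To upgrade this to \eqref{eq_evoutlier}, I would Taylor expand $\mathcal{T}$ at $z_0 = \theta(d_i)$. The square-root behavior of $\rho_{1c}$ at $\lambda_+$ (Lemma \ref{lambdar_sqrt}) gives $\mathcal{T}'(z_0) \asymp \Delta(d_i)^{-2}$ uniformly as $d_i \to \alpha^+$, so an error $\epsilon$ in the master equation produces a location error of size $\epsilon \cdot \Delta(d_i)^2$. At $z_0$, whose distance from $\lambda_+$ is $\asymp \Delta(d_i)^2$, the local-law error is $\Psi_n(z_0) \asymp \phi_n + n^{-1/2}\Delta(d_i)^{-1}$, and multiplying yields exactly $\phi_n \Delta(d_i)^2 + n^{-1/2} \Delta(d_i)$. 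Matching the $i$-th outlier of $\wt S \wt S^\top$ with $\theta(d_i)$ requires a continuity/counting argument around each $\theta(d_i)$, as in \cite[Section 6]{DY2019}. For \eqref{eq_evbulk} with $r < i \le \varpi$ the estimate is immediate from edge rigidity $|\lambda_i(ZZ^\top) - \gamma_i| \prec n^{-2/3}$ \cite{yang2019edge} combined with Cauchy interlacing $\lambda_{i+r}(ZZ^\top) \le \wt\lambda_i \le \lambda_{i-r}(ZZ^\top)$. For the borderline range $r^+ < i \le r$, the master equation is used once more: since $d_i - \alpha \le \phi_n + n^{-1/3}$, inverting $\mathcal{T}$ near $\lambda_+$ (again using the square-root profile) places the associated would-be root at most $\phi_n^2 + n^{-2/3}$ above $\lambda_+$, so these weak signals fail to detach from the bulk.

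\textbf{Main obstacle.} The delicate step is obtaining the anisotropic local law for the cross bilinear form $\ub_i^\top \mathcal{G}_1(z) Z \vb_j$ with a uniform error down to spectral parameters at distance $\Delta(d_i)^2$ (and down to $n^{-2/3}$ for the weak-signal regime) from the hard edge $\lambda_+$, and then propagating the square-root singularity of $\mathcal{T}$ through the inversion of the determinantal master equation with the stated rate. Although the averaged local laws of \cite{yang2019edge} are tailored for Stieltjes transforms, the $2r \times 2r$ determinant here demands entrywise control; achieving this will likely require the linearization/self-consistent comparison of \cite{Knowles2017,yang2019edge} adapted to the cross resolvent $\mathcal{G}_1(z)Z$. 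The borderline case $r^+ < i \le r$ is the subtlest point, as the cut-off in \eqref{eq_assm_41} is precisely the threshold above which the square-root expansion can still be inverted safely, and carrying the convergence rate uniformly across this regime is where the technical bookkeeping concentrates.
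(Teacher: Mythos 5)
Your proposal follows essentially the same route as the paper's proof: the linearized master equation $\det(\mathbf{U}^\top G(z)\mathbf{U}+\mathbf{D}^{-1})=0$ (Lemma \ref{lem_mastereq}), the anisotropic local law outside the bulk (Theorem \ref{lem_locallaw}) combined with the concentration of the random singular vectors (Lemma \ref{hanson}) to replace $\mathbf{U}^\top G(z)\mathbf{U}$ by $\overline\Pi(z)$, the square-root behavior of $\mathcal{T}$ near $\lambda_+$ to convert the resolvent error into the rate $\phi_n\Delta(d_i)^2+n^{-1/2}\Delta(d_i)$, a continuity/Rouch\'e counting argument to match the outliers to $\theta(d_i)$, and rigidity plus Weyl interlacing for the non-outlier range; this is exactly the paper's Steps I--VII, so the architecture and the final bounds are correct.

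Two intermediate quantitative claims are misstated, and you should fix them even though they happen to cancel. First, the deterministic limit of the cross bilinear forms $\ub_i^\top\mathcal{G}_1(z)Z\vb_j$ is zero, since the deterministic approximation $\Pi(z)$ of the linearized resolvent is block diagonal and $\ub_i$, $\vb_j$ are independent; the $D$-transform $\mathcal{T}(z)=zm_{1c}(z)m_{2c}(z)$ arises from the product of the two diagonal limits against the off-diagonal entries $z^{-1/2}d_i^{-1}$ of $\mathbf{D}^{-1}$, not from a cross block equal to $\sqrt{z\,m_{1c}(z)m_{2c}(z)}$. With the block limits as you wrote them the determinant would not factor as $\prod_i(\mathcal{T}(z)-d_i^{-2})$, so the statement must be corrected to "cross blocks vanish'' for the factorization you then invoke to be valid. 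Second, by Lemma \ref{s36_DY2019} one has $\theta(d_i)-\lambda_+\asymp\Delta(d_i)^4$, not $\Delta(d_i)^2$, and the outside-the-bulk error is $\phi_n+n^{-1/2}\kappa_z^{-1/4}$ rather than $\phi_n+(n\eta)^{-1/2}$; you paired the wrong distance with the wrong error form and the two mistakes compensate, again giving $\phi_n+n^{-1/2}\Delta(d_i)^{-1}$, which multiplied by $|\mathcal{T}'(\theta(d_i))|^{-1}\asymp\Delta(d_i)^2$ recovers the stated rate. Finally, for the weak signals $r^+<i\le r$ your "would-be root close to $\lambda_+$'' heuristic needs, as in the paper, the full permissible-region argument (no eigenvalues between $I_0$ and the outlier intervals) together with the exact count of eigenvalues in the outlier intervals, so that the $(r^++1)$-th eigenvalue is forced into $I_0$; you correctly point to the continuity/counting scheme of \cite{DY2019}, which is precisely how the paper closes this step.
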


{The above theorem gives the absolute error bounds} for the locations of the outlier eigenvalues and the top finite non-outlier eigenvalues of $\wt S \wt S^\top$ and implies the occurrence of the BBP transition \cite{baik2005phase}. When $\phi_n \leq n^{-1/3}$, the right-hand side of \eqref{eq_evbulk} becomes $n^{-2/3}$; that is, the non-outliers associated with weak signals deviate from the bulk edge with the order of the Tracy-Widom law \cite{tracy2002distribution}.

Next, we study the  \emph{eigenvalue sticking,} which states how the non-outlier eigenvalues of $\wt S\wt S^\top$ ``stick" to the eigenvalues of $ZZ^\top$.

\begin{theorem}\label{thm_eigenvaluesticking}
Suppose Assumption \ref{assum_main} holds and assume
\begin{equation}\label{alpha+}
\alpha_+:= \min_{1\leq i \leq r} \Delta(d_i)^2
\ge n^{\varepsilon} (\phi_n+n^{-1/3})
\end{equation}
for a small positive constant $\varepsilon<\frac{1}{2}-\frac{2}{a}$. Fix a small constant $c>0$. 
We have
\begin{equation} \label{eq_stickingeq}
\left|\wt\lambda_{r^+ +i}-\lambda_i\right| \prec \frac{1}{n \alpha_+} + n^{-3/4} + i^{1/3}n^{-5/6} + n^{-1/2}\phi_n +  i^{-2/3}n^{-1/3}\phi_n^2  
\end{equation}
for all $ 1 \leq i \leq c n$.
If either (a) 
$\mathbb E x_{ij}^3 = 0$ for all $1\le i \le p$ and $1\le j \le n$,
or (b) either $A$ or $B$ is diagonal, we have a stronger bound 
\begin{equation} \label{eq_stickingeq_strong}
\left|\wt\lambda_{r^++i}-\lambda_i\right| \prec \frac{1}{n \alpha_+} 
\end{equation}
for all $1 \leq i \leq \tau n$.
\end{theorem}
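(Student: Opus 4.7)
My plan is to adapt the Schur-complement/master-equation strategy used for spiked Wigner \cite{knowles2014outliers}, spiked covariance \cite{DY2019, ding2021spiked}, and deformed rectangular \cite{limitpaper, ding2020high} models to our separable-covariance setting, combined with the anisotropic local law of \cite{yang2019edge}. Writing $S = UDV^\top$ with $U=[\ub_1\,|\,\cdots\,|\,\ub_r]$, $V=[\vb_1\,|\,\cdots\,|\,\vb_r]$ and $D = \operatorname{diag}(d_1,\ldots,d_r)$, and introducing the symmetric linearization
\[
\mathcal{H}(z) := \begin{pmatrix} -z I_p & \wt S\\ \wt S^\top & -z I_n\end{pmatrix},
\]
whose poles are $\pm\sqrt{\wt\lambda_i}$, the matrix determinant lemma applied to $\mathcal{H}$ versus its noise-only counterpart $\mathcal{H}^Z$ shows that every $\wt\lambda_i$ not equal to some $\lambda_j$ is a root of the master equation
\[
\det\bigl(\mathcal{D}^{-1} + \mathcal{W}^\top \mathcal{G}^Z(z)\,\mathcal{W}\bigr) = 0,
\]
where $\mathcal{W}$ is a $(p+n)\times 2r$ block matrix encoding $U,V$, $\mathcal{D}$ is the $2r\times 2r$ block containing $D$, and $\mathcal{G}^Z(z) := (\mathcal{H}^Z(z))^{-1}$. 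Since $\wt S\wt S^\top - ZZ^\top = \wt S S^\top + S Z^\top$ has rank at most $2r$, Cauchy interlacing already brackets $\wt\lambda_{r^++i}$ within an $O(1)$ index window around $\lambda_i$.

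Next, I would reduce the random master equation to a deterministic one near $\lambda_i$ by applying the anisotropic local law for $\mathcal{G}_1, \mathcal{G}_2$ from \cite{yang2019edge}, extended to the cross-block $\mathcal{G}_1(z)Z$ via a standard resolvent identity, together with the log-Sobolev Hanson--Wright concentration guaranteed by Assumption~\ref{assum_main}(v) and Lemma~\ref{hanson}. This approximates each entry of $\mathcal{W}^\top \mathcal{G}^Z(z)\mathcal{W}$ by a deterministic function expressed through $m_{1c}, m_{2c}, \sqrt{z}$, with a fluctuation controlled by the standard local-law parameter $\Psi(z)$. Assumption \eqref{alpha+} ensures that the deterministic master equation has no extra roots in $(\lambda_+, \infty)$ originating from $d_{r^++1}, \ldots, d_r$ and that its slope at $\lambda_+$ is $\asymp \alpha_+$; a monotonicity/continuity argument, combined with the interlacing bracket, then pairs $\wt\lambda_{r^++i}$ with a unique root located close to $\lambda_i$. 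Taylor expansion of the stochastic master equation around $\lambda_i$ and dividing the subleading fluctuation by the $\alpha_+$-lower-bounded slope yields the universal $1/(n\alpha_+)$ contribution, while substituting the edge rigidity $|\lambda_i - \gamma_i| \prec i^{-1/3}n^{-2/3}$ \cite{yang2019edge} and the refined edge behavior of $\Psi$ produces the $n^{-3/4}$ and $i^{1/3}n^{-5/6}$ terms, and bounded-support truncation entering the local-law expansion contributes the $\phi_n$-dependent terms of \eqref{eq_stickingeq}.

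For the stronger bound \eqref{eq_stickingeq_strong}, the $\phi_n$-dependent corrections in the anisotropic local law can be identified with the third-cumulant terms in the cumulant expansion of the resolvent (cf.~\cite{Knowles2017, yang2019edge}); under $\mathbb{E} x_{ij}^3 = 0$ these vanish identically, while diagonality of $A$ or $B$ decouples the off-diagonal propagation of the third cumulant through $A^{1/2} X B^{1/2}$, annihilating the same terms and leaving only the $1/(n\alpha_+)$ contribution. I expect the main technical obstacle to be twofold: establishing the refined anisotropic edge local law with the quantitative rate needed to match the $i^{1/3}n^{-5/6}$ and $i^{-2/3}n^{-1/3}\phi_n^2$ bounds uniformly at the microscopic scale $\eta \sim n^{-2/3}$ near $\lambda_+$; and the careful pairing argument that rules out a ``phantom'' root of $\det \mathcal{M}(z)=0$ being inserted between adjacent $\lambda_i, \lambda_{i+1}$, which crucially requires the quantitative separation \eqref{alpha+} and a uniform lower bound on the slope of $\mathcal{T}$ at the edge.
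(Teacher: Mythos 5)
Your overall route coincides with the paper's: both reduce to the master equation of Lemma \ref{lem_mastereq} for the linearized matrix, replace $\mathbf U^\top G(z)\mathbf U$ by its deterministic limit via the anisotropic local law plus the Hanson--Wright-type concentration of Lemma \ref{hanson}, use that the perturbation has rank at most $2r$ (Weyl/interlacing) to localize $\wt\lambda_{r^++i}$ near $\lambda_i$, and exploit \eqref{alpha+} so that the deterministic part stays away from zero; the rates then come from rigidity and the scale $\eta_l(\gamma_i)$, exactly as in the paper's Lemma \ref{prop_eigensticking} and the subsequent counting argument borrowed from \cite{DY2019}.

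However, the quantitative step as you phrase it would not go through literally. First, the ``slope of the deterministic master equation at $\lambda_+$'' is not $\asymp\alpha_+$: by \eqref{eq_s36_2} the derivative of $\mathcal T$ blows up like $\kappa^{-1/2}$ at the edge; what \eqref{alpha+} actually buys is the lower bound $\min_j|\mathcal T(x)-d_j^{-2}|\gtrsim\alpha_+$ for $x$ in an edge window, via $|\mathcal T(\lambda_+)-d_j^{-2}|\asymp|d_j-\alpha|\ge\alpha_+$ and the square-root continuity of $\mathcal T$. Second, you cannot Taylor-expand the stochastic master equation ``around $\lambda_i$'', since $\lambda_i$ is a pole of $\mathbf U^\top G(x)\mathbf U$; one must either isolate the pole term using isotropic delocalization, or, as the paper does, move to $z_x=x+\ri\eta_x$ with $\eta_x\asymp n^{-1}\alpha_+^{-1}+\eta_l(x)$ and bound $|G(x)-G(z_x)|$ by $\operatorname{Im}G_{\ub\ub}(z_x)+\operatorname{Im}G_{\vb\vb}(z_x)$, which is precisely the mechanism producing the forbidden zone of width $\frac{1}{n\alpha_+}+n^{\varepsilon}\eta_l(\gamma_i)$ and hence every term in \eqref{eq_stickingeq}. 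Finally, for \eqref{eq_stickingeq_strong} it is not enough that the $\phi_n$-corrections (third-cumulant terms) vanish: you also need the local law of Theorem \ref{LEM_SMALL} to hold on the larger domain $S_0$ (so that $\eta$ can be taken down to $n^{-1+\omega}$, removing the $n^{-3/4}+i^{1/3}n^{-5/6}$ terms) together with the improved rigidity \eqref{rigidity2}; your write-up should make explicit why those two additional improvements hold under (a) or (b), rather than only arguing that the $\phi_n$-dependent terms disappear.
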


{Theorem \ref{thm_eigenvaluesticking} establishes the absolute error bound} for the non-outlier
eigenvalues of $\wt S \wt S ^\top$ with respect to the eigenvalues of $ZZ^\top$. The parameter $\alpha_+$ impacts how strongly non-outlier eigenvalues of $\wt S\wt S^\top$ stick to eigenvalues of $ZZ^\top$. {We shall mention that when \eqref{alpha+} holds, for $r^++1 \leq i \leq \varpi$, where $\varpi>r^+$ is a fixed integer, the right-hand side of \eqref{eq_stickingeq} is consistently better than that of \eqref{eq_evbulk}, particularly when $\phi_n\gg n^{-1/3}$; that is, we obtain a sharper sticking bound for non-outliers.
When \eqref{alpha+} fails, for example, when $d_i=\alpha$ for some $i$, we cannot get this sharper bound and we can only control} the behavior of $\wt\lambda_i$ for $i = r^++1, \ldots, r$ by \eqref{eq_evbulk}. 
When $\alpha_+\gg n^{-1/3}$ and $\phi_n \leq n^{-1/6}$, the right-hand side of \eqref{eq_stickingeq} and \eqref{eq_stickingeq_strong} are much smaller than $n^{-2/3}$.

\subsection{Results of singular vectors}

Next we discuss the singular vectors. We first show that when $i \notin \mathbb{O}_+$, the non-outlier singular vectors, $\wt\bxi_i$ and $\wt\bzeta_i$, are distributed roughly uniformly and approximately perpendicular to the space spanned by $\{\ub_j\}_{j=1}^{r}$ and $\{\vb_j\}_{j=1}^{r}$ respectively. This {\em delocalization} is the key step toward designing eOptShrink. {Moreover, for $i \in \mathbb{O}_+$, if the signal is strong but not strong enough, a similar behavior appears.} 

\begin{theorem}\label{thm_noneve} Suppose Assumption \ref{assum_main} holds. For $i=1,\ldots,n$, denote 
$\eta_i:=n^{-3/4}+n^{-5/6} i^{1/3} +  n^{-1/2}\phi_n$ and $ \varkappa_i:=i^{2/3}n^{-2/3}$. For any sufficiently small constant $c>0$ and $r^++1\leq i \leq cn$, we have for $j = 1, \ldots, r$,
\begin{equation}\label{eq_evebulka}
 |\langle \ub_j, \wt\bxi_{i}  \rangle|^2 \vee |\langle \vb_j, \wt\bzeta_{i}  \rangle|^2 \prec  \frac{n^{-1} +\phi_n^3+  \eta_i \sqrt{\varkappa_i}}{\Delta(d_j)^4 + \phi_n^2+\varkappa_i}\,.
\end{equation}
If either $A$ or $B$ is diagonal, the right hand side of \eqref{eq_evebulka} becomes $\frac{n^{-1}+\phi_n^3}{ \Delta(d_j)^4+\phi_n^2+\varkappa_i}$. 
Moreover, fix a constant $\wt\tau \in (0,1/9)$ such that $n^{\wt\tau}(\phi_n+n^{-1/3}) \to 0$ as $n \to \infty$. If $i \in \mathbb O_+$ satisfies
\begin{equation}\label{eq_propohold}
   \Delta(d_i)^2 \leq n^{\wt\tau}(\phi_n + n^{-1/3})\,,
\end{equation}
we have for $j = 1, \ldots, r$, 
\begin{equation}\label{eq_nonspikeeg2}
  |\langle \ub_j, \wt\bxi_{i}  \rangle|^2 \vee
    |\langle \vb_j, \wt\bzeta_{i}  \rangle|^2 \prec n^{3\wt\tau}\left( \frac{n^{-1}+\phi_n^3+\eta_i\sqrt{\varkappa_i}}{ \Delta(d_j)^4+\phi_n^2 + \varkappa_{i}}\right). 
\end{equation}
\end{theorem}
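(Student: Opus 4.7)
My plan is to reduce the delocalization statement to pointwise bounds on the resolvent $\wt{\mathcal G}_1(z)=(\wt S\wt S^\top -zI)^{-1}$ via the standard contour-integral representation of spectral projectors, combined with a Woodbury expansion that isolates the signal's contribution. Concretely, if $\Gamma_i$ is a positively-oriented contour in $\mathbb{C}$ enclosing only $\wt\lambda_i$, then
\[
|\langle \ub_j,\wt\bxi_i\rangle|^2 \;=\; -\frac{1}{2\pi \ri}\oint_{\Gamma_i}\langle \ub_j,\wt{\mathcal G}_1(z)\ub_j\rangle\,dz,
\]
and an analogous formula with $\wt{\mathcal G}_2$ handles $|\langle \vb_j,\wt\bzeta_i\rangle|^2$. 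The task is then to bound the integrand uniformly on a carefully chosen $\Gamma_i$ and multiply by the arc length.

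To connect $\wt{\mathcal G}_1$ to $\mathcal G_1$, write $U=[\ub_1,\ldots,\ub_r]$, $V=[\vb_1,\ldots,\vb_r]$, $D=\text{diag}(d_1,\ldots,d_r)$, and note that
\[
\wt S\wt S^\top \;=\; ZZ^\top + W\,\mathcal M\,W^\top,\qquad W=[U,\;ZV],\qquad \mathcal M=\begin{pmatrix}D^2 & D\\ D & 0\end{pmatrix}
\]
(after replacing $V^\top V$ by $I_r$ up to a negligible error in case (v)(a), via Hanson--Wright). A Woodbury identity then yields
\[
\wt{\mathcal G}_1(z)=\mathcal G_1(z)-\mathcal G_1(z)\,W\bigl[\mathcal M^{-1}+W^\top \mathcal G_1(z)W\bigr]^{-1}W^\top\mathcal G_1(z),
\]
so $\langle \ub_j,\wt{\mathcal G}_1(z)\ub_j\rangle$ decomposes into the free term $\langle \ub_j,\mathcal G_1(z)\ub_j\rangle$ and a bilinear form in $W^\top\mathcal G_1(z)\ub_j$ mediated by the $2r\times 2r$ master matrix $\mathcal K(z):=\mathcal M^{-1}+W^\top\mathcal G_1(z)W$. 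Its blocks are controlled by the identities $\mathcal G_1 Z=Z\mathcal G_2$ and $Z^\top\mathcal G_1Z=I_n+z\mathcal G_2$. Since $U,V$ are independent of $Z$ with log-Sobolev entries, the isotropic/anisotropic local laws from \cite{yang2019edge,xi2020convergence} combined with Lemma \ref{hanson} give $U^\top\mathcal G_1 U\approx m_{1c}(z)I_r$, $V^\top\mathcal G_2 V\approx m_{2c}(z)I_r$, and a (more delicate) bound on the off-diagonal block $U^\top Z\mathcal G_2 V$; to leading order $\det \mathcal K(z)\approx\prod_j\bigl(d_j^{-2}-\mathcal T(z)\bigr)$, recovering the outlier locations $\theta(d_j)$ of Theorem \ref{thm_value} and governing the norm $\|\mathcal K(z)^{-1}\|$.

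For the bulk range $r^++1\le i\le cn$, I would choose $\Gamma_i$ as a rectangle centered at $\lambda_i$ of height $\eta_i=n^{-3/4}+i^{1/3}n^{-5/6}+n^{-1/2}\phi_n$ and width $\asymp \varkappa_i^{1/2}=i^{1/3}n^{-1/3}$, which by the sticking bound \eqref{eq_stickingeq} contains $\wt\lambda_i$ and stays away from the remaining spectrum. On this contour the square-root edge behavior (Lemma \ref{lambdar_sqrt}) gives $\text{Im}\,m_{1c}(z)\asymp\sqrt{\varkappa_i}$ and $|d_j^{-2}-\mathcal T(z)|\asymp \Delta(d_j)^2+\sqrt{\varkappa_i}$, so $\|\mathcal K(z)^{-1}\|\lesssim(\Delta(d_j)^4+\varkappa_i)^{-1/2}$. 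Substituting the isotropic bounds on the $W^\top\mathcal G_1\ub_j$ vector into the Woodbury formula, integrating around $\Gamma_i$, and grouping the $n^{-1}$ bulk term, the $\phi_n^3$ fourth-moment contribution, and the $\eta_i\sqrt{\varkappa_i}$ cross-term yields exactly \eqref{eq_evebulka}. When $A$ or $B$ is diagonal, the off-diagonal block $U^\top Z\mathcal G_2 V$ admits the sharper control $\OO_\prec(n^{-1/2})$, because the separable structure becomes a one-sided factor and the anisotropic local law improves to an averaged bound; this is precisely the step that kills the $\eta_i\sqrt{\varkappa_i}$ contribution.

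The weak-outlier regime \eqref{eq_propohold} is handled by the same machinery with a contour of scale $n^{\wt\tau}(\phi_n+n^{-1/3})$ around $\lambda_+$; by Theorem \ref{thm_value}, $\wt\lambda_i$ lies inside, but the sticking bound no longer isolates it from the $(r^+{+}1)$st eigenvalue, so one absorbs a polynomial loss from the square-root edge behavior and the Cauchy-integral bookkeeping, producing the $n^{3\wt\tau}$ prefactor in \eqref{eq_nonspikeeg2}. \emph{The principal obstacle} is the sharp control of $U^\top Z\mathcal G_2 V$ in the fully non-diagonal $A,B$ case: $Z$ is neither row- nor column-independent, so direct Hanson--Wright does not apply, and one must combine the anisotropic local law of \cite{yang2019edge} with a Schur-complement reduction to extract a bound that decays like $\sqrt{\text{Im}\,m_{2c}(z)/(n\,\text{Im}\,z)}$ rather than $1/\sqrt{n}$. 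This loss is exactly what propagates into the $\eta_i\sqrt{\varkappa_i}$ term and, conversely, disappears under the diagonal hypothesis.
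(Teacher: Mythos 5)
Your starting representation is where the argument breaks down. Writing $|\langle \ub_j,\wt\bxi_i\rangle|^2$ as a contour integral of $\langle \ub_j,\wt{\mathcal G}_1(z)\ub_j\rangle$ requires a contour $\Gamma_i$ that encloses $\wt\lambda_i$ and \emph{no other} eigenvalue, and for $r^++1\le i\le cn$ no such contour can be produced with high probability: the local eigenvalue spacing near $\gamma_i$ is of order $n^{-2/3}i^{-1/3}$, which is exactly the scale of the rigidity/sticking errors, so $\wt\lambda_i$ is never isolated at any deterministic scale. Worse, the rectangle you propose has width $\asymp\sqrt{\varkappa_i}=i^{1/3}n^{-1/3}$, i.e.\ it contains on the order of $n\rho_{2c}(\gamma_i)\sqrt{\varkappa_i}\asymp i^{2/3}n^{1/3}$ eigenvalues of $\wt S\wt S^\top$, so the claim that it ``stays away from the remaining spectrum'' is false and the integral computes the projection onto a huge cluster, not onto $\wt\bxi_i\wt\bxi_i^\top$. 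This is precisely why the paper does not use a contour here (contours are used only for the \emph{outlier} projections in Theorem \ref{thm_vector}, where Lemma \ref{lem_contour} guarantees isolation). Instead the paper uses the elementary spectral-decomposition inequality $|\langle \ub_j,\wt\bxi_i\rangle|^2\le \eta_i\,\operatorname{Im}\langle \ub_j^e,\wt G(\wt\lambda_i+\ri\eta_i)\ub_j^e\rangle$, which needs no isolation, at a carefully tuned random scale $\eta_i=\widehat\eta_i\vee n^{\epsilon}\eta_l(\gamma_i)$ (with $\widehat\eta_i$ solving \eqref{eq_defneta}), and then expands $\langle \ub_j^e,\wt G\,\ub_j^e\rangle$ via the linearized identity \eqref{defn_greenrep2} and the resolvent expansion \eqref{eq_resexp}, with the denominator $|1-d_j^2\mathcal T(z_i)|$ bounded below by Lemma \ref{lem_denominator}. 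Without replacing your contour step by this Im-resolvent bound (or an equivalent device), the proposal does not get off the ground.

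Two secondary points. First, your diagnosis that the improvement in the diagonal $A$ or $B$ case comes from a sharper bound on the cross block $U^\top Z\mathcal G_2 V$ is not the mechanism in the paper: the anisotropic local law \eqref{eq_locallaw} already controls all blocks of $\Ub^\top G\Ub-\overline\Pi$ simultaneously, and the gain under the diagonality (or vanishing third moment) hypothesis comes from the improved rigidity \eqref{rigidity2} and the extension of the local law to the larger domain $S_0$, which is what removes the $\eta_i\sqrt{\varkappa_i}$ term. Second, in the weak-outlier regime \eqref{eq_propohold} the factor $n^{3\wt\tau}$ does not arise from ``Cauchy-integral bookkeeping'' around $\lambda_+$ (such a contour would again swallow many bulk eigenvalues); it comes from invoking Lemma \ref{lem_denominator} with $\delta=\wt\tau-\epsilon$ to lower-bound $|1-d_j^2\mathcal T(z_i)|$ when $\wt\lambda_i$ is only $n^{2\wt\tau}(\phi_n^2+n^{-2/3})$ above the edge.
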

To appreciate this theorem, assume $\phi_n\le n^{-1/3}$ and $d_j$, $1\leq j \leq r^+$, satisfies $d_j - \alpha\gtrsim 1$. Then, for all $r^++1\leq i\leq n^{1/4}$, $\wt{\bxi}_{i}$ and $\wt{\bzeta}_i$ are delocalized with $|\langle \ub_j, \wt\bxi_{i}  \rangle|^2 \vee |\langle \vb_j, \wt\bzeta_{i}  \rangle|^2 \prec n^{-1}$.
On the other hand, when $d_j$ is close to the threshold $\alpha$, for a fixed finite $i$ so that $\wt\lambda_i$ is near the edge $\lambda_+$, \eqref{eq_evebulka} gives $|\langle \ub_j, \wt\bxi_{i}  \rangle|^2 \vee |\langle \vb_j, \wt\bzeta_{i}  \rangle|^2  \prec \frac{n^{-1}}{|d_j-\alpha|^2+n^{-2/3}}$; that is, the delocalization bound changes from the optimal order $n^{-1}$ to $n^{-1/3}$ as $d_j$ approaches $\alpha$.

Next, we state the behavior of the outlier singular vectors. For any $\textbf{A} \subset \mathbb O_+$, define
\begin{equation}
\nu_i(\textbf{A}):= \begin{cases}
   \min_{j \notin \textbf{A}} |d_j - d_i| & \textup{ if } i \in \textbf{A} \\
   \min_{j \in \textbf{A}} |d_j - d_i| & \textup{ if } i \notin \textbf{A}
\end{cases}
\end{equation}
and two projections,
\begin{equation} 
\mathcal{P}_{\textbf{A}}:=\sum_{k\in \textbf A} \wt{\bxi}_k\wt{\bxi}_k^\top \ \mbox{ and } \ 
\mathcal{P}'_{\textbf A}:=\sum_{k\in \textbf A} \wt{\bzeta}_k\wt{\bzeta}_k^\top.
\end{equation}

\begin{theorem}\label{thm_vector} Suppose Assumption \ref{assum_main} holds. Fix any $\textbf{A} \subset \mathbb O_+$, we have  
\begin{align}\label{eq_spikedvector}
\big| \langle \ub_i, \mathcal{P}_\textbf{A}\bu_j\rangle- \delta_{ij}\mathbbm 1(i\in \textbf{A}) a_1(d_i)\big|\, \vee \, &
\big| \langle \vb_i, \mathcal P'_\textbf{A}\vb_j\rangle- \delta_{ij}\mathbbm 1(i\in \textbf{A})a_2(d_i)\big| \\
& \prec  Q(i,j,\textbf{A},n) \nonumber
\end{align}
for all $i,j=1, \cdots, r$, where 
\begin{align*}
    Q(i,j,\textbf{A},n) =&\, \mathbbm 1(i\in \textbf{A},j\notin \textbf{A}) \Delta(d_i)\left[  \sqrt{\frac{\phi_n^2}{\nu_{j}(\textbf{A})}}+\frac{\psi_{1}(d_j)\Delta(d_j) }{\nu_{j}(\textbf{A})} \right]  \nonumber\\
& +  \mathbbm 1(i\notin \textbf{A},j\in \textbf{A}) \Delta(d_j)\left[\sqrt{\frac{\phi_n^2}{\nu_{i}(\textbf{A})}}+\frac{\psi_{1}(d_i)\Delta(d_i) }{\nu_{i}(\textbf{A})} \right]\\
&+\sqrt{R(i,\textbf{A}) R(j,\textbf{A})}  \,,
\end{align*}
$\psi_1(d_i) := \phi_n + \frac{n^{-1/2}}{\Delta(d_i)}$, and
\begin{align}
   R(i,\textbf{A})  := \mathbbm 1(i\in \textbf{A}) \psi_{1}(d_i) + \mathbbm 1(i \notin \textbf{A})   \frac{\phi_n^2}{\nu_{i}(\textbf{A})}  +  \frac{\psi_{1}^2(d_i)\Delta^2(d_i) }{\nu^2_{i}(\textbf{A})}\,.
\end{align}
\end{theorem}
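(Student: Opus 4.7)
The plan is to represent the spectral projections as contour integrals of the sample resolvent, then reduce to computing bilinear forms of the noise resolvent against the deterministic signal vectors $\ub_j, \vb_j$. By Theorems~\ref{thm_value} and \ref{thm_eigenvaluesticking}, for any $\mathbf{A}\subset\mathbb{O}_+$ one can choose a contour $\Gamma_{\mathbf{A}}\subset\mathbb{C}$ that encloses exactly $\{\wt\lambda_k:k\in\mathbf{A}\}$ and stays at distance $\asymp \min_{k\in\mathbf{A}}\Delta(d_k)^2$ from $\lambda_+$ and $\asymp\nu_i(\mathbf{A})$ from every $\theta(d_\ell)$ with $\ell\notin\mathbf{A}$. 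Then
\[
\langle \ub_i,\mathcal{P}_{\mathbf{A}}\ub_j\rangle \;=\; -\frac{1}{2\pi\ri}\oint_{\Gamma_{\mathbf{A}}} \ub_i^\top\wt{\mathcal{G}}_1(z)\ub_j\,dz,
\]
and analogously for $\langle \vb_i,\mathcal{P}'_{\mathbf{A}}\vb_j\rangle$ via $\wt{\mathcal{G}}_2(z)$, so everything reduces to understanding the kernel $\wt{\mathcal{G}}_1(z)$ on the deterministic pair $(\ub_i,\ub_j)$.

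The central algebraic step is a Woodbury-type identity that writes $\wt{\mathcal{G}}_1(z)$ in terms of $\mathcal{G}_1(z),\mathcal{G}_2(z)$ and the rank-$2r$ signal data $U=[\ub_1,\ldots,\ub_r]$, $V=[\vb_1,\ldots,\vb_r]$, $D=\mathrm{diag}(d_1,\ldots,d_r)$. Expanding $\wt S\wt S^\top - ZZ^\top = UDV^\top Z^\top + ZVDU^\top + UD^2U^\top$ as a rank-$2r$ perturbation of $ZZ^\top - zI$ yields a representation of the form
\[
\wt{\mathcal{G}}_1(z) \;=\; \mathcal{G}_1(z) \;-\; \bigl[\mathcal{G}_1(z)U,\; \mathcal{G}_1(z)ZV\bigr]\,\bigl[\mathcal{M}(z)-\mathcal{H}(z)\bigr]^{-1}\,\begin{bmatrix} U^\top \mathcal{G}_1(z) \\ V^\top Z^\top \mathcal{G}_1(z)\end{bmatrix},
\]
where $\mathcal{M}(z)$ is a deterministic $2r\times 2r$ block matrix depending only on $D$ and $z$, and $\mathcal{H}(z)$ is a $2r\times 2r$ random matrix whose entries are bilinear forms of $\mathcal{G}_1,\mathcal{G}_2$ on $\{\ub_k,\vb_k\}$. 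The anisotropic local law for the separable covariance model of \cite{yang2019edge} concentrates $\mathcal{H}(z)$ around a deterministic matrix whose entries are polynomials in $m_{1c},m_{2c},\mathsf{M}_{1c},\mathsf{M}_{2c}$, so the outlier locations are the roots of the deterministic master equation $\det[\mathcal{M}(z)-\overline{\mathcal{H}}(z)]=0$, which by direct computation reduce to $z=\theta(d_k)$. Evaluating the contour integral by residues at the $\theta(d_k)$ and using implicit differentiation of $\mathcal{T}(z)=d_k^{-2}$, together with $m_{1c}(\theta(d_k))$, $m_{2c}(\theta(d_k))$, and $\mathcal{T}'(\theta(d_k))$, produces exactly the main terms $\delta_{ij}\mathbbm{1}(i\in\mathbf{A})a_1(d_i)$ and $\delta_{ij}\mathbbm{1}(i\in\mathbf{A})a_2(d_i)$ from \eqref{eq_functions}.

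The error term $Q(i,j,\mathbf{A},n)$ then arises by combining three ingredients evaluated along $\Gamma_{\mathbf{A}}$: the anisotropic law error $|\ub_k^\top\mathcal{G}_1(z)\ub_\ell - \Pi_1(z)_{k\ell}|\prec \phi_n + (n\eta)^{-1/2}$ and its $\mathcal{G}_2, Z$-counterparts; the radius of $\Gamma_{\mathbf{A}}$, which furnishes the factors $\Delta(d_i)$ and $\nu_i(\mathbf{A})^{-1}$; and a Cauchy--Schwarz split across the two $\mathcal{G}_1$-factors flanking $[\mathcal{M}-\mathcal{H}]^{-1}$, which produces the $\sqrt{R(i,\mathbf{A})R(j,\mathbf{A})}$ contribution. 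The principal obstacle is obtaining the sharp gap dependence $\nu_i(\mathbf{A})^{-1}$ and $\nu_i(\mathbf{A})^{-2}$ in the cross cases $i\in\mathbf{A},\, j\notin\mathbf{A}$: this requires a non-uniform choice of contour that carefully skirts the ``forbidden'' poles $\{\theta(d_\ell):\ell\notin\mathbf{A}\}$, and fine tracking of how the anisotropic errors transform under that geometry, in the spirit of \cite{DY2019,ding2021spiked}. A secondary difficulty is that the separable structure $Z=A^{1/2}XB^{1/2}$ couples $\mathcal{G}_1$ and $\mathcal{G}_2$ in the master equation through both $\mathsf{M}_{1c}$ and $\mathsf{M}_{2c}$; this is handled most cleanly by linearizing into a $(p+n)\times(p+n)$ block resolvent and invoking the anisotropic local law there, following \cite{yang2019edge}.
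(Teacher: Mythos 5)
Your overall strategy is the same as the paper's: represent $\mathcal{P}_{\mathbf A}$ by a contour integral of the resolvent, pass to the linearized $(p+n)\times(p+n)$ block resolvent, use the Woodbury-type identity to reduce to $(\mathbf D^{-1}+\mathbf U^\top G(z)\mathbf U)^{-1}$, expand around the deterministic matrix $\overline\Pi(z)$ using the anisotropic local law, and read off the main terms $a_1(d_i),a_2(d_i)$ from residues at $\theta(d_k)$. That part is sound and matches the paper's proof of its Proposition on outlier eigenvectors. However, as a proof of the theorem as stated there are two genuine gaps. First, the theorem imposes no separation between the outliers indexed by $\mathbf A$ and those indexed by $\mathbb O_+\setminus\mathbf A$. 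When some $d_i$, $i\in\mathbf A$, is within the fluctuation scale $\psi_1(d_i)=\phi_n+n^{-1/2}/\Delta(d_i)$ of some $d_j$, $j\notin\mathbf A$, no contour of the kind you posit exists: you cannot deterministically enclose $\wt\lambda_i$ while excluding $\wt\lambda_j$ at distance $\asymp\nu_i(\mathbf A)$, because the two sample eigenvalues are not resolvable at that scale. The bound is still nontrivial in this regime (e.g.\ for $i=j\in\mathbf A$ one must show the deviation is $\prec n^{o(1)}\psi_1^2\Delta^2/\nu_i^2\asymp n^{o(1)}\Delta^2(d_i)\asymp a_1(d_i)$, which does not follow from $\langle \ub_i,\mathcal P_{\mathbf A}\ub_i\rangle\le 1$), and the paper handles it by sandwiching $\mathbf A$ between non-overlapping index sets $L_1(\mathbf A)\subset\mathbf A\subset L_2(\mathbf A)$, using monotonicity of projections and Cauchy--Schwarz to transfer the contour-based estimate from $L_1,L_2$ back to $\mathbf A$. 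Your proposal contains no substitute for this step.

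Second, $\mathbb O_+$ includes signals with $d_k-\alpha$ only barely above $\phi_n+n^{-1/3}$. For such near-critical indices the contour radius is $\lesssim\phi_n+n^{-1/3}$ and the resolvent-expansion error $\|\Omega(z)\|$ is no longer dominated by the distance to the poles, so the second-order remainder cannot be closed by the argument you sketch; the paper first proves the result under the stronger condition $d_i-\alpha\ge n^{\tau'}(\phi_n+n^{-1/3})$ and then removes it by splitting $\mathbf A=S_0\cup S_1$ along a spectral gap and controlling the $S_0$ part with the non-outlier delocalization bounds of Theorem \ref{thm_noneve} (plus Cauchy--Schwarz), not with the contour calculus. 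A smaller point: on the outlier contour the relevant anisotropic error is $\phi_n+n^{-1/2}(\kappa_z+\eta)^{-1/4}$ (Theorem \ref{lem_locallaw}), not $\phi_n+(n\eta)^{-1/2}$, which is vacuous as $\eta\downarrow 0$ on the real axis; using the correct form is exactly what yields $\psi_1(d_i)$ after the change of variables $z=\theta(\zeta)$, since $|\theta(\zeta)-\lambda_+|\asymp|\zeta-\alpha|^2$.
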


{Theorem \ref{thm_vector} establishes the relationship between outlier singular vectors of $\wt S$ and $S$. To read this seemingly complicated bound, we take a look at all quantities one by one. Note that the quantity $\nu_i(\textbf{A})$ encodes the spectral gap in that $\nu_i(\{i\})$ is the spectral gap of $d_i$. Also, $\nu_i(\{j\})=\nu_j(\{i\})$ is the distance between $d_i$ and $d_j$. Next, the quantity $\psi_1(d_i)$ could be understood as the ``minimally required spectral gap'' that we need to recover a single singular vector even if it is strong. We have $\psi_1(d_i)\to 0$ as $n\to \infty$, which implies $R(i,\textbf{A})\to 0$ and hence $Q(i,j,\textbf{A},n)\to 0$ as $n\to \infty$ when $\nu_i(\textbf{A})$ and $\nu_j(\textbf{A})$ are both away from zero since we assume singular values are bounded in Assumption \ref{assum_main}.} 

Next, take the left singular vectors $\ub_i$ and $\wt \bxi_i$ as an example. Let $\textbf{A} = \{i\}$. From \eqref{eq_spikedvector}, {since $\langle \ub_i, \mathcal{P}_{\{i\}}\bu_i\rangle=|\langle \ub_i, \wt\bxi_i \rangle|^2$ and $Q(i,i,\{i\},n)=R(i,\{i\})=\psi_1(d_i)+\frac{\psi_{1}^2(d_i)\Delta^2(d_i)}{ \nu_{i}^2(\{i\})}$}, we have that 
\begin{equation}\label{eq_projectrueJJ1}
\big||\langle \ub_i, \wt\bxi_i \rangle|^2-a_1(d_i)\big|\vee \big||\langle \vb_i, \wt\bzeta_i \rangle|^2-a_2(d_i)\big| \prec \psi_1(d_i)+\frac{\psi_{1}^2(d_i)\Delta^2(d_i)}{ \nu_{i}^2(\{i\})}  \,.
\end{equation}
When $d_i$ is well-separated from other signals in that $\nu_i(\{i\}) \gg \psi_1(d_i)=\phi_n+\frac{n^{-1/2}}{\Delta(d_i)}$,
the error term converges to 0 {and hence the biases of $\wt\bxi_i$ recovering $\ub_i$ and $\wt\bzeta_i$ recovering $\vb_i$ are determined by $\sqrt{a_1(d_i)}$.
If further $d_i$ is sufficiently strong, a direct calculation using \eqref{eq_defnt}, \eqref{eq_defm1m2}, and \eqref{eq_evoutlier} shows that $a_1(d_i) = \frac{m_{1c}(\theta(d_i))}{d_i^2 \mathcal{T}'(\theta(d_i))} \approx 1$ and $a_2(d_i)  \approx 1$, which leads to $|\langle \ub_i,\wt\xi_i \rangle|^2 \approx 1$ and $|\langle \vb_i,\wt\zeta_i \rangle|^2 \approx 1$.
Similarly, we can set $\textbf{A} = \{k\}$, where $k\neq i$. In this situation, $Q(i,i,\{k\},n)=R(i,\{k\})=\frac{\phi_2^2}{\nu_i(\{k\})}+\frac{\psi_{1}^2(d_i)\Delta^2(d_i)}{ \nu_{i}^2(\{k\})}$, and hence we have
\begin{equation}\label{eq_projectrueJJ2}
|\langle \ub_i, \wt\bxi_k \rangle|^2\vee |\langle \vb_i, \wt\bzeta_k \rangle|^2 \prec \frac{\phi_2^2}{\nu_i(\{k\})}+\frac{\psi_{1}^2(d_i)\Delta^2(d_i)}{ \nu_{i}^2(\{k\})} \,,
\end{equation}
which is sufficiently small when $\nu_i(\{k\})=|d_i-d_k| \gg \psi_1(d_i)=\phi_n+\frac{n^{-1/2}}{\Delta(d_i)}$; that is, $\wt\bxi_k$ and $\ub_i$, as well as $\wt\bzeta_k$ and $\vb_i$, are almost perpendicular. In the last example, we assume there are two consecutive singular values, $d_i\geq d_{i+1}$, that are close, or even the same, but away from other singular values; that is, $\nu_i(\{i\})\ll\psi_1(d_i)$ and $\nu_i(\{i,i+1\})\gg\psi_1(d_i)$. In this case, we expect difficulties to recover $\ub_i$ and $\ub_{i+1}$. Consider $\textbf{A}:=\{i,i+1\}$. In this case, we have $Q(i,i,\textbf{A},n)=R(i,\textbf{A})=\psi_1(d_i)+\frac{\psi_{1}^2(d_i)\Delta^2(d_i)}{\nu_{i}^2(\textbf{A})}$ and hence the desired result that
\begin{equation}\label{eq_projectrueJJ3}
\big| \langle \ub_i, \mathcal{P}_\textbf{A}\bu_i\rangle-a_1(d_i)\big|\vee 
\big| \langle \vb_i, \mathcal P'_\textbf{A}\vb_i\rangle- a_2(d_i)\big|\prec \psi_1(d_i)+\frac{\psi_{1}^2(d_i)\Delta^2(d_i)}{ \nu_{i}^2(\textbf{A})}\,,
\end{equation}
which the error term converges to 0. This means that $\ub_i$ can be well recovered by a vector in the vector space spanned by $\wt\bxi_i$ and $\wt\bxi_{i+1}$.

Finally, we shall compare \eqref{eq_projectrueJJ1} in Theorem \ref{thm_noneve} and \eqref{eq_nonspikeeg2} when $i \in \mathbb O_+$ and $\Delta(d_i)^2 = d_i-\alpha \leq n^{\wt\tau}(\phi_n + n^{-1/3})$ holds. 
To simplify the discussion, we assume again $d_i$ is well-separated from other signals in that $\nu_i(\{i\}) \gg \psi_1(d_i)$, and let $\phi_n \lesssim n^{-1/3}$. With the lower bound of $d_i-\alpha$ given in \eqref{eq_assm_41} when $i \in \mathbb{O}_+$, now we have $n^{-1/3} < \Delta(d_i)^2 \leq  n^{-1/3+\wt\tau}$. Without loss of generallity, we let $\Delta(d_i)^2 = n^{-1/3+\wt\tau}$ and compare the error bounds in \eqref{eq_projectrueJJ1} and \eqref{eq_nonspikeeg2} for $\wt\tau \in (0,1/9)$. The above conditions give us $\psi_1(d_i) \asymp \phi_n + n^{-1/3-\wt\tau/2}$. Also, by the definition of $a_1(x)$ and $a_2(x)$ in \eqref{eq_functions} and the approximations of integral transforms in \eqref{eq_estimm}, \eqref{eq_s36_1}, and \eqref{eq_s36_2}, we have $|a_1(d_i)| \asymp |a_2(d_i)| \asymp n^{-1/3+\wt\tau}$. Above approximations of $\Delta(d_i)$, $\psi(d_i)$, $a_1(d_i)$, and $a_2(d_i)$ and \eqref{eq_projectrueJJ1} give us the following bound $
|\langle \ub_i, \wt\bxi_i \rangle|^2\vee |\langle \vb_i, \wt\bzeta_i \rangle|^2 \prec n^{-1/3+\tau} + \phi_n$.
Moreover, together with $i = j \in \mathbb{O}_+$, \eqref{eq_nonspikeeg2} gives us
$|\langle \ub_i, \wt\bxi_{i}  \rangle|^2  
\vee |\langle \vb_i, \wt\bzeta_{i}  \rangle|^2  \prec n^{-1/3+\wt\tau}.$ Therefore, as mentioned above in Theorem \ref{thm_noneve}, for $i\in \mathbb{O}_+$, if the signal is strong but not strong enough, then  \eqref{eq_nonspikeeg2} gives us a smaller absolute error bound compared to \eqref{eq_projectrueJJ1}.
}

\section{Proposed eOptShrink algorithm}\label{section proposed algo}
We are ready to introduce a data-driven algorithm to estimate the optimal shrinker $\varphi$. We call this algorithm {\em extended OptShrink} (eOptShrink). There are three main steps in eOptShrink.
Based on the delocalization and bias estimate of singular vectors in Theorems \ref{thm_noneve} and \ref{thm_vector} and the sticking result in Theorem \ref{thm_eigenvaluesticking}, we show that under a mild condition, we could accurately estimate $\lambda_{k}$ for the top finite $k$ eigenvalues of $ZZ^\top$ when $n$ is sufficiently large and when $r^+$ is known, and hence a more precise estimate of $\pi_{ZZ^\top}$. 
However, in practice $r$ and $r^+$ are unknown, and based on the established theory, estimating $r$ might be challenging. We show that we could accurately estimate $r^+$ via estimating $\lambda_+$. 
The pseudocode of eOptShrink is summarized in Algorithm \ref{alg}. The Matlab implementation can be found in \url{https://github.com/PeiChunSu/eOptShrink}. Below we detail the algorithm and its associated theoretical support with a asymptotic convergence rate.

\begin{algorithm}[hbt!]
\begin{algorithmic}
\caption{eOptShrink}\label{alg}
\STATE \textbf{Input}: $\wt S = \sum_{i=1}^{ p\wedge n}  \sqrt{\wt\lambda_i} \wt\bxi_i \wt\bzeta_i^\top$, {a constant $c = \min(\frac{1}{2.01}, \frac{1}{\log(\log n)})$}, and the desired loss function.
\STATE \textbf{Compute}: \\
(i) $\widehat \lambda_+$ in \eqref{eq_hatlanbda_+} and $\widehat r^+$ in \eqref{eq_adrk0}.\\
(ii) $\widehat F_{\texttt{e}}(x)$ in \eqref{eq_imp} and $\widehat d_{\texttt{e},j}$,  $\widehat a_{\texttt{e},1,j}$ and $\widehat a_{\texttt{e},2,j}$ for $1\leq j\leq \widehat r^+$ in \eqref{eq_a1a2}. \\ 
(iii) $\widehat\varphi_{\texttt{e},i}$ for $1\leq i \leq \widehat r^+$ in \eqref{eq_adsh} for the associated norm. \\
\STATE \textbf{Output}: The estimator of the clean data matrix $\widehat S_{\widehat\varphi} = \sum_{i=1}^{\widehat r^+} \widehat \varphi_{\texttt{e},i} \wt\bxi_i \wt\bzeta_i^\top$.
\end{algorithmic}
\end{algorithm}

\subsection{Existing imputation approach}
We review an imputation scheme proposed in \cite{donoho2020screenot}. 
With the square root behavior that $\rho_{1c}(x)\asymp (\lambda_+-x)^{1/2}$ as $x\to \lambda_+$, where $x\in \mathbb{R}$, when $Z=A^{1/2}X$ \cite{silverstein1995analysis}, for a fixed large integer $\varpi$, when $p$ and $n$ are sufficiently large, we have
\begin{equation}\label{expansion of 1/2 edge}
    \frac{\ell-1}{p} =\int_{\gamma_{\ell}}^{\lambda_+} \rho_{1c}(z)dz =C'\int_{\gamma_{\ell}}^{\lambda_+} (\lambda_+-z)^{1/2}dz = \frac{2C'}{3}(\lambda_+-\gamma_{\ell})^{3/2}
\end{equation}
for $1\leq \ell \leq \varpi$ and $C'>0$, where $\gamma_\ell$ is the classical location defined in \eqref{eq_classical}. Since $\gamma_\ell$ can be approximated by $\lambda_\ell$ (see Lemma \ref{lem_rigidty} for details), 
this leads to an estimate of the distance between the $j$-th and $\ell$-th  eigenvalues, where $1\leq j,\ell \leq \varpi$,
\begin{equation}\label{expansion of 1/2 edge2}
    \lambda_{\ell}-\lambda_{j} \approx C'' \left[ \left( \frac{j-1}{p} \right)^{2/3} - \left( \frac{\ell-1}{p} \right)^{2/3} \right] \,,
\end{equation}
for some unknown $C''>0$.
In \cite{donoho2020screenot}, by fixing an integer $k$ so that $r\leq k$ and $2k+1<\varpi$ for a large constant $\varpi$, $C''$ is estimated by 
\[
\check C'':=\frac{\wt\lambda_{k+1}-\wt\lambda_{2k+1}}{(2k/p)^{2/3}-(k/p)^{2/3}}
\]
and the $j$-th eigenvalue, where $j=1,\ldots,k$, as a missing value is imputed by 
\[
\check\lambda_{j}:=\wt\lambda_{k+1} + \frac{1-(\frac{j-1}{k})^{2/3}}{2^{2/3}-1}( \wt\lambda_{k+1}-\wt\lambda_{2k+1} )\,, 
\]
where {$k=4r$ is suggested in \cite[p33]{donoho2020screenot}}. The cumulative distribution function (CDF) of $\pi_{ZZ^\top}$ is estimated by 
\begin{equation}\label{eq_imp0}
    \widehat F_{\texttt{imp}}(x) := \frac1p \sum_{j=k+1}^p \mathbbm 1(\wt\lambda_{j}\leq x) + \frac1p \sum_{j=1}^{k} \mathbbm 1(\check\lambda_j \le x) \,.
\end{equation}
{With $\widehat F_{\texttt{imp}}$, the matrix denoising algorithm, ScreeNOT, is given in \eqref{eq_SN}} 

\subsection{Proposed data-driven optimal shrinker, eOptShrink}\label{sec_imputation}

\subsubsection{Step 1: estimate $r^+$}\label{sec_rank}
We estimate $\lambda_+$ first and use it to estimate $r^+$. With the discussion for \eqref{expansion of 1/2 edge2} and the sticking behavior in \eqref{eq_stickingeq}, we modify the estimator $\check C''$ for the constant $C''$ by constructing 
\[
\widehat C: =\frac{\wt\lambda_{k+r^++1}-\wt\lambda_{2k+r^++1}}{(2k/p)^{2/3}-(k/p)^{2/3}}\,.
\] 
Since $r^+$ is unknown but fixed, we set $k = \lfloor n^c\rfloor \gg r^+$, where $\lfloor \cdot\rfloor $ is the floor function, for a small fixed constant $c>0$ 
and construct an estimator of $\lambda_+$ as 
\begin{equation}\label{eq_hatlanbda_+}
    \widehat{\lambda}_{+}  := \wt\lambda_{\lfloor n^c\rfloor +1} + \frac{1}{2^{2/3}-1}\left( \wt\lambda_{\lfloor n^c\rfloor +1}-\wt\lambda_{2\lfloor n^c\rfloor +1} \right) \,.
\end{equation}
Then we estimate $r^+$ based on Theorem \ref{thm_value} and set
\begin{equation}\label{eq_adrk0}
\widehat r^+ = \left|\{\wt\lambda_i|\wt \lambda_i>\widehat \lambda_{+}+n^{-1/3}\}\right|\,.
\end{equation}
The following theorems guarantee the performance of $\widehat \lambda_+$ and $\widehat r^+$. 
\begin{theorem}\label{thm_numbershrink}
Suppose Assumption \ref{assum_main} and \eqref{alpha+} hold and $c\in (0,1)$.
We have
\begin{equation}\label{lambda_+_comv}
   |\widehat\lambda_+ - \lambda_+| \prec \frac{1}{n\alpha^+}+n^{-\min\{\frac{2}{3}+\frac{c}{3},\frac{3}{4},\frac{5}{6}-\frac{c}{3},\frac{4}{3}-\frac{4c}{3}\}}+ n^{-1/2}\phi_n+n^{-1/3-2c/3}\phi_n^2.
\end{equation}
\end{theorem}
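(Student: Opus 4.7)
The estimator $\widehat\lambda_+$ is a two-point extrapolation from two intermediate observed eigenvalues toward the bulk right edge, and the coefficient $(2^{2/3}-1)^{-1}$ is calibrated precisely so that the leading square-root contribution cancels. Write $m:=\lfloor n^c\rfloor$, so
\[
\widehat\lambda_+ = \wt\lambda_{m+1} + \frac{1}{2^{2/3}-1}\bigl(\wt\lambda_{m+1}-\wt\lambda_{2m+1}\bigr).
\]
Introduce two intermediate quantities
\[
\tilde\lambda_+ := \lambda_{m+1-r^+} + \frac{1}{2^{2/3}-1}\bigl(\lambda_{m+1-r^+}-\lambda_{2m+1-r^+}\bigr),
\]
\[
\bar\lambda_+ := \gamma_{m+1} + \frac{1}{2^{2/3}-1}\bigl(\gamma_{m+1}-\gamma_{2m+1}\bigr),
\]
where $\{\lambda_j\}$ are the noise-only eigenvalues and $\{\gamma_j\}$ the classical locations \eqref{eq_classical} of $\mu_{2c}$. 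The plan is to control the three pieces in
$\widehat\lambda_+-\lambda_+ = (\widehat\lambda_+-\tilde\lambda_+) + (\tilde\lambda_+-\bar\lambda_+) + (\bar\lambda_+-\lambda_+)$ separately.

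For the first piece, apply Theorem \ref{thm_eigenvaluesticking} to $\wt\lambda_{m+1}$ and $\wt\lambda_{2m+1}$. Since $r^+\le r$ is fixed and $c<1$, both $i = m+1-r^+$ and $i=2m+1-r^+$ are at most $\tau n$ for large $n$, so \eqref{eq_stickingeq} applies; substituting $i\asymp n^c$ converts the $i^{1/3}n^{-5/6}$ and $i^{-2/3}n^{-1/3}\phi_n^2$ terms into $n^{-5/6+c/3}$ and $n^{-1/3-2c/3}\phi_n^2$, which together with the other terms yields
\[
|\widehat\lambda_+-\tilde\lambda_+|\prec \frac{1}{n\alpha_+}+n^{-3/4}+n^{-5/6+c/3}+n^{-1/2}\phi_n+n^{-1/3-2c/3}\phi_n^2
\]
after absorbing the harmless constant $(2^{2/3}-1)^{-1}$. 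For the second piece, first invoke rigidity of noise eigenvalues $|\lambda_j-\gamma_j|\prec j^{-1/3}n^{-2/3}$ (Lemma~\ref{lem_rigidty}, a standard consequence of the averaged and entrywise local law in \cite{yang2019edge}), which at $j\asymp n^c$ gives $\OO(n^{-2/3-c/3})$; then the fixed shift by $r^+$ in the index produces
$|\gamma_{m+1-r^+}-\gamma_{m+1}|\lesssim r^+\cdot m^{-1/3}n^{-2/3}=\OO(n^{-2/3-c/3})$.
Both are absorbed by the $n^{-(2/3+c/3)}$ term in the target bound. For the third piece, Lemma \ref{lambdar_sqrt} together with analyticity of $\rho_{1c}$ at the right edge gives $\rho_{1c}(\lambda_+-x)=C'\sqrt{x}\bigl(1+\OO(x)\bigr)$; integrating as in \eqref{expansion of 1/2 edge} and using the rigidity-level closeness of $\gamma_j$ to the inverse CDF of $\rho_{1c}$, one obtains
\[
\lambda_+-\gamma_j = C''\Bigl(\tfrac{j-1}{p}\Bigr)^{2/3}\Bigl(1+\OO\bigl((j/p)^{2/3}\bigr)\Bigr)
\]
uniformly over $j\le 2m$. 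Plugging this into $\bar\lambda_+$, the coefficient $(2^{2/3}-1)^{-1}$ forces the leading $(m/p)^{2/3}$ contributions to cancel, leaving $|\bar\lambda_+-\lambda_+|=\OO((m/n)^{4/3})=\OO(n^{-4/3+4c/3})$, i.e., the last term inside the $\min$. Combining the three bounds yields \eqref{lambda_+_comv}.

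The chief technical burden is justifying the subleading term $\OO((j/p)^{4/3})$ in the expansion of $\lambda_+-\gamma_j$ uniformly in $j\le 2\lfloor n^c\rfloor$; this goes beyond the qualitative square-root statement of Lemma \ref{lambdar_sqrt} and requires analyzing the next-order Taylor expansion of the self-consistent equation $f(z,\mathsf M_{2c}(z))=0$ around $(\lambda_+,\mathsf M_{2c}(\lambda_+))$, exploiting the vanishing first derivative at the edge (Lemma \ref{lemma compact support of mu result}) so that $z\mapsto \mathsf M_{2c}(z)$ has a square-root singularity at $\lambda_+$ with analytic coefficients. The other two inputs, Theorem \ref{thm_eigenvaluesticking} and the rigidity of noise eigenvalues, are imported as black boxes.
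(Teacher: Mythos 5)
Your proposal follows essentially the same route as the paper's proof: there, too, one shows that the classical-location extrapolation $\gamma_{n^c-r^++1}+(2^{2/3}-1)^{-1}(\gamma_{n^c-r^++1}-\gamma_{2n^c-r^++1})$ recovers $\lambda_+$ up to $\OO(n^{4(c-1)/3}+n^{-1})$ using the square-root edge behavior of Lemma \ref{lambdar_sqrt} (this is \eqref{eq1}--\eqref{eq2}), then replaces classical locations by the noise eigenvalues via rigidity (Lemma \ref{lem_rigidty}) and the noise eigenvalues by $\wt\lambda_{n^c+1},\wt\lambda_{2n^c+1}$ via the sticking estimate \eqref{eq_stickingeq}; your three-piece telescoping is the same decomposition in different packaging, with the same key inputs. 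Two remarks. First, you invoke rigidity in its improved form $|\lambda_j-\gamma_j|\prec j^{-1/3}n^{-2/3}$, but under Assumption \ref{assum_main} alone only \eqref{rigidity} is available; \eqref{rigidity2} additionally requires vanishing third moments or a diagonal $A$ or $B$. This is harmless but should be repaired: at $j\asymp n^c$ the extra terms of \eqref{rigidity} are of order $n^{-3/4}+n^{-5/6+c/3}+n^{-1/2}\phi_n+n^{-1/3-2c/3}\phi_n^2$ (and the indicator term, when active, is dominated by $n^{-1/2}\phi_n$), all of which already appear in \eqref{lambda_+_comv}, so the conclusion is unaffected. Second, your identification of the refined edge expansion as the technical burden is apt: with only the additive $\OO(x)$ correction stated in Lemma \ref{lambdar_sqrt}, converting the relation between the quantities $(\lambda_+-\gamma_j)^{3/2}$ into a relation between the $\lambda_+-\gamma_j$ themselves costs a factor $(\lambda_+-\gamma_{2n^c})^{-1/2}\asymp n^{(1-c)/3}$, so the classical-location piece is only $\OO(n^{c-1}+n^{-(2+c)/3})$ unless one upgrades to a relative error $1+\OO(\lambda_+-\gamma_j)$ as you propose; the paper's own step from \eqref{eq1} to \eqref{eq2} carries the error through without this amplification, so your more explicit treatment of this point is, if anything, closer to what a fully rigorous argument requires.
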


\begin{theorem}\label{thm_numbershrink2}
Suppose Assumption \ref{assum_main} and \eqref{alpha+} hold true for some $\varepsilon$ such that $n^{\varepsilon}(\phi_n+n^{-1/3})>n^{-1/6}$. Denote the event $\Xi(r^+):=\{\widehat{r}^+=r^+\}$ and let $0<c<1/2$. Then $\Xi(r^+)$ is an event with high probability. 
\end{theorem}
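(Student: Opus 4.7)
The plan is to establish the bracketing
\[
\wt\lambda_{r^+} \,>\, \widehat\lambda_+ + n^{-1/3} \,\geq\, \wt\lambda_{r^+ + 1}
\]
with high probability; coupled with the monotonicity of $\{\wt\lambda_i\}$, this forces $\widehat r^+ = r^+$ on the intersection of the two events, and a union bound gives the conclusion.

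\emph{Step 1 (control of $\widehat\lambda_+$).} I would first invoke Theorem \ref{thm_numbershrink} to obtain $|\widehat\lambda_+ - \lambda_+| \prec n^{-1/3 - \delta}$ for some $\delta = \delta(c, \varepsilon, a) > 0$. Every summand in \eqref{lambda_+_comv} is manageable: $\tfrac{1}{n\alpha_+} \leq n^{-5/6}$ by the strengthened hypothesis $\alpha_+ > n^{-1/6}$; the polynomial exponent $\min\{\tfrac{2}{3}+\tfrac{c}{3}, \tfrac{3}{4}, \tfrac{5}{6}-\tfrac{c}{3}, \tfrac{4}{3}-\tfrac{4c}{3}\}$ exceeds $\tfrac{2}{3}$ strictly whenever $0 < c < \tfrac{1}{2}$; and since $\phi_n = n^{2/a - 1/2}$ is polynomially decaying under Assumption \ref{assum_main}(i), $n^{-1/2}\phi_n$ and $n^{-1/3 - 2c/3}\phi_n^2$ are both $o(n^{-1/3})$.

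\emph{Step 2 (right inequality).} I would apply the sticking bound \eqref{eq_stickingeq} at $i = 1$ together with the standard rigidity estimate $|\lambda_1 - \lambda_+| \prec n^{-2/3}$ for the top eigenvalue of $ZZ^\top$, which is a consequence of the edge universality established in \cite{yang2019edge}. The same reasoning as in Step 1, noting that every $\phi_n$-dependent term in \eqref{eq_stickingeq} carries an extra $n^{-1/3}$ or $n^{-1/2}$ factor, yields $|\wt\lambda_{r^+ + 1} - \lambda_+| = o(n^{-1/3})$. Subtracting the bound from Step 1 then gives the right inequality with room to spare.

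\emph{Step 3 (left inequality).} The crux is the expansion
\[
\theta(d) - \lambda_+ \,\asymp\, (d - \alpha)^2 \,=\, \Delta(d)^4 \qquad \text{as } d \to \alpha^+,
\]
which I would derive from the square-root edge behavior of $\rho_{1c}$ at $\lambda_+$ (Lemma \ref{lambdar_sqrt}): the Stieltjes transforms admit square-root expansions of the form $m_{jc}(x) = m_{jc}(\lambda_+) - c_j(x - \lambda_+)^{1/2} + O(x - \lambda_+)$ as $x \downarrow \lambda_+$, so $\mathcal{T}(\lambda_+) - \mathcal{T}(x) \asymp (x - \lambda_+)^{1/2}$, and inverting $\theta(d) = \mathcal{T}^{-1}(d^{-2})$ yields the asymptotic. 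Combining this with the outlier location bound \eqref{eq_evoutlier}, one obtains
\[
\wt\lambda_{r^+} - \lambda_+ \,=\, \Delta(d_{r^+})^4\,(1 + o_\prec(1)),
\]
because $\phi_n\Delta(d_{r^+})^2 \leq n^{-\varepsilon}\Delta(d_{r^+})^4$ by \eqref{alpha+} and $n^{-1/2}\Delta(d_{r^+}) \leq n^{-1/4}\Delta(d_{r^+})^4$ using $\Delta(d_{r^+}) > n^{-1/12}$. The strengthened hypothesis furthermore forces $\Delta(d_{r^+})^4 \geq n^{2\varepsilon}(\phi_n + n^{-1/3})^2 \gg n^{-1/3}$ by a polynomial factor, dominating both $|\widehat\lambda_+ - \lambda_+|$ and $n^{-1/3}$.

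The principal technical obstacle is the edge expansion $\theta(d) - \lambda_+ \asymp (d - \alpha)^2$ at the critical point; once it is in hand, the remainder of the argument is careful bookkeeping of the stochastic-domination error terms from the preceding theorems. A minor subordinate point: one cannot replace the sticking bound \eqref{eq_stickingeq} in Step 2 by the coarser estimate \eqref{eq_evbulk}, because the latter contributes a $\phi_n^2$ term that is $o(n^{-1/3})$ only when $\phi_n = o(n^{-1/6})$, i.e., under the moment condition $a > 6$, which is strictly stronger than what Assumption \ref{assum_main}(i) provides.
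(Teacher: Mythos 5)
Your proposal is correct and takes essentially the same route as the paper: control $|\widehat\lambda_+-\lambda_+|$ via Theorem \ref{thm_numbershrink}, push the outliers above $\widehat\lambda_++n^{-1/3}$ using $\theta(d)-\lambda_+\asymp(d-\alpha)^2$ (i.e.\ \eqref{eq_s36_1}) combined with \eqref{eq_evoutlier} and \eqref{alpha+}, and keep $\wt\lambda_{r^++1}$ below the threshold via the sticking bound together with edge rigidity. One small correction: under Assumption \ref{assum_main} alone, Lemma \ref{lem_rigidty} only gives $|\lambda_1-\lambda_+|\prec n^{-2/3}+n^{-1/2}\phi_n+n^{-1/3}\phi_n^2$ rather than the plain $n^{-2/3}$ you cite, but since these extra terms are still $o(n^{-1/3})$ your Step 2 is unaffected.
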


\subsubsection{Step 2: estimate the CDF of the eigenvalue distribution of $ZZ^\top$}\label{sec_rank}

We achieve this goal by modifying the eigenvalue distribution of $\wt S \wt S^\top$. 
Similar to the idea of the truncated spectrum in \cite{nadakuditi2014optshrink}, we omit the first $\widehat{r}^+$ eigenvalues of $\wt S\wt S^T$.
Moreover, as the discussion in the last section, 
we estimate $C''$ via $\widehat C'' := \frac{\wt\lambda_{\lfloor n^c\rfloor+\widehat r^++1}-\wt\lambda_{2\lfloor n^c\rfloor +\widehat r^++1}}{(2\lfloor n^c\rfloor /p)^{2/3}-(\lfloor n^c\rfloor /p)^{2/3}}$,
and thus we reconstruct the $(\widehat r^++1)$th to the $(\widehat r^++\lfloor n^c\rfloor)$th eigenvalues  by
\begin{equation}\label{eq_hatlanbda2}
\widehat{\lambda}_{j}  := \wt\lambda_{\lfloor n^c\rfloor+ \widehat r^++1} + \frac{1-\big(\frac{j-\widehat r^+-1}{\lfloor n^c\rfloor}\big)^{2/3}}{2^{2/3}-1}\left( \wt\lambda_{\lfloor n^c\rfloor+ \widehat r^++1}-\wt\lambda_{2\lfloor n^c\rfloor+\widehat r^++1} \right), 
\end{equation}
where $1\leq j \leq \lfloor n^c\rfloor$ and $c$ is a small fixed positive constant.
By \eqref{eq_stickingeq} and the fact that $|\gamma_{\lfloor n^c\rfloor}-\gamma_{\lfloor n^c\rfloor+r^++1}| \asymp n^{-2/3}$, 
we immediately have the following comparison for $\check \lambda_j$ and $\widehat \lambda_{j}$.
\begin{theorem}\label{thm_compareimpe}
Suppose Assumption \ref{assum_main} and \eqref{alpha+} hold true and $c\in(0,1/2)$. 
With high probability, for $i =  1,\ldots, \lfloor n^c \rfloor$, we have
\begin{align}
   & |\check{\lambda}_{i} - \lambda_{i}|\vee |\check{\lambda}_{i+\widehat r^+} - \lambda_{i}| \prec  \frac{1}{n \alpha_+} + n^{-2/3} + n^{-1/2}\phi_n + n^{-1/3-2c/3}\phi_n^2\,, \label{eq_imp_bd1}\\
   & |\widehat{\lambda}_{i+\widehat r^+} - \lambda_{i}|\prec \frac{1}{n\alpha^+}+n^{-\min\{\frac{2}{3}+\frac{c}{3},\frac{3}{4},\frac{5}{6}-\frac{c}{3},\frac{4}{3}-\frac{4c}{3}\}}+ n^{-1/2}\phi_n+n^{-1/3-2c/3}\phi_n^2 \label{eq_imp_bd2}.
\end{align}
\end{theorem}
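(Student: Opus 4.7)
The plan is to combine the eigenvalue sticking bound of Theorem~\ref{thm_eigenvaluesticking}, the square-root expansion of the classical locations $\gamma_\ell$ near the edge $\lambda_+$ provided by \eqref{expansion of 1/2 edge}, and the edge rigidity $|\lambda_\ell-\gamma_\ell|\prec \ell^{-1/3}n^{-2/3}$ from Lemma~\ref{lem_rigidty}. Since every quantity to be bounded depends on $\widehat r^+$, I would first restrict to the high-probability event $\Xi(r^+)=\{\widehat r^+=r^+\}$ supplied by Theorem~\ref{thm_numbershrink2}; on this event every occurrence of $\widehat r^+$ is replaced by the deterministic constant $r^+=O(1)$, so the problem becomes entirely deterministic in its index structure.

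The first step is to extract the anchor estimates. Setting $k=\lfloor n^c\rfloor$ and applying \eqref{eq_stickingeq} at index $i\asymp k$ yields
\[
|\wt\lambda_{k+r^++1}-\lambda_{k+1}|\vee |\wt\lambda_{2k+r^++1}-\lambda_{2k+1}|\prec \frac{1}{n\alpha_+}+n^{-3/4}+n^{c/3-5/6}+n^{-1/2}\phi_n+n^{-1/3-2c/3}\phi_n^2,
\]
which is exactly the error budget appearing in \eqref{eq_imp_bd2}. The same bound with the unshifted indices $k+1$, $2k+1$ controls the anchors used by $\check\lambda$, but now those anchors track $\lambda_{k+1-r^+}$ and $\lambda_{2k+1-r^+}$, introducing an index offset of size $r^+=O(1)$ that will be the source of the extra $n^{-2/3}$ contribution in \eqref{eq_imp_bd1}.

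The second step is to expand the classical locations. For $\ell/p\to 0$, \eqref{expansion of 1/2 edge} gives $\lambda_+-\gamma_\ell=C''((\ell-1)/p)^{2/3}\bigl(1+O((\ell/p)^{2/3})\bigr)$, which, combined with rigidity, approximates $\lambda_{k+1}$ and $\lambda_{2k+1}$ up to a $(k/p)^{4/3}$ curvature correction. Substituting these approximations into the imputation formula \eqref{eq_hatlanbda2}, the leading $C''(k/p)^{2/3}$ terms from the anchor and the interpolation cancel exactly, leaving
\[
\widehat\lambda_{i+r^+}=\lambda_+-C''\bigl((i-1)/p\bigr)^{2/3}+(\text{sticking}+\text{curvature}+\text{rigidity residuals}).
\]
Comparing with $\lambda_i\approx \gamma_i=\lambda_+-C''((i-1)/p)^{2/3}+O((i/p)^{4/3})$ and tracking each residual produces the four competing exponents in the $\min$: $3/4$ from the pure sticking term, $5/6-c/3$ from the $i^{1/3}n^{-5/6}$ sticking term at $i\asymp n^c$, $2/3+c/3$ from the next-order curvature correction at the anchor scale $k$, and $4/3-4c/3$ from the pure quadratic remainder. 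The $\phi_n$-terms propagate directly. For the $\check\lambda$ bounds the same calculation is performed, but the anchors now track indices shifted by $r^+$, so the $(k/p)^{2/3}$ cancellation is only partial; what remains is an $O(r^+\cdot k^{-1/3}p^{-2/3})$ residual that, combined with the edge rigidity contribution, is controlled by $n^{-2/3}$, as in \eqref{eq_imp_bd1}.

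The main obstacle will be the algebraic bookkeeping in the second step: verifying cleanly that the leading $(k/p)^{2/3}$ term cancels in the $\widehat\lambda$ formula, and then identifying which of the four sticking/curvature sources dominates the residual in each regime of $c\in(0,1/2)$. The restriction to $\Xi(r^+)$ and the direct application of Theorem~\ref{thm_eigenvaluesticking} are routine once this bookkeeping has been set up.
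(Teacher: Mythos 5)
Your proposal follows essentially the same route as the paper: the result is obtained there directly from the sticking estimate of Theorem~\ref{thm_eigenvaluesticking} combined with the square-root expansion of the classical locations and the rigidity bound of Lemma~\ref{lem_rigidty} — exactly the chain used in the paper's proof of Theorem~\ref{thm_numbershrink} for $\widehat\lambda_+$ — with the $r^+$ index offset in the ScreeNOT anchors responsible for the extra $n^{-2/3}$ in \eqref{eq_imp_bd1}, and restricting to the event $\Xi(r^+)$ to replace $\widehat r^+$ by $r^+$, as the paper does implicitly. One cosmetic remark: the exponent $2/3+c/3$ comes from the rigidity term $n^{-2/3}i^{-1/3}$ evaluated at the anchor scale $i\asymp n^c$ (the quadratic/curvature remainder is what yields $4/3-4c/3$), rather than from a ``curvature correction'' as you attribute it, but this does not change the argument.
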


{Recall that $\phi_n = n^{2/a-1/2}$ with $a>4$, which corresponds to the entries of random matrix $X$ before truncation having finite $a$-moment.} When $4<a\leq 6$, the right-hand side of both \eqref{eq_imp_bd1} and \eqref{eq_imp_bd2} are both dominated by $n^{-1/2}\phi_n+n^{-1/3}\phi_n^2$. {When $a>6$, our estimator $\widehat{\lambda}_{j} $ exhibits a lower upper bound on absolute error compared with $\check{\lambda}_j$.}
Finally, we propose to estimate the CDF of $\pi_{ZZ^\top}$ by 
\begin{equation}\label{eq_imp}
    \widehat F_{\texttt{e}}(x) := \frac{1}{p-\widehat r^+} \left( \sum_{j=\widehat r^++1}^{\lfloor n^c\rfloor+\widehat r^+} \mathbbm 1(\widehat\lambda_{j} \le x)+\sum_{j=\lfloor n^c\rfloor+\widehat r^++1}^p \mathbbm 1(\wt\lambda_{j}\leq x)\right) \,.
\end{equation}

\subsubsection{Step 3: estimate the optimal shrinker}

With $\widehat F_{\texttt{e}}$, we now state our optimal shrinker with the proposed rank estimator. 
The CDF of ESD of $ZZ^T$ is estimated by \eqref{eq_imp} with $r^+$ estimated by $\widehat r^+$ in \eqref{eq_adrk0}.
Consider the following ``discretization'' of the associated quantities. 
For $1\leq i \leq \widehat r^+$, 
denote the estimators of $m_{1c}(\wt\lambda_i)$ and $m_{2c}(\wt\lambda_i)$ as
\begin{align}\label{eq_mhat1mhat2}
& \widehat{m}_{e,1,i}:= \int \frac{d\widehat F_e(x)}{x-\wt\lambda_i} =  \frac{1}{p-\widehat r^+}\left( \sum_{j=\widehat r^++1}^{\lfloor n^c \rfloor+\widehat r^+} \frac{1}{\widehat\lambda_j-\wt\lambda_i} + \sum_{j=\lfloor n^c\rfloor+\widehat r^++1}^p \frac{1}{\wt\lambda_j-\wt\lambda_i}\right), \ 
\nonumber\\
& \widehat{m}_{e,2,i} := \frac{1-\beta_n}{\wt\lambda_i}+\beta_n \widehat{m}_{e,1,i}. 
\end{align}
Similarly, denote the discretization of $m'_{1c}(\wt\lambda_i)$ and $m'_{2c}(\wt\lambda_i)$ as
\begin{equation*}
\widehat{m}'_{e,1,i}=\frac{1}{p-\widehat r^+}\left( \sum_{j=\widehat r^++1}^{\lfloor n^c \rfloor+\widehat r^+} \frac{1}{(\widehat\lambda_j-\wt\lambda_i)^2} + \sum_{j=\lfloor n^c \rfloor+\widehat r^++1}^p \frac{1}{(\wt\lambda_j-\wt\lambda_i)^2}\right), \ \widehat{m}'_{e,2,i}=\frac{1-\beta_n}{\wt\lambda_i^2}+\beta_n \widehat{m}'_{e,1,i}. 
\end{equation*}
For $1\leq i \leq \widehat r^+$, the estimators of the $D$-transform $\mathcal{T}(\wt\lambda_i)$ and its derivative $\mathcal{T}'(\wt\lambda_i)$ are
\begin{equation}
\widehat{\mathcal{T}}_{e,i}= \wt\lambda_i \widehat{m}_{e,1,i} \widehat{m}_{e,2,i}, \quad \widehat{\mathcal{T}}'_{e,i}=\widehat{m}_{e,1,i} \widehat{m}_{e,2,i}+\wt\lambda_i \widehat{m}'_{e,1,i} \widehat{m}_{e,2,i}+\wt\lambda_i \widehat{m}'_{e,2,i} \widehat{m}_{e,1,i},
\end{equation}
and the estimators of $d_i$, $a_{1,i} =   \langle \ub_i, \wt\bxi_{i}  \rangle^2$, and $a_{2,i} =   \langle \vb_i, \wt\bzeta_{i}  \rangle^2$ are
 \begin{equation}\label{eq_a1a2}
  \widehat{d}_{e,i}=\sqrt{\frac{1}{\widehat{\mathcal{T}}_{e,i}}}, \quad \widehat{a}_{e,1,i}=\frac{\widehat{m}_{e,1,i}}{\widehat{d}_{e,i}^2 \widehat{\mathcal{T}}'_{e,i}}, \ \mbox{ and } \  \widehat{a}_{e,2,i}=\frac{\widehat{m}_{e,2,i}}{ \widehat{d}_{e,i}^2\widehat{\mathcal{T}}'_{e,i}}.
 \end{equation}
As a result, we estimate the optimal shrinker, $\varphi^*_i$ in Proposition \ref{prop_optimal_shrinker}, by
\begin{equation}\label{eq_adsh}
\begin{array}{lr}
\displaystyle\widehat{\varphi}_{\texttt{e},i}=\widehat{d}_{\texttt{e},i} \sqrt{\widehat{a}_{\texttt{e},1,i}\widehat{a}_{\texttt{e},2,i}}, &\mbox{(Frobenius norm)}  \\
\displaystyle{\widehat{\varphi}_{\texttt{e},i}=\widehat d_{e,i}\sqrt{\frac{\widehat a_{e,1,i}\wedge \widehat a_{e,2,i}}{\widehat a_{e,1,i}\vee \widehat a_{e,2,i}}}}, &\mbox{(Operator norm)}\\
\displaystyle\widehat{\varphi}_{\texttt{e},i}=\widehat{d}_{\texttt{e},i}\Big(\sqrt{\widehat{a}_{\texttt{e},1,i} \widehat{a}_{\texttt{e},2,i}}-\sqrt{(1-\widehat{a}_{\texttt{e},1,i})(1-\widehat{a}_{\texttt{e},2,i})}\Big) \,, &\mbox{(Nuclear norm)}
\end{array}
\end{equation} 
for $1\leq i\leq \widehat r^+$, and $\widehat{\varphi}_{\texttt{e},i}=0$ otherwise. 
The following theorem gives the convergence guarantee of the proposed estimator.

\begin{theorem}\label{thm_shrink} 
Suppose Assumption \ref{assum_main} and \eqref{alpha+} {hold true for some $\varepsilon$ such that $n^{\varepsilon}(\phi_n+n^{-1/3})>n^{-1/6}$, and $c\in (0,1/2)$}. 
For all three types of loss functions mentioned in Proposition \ref{prop_optimal_shrinker}, for $1\leq i \leq \widehat r^+$, conditional on $\Xi(r^+)$, we have
$|\varphi^*_i-\widehat{\varphi}_{\texttt{e},i}| \prec \phi_n + n^{-1/2}/\Delta(d_i)$.
\end{theorem}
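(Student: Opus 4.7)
The plan is to condition on the event $\Xi(r^+)$ of Theorem \ref{thm_numbershrink2}, which has high probability; on this event $\widehat r^+=r^+$, so the two sums in \eqref{eq_mhat1mhat2} correctly separate outlier eigenvalues from bulk ones. Since each shrinker formula in Proposition \ref{prop_optimal_shrinker} is a smooth function of $(d_i,a_{1,i},a_{2,i})$, and these are smooth functions of $\widetilde\lambda_i$, $m_{1c}(\widetilde\lambda_i)$, $m_{2c}(\widetilde\lambda_i)$, $m'_{1c}(\widetilde\lambda_i)$, $m'_{2c}(\widetilde\lambda_i)$ on the range where $\widetilde\lambda_i$ is bounded away from $\lambda_+$ (guaranteed for outliers by \eqref{eq_assm_41}, \eqref{alpha+} and Theorem \ref{thm_value}), the proof reduces to (a) estimating the four discretization errors
\[
\bigl|\widehat m_{\texttt{e},k,i}-m_{kc}(\widetilde\lambda_i)\bigr|,\quad \bigl|\widehat m'_{\texttt{e},k,i}-m'_{kc}(\widetilde\lambda_i)\bigr|,\qquad k=1,2,
\]
and then (b) propagating these together with the edge bound $|\widetilde\lambda_i-\theta(d_i)|\prec \phi_n\Delta(d_i)^2+n^{-1/2}\Delta(d_i)$ from Theorem \ref{thm_value} through the formulas \eqref{eq_a1a2} and \eqref{eq_adsh}.

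For task (a), I would focus on $\widehat m_{\texttt{e},1,i}$ (the remaining three quantities are treated analogously) and write
\[
\widehat m_{\texttt{e},1,i}-m_{1c}(\widetilde\lambda_i)=\bigl[\widehat m_{\texttt{e},1,i}-m_1(\widetilde\lambda_i)\bigr]+\bigl[m_1(\widetilde\lambda_i)-m_{1c}(\widetilde\lambda_i)\bigr],
\]
where $m_1(z)=\frac{1}{p}\sum_{j=1}^p(\lambda_j-z)^{-1}$. The second bracket is handled by the averaged local law of \cite{yang2019edge} evaluated at the real point $\widetilde\lambda_i$, which sits above $\lambda_+$ at an edge distance of order at least $\Delta(d_i)^4$ by Theorem \ref{thm_value}. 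For the first bracket, I would compare $\widetilde\lambda_{j+r^+}$ (bulk block) with $\lambda_j$ using the sticking bound \eqref{eq_stickingeq} of Theorem \ref{thm_eigenvaluesticking}, compare $\widehat\lambda_{j+r^+}$ (imputed block) with $\lambda_j$ using Theorem \ref{thm_compareimpe}, and apply the algebraic identity
\[
\frac{1}{\widetilde\lambda_{j+r^+}-\widetilde\lambda_i}-\frac{1}{\lambda_j-\widetilde\lambda_i}=\frac{\lambda_j-\widetilde\lambda_{j+r^+}}{(\widetilde\lambda_{j+r^+}-\widetilde\lambda_i)(\lambda_j-\widetilde\lambda_i)}.
\]
All denominators involving $\widetilde\lambda_i$ stay bounded below by a positive power of $\Delta(d_i)$ because $\widetilde\lambda_i$ lies strictly above $\lambda_+$, so the summed contributions are under control, while the discrepancy between the weights $(p-r^+)^{-1}$ and $p^{-1}$ and the finite number of dropped outlier terms contribute only $O(1/p)$.

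For task (b), a first-order Taylor expansion gives, for instance,
\[
\widehat{\mathcal{T}}_{\texttt{e},i}-\mathcal{T}(\theta(d_i))=\bigl[\widehat{\mathcal{T}}_{\texttt{e},i}-\mathcal{T}(\widetilde\lambda_i)\bigr]+\mathcal{T}'(\xi)\bigl(\widetilde\lambda_i-\theta(d_i)\bigr),
\]
and the second term is the dominant contribution: by the square-root edge behavior of Lemma \ref{lambdar_sqrt}, $\mathcal{T}'(\theta(d_i))\asymp \Delta(d_i)^{-2}$, which multiplied by the edge bound $|\widetilde\lambda_i-\theta(d_i)|\prec \phi_n\Delta(d_i)^2+n^{-1/2}\Delta(d_i)$ produces precisely the advertised rate $\phi_n+n^{-1/2}/\Delta(d_i)$. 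Applying the same mechanism to $\widehat d_{\texttt{e},i}=\widehat{\mathcal{T}}_{\texttt{e},i}^{-1/2}$ and to $\widehat a_{\texttt{e},k,i}$, and using the Lipschitz dependence of each of the three shrinker formulae in \eqref{eq_adsh} on $(d_i,a_{1,i},a_{2,i})$ within the bounded range allowed by Assumption \ref{assum_main}, the bound transfers to $|\varphi^*_i-\widehat\varphi_{\texttt{e},i}|$. The main difficulty, in my view, is the careful simultaneous bookkeeping of the scales $\Delta(d_i)$, $\phi_n$, $n^{-1/2}$, $n^c$ and $\alpha_+$ through tasks (a) and (b), together with verifying that the $\mathcal{T}'$ edge blow-up compensates the shift $\widetilde\lambda_i-\theta(d_i)$ \emph{exactly} rather than producing a worse rate via the more singular derivative contributions $\widehat m'_{\texttt{e},k,i}-m'_{kc}(\widetilde\lambda_i)$; this relies on the singular parts of $m_{kc}$ and $\mathcal{T}'$ aligning so as to cancel when combined in $a_k(d_i)$.
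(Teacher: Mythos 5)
There is a genuine gap at the very first reduction. You treat $\varphi^*_i$ as a smooth function of $(d_i,a_{1,i},a_{2,i})$ and then claim these are (approximately) functions of $\wt\lambda_i$, $m_{1c}(\wt\lambda_i)$, $m_{2c}(\wt\lambda_i)$ and their derivatives. That is true for $d_i$ (via $d_i=1/\sqrt{\mathcal T(\theta(d_i))}$ and Theorem \ref{thm_value}), but $a_{1,i}$ and $a_{2,i}$ in Proposition \ref{prop_optimal_shrinker} are the singular-vector overlaps $\langle \ub_i,\wt\bxi_i\rangle^2$, $\langle \vb_i,\wt\bzeta_i\rangle^2$ (this is exactly what $\widehat a_{\texttt{e},k,i}$ is meant to estimate), and nothing in your argument connects these random overlaps to the deterministic formulas $a_1(d_i)=\frac{m_{1c}(\theta(d_i))}{d_i^2\mathcal T'(\theta(d_i))}$, $a_2(d_i)$. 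That connection is Theorem \ref{thm_vector} (in the form \eqref{eq_projectrueJJ1}), and it is not a negligible technicality: in the paper's proof the term $|a_{1,i}-a_1(d_i)|\prec \phi_n+n^{-1/2}/\Delta(d_i)$ from \eqref{eq_adapt_spikedvector} is one of the three error sources and is already of exactly the advertised size. Without invoking the eigenvector result your proof cannot get off the ground for the Frobenius and nuclear norm shrinkers (and also for the operator norm when $\beta_n\neq 1$, where the ratio $a_{1,i}/a_{2,i}$ enters). Relatedly, the cancellation you flag as "the main difficulty" — that the blow-ups of $m'_{kc}$, $\mathcal T'$, $\mathcal T''$ align so that $a_k(d_i)$ is only $\Delta^2(d_i)$-sensitive to the shift $\wt\lambda_i-\theta(d_i)$ — is precisely what the paper verifies quantitatively via \eqref{eq_s36_3} in the step \eqref{eq_diff_a_4}; asserting that it "relies on the singular parts aligning" is not a proof of it.

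Your treatment of the discretization error (task (a)) also diverges from the paper and is not complete as stated. The paper avoids any term-by-term comparison by bounding the Kolmogorov--Smirnov distance $d_{\texttt{KS}}(\widehat F_{\texttt{e}},F_1)\le k/p$ (Lemma \ref{lem:main:approx-FZ}) and integrating by parts against $g(t)=1/(t-\wt\lambda_i)$, which yields $|\widehat m_{\texttt{e},1,i}-m_1(\wt\lambda_i)|\lesssim (n\Delta^4(d_i))^{-1}$ in one stroke. Your route instead compares eigenvalues pairwise, but the sticking bound \eqref{eq_stickingeq} only holds for indices $j\le cn$ with $c$ a small constant, so the deep bulk and left-edge eigenvalues are not covered by any rigidity or sticking estimate available in the paper; you would need an additional interlacing-plus-telescoping argument there, and near the right edge the denominators are only of size $\Delta^4(d_i)+(j/n)^{2/3}$ (which can be as small as $n^{-1/3}$ under the theorem's hypotheses), so the summation is a real computation rather than "under control" by a blanket lower bound on the denominators. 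The pairwise route can likely be pushed through with this extra work, but as written it both overstates what Theorem \ref{thm_eigenvaluesticking} gives and, more importantly, omits the eigenvector ingredient described above.
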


If we replace $\widehat F_e(x)$ by
\begin{equation}\label{eq_trun}
\widehat F_{\texttt{T}}(x):=
\frac{1}{p- \widehat r^+}\sum_{i=\widehat r^+ +1}^{p}\mathbbm 1(\wt\lambda_{i}\leq x)\,,
\end{equation}
then by Theorem \ref{thm_numbershrink2}, eOptShrink with the Frobenius norm loss is reduced to OptShrink proposed in \cite{nadakuditi2014optshrink}.
It is possible to estimate $\varphi^*_i$ by $\widehat F_{\texttt{T}}(x)$ or $\widehat F_{\texttt{imp}}(x)$. We denote that resulting estimates of $\varphi^*_i$ as $\widehat{\varphi}_{\texttt{T},i}$ and $\widehat{\varphi}_{\texttt{imp},i}$ respectively.
In the next section, we numerically show that using $\widehat F_{\texttt{e}}(x)$ results in a lower estimation error compared to either using $\widehat F_{\texttt{T}}(x)$ or $\widehat F_{\texttt{imp}}(x)$.

{
\subsubsection{Selection of $c$ in practice}\label{sec_selectionC}

Recall that in \textit{Step 1}, we necessitate a constant $c$ such that $\lfloor n^c \rfloor \gg r^+$ to compute the estimator of bulk edge $\widehat \lambda_+$ and the estimated effective rank $\widehat r^+$. Conclusions from Theorems \ref{thm_numbershrink} and \ref{thm_numbershrink2} necessitate $c \in (0,1/2)$ for our estimators to converge to the ground truth when $n$ is sufficiently large. Similarly, in \textit{Step 2}, based on Theorems \ref{thm_numbershrink}, \ref{thm_numbershrink2}, and \ref{thm_compareimpe}, we require $c\in (0,1/2)$ for $\widehat \lambda_{i+\widehat r^+}$ to accurately estimate $\lambda_i$ for $i = 1,\ldots,\lfloor n^c \rfloor$ when $n$ is sufficiently large. In both cases, smaller error rates occur with smaller $c>0$. However, in practical scenarios where $n$ is not significantly large, e.g., when $n=300$, a small $c$ may fail to guarantee $\lfloor n^c \rfloor \gg r^+$. To be safe, it is natural to consider a $c$ close to $1/2$. With this approach, computing $\widehat \lambda_{i+\widehat r^+}$ for $\lfloor n^c \rfloor$ times becomes time-consuming for large $n$. To balance various practical situations, it is logical to opt for a small $c>0$ when $n$ is large and a large $c<1/2$ when $n$ is small. Hence, we propose considering $c = \frac{1}{2.01}\wedge \frac{1}{\log(\log(n))}$, ensuring the convergence conditions mentioned above while computing $\widehat \lambda_{i+\widehat r^+}$ for $\lfloor n^{1/2.01\wedge 1/\log(\log(n))} \rfloor$ iterations. It is important to emphasize that this choice of $c$ is not necessarily optimal, but in practice, we found it to work effectively.

}

\section{Numerical evaluation}\label{section numreical eva}
We assess the performance of eOptShrink through numerical simulations involving various types of noise and the single-channel fECG extraction problem. In all our findings, we use interquartile range error bars or present means with standard deviations to establish performance superiority between methods. Paired t-tests are conducted, and when multiple testing is involved, we implement the Bonferroni correction. We consider $p<0.005$ as statistically significant.

{Regarding $c$, it is worth noting that $\frac{1}{\log(\log(n))} > \frac{1}{2.01}$ for $n < 1743$, while $\frac{1}{\log(\log(n))} < \frac{1}{2.01}$ when $n \geq 1743$. Hence, in the subsequent numerical simulations, we vary $n$ within the range of $300$ to $2100$ to illustrate the performance of this choice of constant $c$, showcasing scenarios where either $1/2.01$ or $1/\log(\log (n))$ dominates.}

\subsection{Simulated signals} \label{section simulation signal noise}
We consider different types of noises.
Suppose $X\in \mathbb{R}^{p\times n}$ has i.i.d. entries with Student's t-distribution with $10$ degrees of freedom followed by a proper normalization that $\mathbb{E}X^2_{ij}=1/n$.
Set $A = \frac{1}{L_A} 
 Q_AD_AQ_A^{T}\in \mathbb{R}^{p\times p}$, where $D_A = \textup{diag} \{\ell_1,\ell_2,\ldots, \ell_p\}$, $Q_A\in O(p)$ is generated by the QR decomposition of a $p \times p$ random matrix independent of $X$, and $L_A = \sum_{i = 1}^p \ell_i$ is a normalizing factor. 
The same method is applied to generate $B=\frac{1}{L_B} Q_BD_BQ_B^{T}\in \mathbb{R}^{n\times n}$, which is assumed to be independent of $A$ and $X$. Here we consider {three types of noise}. The first one is the white noise (called TYPE1 below); that is, $D_A = I_p$ and $D_B = I_n$. The second one has a separable covariance structure (called TYPE2 below) with a gap in the limiting distribution; that is, $D_A=\textup{diag}\Big\{\sqrt{1+9\times \frac{1}{p}}, \sqrt{1+9\times \frac{2}{p}}, \cdots, \sqrt{1+9\times \frac{p-1}{p}},\sqrt{10}\Big\}$ and \\
$D_B=\textup{diag}\Big\{ \sqrt{10+\frac{1}{n}},\sqrt{10+\frac{2}{n}},\cdots, \sqrt{10+\frac{\lfloor n/4\rfloor}{n}},$
$ \sqrt{0.3}, \cdots, \sqrt{0.3}, \sqrt{0.3} \Big\}$.
{
The third one (called TYPE3 below) has a more complicated separable covariance structure with $D_A=\textup{diag}\Big\{\exp(\frac{1}{p}), \exp(\frac{2}{p}) \cdots, \exp(\frac{p-1}{p}),$ 
$ \exp{(1)}\Big\}$ and 
$D_B=\textup{diag}\Big\{ 1.1+\sin(4\pi(\frac{1}{n})),1.1+\sin(4\pi(\frac{2}{n})),\cdots, 1.1+\sin(4\pi(\frac{n-1}{n})),$
$ 1.1+\sin(4\pi) \Big\}$.}
The signal matrix is designed to be $S= \sum_{i=1}^r d_i \ub_i\vb_i^\top$, 
where $r=15$, $d_i$ are i.i.d. sampled uniformly from $[0,4]$ and ordered so that $d_1\geq d_2\ldots\geq d_{15}$, 
and the left and right singular vectors are generated by the QR decomposition of two independent random matrices.  Below, we independently realize $\tilde S=S+A^{1/2}XB^{1/2}$ for $100$ times for different $n$, different noise types and $p/n = 0.5$ or $1$, and report the comparison of different algorithms from different angles.  More simulations are provided in Section \ref{section supp more numerical simulation} in the supplementary material. 

To compare the performance of different algorithms, we need to calculate $\alpha$ \eqref{eq_defnalpha}. For Type I noise,  $\alpha$ is determined by $\widehat \alpha:= (p/n)^{1/4}$ as described in the paragraph after Proposition \ref{prop_optimal_shrinker}. For {TYPE2 and TYPE3 noises}, it is challenging to directly calculate $\alpha$ from its definition, so we apply the following numerical calculation to determine $\alpha$. For the chosen $p/n$, construct $Z=A^{1/2}XB^{1/2}\in \mathbb{R}^{p'\times n'}$ with $n'$ large and $p'/n'=p/n$, and denote the eigenvalues of $Z^\top Z$ as $\{\lambda_i\}_{i=1}^{n'}$. Denote 
   $\widehat \alpha := 1/\sqrt{\lambda_1 \check m_1(\lambda_1)\check m_2(\lambda_1)}$, 
where $\check m_1(x) =  \frac{1}{p'-1}\left( \sum_{j=2}^{p'} \frac{1}{\lambda_j-x}\right)$ and $\check{m}_{2}(x) =  \frac{1}{n'-1}\left( \sum_{j=2}^{n'} \frac{1}{\lambda_j-x}\right)$. By Lemma \ref{lem_rigidty} and Theorem \ref{LEM_SMALL}, $|\widehat \alpha - \alpha| \lesssim (n')^{-1/3}$, which is sufficiently small when $n'$ is large. We set $n' = 10000$ and independently construct $\widehat \alpha$ for $100$ times.
{For TYPE2 noise, we have $\widehat\alpha = 1.3495\pm 0.0290$ when $p/n = 0.5$, and $\widehat\alpha = 1.6515 \pm 0.0180$ when $p/n= 1$, and 
for TYPE3 noise, we have $\widehat\alpha = 1.5242\pm 0.0320$ when $p/n = 0.5$, and $\widehat\alpha = 1.8115 \pm 0.0348$ when $p/n= 1$, where we show the mean $\pm$ standard deviation.}
We take the mean of the 100 constructions to determine $\alpha$, and for simplicity we still denote the mean as $\widehat \alpha$ afterwards.

\subsubsection{Visualization of ESDs and thresholds}\label{sec_compareTSE}
For TRAD, the noise level is estimated by $\check\sigma(\wt S)$ for a fair comparison. 
For ScreeNOT, we use the ground truth rank and set {$k = 4r$} and call the estimated rank by the hard threshold {$\vartheta_{\texttt{SN}}$} in \eqref{eq_SN} the {\em ScreeNOT rank}.
Figure \ref{fig_GAP_noise} illustrates ESDs of $\wt S = S+ A^{1/2}XB^{1/2}$ for {all three types of noises} when $n = 2000$, where there is an obvious gap in the bulks associated with the TYPE2 noise when $p/n = 1$. 
The black line indicates the the estimated bulk edge from \eqref{eq_hatlanbda_+}, which separates the noise and signal for {all three types} of noises. The yellow line is the estimated bulk edge by TRAD using $\check{\sigma}(\wt S)(1+\sqrt{p/n})$, which separate the noise and the signal well for TYPE1 noise but not for {TYPE2 and TYPE3 noises}. The red line is the ScreeNOT rank, which {discards weak components of signals that are very close to the noise bulkedge.} Note that with TYPE1 noise, the black and yellow lines are exactly around $1+\sqrt{p/n}$. 

\begin{figure}[hbt!]
\begin{minipage}{1\textwidth}
\includegraphics[width=0.32\linewidth]{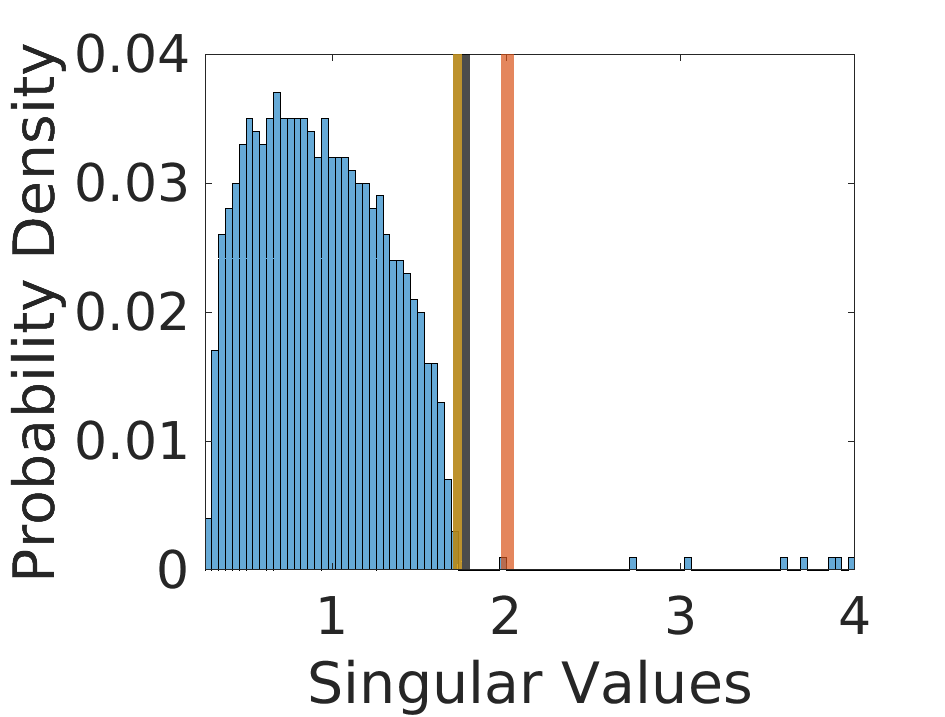}
\includegraphics[width=0.32\linewidth]{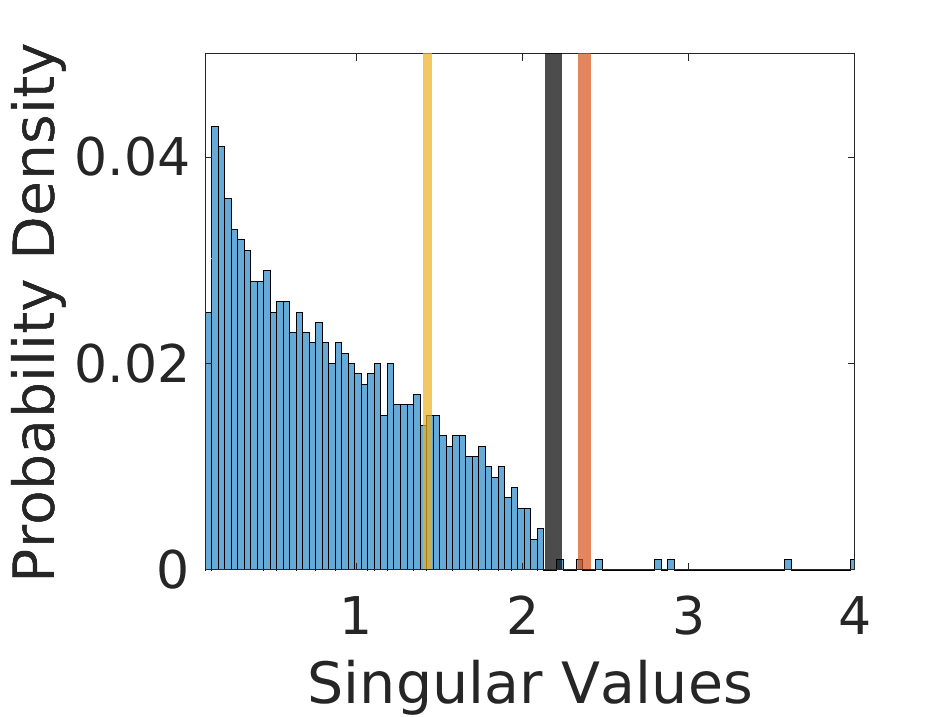}
\includegraphics[width=0.32\linewidth]{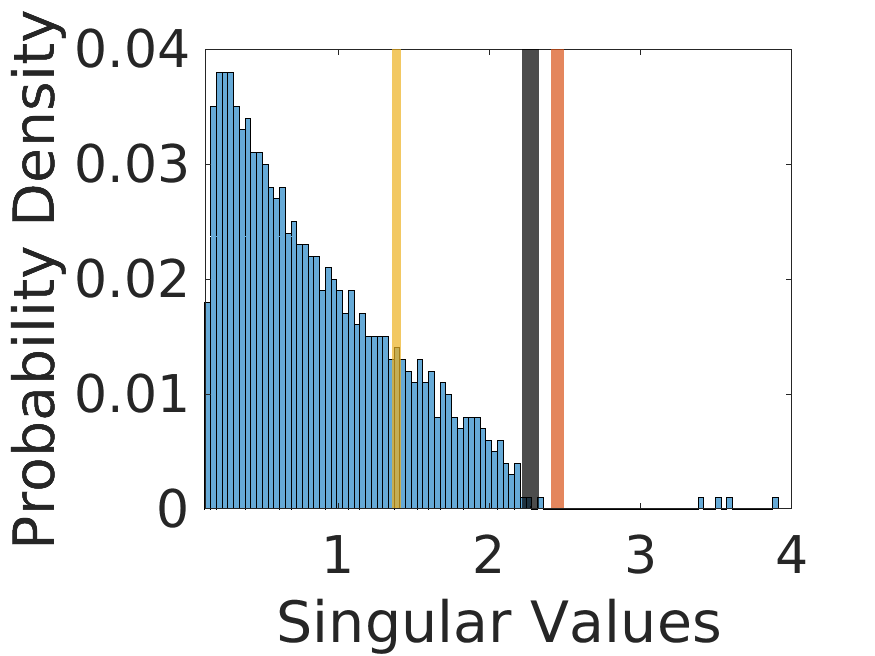}
\end{minipage}

\begin{minipage}{1\textwidth}
\includegraphics[ width=0.32\linewidth]{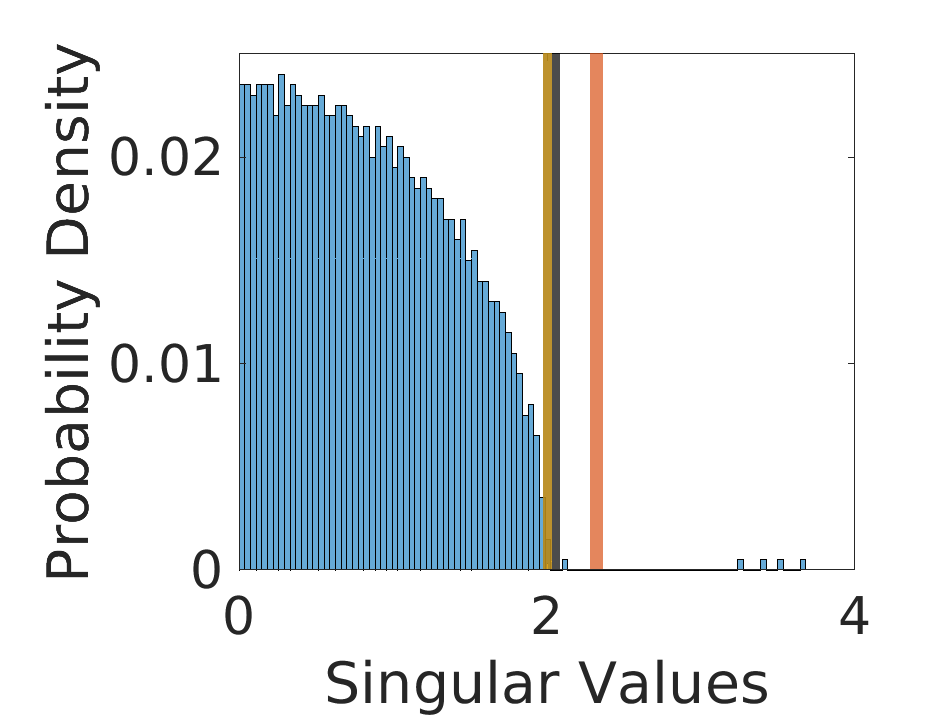}
\includegraphics[width=0.32\linewidth]{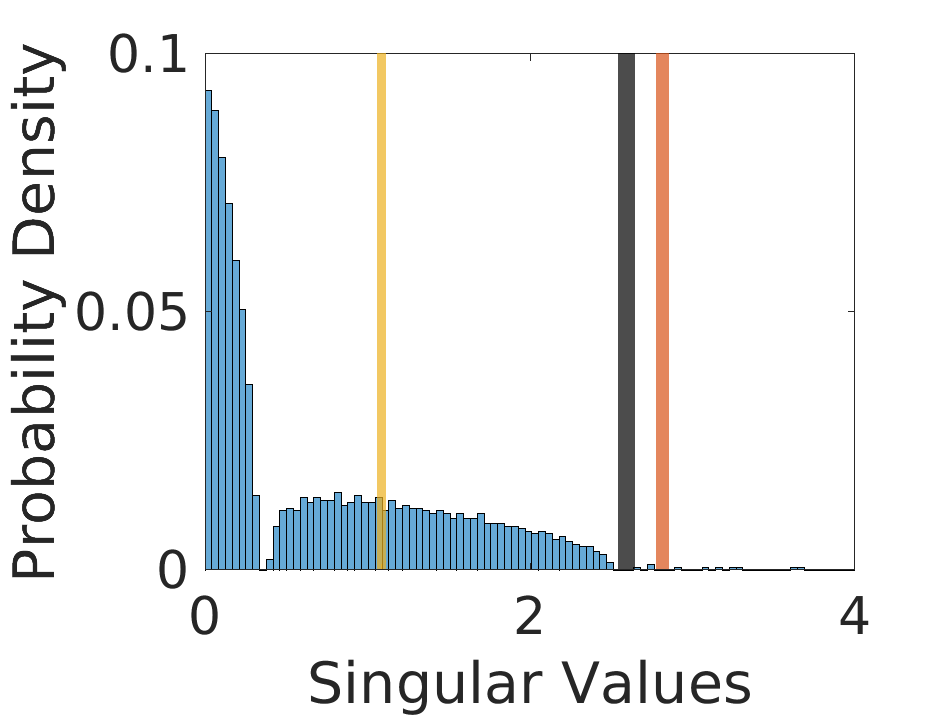}
\includegraphics[width=0.32\linewidth]{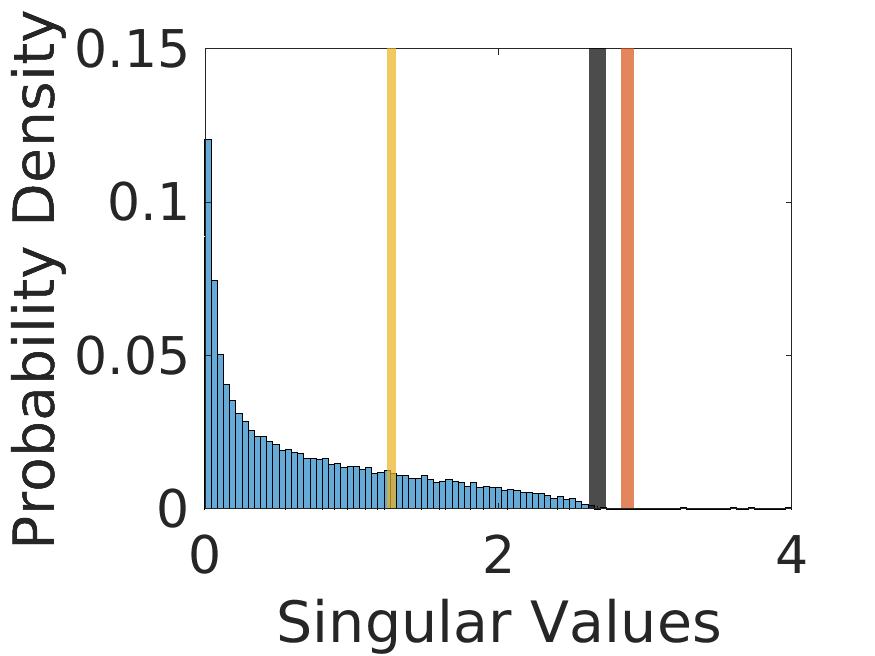}
 \end{minipage}

 \caption{\small ESD of $\wt S$ when $n=2000$. The black and yellow lines are the estimated bulk edge from \eqref{eq_hatlanbda_+} and TRAD respectively. The red line is $\vartheta$ in \eqref{eq_SN} by ScreeNOT. {The first row has $p/n = 0.5$ and the second row has $p/n = 1$.
 The first, second, and third columns are TYPE1, TYPE2, and TYPE 3 noise respectively.}} \label{fig_GAP_noise} 
\end{figure}

\subsubsection{Rank estimation}\label{sec_compareRANKest}
To evaluate the rank estimation, the ground truth is determined with $\widehat\alpha$; that is $r^+$ is determined by
$r^+_*:= \#\{i|d_i-\widehat\alpha>n^{-2/5}\}$, where $n^{-2/5}$ comes from the fact that the entries of $X$ are of 10 degrees of freedom. Note that since $\widehat\alpha$ is sufficiently close to $\alpha$, we can safely assume that $r^+ = r^+_*$ and check the performance of our rank estimator $\widehat r^+$.
In Figure \ref{fig:rankesterror}, we compare the estimated rank using \eqref{eq_adrk0}, the ScreeNOT rank, and the rank estimated by TRAD, which is the number of eigenvalues of $\wt S\wt S^\top$ larger than $\check{\sigma}(\wt S)(1+\sqrt{p/n})$. 
TRAD always overestimates the rank for {TYPE2 and TYPE3 noises} and sometimes overestimates the rank for TYPE1 noise. 
ScreeNOT rank {purposely discards weak components and thus} often underestimating the rank, with a larger error compared to our approach, and it does not seem to improve when $n$ grows.
Our approach outperforms others. Note that since there is {a gap} of order $n^{-1/3}$ as in \eqref{eq_adrk0} to rule out eigenvalues of $\wt S\wt S^\top$ that do not contain information, when $n$ is small, like $300$, our rank estimator sometimes underestimates the rank.

\begin{figure}[hbt!]
\begin{minipage}{1\textwidth}
\includegraphics[width=0.32\linewidth]{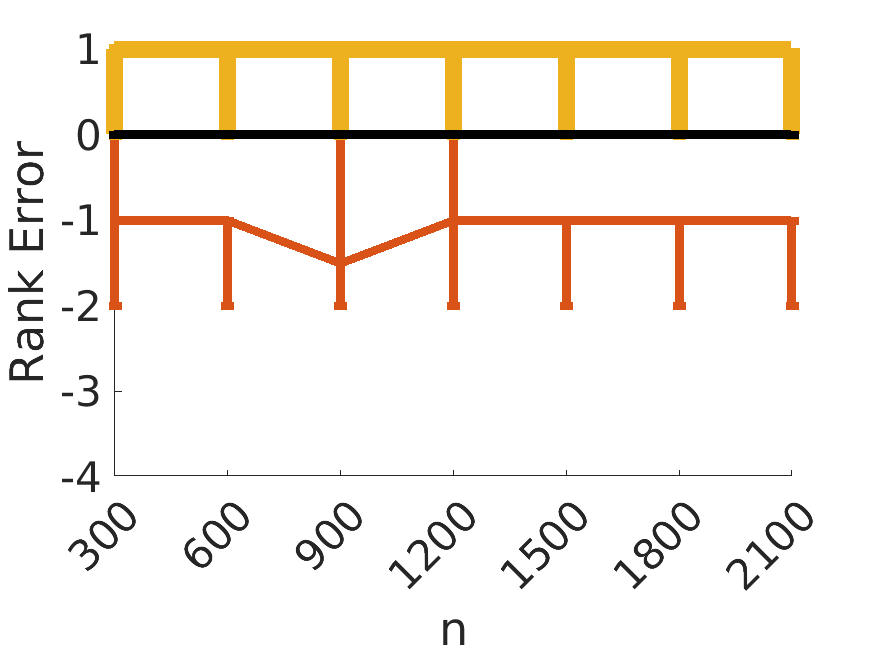}
\includegraphics[width=0.32\linewidth]{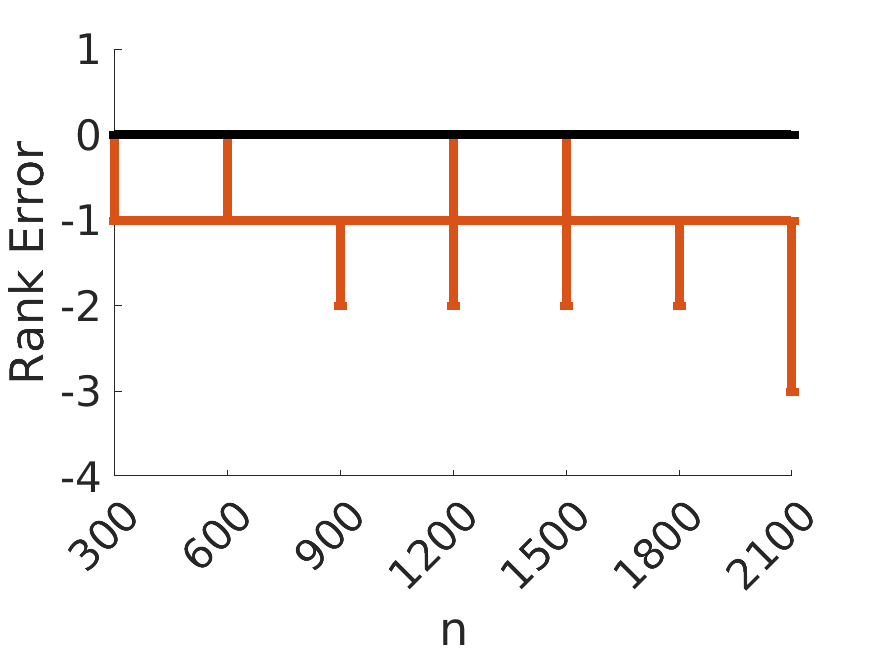}
\includegraphics[width=0.32\linewidth]{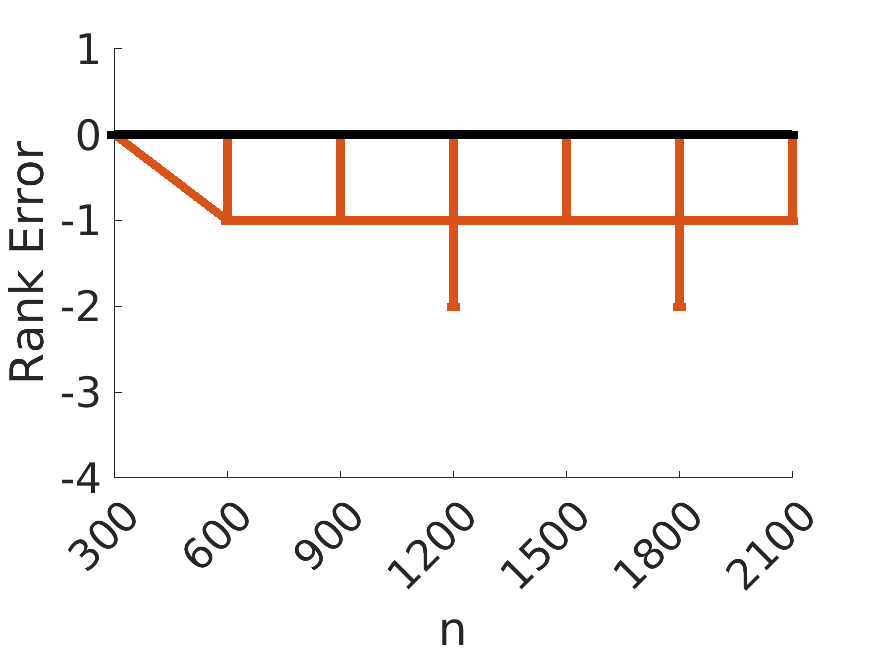}
\end{minipage}

\begin{minipage}{1\textwidth}
\includegraphics[width=0.32\linewidth]{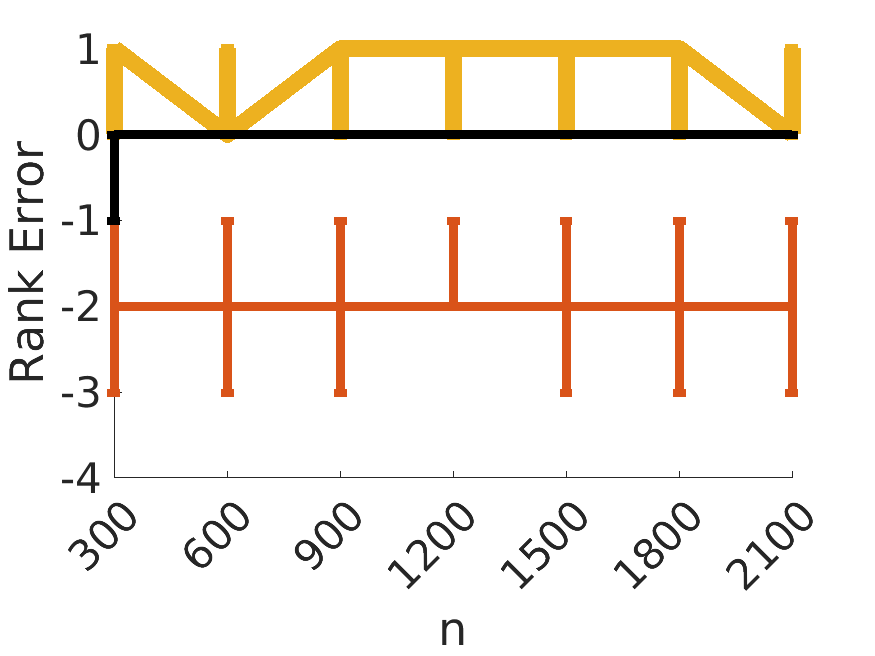}
\includegraphics[width=0.32\linewidth]{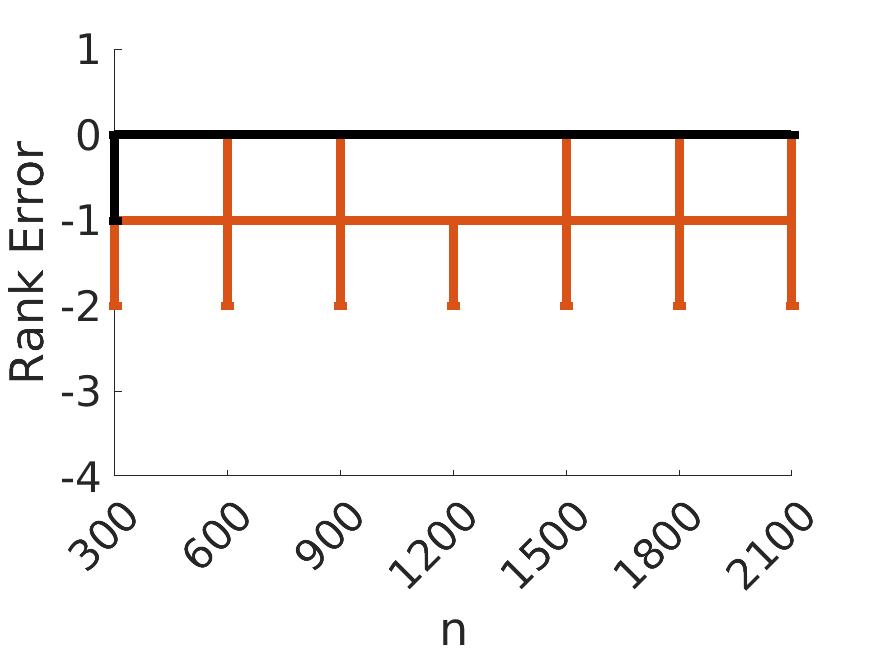}
\includegraphics[width=0.32\linewidth]{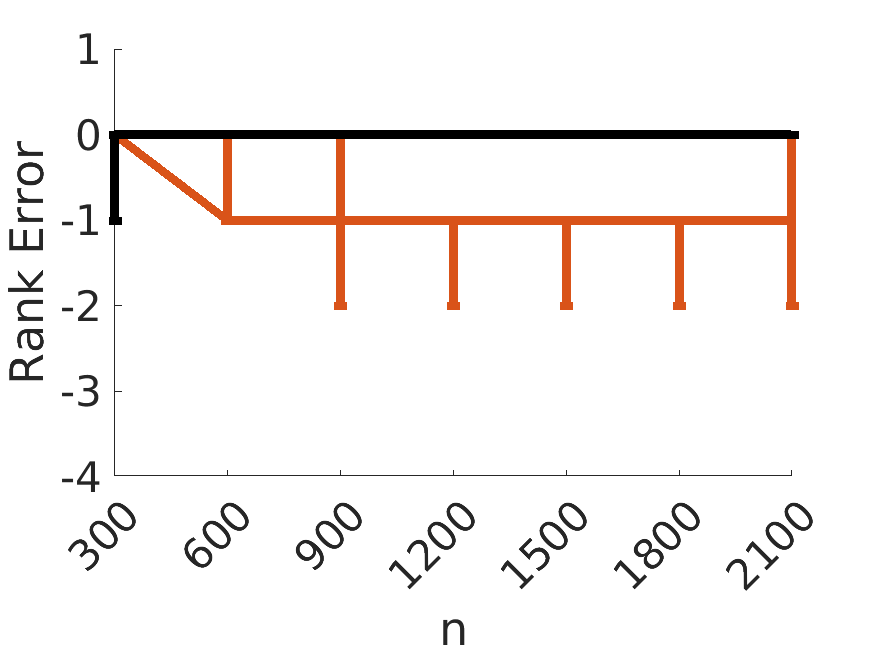}
\end{minipage}

\caption{\small A comparison of different rank estimators, where we show $\widehat r^+ - r^{+}$. The black (red and yellow respectively) lines are errors of rank estimator from our estimator \eqref{eq_adrk0} (ScreeNot and TRAD respectively). {The first row has $p/n = 0.5$ and the second row has $p/n = 1$.
 The first, second, and third columns are TYPE1, TYPE2, and TYPE 3 noise respectively.}}\label{fig:rankesterror}
\end{figure}

Results in Sections { \ref{sec_compareTSE} and \ref{sec_compareRANKest}} show that when TYPE1 noise is considered, TRAD always has the best performance. This is because TRAD has the closed form of $\lambda_+$ and optimal shrinkers.
However, when the noise is {of TYPE2 or TYPE3}, the closed form is invalid and TRAD leads to a large error. 
ScreeNOT cannot detect weak signals and does not perform shrinkage, thus it always has high errors for both types of noise. eOptShrink has a similar performance as TRAD when $n$ is large under TYPE1 noise and has the best performance under {TYPE2 and TYPE3} noises. In summary, these simulations guarantee that \eqref{eq_adrk0} is a precise estimator of $r^+$. This finding supports the application of $\widehat r^+$ \eqref{eq_adrk0} in the proposed eOptShrink.

\subsubsection{Optimal shrinkage estimation via $\widehat F_{\texttt{T}}(x)$, $\widehat F_{\texttt{imp}}(x)$ and $\widehat F_{\texttt{e}}(x)$}\label{section:subsection:compare F}

We compare how well we could use the proposed pseudo-distribution $\widehat F_{\texttt{e}}$ and existing $\widehat F_{\texttt{T}}(x)$ and $\widehat F_{\texttt{imp}}(x)$ to estimate the optimal shrinkage in Proposition \ref{prop_optimal_shrinker} via estimating $d_{r^+}$ and $\sqrt{a_{1,r^+}a_{2,r^+}}$. 
Set $k = 2r$ for $\widehat F_{\texttt{imp}}(x)$ in \eqref{eq_imp0} by using the oracle rank information. For a fair comparison, we set $c$ so that $\lfloor n^c \rfloor = 2r$ when we modify the top $2\widehat r^+$ eigenvalues of $\wt S \wt S^\top$ in $\widehat F_{\texttt{e}}(x)$ \eqref{eq_imp}.
As shown in the last simulation, our rank estimator is precise when $n$ is large, but errors happen when $n$ is small. Since $\widehat F_{\texttt{T}}(x)$ and $\widehat F_{\texttt{e}}(x)$ depend on the estimated rank $\widehat r^+$, we consider the case when $\widehat r^+$ satisfies $\widehat r^+ - r^+ =-2,\ldots, +2$; that is, we study how badly the erroneous rank estimate could impact the final result.
In Figures \ref{fig_d+_Compare_Fmut} and \ref{fig_d+_Compare_Fmut2}, the error ratio of estimating $d_{\min\{r^+,\widehat r^+\}}$ and $\sqrt{a_{1,\min\{r^+,\widehat r^+\}}a_{2,\min\{r^+,\widehat r^+\}}}$ respectively for $n=300$ and $n = 600$ with different pseudo-distributions. Note that the $r^+$th singular value is the smallest ``strong'' one, which is the most challenging one to recover. For each $n$, we repeat the simulation for $100$ times. 
Clearly, our approach has the lowest error ratio with a statistical significance, and the error decreases when $n$ grows from $300$ to $600$. This result shows that compared with ScreeNot, eOptShrink is robust to a slightly erroneous rank estimation. We thus have a theoretical guarantee for the application of $\widehat F_{\texttt{e}}(x)$ in eOptShrink.

\begin{figure}[hbt!]
\begin{minipage}{1\textwidth}
\includegraphics[width=0.24\linewidth]{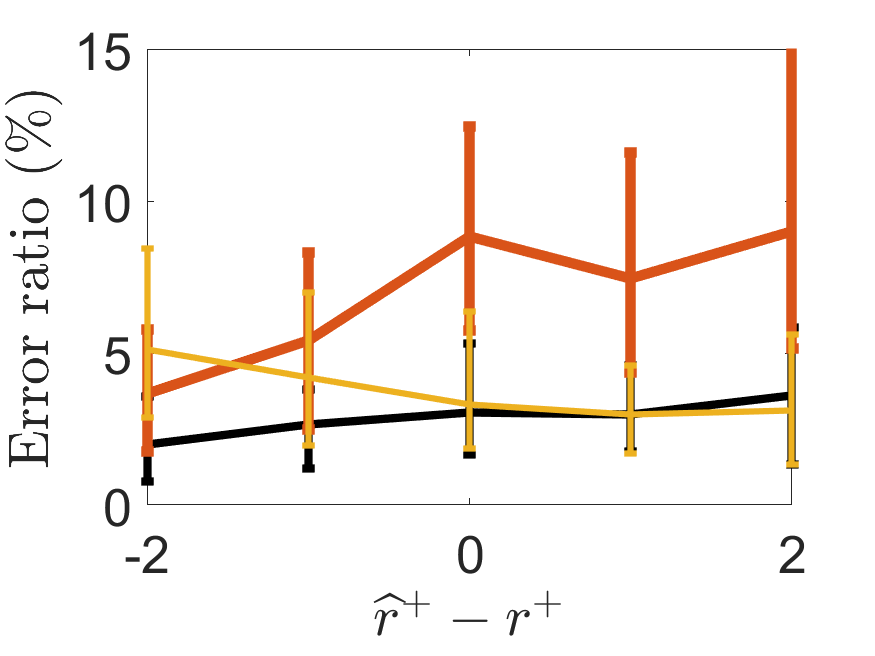}
\includegraphics[width=0.24\linewidth]{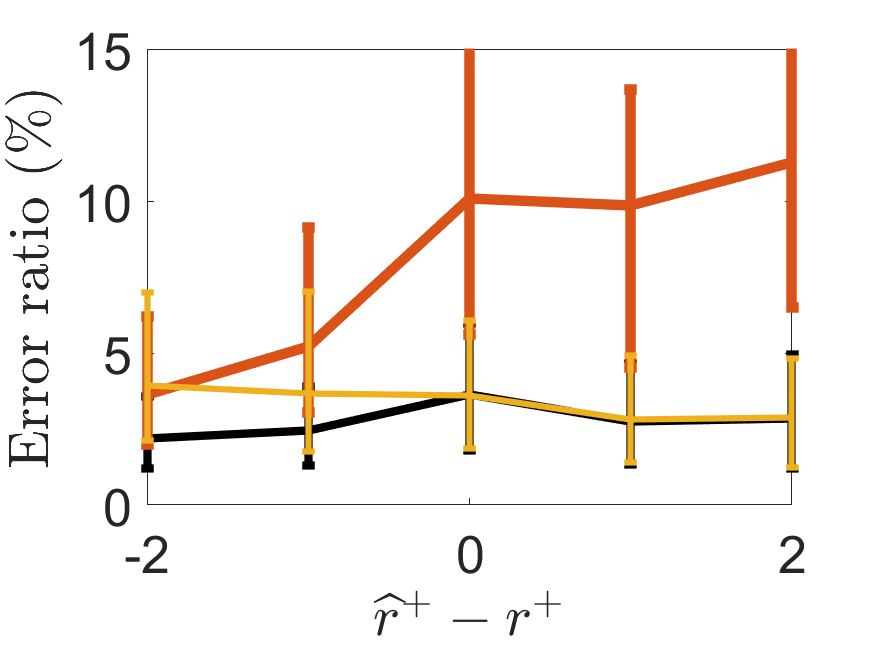}
\includegraphics[width=0.24\linewidth]{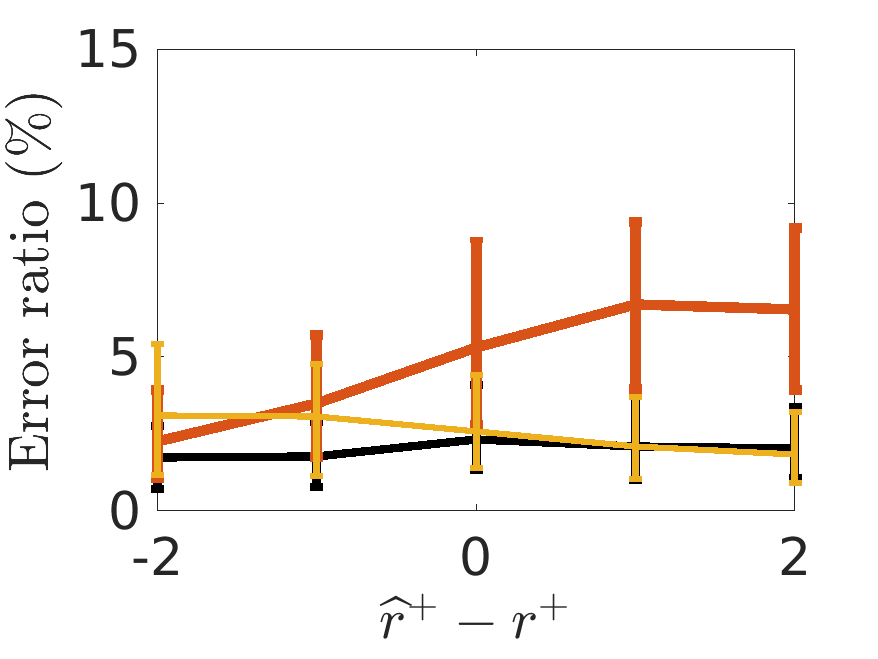}
\includegraphics[width=0.24\linewidth]{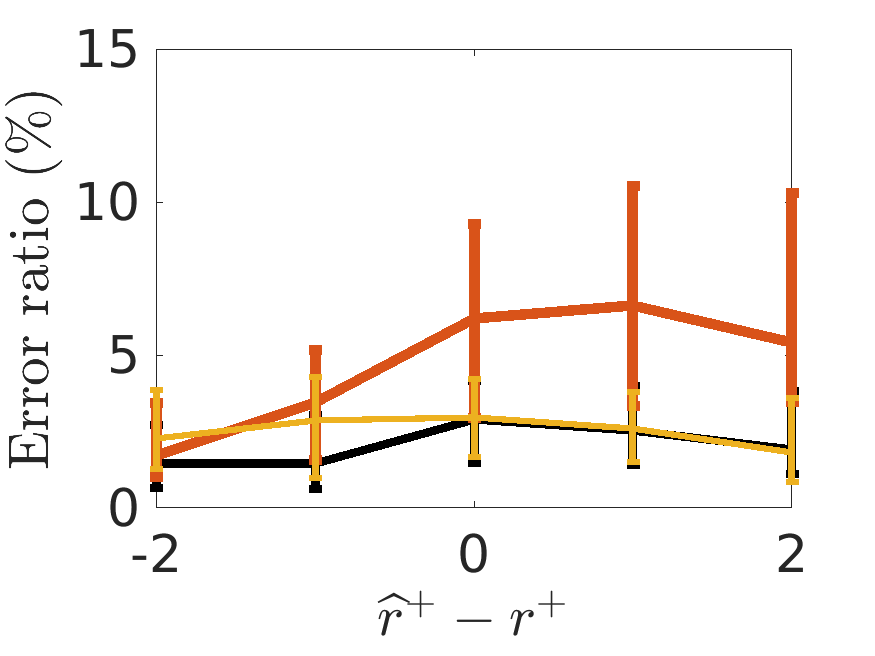}
\end{minipage}

\begin{minipage}{1\textwidth}
\includegraphics[width=0.24\linewidth]{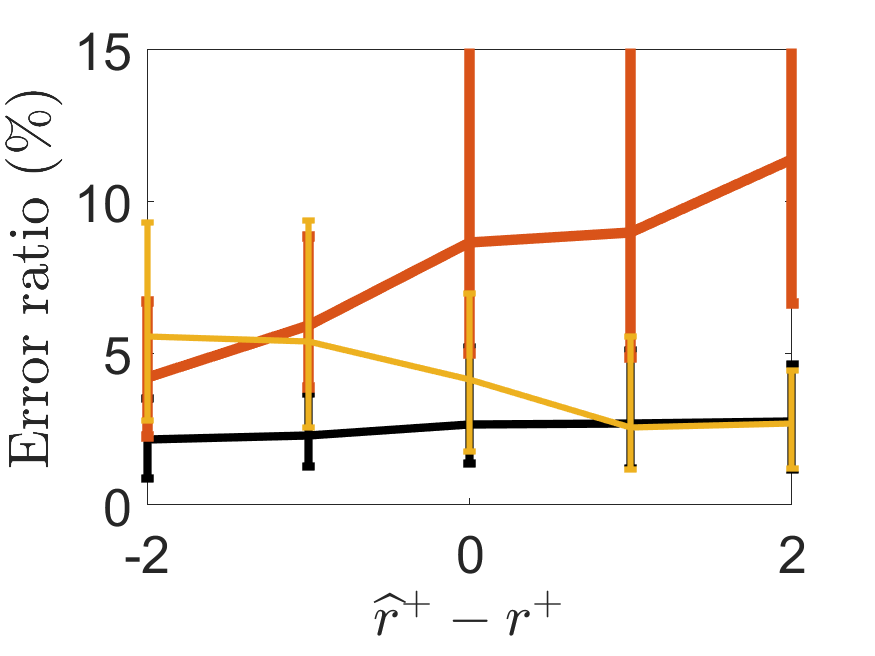}
\includegraphics[width=0.24\linewidth]{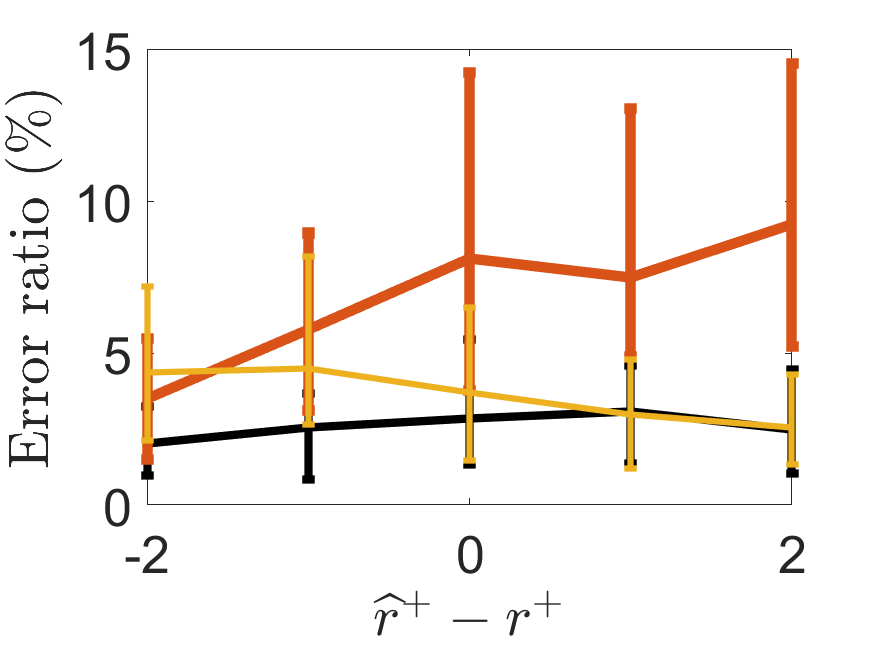}
\includegraphics[width=0.24\linewidth]{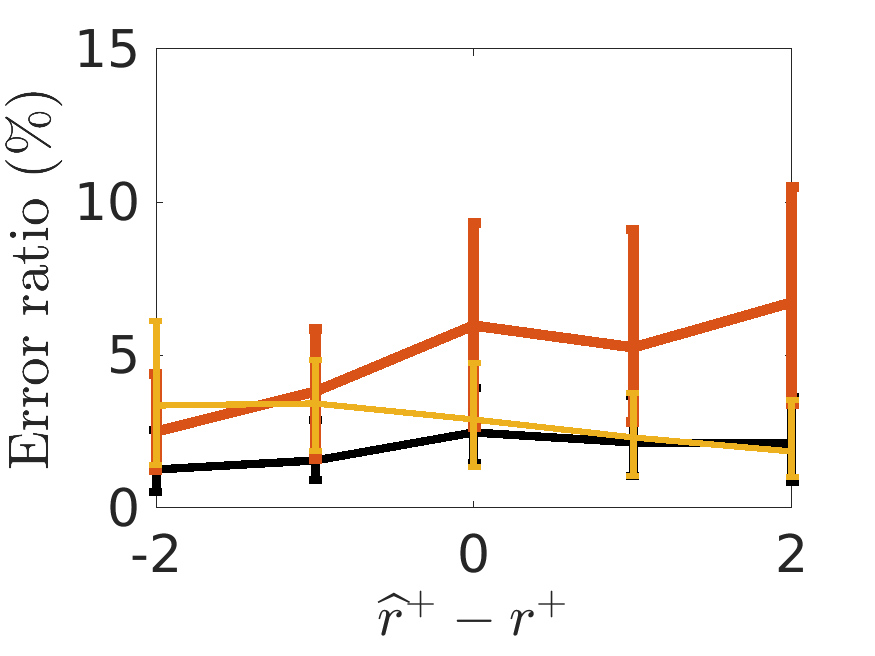}
\includegraphics[width=0.24\linewidth]{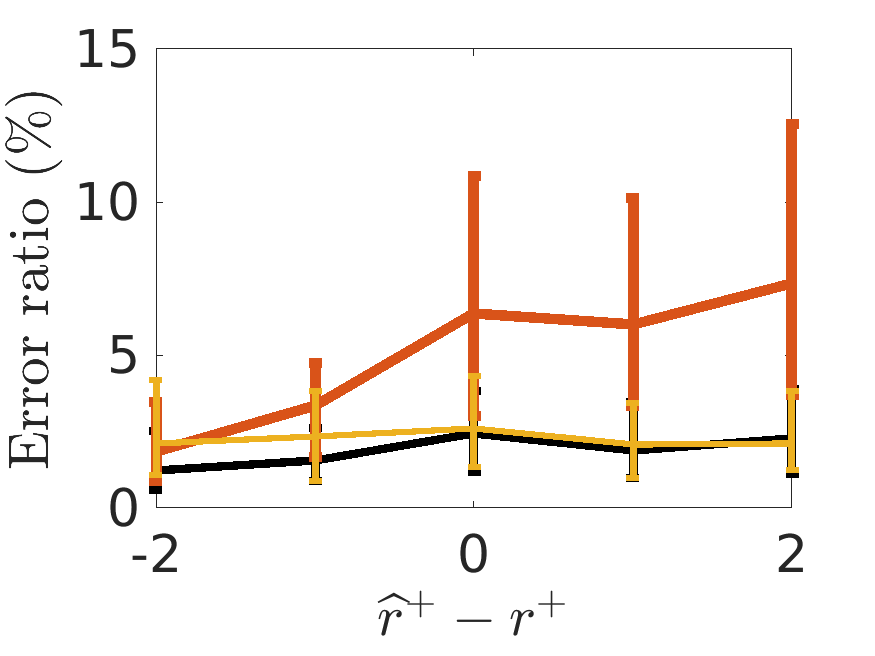}
\end{minipage}

\begin{minipage}{1\textwidth}
\includegraphics[width=0.24\linewidth]{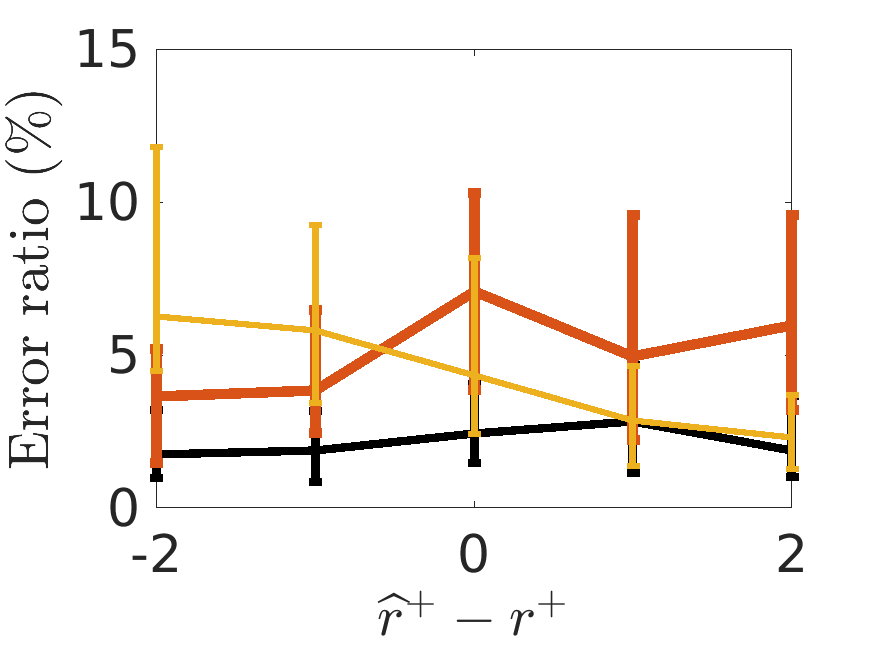}
\includegraphics[width=0.24\linewidth]{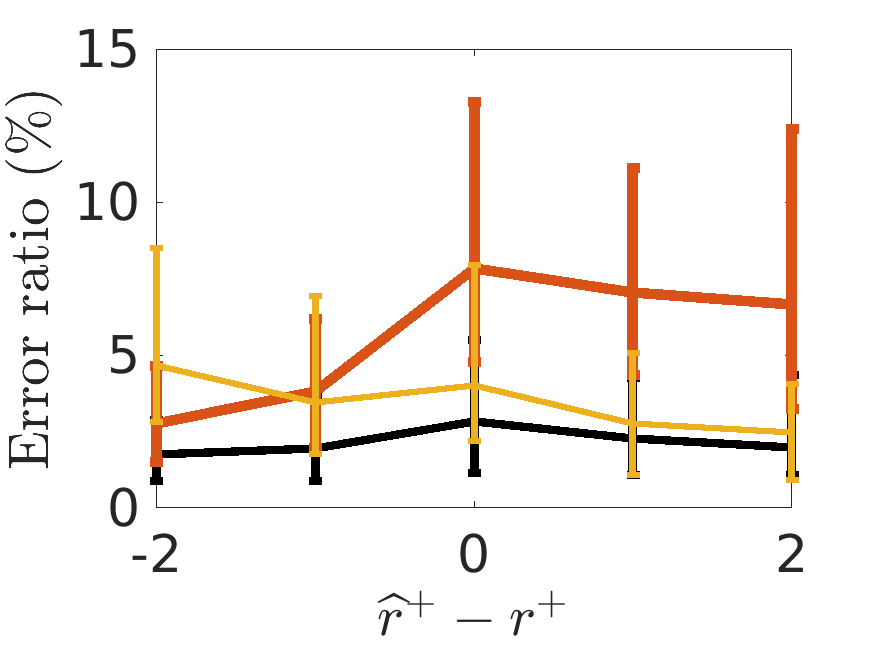}
\includegraphics[width=0.24\linewidth]{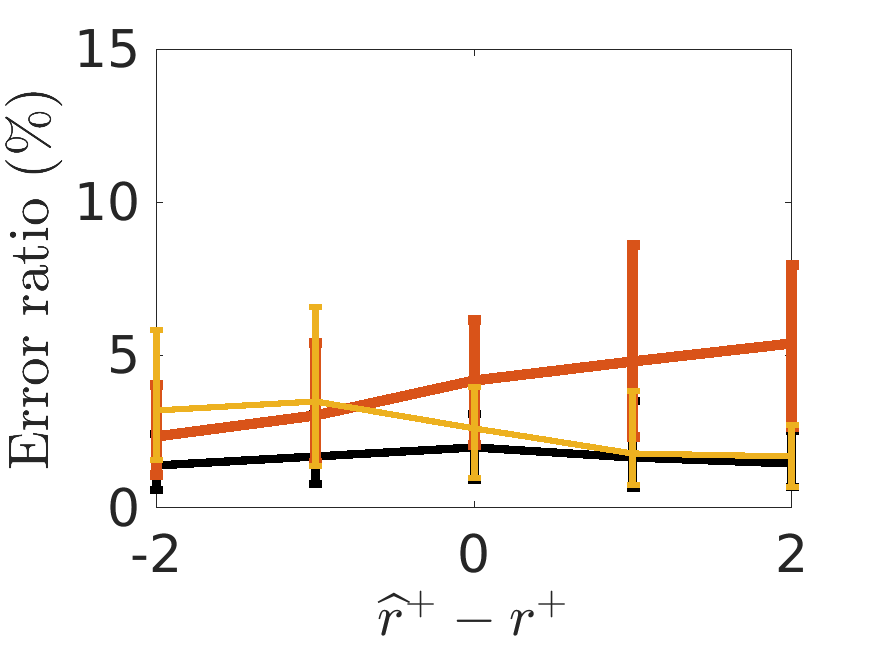}
\includegraphics[width=0.24\linewidth]{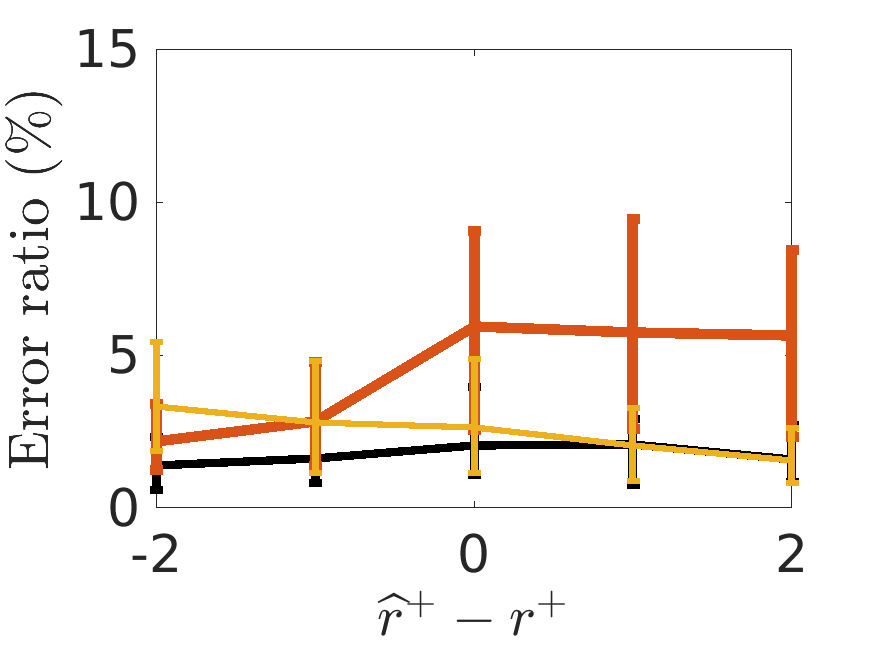}
\end{minipage}

 \caption{\small Interquartile errorbars of error ratios of estimating $d_{\min\{r^+,\widehat r^+\}}$ by using $\widehat F_{\texttt{e}}(x)$, $\widehat F_{\texttt{T}}(x)$ and $\widehat F_{\texttt{imp}}(x)$, shown in black, yellow, and red lines respectively. If the corresponding error ratio is too high, the associated curve is not totally plotted to enhance the visualization. 
 {From the first row to the third row: TYPE1, TYPE2, and TYPE3 noise. From the first column to the fourth column: $p/n = 0.5$ and $n = 300$, $p/n = 1$ and $n = 300$, $p/n = 0.5$ and $n = 600$, and $p/n = 1$ and $n = 600$.}
}\label{fig_d+_Compare_Fmut}
\end{figure}

\begin{figure}[hbt!]
\begin{minipage}{1\textwidth}
\includegraphics[width=0.24\linewidth]{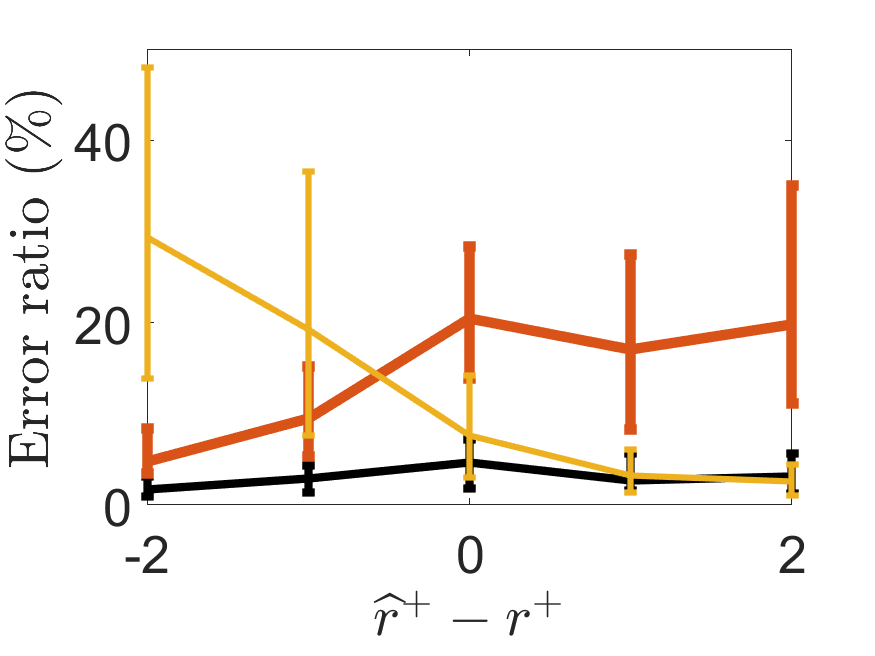}
\includegraphics[width=0.24\linewidth]{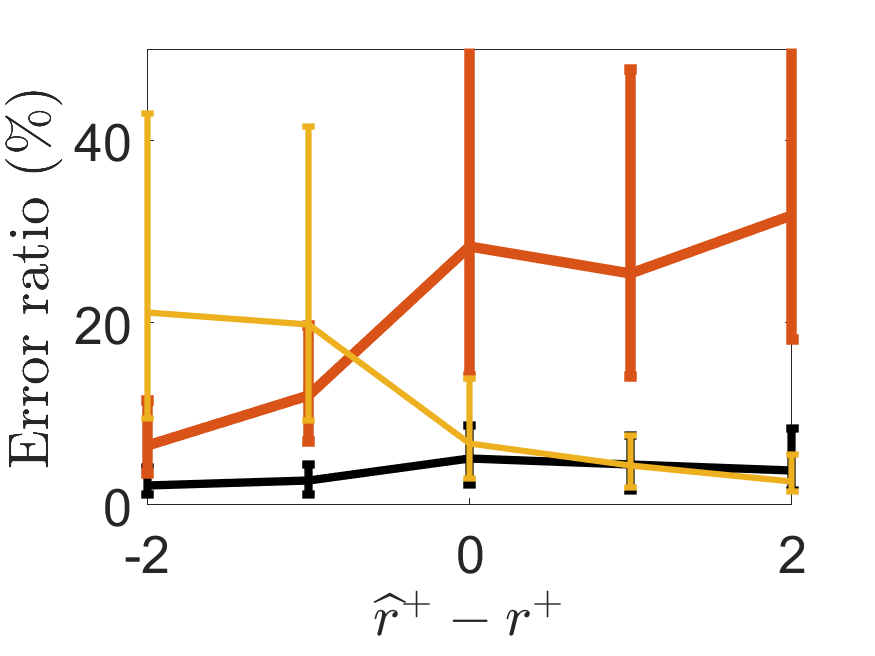}
\includegraphics[width=0.24\linewidth]{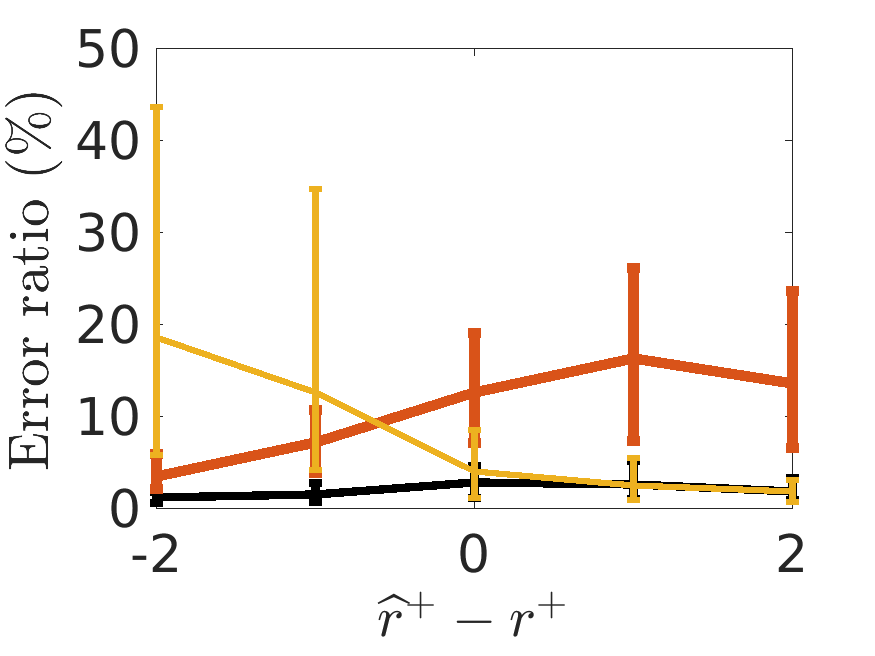}
\includegraphics[width=0.24\linewidth]{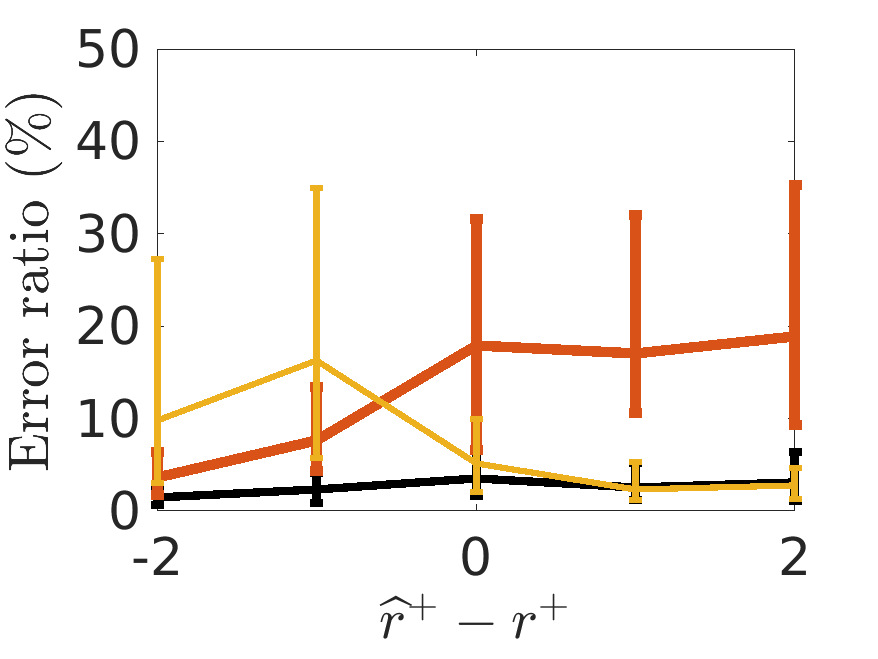}
\end{minipage}

\begin{minipage}{1\textwidth}
\includegraphics[width=0.24\linewidth]{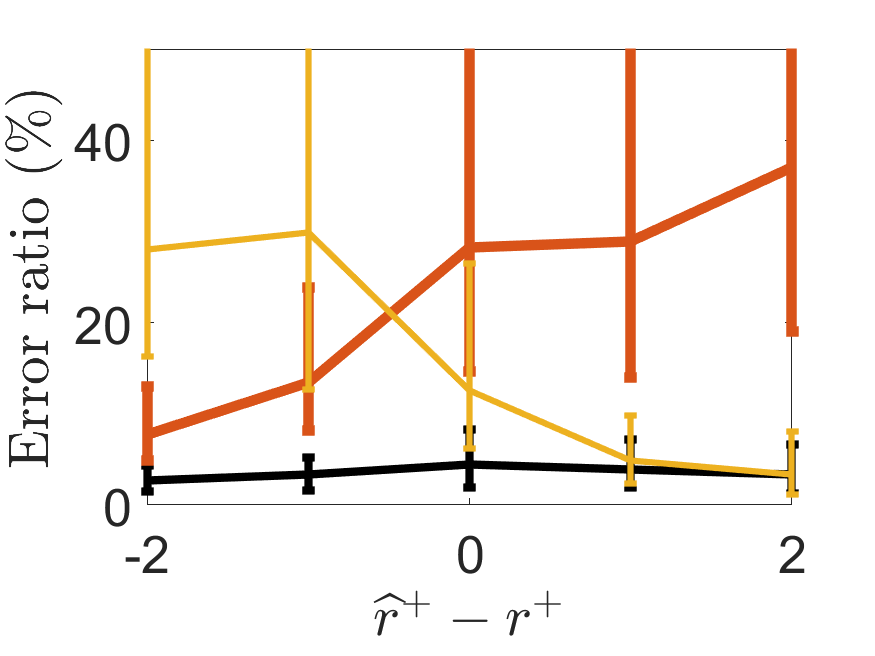}
\includegraphics[width=0.24\linewidth]{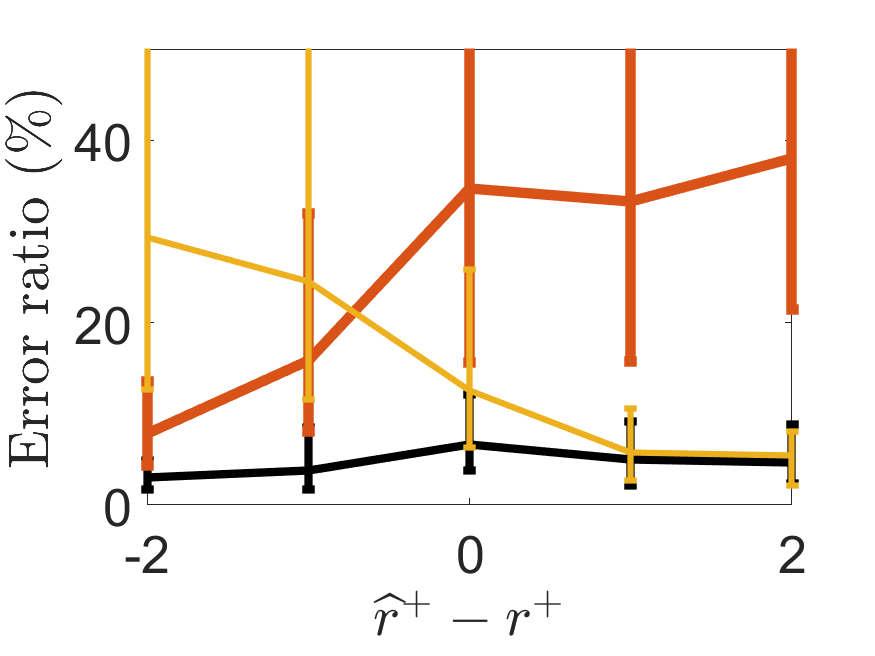}
\includegraphics[width=0.24\linewidth]{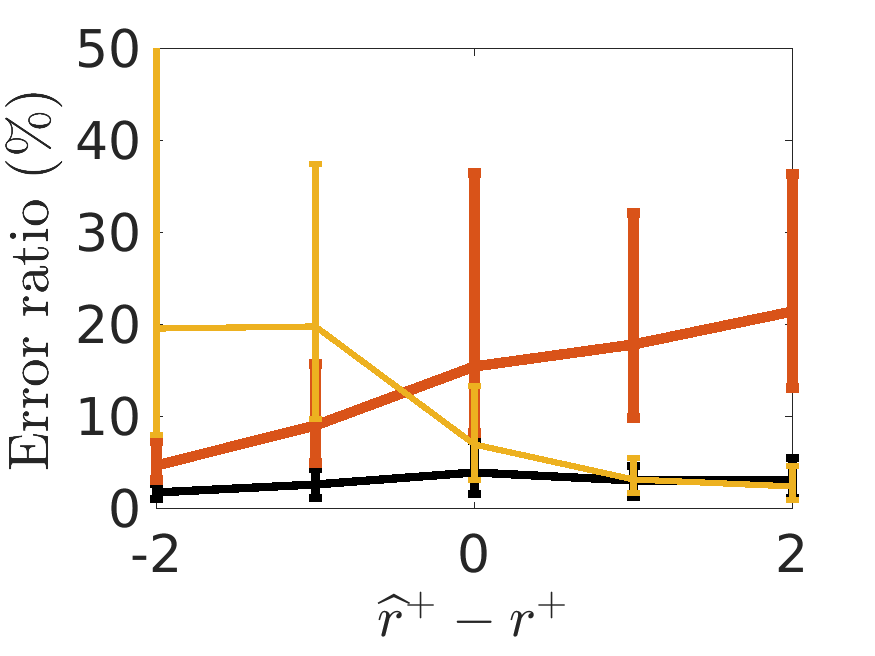}
\includegraphics[width=0.24\linewidth]{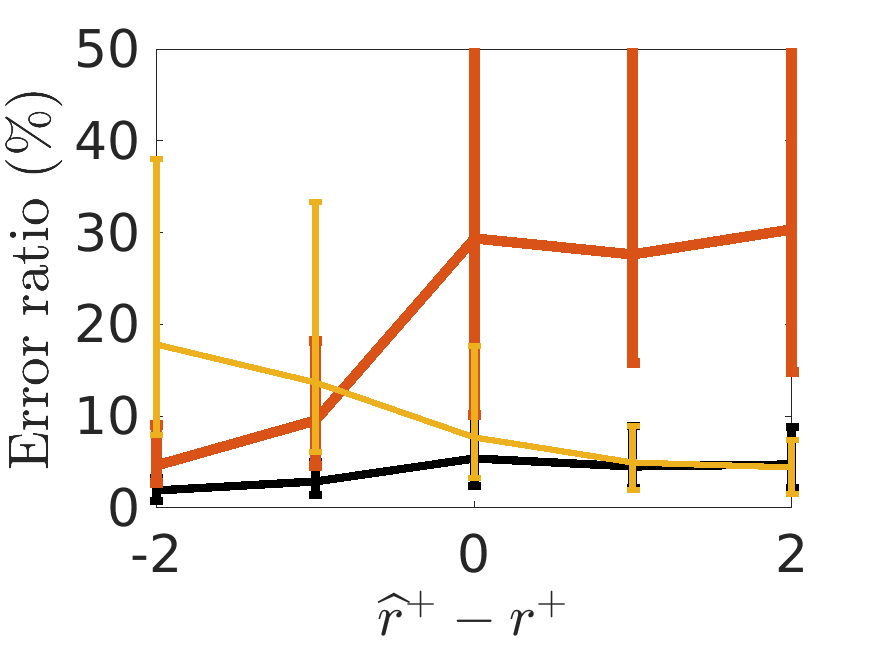}
\end{minipage}

\begin{minipage}{1\textwidth}
\includegraphics[width=0.24\linewidth]{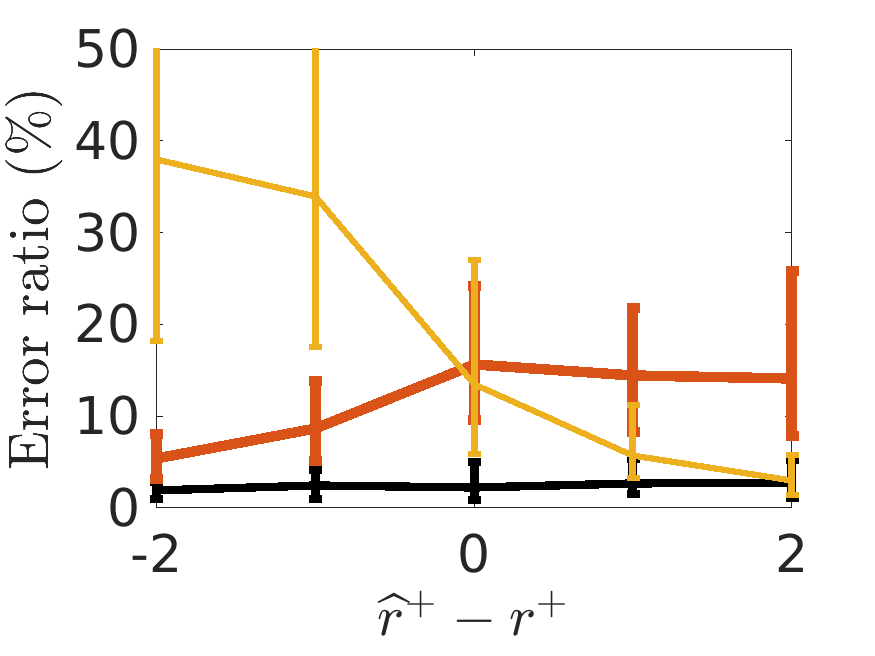}
\includegraphics[width=0.24\linewidth]{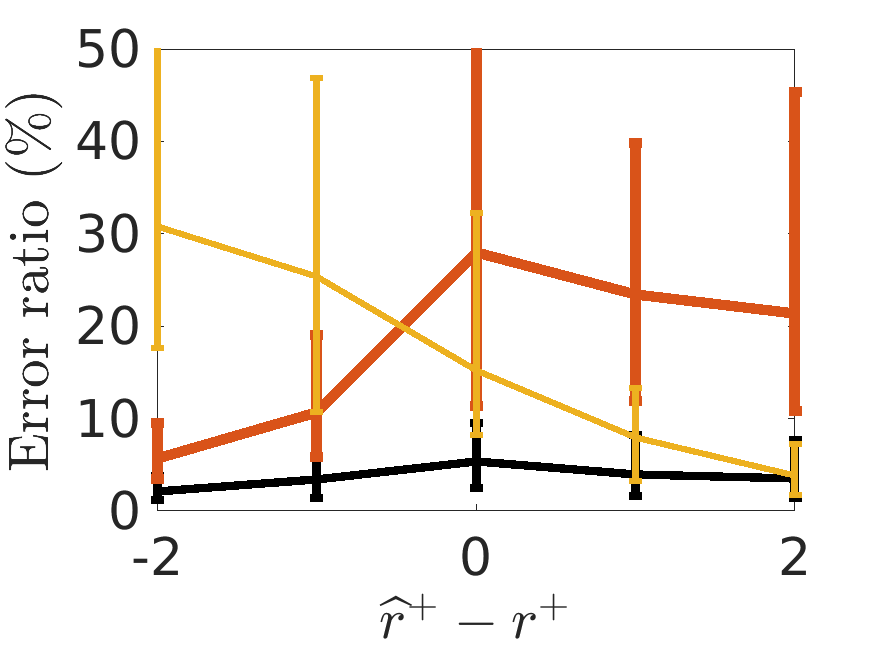}
\includegraphics[width=0.24\linewidth]{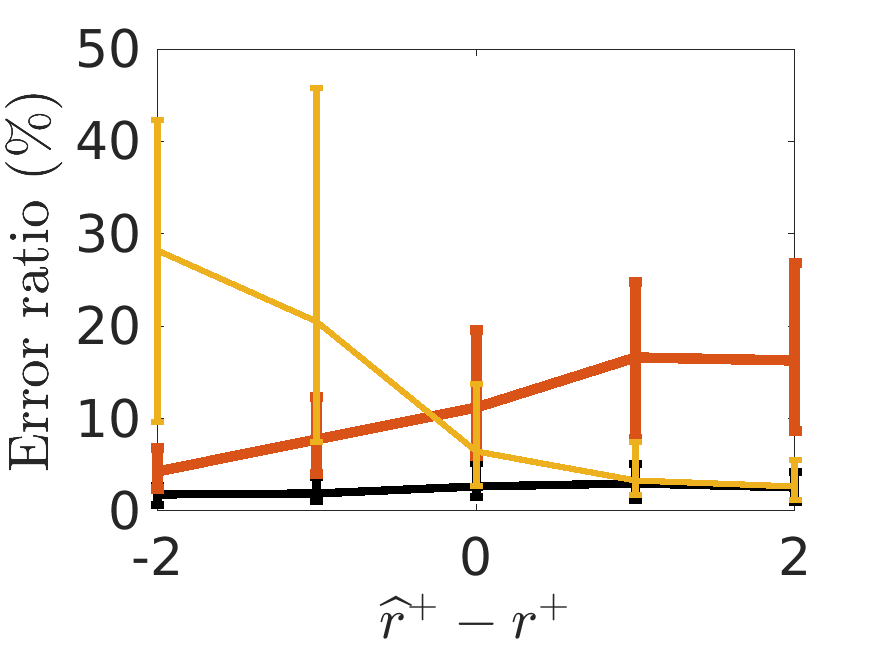}
\includegraphics[width=0.24\linewidth]{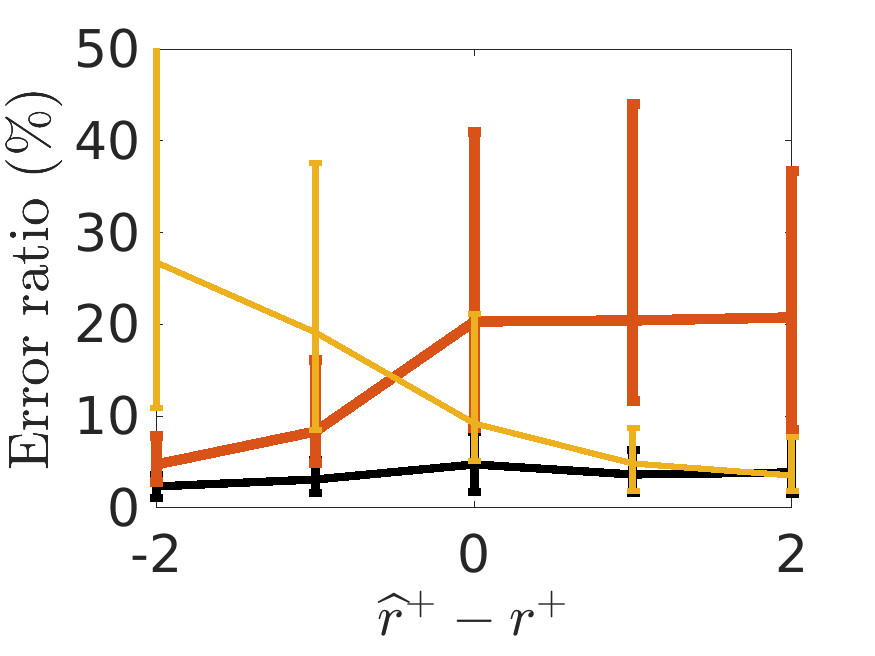}
\end{minipage}
 \caption{\small Interquartile errorbars of error ratios of estimating $\sqrt{a_{1,\min\{r^+,\widehat r^+\}}a_{2,\min\{r^+,\widehat r^+\}}}$ using $\widehat F_{\texttt{e}}(x)$, $\widehat F_{\texttt{T}}(x)$ and $\widehat F_{\texttt{imp}}(x)$, shown in black, yellow, and red lines respectively. If the corresponding error ratio is too high, the associated curve is not totally plotted to enhance the visualization. {From the first row to the third row: TYPE1, TYPE2, and TYPE3 noise. From the first column to the fourth column: $p/n = 0.5$ and $n = 300$, $p/n = 1$ and $n = 300$, $p/n = 0.5$ and $n = 600$, and $p/n = 1$ and $n = 600$.}
}\label{fig_d+_Compare_Fmut2}
\end{figure}

\subsection{Fetal ECG extraction problem}
In our previous work \cite{su2019recovery}, TRAD is the critical step of the algorithm to recover the mECG when we have only one or two ta-mECG channels. The algorithm is composed of two steps. The first step is mainly for two channels, and we ignore it when we only have one channel. The second step is composed of two substeps. Step 2-1 is designed to detect the maternal R peaks from the single channel ta-mECG, which is not the concern of OS. Step 2-2 is mainly illustrated in Figure \ref{Figure1}, where we view fECG as noise and mECG as the signal, and OS is applied to recover mECG from the ta-mECG. As is mentioned in Introduction, fECG, when viewed as noise, is not white, and there is a dependence among segments. Thus, it is natural to consider replacing TRAD in \cite{su2019recovery} by ScreeNOT or eOptShrink.
We consider a semi-real simulated database and a real-world database from {\em 2013 PhysioNet/Computing in Cardiology Challenge} \cite{silva2013noninvasive}, abbreviated as CinC2013. 

\subsubsection{Semi-real simulated database}
\label{Section more numerics simulated fECG}
The semi-real ta-mECG data is constructed from the Physikalisch-Technische Bundesanstalt (PTB) Database \url{https://physionet.org/physiobank/database/ptbdb/}, abbreviated as \texttt{PTBDB} following the same way detailed in \cite{su2019recovery}. The database contains 549 recordings from 290 subjects (one to five recordings for one subject) aged 17 to 87 with the mean age 57.2. 52 out of 290 subjects are healthy. 
Each recording includes simultaneously recorded conventional 12 lead and the Frank lead ECGs. Each signal is digitalized with the sampling rate 1000 Hz. More technical details can be found online. Take 57-second Frank lead ECGs from a healthy recording, denoted as $V_x(t)$, $V_y(t)$ and $V_z(t)$ at time $t\in \mathbb{R}$, as the maternal vectocardiogram (VCG). Take $(\theta_{xy},\theta_z) = (\frac{\pi}{4},\frac{\pi}{4})$, and the simulated mECG is created by $\text{mECG}(t)=(V_x(t)  \cos{\theta_{xy}} + V_y(t)  \sin{\theta_{xy}}) \cos{\theta_{z}} + V_z(t)  \sin{\theta_{z}}$. We create {40} mECGs. The {\em simulated fECG} of healthy fetus are created from another 40 recordings from healthy subjects, where $114$-second V2 and V4 recordings are taken. The simulated and simulated fECG come from different subjects. The simulated fECG then are resampled at $500$ Hz.
As a result, the simulated fECG has about double the heart rate compared with the simulated mECG if we consider both signals sampled at $1000$ Hz. The amplitude of the simulated fECG is normalized to the same level of simulated mECG and then multiplied by $0<R<1$ shown in the second column of Table \ref{table3} to make the amplitude relationship consistent with the usual situation of real ta-mECG signals. We generate {40} simulated healthy fECGs.
The clean simulated ta-mECG is generated by directly summing simulated mECG and fECG. 
We then create a simulated noise starting with a random vector $\textbf{x} = (x_1,x_2,x_3,\ldots)$ with i.i.d entries with student t-10 distribution. The noise is then created and denotes as $\textbf{z}$ with the entries $z_i = (1+0.5\sin((i \mod 500)/500))(x_{i}+x_{i+1}+x_{i+2})$. The final simulated ta-mECG is generated by adding the created noise to the clean simulated ta-mECG according to the desired SNR ratio  shown in the first column of Table \ref{table3}. As a result, we acquire {$40$} recordings of {$57$} seconds simulated ta-mECG signals with the sampling rate $1000$ Hz.

{ Assuming that the covariance structure of the summation of fECG and background noise can be explained by our separable covariance model $Z = A^{1/2}XB^{1/2}$,} we apply our eOptShrink algorithm to each recording in the simulated database in Step 2-2. We then compare its performance with TRAD and OptShrink and report the root mean square error (RMSE) of the recovered mECG by comparing it with the ground truth mECG. 
{ It is claimed that OptShrink works when the effective rank is overestimated \cite{nadakuditi2014optshrink}. Thus, similar to Section \ref{sec_selectionC}, we set the required overestimated rank to $\lfloor n^{c} \rfloor \gg r^+$, where $c = \min(\frac{1}{2.01},\frac{1}{\log(\log(n))})$, the same constant chosen in Algorithm \ref{alg}. Additionally, note that OptShrink only discusses the OS with Frobenius norm loss, so we apply both TRAD and eOptShrink with the Frobenius loss for comparisons. The results are shown in Table \ref{table3}. Clearly, eOptShrink has the smallest RMSE in all scenarios compared with TRAD and OptShrink.}
Overall, a lower SNR results in a higher RMSE across every amplitude ratio $R$ for all OS methods. Moreover, a higher fECG amplitude results in a higher noise level for the recovery of mECG, thus yielding a higher RMSE across all SNRs for all OS methods.

\begin{table}[htb!]
{
\small
\centering
\begin{tabular}{|c||c|c|c|c|}
\multicolumn{5}{c}{\vspace{-3pt}}\\
\cline{1-5}
SNR & $R$ & TRAD  & OptShrink & eOptShrink  \\
\hline
\multirow{3}{*}{$1$ dB} & 1/4 & 21.88 $\pm$ 4.84  &  13.36  $\pm$ 1.82   & 7.88* $\pm$ 1.82\\
& 1/6 & 18.23 $\pm$ 4.15  & 10.94 $\pm$ 1.49 &6.65* $\pm$ 1.46\\

& 1/8  & 16.22 $\pm$ 3.62  & 9.49 $\pm$ 1.29 & 5.92* $\pm$ 1.26 \\
\hline
\hline
\multirow{3}{*}{$0$ dB} & 1/4 & 22.40 $\pm$ 4.99  & 13.79 $\pm$ 1.97 & 8.04* $\pm$ 1.76 \\
&1/6 & 18.52 $\pm$ 4.19  & 11.31 $\pm$ 1.62 & 6.82* $\pm$ 1.50\\
& 1/8 &16.60 $\pm$ 3.89  & 9.82 $\pm$ 1.40 & 6.05* $\pm$ 1.27\\

\hline
\hline
\multirow{3}{*}{$-1$ dB} 
& 1/4& 22.94 $\pm$ 5.23 & 14.31$\pm$ 2.10 & 8.28* $\pm$ 1.78 \\
& 1/6 & 18.97 $\pm$ 4.42 & 11.69 $\pm$ 1.74 & 6.95* $\pm$ 1.50\\
& 1/8 & 16.91 $\pm$ 3.97  & 10.15 $\pm$ 1.50 & 6.21* $\pm$ 1.28\\ 

\hline

\end{tabular}
\caption{\small The comparison of RMSE for the mECG morphology of different algorithms applied to the simulated ta-mECG database. $R$ is the simulated fECG amplitude. All results are presented as mean $\pm$ standard deviation. The asteroid next to mean stand for the statistical significance when comparing eOptShrink to OptShrink and TRAD using the paired t-test. {Optimal shrinkers with respect to Frobenius norm loss is applied for both TRAD and eOptShrink}}\label{table3}

}
\end{table}

\subsubsection{CinC2013 database}
Each recording in CinC2013 comprises four ta-mECG channels and simultaneously recorded directly contacted fECG, all resampled at a sampling rate of 1000 Hz and lasting for 1 minute. For each channel of every recording, we execute the fECG extraction algorithm using various OS algorithms in Step 2-2. In Figure \ref{fig_rfECG}, we compare the recovered mECG and the detected fetal R peak locations using {TRAD, OptShink, and eOptShrink. Similar to the semi-real simulated database, we utilize the required overestimated rank equal to $\lfloor n^{c} \rfloor$ for OptShrink, where $c$ aligns with Algorithm \ref{alg}, and employ both TRAD and eOptShrink with the Frobenius norm loss.} It is evident that eOptShrink produces a more accurate recovery of the {\em morphology} of mECG and consequently the fECG. Notably, {we observe ventricular activity residues from fECG in the estimated mECG recovered by OptShrink, indicated by purple arrows, which are absent in eOptShrink. Consequently, eOptShrink enables better evaluation of the electrophysiological properties of both maternal and fetal hearts.} The clinical implications will be detailed in our forthcoming work.

\begin{figure}[hbt!]
\centering
\includegraphics[trim=20 0 20 0, clip, width=0.325\linewidth]{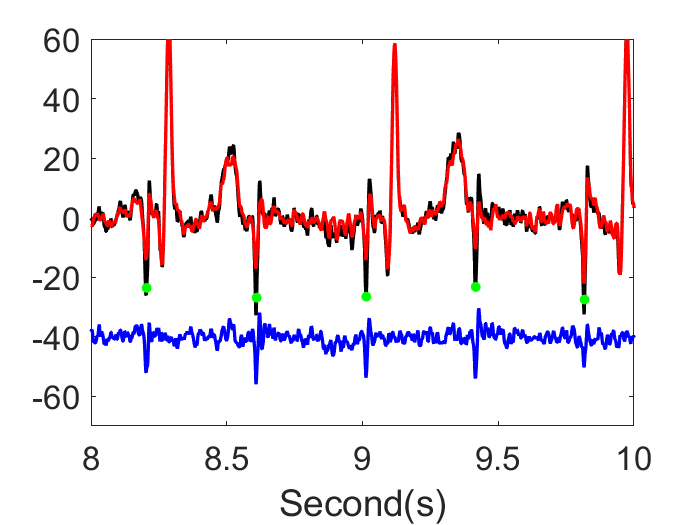}
\includegraphics[trim=20 0 20 0, clip, width=0.325\linewidth]{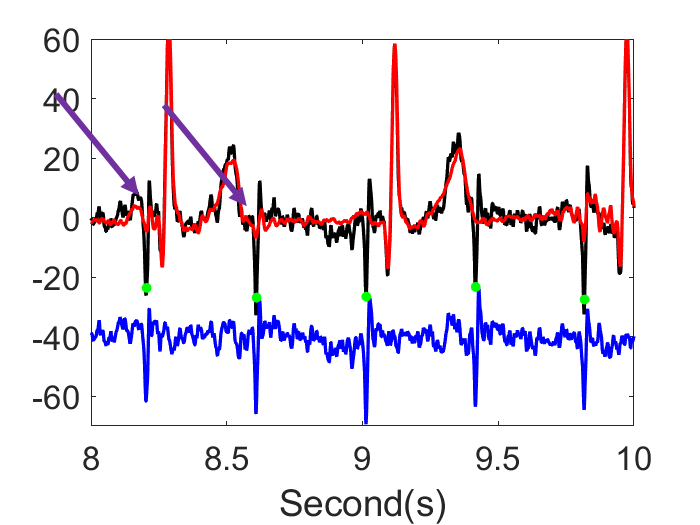}
\includegraphics[trim=20 0 20 0, clip, width=0.325\linewidth]{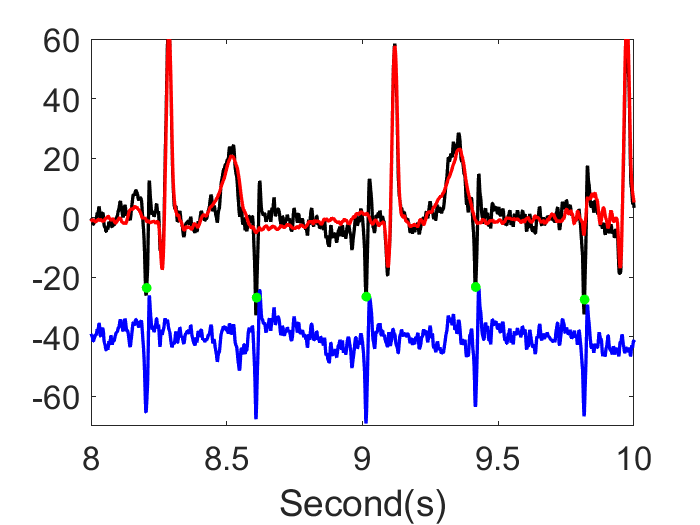}
 \caption{\small A comparison of mECG and fECG recovery using different algorithms (recording 1 and channel 1 from CinC2013) is illustrated. The left panel depicts results obtained by TRAD, the middle one by OptShrink, and the right one by eOptShrink. The original ta-mECG is depicted in black, with the recovered mECG overlaid in red, and the fetal R peaks labeled by experts are indicated as green dots. The difference between the ta-mECG and the recovered mECG is displayed in blue, representing the recovered fetal ECG. The purple arrows highlight ventricular activity residue, a limitation of OptShrink due to the lack of a precise estimator of effective rank.}\label{fig_rfECG}
\end{figure}

\section*{Reference}
\bibliographystyle{plain}
\bibliography{ev}

\begin{thebibliography}{10}

\bibitem{alagapan2019diffusion}
Sankaraleengam Alagapan, Hae~Won Shin, Flavio Fr{\"o}hlich, and Hau-Tieng Wu.
\newblock Diffusion geometry approach to efficiently remove electrical
  stimulation artifacts in intracranial electroencephalography.
\newblock {\em Journal of neural engineering}, 16(3):036010, 2019.

\bibitem{10.1214/17-ECP97}
Johannes Alt.
\newblock {Singularities of the density of states of random Gram matrices}.
\newblock {\em Electronic Communications in Probability}, 22(none):1 -- 13,
  2017.

\bibitem{10.1214/17-EJP42}
Johannes Alt, L{\'a}szl{\'o} Erdős, and Torben Kr{\"u}ger.
\newblock {Local law for random Gram matrices}.
\newblock {\em Electronic Journal of Probability}, 22(none):1 -- 41, 2017.

\bibitem{anderson2010introduction}
Greg~W Anderson, Alice Guionnet, and Ofer Zeitouni.
\newblock {\em An introduction to random matrices}.
\newblock Number 118. Cambridge university press, 2010.

\bibitem{bai2010spectral}
Zhidong Bai and Jack~W Silverstein.
\newblock {\em Spectral analysis of large dimensional random matrices},
  volume~20.
\newblock Springer, 2010.

\bibitem{bai2008central}
Zhidong Bai and Jian-feng Yao.
\newblock Central limit theorems for eigenvalues in a spiked population model.
\newblock {\em Annales de l'IHP Probabilit{\'e}s et statistiques},
  44(3):447--474, 2008.

\bibitem{bai2012sample}
Zhidong Bai and Jianfeng Yao.
\newblock On sample eigenvalues in a generalized spiked population model.
\newblock {\em J. Multivar. Anal.}, 106:167--177, 2012.

\bibitem{baik2005phase}
Jinho Baik, G{\'e}rard~Ben Arous, Sandrine P{\'e}ch{\'e}, et~al.
\newblock Phase transition of the largest eigenvalue for nonnull complex sample
  covariance matrices.
\newblock {\em Ann. Probab.}, 33(5):1643--1697, 2005.

\bibitem{baik2006eigenvalues}
Jinho Baik and Jack~W Silverstein.
\newblock Eigenvalues of large sample covariance matrices of spiked population
  models.
\newblock {\em J. Multivar. Anal.}, 97(6):1382--1408, 2006.

\bibitem{bao2015universality}
Zhigang Bao, Guangming Pan, and Wang Zhou.
\newblock Universality for the largest eigenvalue of sample covariance matrices
  with general population.
\newblock {\em Ann. Stat.}, 43(1):382--421, 2015.

\bibitem{benaych2011fluctuations}
Florent Benaych-Georges, Alice Guionnet, and Myl{\`e}ne Maida.
\newblock Fluctuations of the extreme eigenvalues of finite rank deformations
  of random matrices.
\newblock {\em Electron. J. Probab.}, 16:1621--1662, 2011.

\bibitem{benaych2016lectures}
Florent Benaych-Georges and Antti Knowles.
\newblock Lectures on the local semicircle law for wigner matrices.
\newblock {\em arXiv preprint arXiv:1601.04055}, 2016.

\bibitem{limitpaper}
Florent Benaych-Georges and Raj~Rao Nadakuditi.
\newblock The singular values and vectors of low rank perturbations of large
  rectangular random matrices.
\newblock {\em J. Multivar. Anal.}, 111:120 -- 135, 2012.

\bibitem{principal}
Alex Bloemendal, Antti Knowles, Horng-Tzer Yau, and Jun Yin.
\newblock On the principal components of sample covariance matrices.
\newblock {\em Probab Theory Relat Fields}, 164(1-2):459--552, 2016.

\bibitem{cattell1966scree}
Raymond~B Cattell.
\newblock The scree test for the number of factors.
\newblock {\em Multivariate behavioral research}, 1(2):245--276, 1966.

\bibitem{chiu2022get}
Neng-Tai Chiu, Stephanie Huwiler, M~Laura Ferster, Walter Karlen, Hau-Tieng Wu,
  and Caroline Lustenberger.
\newblock Get rid of the beat in mobile eeg applications: A framework towards
  automated cardiogenic artifact detection and removal in single-channel eeg.
\newblock {\em Biomed Signal Process Control}, 72:103220, 2022.

\bibitem{couillet2014analysis}
Romain Couillet and Walid Hachem.
\newblock Analysis of the limiting spectral measure of large random matrices of
  the separable covariance type.
\newblock {\em Random Matrices: Theory Appl.}, 3(04):1450016, 2014.

\bibitem{ding2020high}
Xiucai Ding.
\newblock High dimensional deformed rectangular matrices with applications in
  matrix denoising.
\newblock {\em Bernoulli}, 26(1):387--417, 2020.

\bibitem{ding2021spiked}
Xiucai Ding.
\newblock Spiked sample covariance matrices with possibly multiple bulk
  components.
\newblock {\em Random Matrices: Theory and Applications}, 10(01):2150014, 2021.

\bibitem{ding2018necessary}
Xiucai Ding and Fan Yang.
\newblock A necessary and sufficient condition for edge universality at the
  largest singular values of covariance matrices.
\newblock {\em Ann Appl Probab}, 28(3):1679--1738, 2018.

\bibitem{DY2019}
Xiucai Ding and Fan Yang.
\newblock {Spiked separable covariance matrices and principal components}.
\newblock {\em The Annals of Statistics}, 49(2):1113 -- 1138, 2021.

\bibitem{donoho2018optimal}
David~L Donoho, Matan Gavish, and Iain~M Johnstone.
\newblock Optimal shrinkage of eigenvalues in the spiked covariance model.
\newblock {\em Ann. Stat.}, 46(4):1742, 2018.

\bibitem{donoho2020screenot}
David~L Donoho, Matan Gavish, and Elad Romanov.
\newblock Scree{NOT}: Exact mse-optimal singular value thresholding in
  correlated noise.
\newblock {\em arXiv preprint arXiv:2009.12297}, 2020.

\bibitem{drton2021existence}
Mathias Drton, Satoshi Kuriki, and Peter Hoff.
\newblock Existence and uniqueness of the kronecker covariance mle.
\newblock {\em Ann. Stat.}, 49(5):2721--2754, 2021.

\bibitem{eckart1936approximation}
Carl Eckart and Gale Young.
\newblock The approximation of one matrix by another of lower rank.
\newblock {\em Psychometrika}, 1(3):211--218, 1936.

\bibitem{efron2009set}
Bradley Efron.
\newblock Are a set of microarrays independent of each other?
\newblock {\em Ann Appl Stat}, 3(3):922, 2009.

\bibitem{el2007tracy}
Noureddine El~Karoui.
\newblock Tracy--widom limit for the largest eigenvalue of a large class of
  complex sample covariance matrices.
\newblock {\em Ann. Probab.}, 35(2):663--714, 2007.

\bibitem{erdHos2013averaging}
L{\'a}szl{\'o} Erd{\H{o}}s, Antti Knowles, and Horng-Tzer Yau.
\newblock Averaging fluctuations in resolvents of random band matrices.
\newblock {\em Ann. Henri Poincar{\'e}}, 14(8):1837--1926, 2013.

\bibitem{donoho2013optimal}
Matan Gavish and David~L Donoho.
\newblock The optimal hard threshold for singular values is $4/\sqrt{3}$.
\newblock {\em IEEE Transactions on Information Theory}, 60(8):5040--5053,
  2014.

\bibitem{GD}
Matan Gavish and David~L. Donoho.
\newblock Optimal shrinkage of singular values.
\newblock {\em IEEE Trans. Inform. Theory}, 63(4):2137--2152, 2017.

\bibitem{gerard2015equivariant}
David Gerard and Peter Hoff.
\newblock Equivariant minimax dominators of the mle in the array normal model.
\newblock {\em J. Multivar. Anal.}, 137:32--49, 2015.

\bibitem{golub1965calculating}
Gene Golub and William Kahan.
\newblock Calculating the singular values and pseudo-inverse of a matrix.
\newblock {\em Journal of the Society for Industrial and Applied Mathematics,
  Series B: Numerical Analysis}, 2(2):205--224, 1965.

\bibitem{hoff2016limitations}
Peter~D Hoff.
\newblock Limitations on detecting row covariance in the presence of column
  covariance.
\newblock {\em J. Multivar. Anal.}, 152:249--258, 2016.

\bibitem{johnstone2001distribution}
Iain~M Johnstone.
\newblock On the distribution of the largest eigenvalue in principal components
  analysis.
\newblock {\em Ann. Stat.}, 29(2):295--327, 2001.

\bibitem{knowles2013isotropic}
Antti Knowles and Jun Yin.
\newblock The isotropic semicircle law and deformation of wigner matrices.
\newblock {\em Commun Pure Appl Math}, 66(11):1663--1749, 2013.

\bibitem{knowles2014outliers}
Antti Knowles and Jun Yin.
\newblock The outliers of a deformed wigner matrix.
\newblock {\em Ann Probab}, 42(5):1980--2031, 2014.

\bibitem{Knowles2017}
Antti Knowles and Jun Yin.
\newblock Anisotropic local laws for random matrices.
\newblock {\em Probab. Theory Relat. Fields}, 169(1):257--352, 2017.

\bibitem{lee2016tracy}
Ji~Oon Lee and Kevin Schnelli.
\newblock Tracy--widom distribution for the largest eigenvalue of real sample
  covariance matrices with general population.
\newblock {\em Ann Appl Probab}, 26(6):3786--3839, 2016.

\bibitem{leeb2021rapid}
William Leeb.
\newblock Rapid evaluation of the spectral signal detection threshold and
  stieltjes transform.
\newblock {\em Advances in Computational Mathematics}, 47(4):1--29, 2021.

\bibitem{leeb2020optimal}
William Leeb.
\newblock Optimal singular value shrinkage for operator norm loss: Extending to
  non-square matrices.
\newblock {\em Statistics \& Probability Letters}, 186:109472, 2022.

\bibitem{leeb2021optimal}
William Leeb and Elad Romanov.
\newblock Optimal spectral shrinkage and pca with heteroscedastic noise.
\newblock {\em IEEE Trans. Inform. Theory}, 67(5):3009--3037, 2021.

\bibitem{leeb2021matrix}
William~E Leeb.
\newblock Matrix denoising for weighted loss functions and heterogeneous
  signals.
\newblock {\em SIAM Journal on Mathematics of Data Science}, 3(3):987--1012,
  2021.

\bibitem{liu2021denoising}
Tzu-Chi Liu, Yi-Wen Liu, and Hau-Tieng Wu.
\newblock Denoising click-evoked otoacoustic emission signals by optimal
  shrinkage.
\newblock {\em J. Acoust. Soc. Am.}, 149(4):2659--2670, 2021.

\bibitem{marvcenko1967distribution}
Vladimir~A Mar{\v{c}}enko and Leonid~Andreevich Pastur.
\newblock Distribution of eigenvalues for some sets of random matrices.
\newblock {\em Mathematics of the USSR-Sbornik}, 1(4):457, 1967.

\bibitem{mirsky1960symmetric}
Leon Mirsky.
\newblock Symmetric gauge functions and unitarily invariant norms.
\newblock {\em The quarterly journal of mathematics}, 11(1):50--59, 1960.

\bibitem{nadakuditi2014optshrink}
Raj~Rao Nadakuditi.
\newblock Optshrink: An algorithm for improved low-rank signal matrix denoising
  by optimal, data-driven singular value shrinkage.
\newblock {\em IEEE Trans. Inform. Theory}, 60(5):3002--3018, 2014.

\bibitem{onatski2008tracy}
Alexei Onatski.
\newblock The tracy--widom limit for the largest eigenvalues of singular
  complex wishart matrices.
\newblock {\em Ann Appl Probab}, 18(2):470--490, 2008.

\bibitem{o2016eigenvectors}
Sean O'Rourke, Van Vu, and Ke~Wang.
\newblock Eigenvectors of random matrices: a survey.
\newblock {\em J. Comb. Theory Ser. A.}, 144:361--442, 2016.

\bibitem{paul2007asymptotics}
Debashis Paul.
\newblock Asymptotics of sample eigenstructure for a large dimensional spiked
  covariance model.
\newblock {\em Statistica Sinica}, pages 1617--1642, 2007.

\bibitem{paul2009no}
Debashis Paul and Jack~W Silverstein.
\newblock No eigenvalues outside the support of the limiting empirical spectral
  distribution of a separable covariance matrix.
\newblock {\em J Multivar Anal}, 100(1):37--57, 2009.

\bibitem{pillai2014universality}
Natesh~S Pillai and Jun Yin.
\newblock Universality of covariance matrices.
\newblock {\em Ann Appl Probab}, 24(3):935--1001, 2014.

\bibitem{shabalin2013reconstruction}
Andrey~A Shabalin and Andrew~B Nobel.
\newblock Reconstruction of a low-rank matrix in the presence of gaussian
  noise.
\newblock {\em J. Multivar. Anal.}, 118:67--76, 2013.

\bibitem{silva2013noninvasive}
Ikaro Silva, Joachim Behar, Reza Sameni, Tingting Zhu, Julien Oster, Gari~D
  Clifford, and George~B Moody.
\newblock Noninvasive fetal ecg: the physionet/computing in cardiology
  challenge 2013.
\newblock In {\em Computing in Cardiology 2013}, pages 149--152. IEEE, 2013.

\bibitem{silverstein1995analysis}
Jack~W Silverstein and Sang-Il Choi.
\newblock Analysis of the limiting spectral distribution of large dimensional
  random matrices.
\newblock {\em Journal of Multivariate Analysis}, 54(2):295--309, 1995.

\bibitem{su2019recovery}
Pei-Chun Su, Stephen Miller, Salim Idriss, Piers Barker, and Hau-Tieng Wu.
\newblock Recovery of the fetal electrocardiogram for morphological analysis
  from two trans-abdominal channels via optimal shrinkage.
\newblock {\em Physiol. Meas.}, 40(11):115005, 2019.

\bibitem{su2020robust}
Pei-Chun Su, Elsayed~Z Soliman, and Hau-Tieng Wu.
\newblock Robust t-end detection via t-end signal quality index and optimal
  shrinkage.
\newblock {\em Sensors}, 20(24):7052, 2020.

\bibitem{tao2012topics}
Terence Tao.
\newblock {\em Topics in random matrix theory}, volume 132.
\newblock American Mathematical Soc., 2012.

\bibitem{TW}
Craig~A Tracy and Harold Widom.
\newblock Level-spacing distributions and the airy kernel.
\newblock {\em Commun Math Phys}, 159(1):151--174, 1994.

\bibitem{TW1}
Craig~A Tracy and Harold Widom.
\newblock On orthogonal and symplectic matrix ensembles.
\newblock {\em Commun Math Phys}, 177(3):727--754, 1996.

\bibitem{tracy2002distribution}
Craig~A Tracy and Harold Widom.
\newblock Distribution functions for largest eigenvalues and their
  applications.
\newblock {\em arXiv preprint math-ph/0210034}, 2002.

\bibitem{wang2014limiting}
Lili Wang and Debashis Paul.
\newblock Limiting spectral distribution of renormalized separable sample
  covariance matrices when $p/n\to 0$.
\newblock {\em J. Multivar. Anal.}, 126:25--52, 2014.

\bibitem{widder1938stieltjes}
DV~Widder.
\newblock The stieltjes transform.
\newblock {\em Trans. Am. Math. Soc.}, 43(1):7--60, 1938.

\bibitem{xi2020convergence}
Haokai Xi, Fan Yang, and Jun Yin.
\newblock Convergence of eigenvector empirical spectral distribution of sample
  covariance matrices.
\newblock {\em Ann. Stat.}, 48(2):953--982, 2020.

\bibitem{yang2019edge}
Fan Yang.
\newblock Edge universality of separable covariance matrices.
\newblock {\em Electron. J. Probab.}, 24:1--57, 2019.

\bibitem{lixin2007spectral}
Lixin Zhang.
\newblock {\em Spectral analysis of large dimensional random matrices}.
\newblock PhD thesis, National University of Singapore, 2007.

\end{thebibliography}

\clearpage
\appendix

\setcounter{equation}{0}
\setcounter{figure}{0}
\setcounter{table}{0}
\setcounter{page}{1}
\renewcommand{\thepage}{S.\arabic{page}}
\renewcommand{\thetable}{S.\arabic{table}}
\renewcommand{\thefigure}{S.\arabic{figure}}
\renewcommand{\thesection}{S.\arabic{section}}
\renewcommand{\thelemma}{S.\arabic{lemma}}
\renewcommand{\theequation}{S.\arabic{section}.\arabic{equation}}

\section{Background knowledge}
\subsection{Some linear algebra tools}
We recall a perturbation bound for determinants.
\begin{lemma} \label{eq_lemmadeter} Let $A$ and $E$ be two $m \times m$ matrices, where $\sigma_1 \geq \sigma_2 \geq \cdots \geq \sigma_m$ are singular values of $A$. Then  
\begin{equation*}
|\det(A+E)-\det(A)| \leq \sum_{i=1}^m s_{m-i} \norm{E}^i_2\,,
\end{equation*}
where $s_k:=\sum_{1 \leq i_1<i_2<\cdots<i_k \leq m} \sigma_{i_1} \cdots \sigma_{i_k}$ is the $k$-th elementary symmetric functions of singular values of $A$ and $s_0:=1$. 
\end{lemma}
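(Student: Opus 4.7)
The plan is to reduce to the diagonal case via the singular value decomposition, then expand the determinant multilinearly and bound each resulting minor by the operator norm of the perturbation.

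First, write the SVD $A = U\Sigma V^*$ with $\Sigma = \mathrm{diag}(\sigma_1,\dots,\sigma_m)$ and $U,V$ unitary. Setting $F := U^*EV$, unitary invariance gives $\|F\|_2 = \|E\|_2$ and
\[
\det(A+E) - \det(A) = \det(U)\det(V^*)\bigl[\det(\Sigma + F) - \det(\Sigma)\bigr],
\]
whose absolute value equals $|\det(\Sigma+F)-\det(\Sigma)|$ since $|\det(U)\det(V^*)|=1$. Thus it suffices to prove the inequality with $A=\Sigma$ diagonal and $E=F$.

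Next, view $\Sigma+F$ column by column and expand the determinant by multilinearity. Writing $(\Sigma+F)_{:,j} = \sigma_j e_j + F_{:,j}$, the expansion gives
\[
\det(\Sigma+F) \;=\; \sum_{S\subseteq[m]} D_S,\qquad D_S := \det\bigl(M^{(S)}\bigr),
\]
where column $j$ of $M^{(S)}$ equals $F_{:,j}$ if $j\in S$ and $\sigma_j e_j$ if $j\notin S$. The summand $S=\emptyset$ is exactly $\det(\Sigma)$. For a nonempty $S$ with $|S|=i$, cofactor expansion along the rows indexed by $S^c$ (each of which has a single nonzero entry, namely $\sigma_j$ in column $j$ for $j\in S^c$) shows that
\[
|D_S| \;=\; \Bigl(\prod_{j\in S^c}\sigma_j\Bigr)\,\bigl|\det(F_{SS})\bigr|,
\]
where $F_{SS}$ is the $i\times i$ principal submatrix of $F$ with rows and columns indexed by $S$.

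Finally, for the $i\times i$ matrix $F_{SS}$ we have $|\det(F_{SS})| = \prod_{k=1}^i \sigma_k(F_{SS}) \le \|F_{SS}\|_2^{\,i} \le \|F\|_2^{\,i}$, using that every singular value of a submatrix is bounded by the operator norm of the full matrix. Summing over $S\ne\emptyset$ and grouping by $i=|S|$,
\[
\bigl|\det(\Sigma+F)-\det(\Sigma)\bigr| \;\le\; \sum_{i=1}^{m} \|F\|_2^{\,i} \sum_{\substack{S\subseteq[m]\\|S|=i}} \prod_{j\in S^c}\sigma_j \;=\; \sum_{i=1}^{m} s_{m-i}\,\|E\|_2^{\,i},
\]
which is the claimed bound. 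The only place requiring care is the multilinear expansion and the bookkeeping of the cofactor reduction that isolates the principal submatrix $F_{SS}$; once that is in hand, the rest is a standard application of $|\det M|\le \|M\|_2^{\dim M}$ and the submatrix norm inequality.
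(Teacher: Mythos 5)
Your proof is correct. The paper does not actually prove this lemma; it only recalls it as a known perturbation bound for determinants (it is the standard Ipsen--Rehman-type estimate), so there is no in-paper argument to compare against, and your route—reduce to the diagonal case by the SVD and unitary invariance of the operator norm, expand $\det(\Sigma+F)$ multilinearly in the columns, isolate $\det(\Sigma)$ as the $S=\emptyset$ term, and bound each remaining term by $\bigl(\prod_{j\in S^c}\sigma_j\bigr)\|F\|_2^{|S|}$ before regrouping by $|S|=i$ to recognize $s_{m-i}$—is the standard proof and every step is sound. One small wording slip: in the cofactor step it is the \emph{columns} indexed by $S^c$ (each equal to $\sigma_j e_j$) that have a single nonzero entry, not the rows in $S^c$, which in general also contain entries of $F$ in the columns indexed by $S$; expanding along those columns, or permuting rows and columns by the same permutation to exhibit the block upper-triangular form with blocks $\mathrm{diag}(\sigma_j)_{j\in S^c}$ and $F_{SS}$, gives exactly the identity $|D_S|=\bigl(\prod_{j\in S^c}\sigma_j\bigr)\,|\det(F_{SS})|$ that you use, so the conclusion stands as written.
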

\begin{lemma}\label{lemma det diag formula}
For any $a,b,\alpha_1,\ldots,\alpha_r\in \mathbb{R}$ and $r\in \mathbb{N}$, we have 
\begin{equation}
\det \begin{pmatrix}
a I_r & \textup{diag}\{\alpha_1, \cdots, \alpha_r\} \\
\textup{diag}\{\alpha_1, \cdots, \alpha_r\} & bI_r 
\end{pmatrix}
=\prod_{i=1}^r(ab-\alpha_i^2)\,.
\end{equation}
\end{lemma}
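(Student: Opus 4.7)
The plan is to reduce the $2r \times 2r$ determinant to a product of $r$ explicit $2 \times 2$ determinants by means of a simultaneous row-and-column permutation. Write $M$ for the block matrix in the statement and index its rows and columns by $\{1, 2, \ldots, 2r\}$, where $1, \ldots, r$ correspond to the top-left block and $r+1, \ldots, 2r$ to the bottom-right block. I would introduce the interleaving permutation $\sigma$ defined by $\sigma(2i-1) = i$ and $\sigma(2i) = r + i$ for $1 \le i \le r$, with associated permutation matrix $P$. Because $P^\top M P$ amounts to applying $\sigma$ simultaneously to rows and columns, and the sign of $\sigma$ appears once from rows and once from columns, we have $\det(P^\top M P) = \det(M)$.

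Next I would verify that $P^\top M P$ is block-diagonal with $r$ blocks of size $2 \times 2$. The $i$-th diagonal block collects the entries $M_{i,i}$, $M_{i, r+i}$, $M_{r+i, i}$, $M_{r+i, r+i}$, which by inspection of the block structure of $M$ equal $a$, $\alpha_i$, $\alpha_i$, $b$ respectively. All remaining off-diagonal entries vanish because the original blocks are diagonal, so entries of the form $M_{i, r+j}$ with $i \ne j$ are zero. The $i$-th block is therefore $\begin{pmatrix} a & \alpha_i \\ \alpha_i & b \end{pmatrix}$ with determinant $ab - \alpha_i^2$, and the determinant of the block-diagonal matrix is $\prod_{i=1}^{r}(ab - \alpha_i^2)$, which gives the claim.

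As an alternative, one may invoke the standard identity $\det\begin{pmatrix} A & B \\ C & D\end{pmatrix} = \det(AD - BC)$ for $2 \times 2$ block matrices in which $C$ and $D$ commute; here all four blocks are diagonal so they commute pairwise, and the identity yields $\det(ab\,I_r - \operatorname{diag}\{\alpha_i^2\}) = \prod_{i=1}^r(ab - \alpha_i^2)$. I prefer the permutation argument because it is self-contained and avoids any invertibility hypothesis on $D$. There is no genuine obstacle to this lemma; the only subtlety is purely bookkeeping, namely verifying that the permutation indeed interleaves the blocks correctly and that the sign contributions from the row and column permutations cancel, both of which are immediate.
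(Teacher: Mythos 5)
Your proof is correct. The simultaneous row--column interleaving permutation does turn the matrix into a block-diagonal matrix with $i$-th block $\begin{pmatrix} a & \alpha_i \\ \alpha_i & b\end{pmatrix}$, the two sign factors from permuting rows and columns cancel, and the product formula follows; the paper itself gives no argument beyond the remark that the lemma follows ``by a direct expansion,'' so your write-up is simply a careful, self-contained implementation of that expansion. Your alternative via $\det\begin{pmatrix} A & B\\ C& D\end{pmatrix}=\det(AD-BC)$ for commuting blocks is also valid here since all four blocks are diagonal, and your stated preference for the permutation argument (no invertibility or commutation hypotheses needed beyond what is evident) is reasonable.
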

{This lemma can be proved by a direct expansion.} Further, we summarize the Weyl's inequality for the singular values of rectangular matrix. 
\begin{lemma}\cite[p.46]{tao2012topics}\label{lem_weyl} For any $p \times n$ matrices $A$ and $B,$ denote $\sigma_i(A)$ as the $i$-th largest singular value of $A$. Then we have 
\begin{equation*}
\sigma_{i+j-1}(A+B) \leq \sigma_i(A)+\sigma_j(B), \ 1 \leq i,j, i+j-1 \leq p \wedge n. 
\end{equation*}
\end{lemma}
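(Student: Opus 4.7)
The plan is to prove Weyl's inequality for singular values via the Courant--Fischer min--max characterization applied simultaneously to $A$, $B$, and $A+B$, combined with a dimension-counting argument on an intersection of subspaces in $\mathbb{R}^n$.

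First, I would recall (or justify by reduction to the Hermitian dilation $\begin{pmatrix} 0 & M \\ M^\top & 0 \end{pmatrix}$, whose nonzero eigenvalues are $\pm\sigma_i(M)$) that for any $p \times n$ matrix $M$ and any $1 \le k \le p\wedge n$, the $k$-th largest singular value admits the min--max formula
\begin{equation*}
\sigma_k(M) \;=\; \min_{\substack{V \subset \mathbb{R}^n \\ \dim V = n-k+1}} \ \max_{\substack{x\in V \\ \|x\|=1}}\, \|Mx\|,
\end{equation*}
with the minimum attained when $V$ is the orthogonal complement of the span of the top $k-1$ right singular vectors of $M$. Stating this carefully in the rectangular case is the only mildly delicate point; everything else is elementary.

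Second, I would pick two optimal subspaces: let $V_A \subset \mathbb{R}^n$ be an $(n-i+1)$-dimensional subspace realizing the minimum in the formula for $\sigma_i(A)$, so that $\|Ax\| \le \sigma_i(A)$ for every unit vector $x\in V_A$, and let $V_B$ be an $(n-j+1)$-dimensional subspace realizing the minimum in the formula for $\sigma_j(B)$, so that $\|Bx\|\le \sigma_j(B)$ for every unit $x \in V_B$. By the standard dimension formula,
\begin{equation*}
\dim(V_A \cap V_B) \;\ge\; (n-i+1) + (n-j+1) - n \;=\; n-(i+j-1)+1,
\end{equation*}
which is meaningful under the hypothesis $i+j-1 \le p\wedge n$.

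Third, apply the min--max characterization to $A+B$ at level $i+j-1$. Since $V_A \cap V_B$ has codimension at most $i+j-2$, it is a feasible test subspace, so
\begin{equation*}
\sigma_{i+j-1}(A+B) \;\le\; \max_{\substack{x\in V_A\cap V_B \\ \|x\|=1}} \|(A+B)x\| \;\le\; \max_{\substack{x\in V_A\cap V_B \\ \|x\|=1}} \bigl(\|Ax\| + \|Bx\|\bigr) \;\le\; \sigma_i(A) + \sigma_j(B),
\end{equation*}
where the second step uses the triangle inequality and the last step uses that $V_A\cap V_B \subset V_A$ and $\subset V_B$ together with the defining properties of $V_A$ and $V_B$. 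This closes the argument. The main obstacle, if any, is purely notational: making sure the rectangular min--max formula is set up cleanly so that the dimension count is valid for all $i,j$ with $i+j-1 \le p\wedge n$; everything else is a one-line triangle inequality.
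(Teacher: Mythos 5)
Your proof is correct. The paper does not give a proof of this lemma at all---it simply cites \cite[p.46]{tao2012topics}---and your argument (Courant--Fischer min--max for singular values, an optimal test subspace for each of $A$ and $B$, the dimension count $\dim(V_A\cap V_B)\ge n-(i+j-1)+1\ge 1$ under $i+j-1\le p\wedge n$, and the triangle inequality on the intersection) is precisely the standard proof of the cited result, with the one delicate point (validity of the rectangular min--max formula and admissibility of a subspace of dimension at least $n-(i+j-1)+1$) handled correctly.
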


\begin{lemma}\cite[Lemma F.5]{benaych2016lectures}\label{lem_Stiel_perturb} Let $z = E + i\eta \in \mathbb{C}^+$. For  $n \times n$ Hermitian matrices $A$ and $\wt A$, denote $s(z)$ and $\wt s(z)$ to be the Stieltjes transforms of their ESD's. Then we have 
\begin{equation}
    |s(z) - \wt s(z)| \leq \frac{\texttt{rank}(A - \wt A)}{n}  \left( \frac{2}{\eta} \wedge \frac{\|A-\wt A\|}{\eta^2}\right)\,.
\end{equation}
\end{lemma}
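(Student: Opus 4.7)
The plan is to express $s(z) - \wt s(z)$ as a normalized trace difference of resolvents, exploit the resolvent identity to read off the rank of that difference, and then combine two elementary operator-norm bounds. Throughout, set $\Delta := A - \wt A$, $r := \tr(\Delta)$ written loosely for $\texttt{rank}(\Delta)$, $G(z) := (A-zI)^{-1}$, and $\wt G(z) := (\wt A-zI)^{-1}$.

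First I would write $s(z) - \wt s(z) = \frac{1}{n}\tr\bigl(G(z) - \wt G(z)\bigr)$, which is just the definition of the Stieltjes transform of the ESD in terms of the resolvent (cf.\ \eqref{eq_green}). The resolvent identity then gives
\[
G - \wt G \;=\; \wt G\bigl(\wt A - A\bigr)G \;=\; -\wt G\,\Delta\, G.
\]
Since $G$ and $\wt G$ are invertible, multiplying by them preserves rank, so $\texttt{rank}(G - \wt G) = \texttt{rank}(\Delta) = r$.

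The key elementary fact I would invoke is that for any square matrix $M$, $|\tr(M)| \leq \texttt{rank}(M)\cdot \|M\|$. Indeed, $\tr(M)$ equals the sum of eigenvalues, at most $\texttt{rank}(M)$ of which are nonzero, and each has modulus bounded by the spectral radius, which is bounded by $\|M\|$. Applied to $M = G - \wt G$, this yields
\[
\bigl|s(z) - \wt s(z)\bigr| \;\leq\; \frac{r}{n}\,\bigl\|G - \wt G\bigr\|.
\]

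It remains to bound $\|G - \wt G\|$ in two complementary ways and take the minimum. On one hand, since the eigenvalues of $G$ (resp.\ $\wt G$) are of the form $(\lambda_i - z)^{-1}$ with $|\lambda_i - z| \geq \eta$, we have $\|G\|\vee \|\wt G\| \leq 1/\eta$, so the triangle inequality gives $\|G - \wt G\| \leq 2/\eta$. On the other hand, the resolvent identity gives $\|G - \wt G\| \leq \|\wt G\|\cdot\|\Delta\|\cdot\|G\| \leq \|\Delta\|/\eta^2$. Substituting the minimum of these two bounds produces the inequality in the lemma. I do not anticipate a genuine obstacle; the proof is essentially a two-line observation once one notices that the low-rank structure of $\Delta$ transfers to $G - \wt G$ and that the trace of a low-rank matrix is controlled by its rank times its operator norm.
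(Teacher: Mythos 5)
Your proof is correct: the resolvent identity $G-\wt G=-\wt G(A-\wt A)G$ transfers the low rank of $A-\wt A$ to $G-\wt G$, the inequality $|\tr M|\le \texttt{rank}(M)\,\|M\|$ (which does hold for the non-Hermitian matrix $M=G-\wt G$, e.g.\ via its singular values or because the number of nonzero eigenvalues counted with algebraic multiplicity is at most the rank) controls the normalized trace, and the two bounds $\|G\|\vee\|\wt G\|\le 1/\eta$ and $\|G-\wt G\|\le\|\wt G\|\,\|A-\wt A\|\,\|G\|$ give exactly the minimum in the statement. The paper does not prove this lemma but simply cites \cite[Lemma F.5]{benaych2016lectures}, and your argument is the standard one behind that reference; the only blemish is the notational slip ``$r:=\tr(\Delta)$'' where you clearly mean $r:=\texttt{rank}(\Delta)$.
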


\begin{lemma}\label{hanson} \cite[Proposition 6.2]{benaych2011fluctuations}
{Take a centered probability measure $\nu$ on $\mathbb{R}$ with variance one, and assume it satisfies the log-Sobolev inequality.} Then there exists a constant $c>0$ so that for any matrix $A:=(a_{jk})_{1\le j,k\le n}$ with complex entries, any $\delta >0$, and any $g=(g_1,\ldots,g_n)^\top $ with i.i.d. entries with law $\nu$, we have
$$
\mathbb P\left( |\langle g, A g\rangle-\mathbb E\langle g, A g\rangle|>\delta\right)
\leq 4 e^{-c\left(\frac{\delta}{C}\wedge\frac{\delta^2}{C^2}\right)}\,,
$$
where $C=\sqrt{\tr (A{A^*})}$. 
\end{lemma}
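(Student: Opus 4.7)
The plan is to derive this Hanson--Wright type concentration inequality from the log-Sobolev hypothesis via Herbst's entropy method, combined with a decomposition of the quadratic form into a diagonal (sum of independent sub-exponentials) part and an off-diagonal part to which decoupling applies.

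First, I would use tensorization: since $\nu$ satisfies a log-Sobolev inequality with some constant $c_{\textup{LS}}$, so does the product measure $\nu^{\otimes n}$ on $\mathbb{R}^n$, with the same constant. By Herbst's argument this yields sub-Gaussian concentration for Lipschitz functions, namely, for any $L$-Lipschitz $F:\mathbb{R}^n\to\mathbb{R}$ (with respect to the Euclidean norm),
\[
\mathbb{P}\!\left(|F(g)-\mathbb{E} F(g)|>t\right)\;\le\;2\exp\!\left(-\tfrac{c\,t^2}{L^2}\right).
\]
As a first consequence, applied to the coordinate functions $g\mapsto g_j$, each entry $g_j$ is sub-Gaussian with variance proxy bounded by a universal constant.

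Next, split $A=D+N$ where $D$ is the diagonal part and $N=A-D$. Using $\mathbb{E}g_j^2 = 1$, we have
\[
\langle g,Ag\rangle-\mathbb{E}\langle g,Ag\rangle \;=\; \sum_{j=1}^n A_{jj}(g_j^2-1)\;+\;\langle g,Ng\rangle.
\]
For the diagonal piece, $g_j^2-1$ is centered sub-exponential by the previous step, and the independence of the $g_j$'s gives a standard Bernstein bound of the form $\exp\bigl(-c'(\tfrac{\delta^2}{\sum_j A_{jj}^2}\wedge \tfrac{\delta}{\max_j|A_{jj}|})\bigr)$; since $\sum_j A_{jj}^2\le C^2$ and $\max_j|A_{jj}|\le C$, this is dominated by the bound claimed in the lemma.

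For the off-diagonal piece, the main obstacle is that $g\mapsto \langle g,Ng\rangle$ has gradient $(N+N^\top)g$, which is \emph{not} bounded in $g$, so Herbst does not apply directly. I would bypass this by decoupling: let $g'$ be an independent copy of $g$; the de la Pe\~na--Montgomery-Smith inequality bounds $\mathbb{P}(|\langle g,Ng\rangle|>\delta)$ by a constant times $\mathbb{P}(|\langle g,Ng'\rangle|>c''\delta)$. Conditional on $g'$, the map $g\mapsto \langle g,Ng'\rangle$ is linear with Lipschitz constant $\|Ng'\|_2$, so the Lipschitz concentration above yields, conditionally,
\[
\mathbb{P}\!\left(|\langle g,Ng'\rangle|>\delta\,\big|\,g'\right)\le 2\exp\!\left(-\tfrac{c\,\delta^2}{\|Ng'\|_2^2}\right).
\]
A second application of Lipschitz concentration, this time to the $\|N\|_{\textup{op}}$-Lipschitz function $g'\mapsto \|Ng'\|_2$, shows $\|Ng'\|_2\lesssim \|N\|_{\textup{HS}}+\|N\|_{\textup{op}}\sqrt{t}$ outside an event of probability $\le 2e^{-ct}$. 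Integrating the conditional estimate against the marginal tail of $\|Ng'\|_2$ in the standard way produces the crossover bound $\exp(-c(\tfrac{\delta}{\|N\|_{\textup{HS}}}\wedge\tfrac{\delta^2}{\|N\|_{\textup{HS}}^2}))$, which, using $\|N\|_{\textup{HS}}\le C$, dominates the right-hand side of the lemma.

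The delicate step is the decoupling--conditioning argument for the off-diagonal quadratic form, since without it the unboundedness of $|\nabla F|$ on the full space prevents a direct Herbst estimate and would force a truncation in $\|g\|$ that introduces dimension-dependent factors. Everything else -- tensorization, Herbst, and the Bernstein bound on the diagonal -- is classical once the log-Sobolev inequality is in hand, so combining these pieces and absorbing constants into the exponent yields the stated tail $4\exp\!\bigl(-c(\tfrac{\delta}{C}\wedge\tfrac{\delta^2}{C^2})\bigr)$.
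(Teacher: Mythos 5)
Your argument is essentially correct, but note that the paper itself does not prove this lemma: it is quoted verbatim from Proposition 6.2 of \cite{benaych2011fluctuations}, so there is no in-paper proof to compare against. What you have written is a self-contained derivation along the Rudelson--Vershynin lines for Hanson--Wright: tensorized log-Sobolev plus Herbst gives sub-Gaussian concentration for Lipschitz functions, the diagonal part is handled by Bernstein for the sub-exponential variables $g_j^2-1$, and the off-diagonal part by de la Pe\~na--Montgomery-Smith decoupling followed by two applications of Lipschitz concentration (conditionally in $g$, then to $g'\mapsto \|Ng'\|_2$). This in fact yields the stronger mixed tail with $\|A\|_{\mathrm{op}}$ in the linear regime, which implies the stated bound since $\|A\|_{\mathrm{op}}\le\|N\|_{\mathrm{HS}}\le C=\sqrt{\tr(AA^*)}$; the version in the lemma, with $C$ in both regimes, is deliberately the weaker one. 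By contrast, the argument in the cited source avoids decoupling altogether: it exploits Lipschitz concentration directly on norms of the form $\|B^{1/2}g\|_2$ after splitting $A$ into Hermitian/anti-Hermitian and positive/negative semidefinite pieces (which is where the outer constant $4$ naturally arises), obtaining the $\delta\wedge\delta^2$ crossover from squaring a sub-Gaussian deviation of the norm around its mean. Your route buys the sharper operator-norm dependence at the cost of invoking the decoupling inequality; the PSD-splitting route is shorter and suffices for the Hilbert--Schmidt version actually needed here. Two small points to tidy up: since $A$ has complex entries you should run the argument separately on real and imaginary parts (this only changes constants, and is consistent with the prefactor $4$), and the multiplicative constant produced by decoupling is larger than $4$, so you should note explicitly that any bound $Ke^{-x}$ with universal $K$ can be converted to $4e^{-c'x}$ by shrinking the exponent constant; also the sub-Gaussian variance proxy of $g_j$ is controlled by the log-Sobolev constant of $\nu$ rather than being universal, which is harmless because $c$ in the lemma is allowed to depend on $\nu$.
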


\subsection{Some complex analysis tools}
We first provide the following lemma, which is from Lemma 2.6 of \cite{yang2019edge}.
\begin{lemma}\label{lambdar_sqrt}
Assume Assumption \ref{assum_main} holds. Then there exist constants $c_1, c_2>0$ such that for $x\downarrow 0$, where $x$ is real, so that
\begin{align}\label{sqroot3}
\rho_{1c}(\lambda_+ - x) = c_{1} x^{1/2} + \OO(x)\ \mbox{ and }\ 
\rho_{2c}(\lambda_+ - x) = c_{2} x^{1/2} + \OO(x)\,, 
\end{align}
and for $z\in \mathbb{C}^+$ {and $\textup{Im} \,z\geq 0$}, we have
\begin{align}\label{sqroot4}
\quad m_{1c}(z) &\,= m_{1c}(\lambda_+) + \pi c_{1}(z-\lambda_+)^{1/2} + \OO(|z-\lambda_+|)\,, \nonumber \\
\quad m_{2c}(z) &\,= m_{2c}(\lambda_+) + \pi c_{2}(z-\lambda_+)^{1/2} + \OO(|z-\lambda_+|)\,,
\end{align}
where $(z-\lambda_+)^{1/2}$ is taken with the positive imaginary part as the branch cut (the same convention holds in the following). 
These estimates also hold for $\wp_{1c}$,  $\wp_{2c}$, $\mathsf M_{1c}$ and $\mathsf M_{2c}$ with different constants.
Moreover, there exists a constant $c_{\mathcal{T}}>0$ such that for $z\to \lambda_+$ {and $\textup{Im} \,z\geq 0$}, we have
\begin{equation}\label{sqroot5}
{    \mathcal{T}(\lambda_+) - \mathcal{T}(z) = \pi c_\mathcal{T}(z-\lambda_+)^{1/2} + \OO(|z-\lambda_+|)\,.}
\end{equation}
\end{lemma}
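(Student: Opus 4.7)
The plan is to derive all of \eqref{sqroot3}--\eqref{sqroot5} from the single self-consistent equation $f(z,\mathsf{M}_{2c}(z))=0$ together with the edge characterization in Lemma \ref{lemma compact support of mu result}. The latter says exactly that at the rightmost edge both
\[
f(\lambda_+,\mathsf{M}_{2c}(\lambda_+))=0 \quad\text{and}\quad \partial_m f(\lambda_+,\mathsf{M}_{2c}(\lambda_+))=0,
\]
which is the standard setup producing a square-root singularity: the implicit function theorem fails to linear order, so $m-\mathsf{M}_{2c}(\lambda_+)$ is forced to be of size $(z-\lambda_+)^{1/2}$.

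Concretely, I would Taylor-expand $f$ around $(\lambda_+,\mathsf{M}_{2c}(\lambda_+))$. Writing $\delta z:=z-\lambda_+$ and $\delta m:=m-\mathsf{M}_{2c}(\lambda_+)$, one gets
\[
f(z,m)=A\,\delta z+B\,(\delta m)^2+O\!\left(|\delta z|^{2}+|\delta z\,\delta m|+|\delta m|^{3}\right),
\]
with $A=\partial_z f$ and $B=\tfrac12\partial_m^2 f$ evaluated at $(\lambda_+,\mathsf{M}_{2c}(\lambda_+))$. Setting $m=\mathsf{M}_{2c}(z)$ and solving yields
\[
\mathsf{M}_{2c}(z)-\mathsf{M}_{2c}(\lambda_+)=\sqrt{-A/B}\,(z-\lambda_+)^{1/2}+O(|z-\lambda_+|),
\]
where the branch is fixed by the requirement $\operatorname{Im}\mathsf{M}_{2c}(z)>0$ for $z\in\mathbb{C}^+$. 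This delivers the expansion for $\mathsf{M}_{2c}$; the same argument with the coupled equation for $\mathsf{M}_{1c}$ (or more simply by feeding the $\mathsf{M}_{2c}$ expansion into the definition of $\mathsf{M}_{1c}$ in \eqref{eq_m1cm2c}) gives the corresponding expansion for $\mathsf{M}_{1c}$. Substituting these into \eqref{eq_m1cm2c} produces \eqref{sqroot4} for $m_{1c}$ and $m_{2c}$. The density expansion \eqref{sqroot3} then follows by Stieltjes inversion: $\rho_{1c}(E)=\pi^{-1}\lim_{\eta\downarrow 0}\operatorname{Im}\,m_{1c}(E+i\eta)$, and with the chosen branch we have $\operatorname{Im}(E+i0^{+}-\lambda_+)^{1/2}=(\lambda_+-E)^{1/2}$ for $E<\lambda_+$; the constants $c_1,c_2$ are then read off, and the same works for $\wp_{1c},\wp_{2c}$. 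Finally \eqref{sqroot5} comes from $\mathcal{T}(z)=zm_{1c}(z)m_{2c}(z)$: expanding this product with the known edge values and \eqref{sqroot4} gives a leading square-root term with a nonzero coefficient $c_{\mathcal{T}}$, provided $m_{1c}(\lambda_+)m_{2c}(\lambda_+)\neq 0$, which holds since both are negative and bounded away from zero.

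The main obstacle is verifying the three non-degeneracy facts $A\neq 0$, $B\neq 0$, and $-A/B>0$, and doing so while controlling the integrals defining $f$ uniformly as $z\to\lambda_+$. This is exactly where Assumption \ref{assum_main}(iii) enters: condition \eqref{ass3_eq1} ensures $1+x\mathsf{M}_{1c}(\lambda_+)$ stays bounded away from $0$ on $\operatorname{supp}\pi_B$ and likewise $1+x\mathsf{M}_{2c}(\lambda_+)$ on $\operatorname{supp}\pi_A$, so the integrands in $f$, $\partial_z f$, $\partial_m f$, $\partial_m^2 f$ remain bounded and allow differentiation under the integral sign; condition \eqref{ass3_eq2} prevents the measures from concentrating at $0$, guaranteeing the surviving integrals are strictly nonzero. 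A careful sign computation (using $\mathsf{M}_{2c}(\lambda_+)\in(-(\sigma_1^a)^{-1},0)$ from Lemma \ref{lemma compact support of mu result}) then shows $\partial_z f>0$ and $\partial_m^2 f<0$ at the edge, producing a real positive coefficient in front of $(z-\lambda_+)^{1/2}$ on the $E<\lambda_+$ side, so the density has the correct positive square-root profile. Once these sign and non-vanishing facts are in hand, the $O(|z-\lambda_+|)$ remainders follow by standard Taylor-remainder bounds applied to the analytic function $f$ in a fixed complex neighborhood of $(\lambda_+,\mathsf{M}_{2c}(\lambda_+))$.
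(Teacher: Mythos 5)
Your proposal is correct in substance, but it takes a different route from the paper: the paper's proof of this lemma is essentially a citation --- \eqref{sqroot3} and \eqref{sqroot4} are quoted directly from Lemma 2.6 of the reference on separable covariance edge behavior (\cite{yang2019edge}), and \eqref{sqroot5} is then obtained exactly as in your last step, by inserting \eqref{sqroot4} into $\mathcal{T}(z)=zm_{1c}(z)m_{2c}(z)$. What you do instead is re-derive the square-root edge expansion from scratch: Taylor-expand $f$ at $(\lambda_+,\mathsf M_{2c}(\lambda_+))$, use the edge characterization $f=\partial_m f=0$ from Lemma \ref{lemma compact support of mu result}, check the non-degeneracies $\partial_z f>0$, $\partial_m^2 f<0$ (your signs do check out: $\partial_z f$ is an integral of nonnegative terms not all zero, and both contributions to $\partial_m^2 f$ are negative once one notes $-\lambda_+(1+x\mathsf M_{1c}(\lambda_+))<0$ and $1+t\mathsf M_{2c}(\lambda_+)\ge\tau$ under \eqref{ass3_eq1}), propagate the expansion through \eqref{eq_m1cm2c} to $m_{1c},m_{2c}$, and recover the densities by Stieltjes inversion. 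This buys self-containedness at the cost of redoing the (somewhat lengthy) uniform-remainder and branch bookkeeping that the cited lemma already supplies, including the statements for $\wp_{1c},\wp_{2c}$ via $\mathsf M_{1c},\mathsf M_{2c}$; the paper's route is a two-line reduction to known results. One small imprecision on your side: for \eqref{sqroot5} the relevant non-vanishing condition is not $m_{1c}(\lambda_+)m_{2c}(\lambda_+)\neq 0$ but $c_1 m_{2c}(\lambda_+)+c_2 m_{1c}(\lambda_+)\neq 0$; this does hold, and for the reason you give (both $m_{1c}(\lambda_+)$ and $m_{2c}(\lambda_+)$ are strictly negative, so no cancellation occurs), so the gap is only in the phrasing, not the argument.
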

\begin{proof}
Equations \eqref{sqroot3} and \eqref{sqroot4} are directly from Lemma 2.6 of \cite{yang2019edge}. By \eqref{sqroot4} and the definition of $\mathcal{T}$ in \eqref{eq_defnt}, we obtain \eqref{sqroot5}.
\end{proof}

For some properly chosen constants $\varsigma_1>0$ and $\varsigma_2>1$, we denote a domain of spectral parameters $z$ as
\begin{equation}\label{eq_S}
S(\varsigma_1,\varsigma_2):= \{z = E+i\eta: \lambda_+ - \varsigma_1 \leq E \leq \varsigma_2\lambda_+,\, 0{\leq}\eta \leq 1\}\,.    
\end{equation} Moreover, for any $z = E + i\eta$, denote
\begin{equation}
    \kappa_z := |E-\lambda_+|\,.
\end{equation}
The following lemma is similar to Lemma 3.4 in \cite{yang2019edge} and Lemma S.3.5 in \cite{DY2019}.
\begin{lemma}\label{s35_DY2019}
Suppose Assumption \ref{assum_main} holds. Then we have
\begin{enumerate}
\item[(i)] for $z\in S(\varsigma_1, \varsigma_2)$, 
\begin{align}
    &|\mathcal{T}(z)| \asymp |m_{1c}(z)|\asymp |m_{2c}(z)| \asymp 1\,,\label{eq_estimm}\\
    &\textup{Im }\mathcal{T}(z)\asymp \textup{Im }m_{1c}(z) \asymp \textup{Im }m_{2c}(z) \asymp 
    \begin{cases}
   \displaystyle\frac{\eta}{\sqrt{\kappa_z+\eta}}  & \textup{if } E\geq \lambda_+\\
   \sqrt{\kappa_z+\eta} & \textup{if } E\leq \lambda_+\,,
  \end{cases}\nonumber
\end{align}
and
\begin{align}
|\textup{Re }\mathcal{T}(z) - \mathcal{T}(\lambda_+) | &\,\asymp |\textup{Re }m_{1c}(z)-m_{1c}(\lambda_+)| \asymp |\textup{Re }m_{2c}(z)-m_{2c}(\lambda_+)|\nonumber\\
      &\,\asymp\begin{cases}
       \sqrt{\kappa_z+\eta} & \textup{if } E\geq \lambda_+ \\
       \displaystyle\frac{\eta}{\sqrt{\kappa_z+\eta}} + \kappa_z & \textup{if } E\leq \lambda_+\,.
    \end{cases}\label{eq_realestimate}
\end{align}
The estimates \eqref{eq_estimm}-\eqref{eq_realestimate} hold for $\mathsf M_{1c}$ and $\mathsf M_{2c}$ {with different constants}.

\item [(ii)] there exists constant $\tau'>0$ such that for any $z \in S(\varsigma_1, \varsigma_2)$,
\begin{equation}\label{Piii}
\min_{j=1,\ldots,n} \vert 1 + \mathsf M_{1c}(z)\sigma_j^b \vert \ge \tau'\ \ \mbox{and}\ \ \min_{i=1,\ldots,p} \vert 1 + \mathsf M_{2c}(z)\sigma_i^a  \vert \ge \tau'\,,
\end{equation}
where $\sigma_i^a$ and $\sigma_j^b$ are eigenvalues of $A$ and $B$ stated in Assumption \ref{assum_main}.
\end{enumerate}
\end{lemma}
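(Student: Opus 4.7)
The plan is to reduce both parts to the square-root edge expansion from Lemma \ref{lambdar_sqrt} in a neighborhood of $\lambda_+$, and to use a compactness-continuity argument on the off-edge portion of $S(\varsigma_1,\varsigma_2)$. I would split $S(\varsigma_1,\varsigma_2)$ by a small constant $c_0>0$ into the edge region $\{\kappa_z + \eta \le c_0\}$ and its complement.

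For part (i), in the edge region I insert the expansion $m_{1c}(z) = m_{1c}(\lambda_+) + \pi c_1 (z-\lambda_+)^{1/2} + \OO(|z-\lambda_+|)$ from Lemma \ref{lambdar_sqrt}, and the analogous expansions for $m_{2c}, \mathcal{T}, \mathsf M_{1c}, \mathsf M_{2c}$. The key bookkeeping is a polar-coordinate computation of $(z-\lambda_+)^{1/2}$ with $z - \lambda_+ = \pm \kappa_z + i\eta$: it produces $\textup{Im}\,(z-\lambda_+)^{1/2}$ and $\textup{Re}\,(z-\lambda_+)^{1/2}$ of the forms $\eta/\sqrt{\kappa_z+\eta}$ and $\sqrt{\kappa_z+\eta}$, with their roles swapped according to the sign of $E-\lambda_+$. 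The additional $\kappa_z$ in \eqref{eq_realestimate} for $E \le \lambda_+$ comes from the real-linear remainder $\OO(|z-\lambda_+|)$, which dominates the square-root contribution to the real part precisely in the regime $\eta \ll \kappa_z^{3/2}$. In the off-edge region, all five quantities are bounded analytic functions on the compact closure that are bounded away from $0$ --- using $\textup{Im}\,m_{1c}(z) = \int \eta\,\rho_{1c}(x)/[(x-E)^2+\eta^2]\,dx \asymp \eta$ when $E$ is inside the support and explicit positivity on $E > \lambda_+$ --- so both sides of the claimed $\asymp$ relations are of order one and the estimates hold trivially after adjusting the implicit constants.

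For part (ii), Lemma \ref{lemma compact support of mu result} combined with \eqref{ass3_eq1} yields $1 + \mathsf M_{1c}(\lambda_+)\sigma_j^b \ge \tau$ for every $j$: since $\mathsf M_{1c}(\lambda_+) < 0$ and $\sigma_j^b \le \sigma_1^b$, the worst case is $j=1$, handled directly by \eqref{ass3_eq1}. To promote this bound to all $z \in S(\varsigma_1,\varsigma_2)$, I would combine (a) the expansion $\mathsf M_{1c}(z) = \mathsf M_{1c}(\lambda_+) + \OO(\sqrt{\kappa_z + \eta})$ from Lemma \ref{lambdar_sqrt} in the edge region, choosing $\varsigma_1, c_0$ small enough that the perturbation stays below $\tau/2$; (b) monotonicity of the real-valued map $E \mapsto \mathsf M_{1c}(E)$ on $E \ge \lambda_+$ with range $[\mathsf M_{1c}(\lambda_+), 0)$, since $\mathsf M_{1c}$ is the Stieltjes transform of a positive measure supported in $[0,\lambda_+]$; and (c) the strict positivity $\textup{Im}\,\mathsf M_{1c}(z) > 0$ for $\eta > 0$ together with a compactness argument in the off-edge region. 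The symmetric argument handles $\mathsf M_{2c}$, and $\tau'$ can be taken as $\tau/2$ (or any suitably small multiple of $\tau$).

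The main obstacle is part (ii) for $z$ with $E < \lambda_+$ in the edge region: on this side of the edge one cannot use real-axis monotonicity to constrain $\mathsf M_{1c}$, and must instead rely on the uniform square-root continuity from Lemma \ref{lambdar_sqrt}, with $\varsigma_1$ chosen small enough so that the perturbation away from $\mathsf M_{1c}(\lambda_+)$ never exceeds $\tau/2$. This mirrors the argument in \cite[Lemma 3.4]{yang2019edge}, specialized to our separable-covariance setting.
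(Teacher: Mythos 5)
Your route is genuinely different from the paper's: the paper disposes of this lemma in one line, citing \cite[Lemma S.3.5]{DY2019} (equivalently \cite[Lemma 3.4]{yang2019edge}) for the estimates on $m_{1c}$, $m_{2c}$, $\mathsf M_{1c}$, $\mathsf M_{2c}$ and for part (ii), and then obtaining the $\mathcal{T}$-estimates from the square-root behavior \eqref{sqroot5}. You instead attempt a self-contained derivation from Lemma \ref{lambdar_sqrt} plus compactness, and this is where the gaps appear. The central problem is that \eqref{sqroot4}--\eqref{sqroot5} only control the remainder in modulus, $\OO(|z-\lambda_+|)$, and this is not enough to produce the \emph{two-sided} ($\asymp$) statements. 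Concretely, for $E\ge\lambda_+$ with $\eta\ll(\kappa_z+\eta)^{3/2}$ the square-root term contributes $\textup{Im}$-part $\asymp\eta/\sqrt{\kappa_z+\eta}$, which is \emph{smaller} than the generic bound $\kappa_z+\eta$ on the remainder, so the lower bound on $\textup{Im}\,m_{1c}$ does not follow; one needs either the Stieltjes-integral representation $\textup{Im}\,m_{1c}(E+i\eta)=\int\eta\,(x-E)^{-2}\wedge\eta^{-1}$-type computation with the square-root density, or the fact that the expansion has real coefficients (Schwarz reflection, since $\mu_{1c}$ has no mass in $(\lambda_+,\infty)$) so that the remainder's imaginary part is $\OO(\eta)$. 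Similarly, your claim that the extra $\kappa_z$ in \eqref{eq_realestimate} for $E\le\lambda_+$ "comes from the real-linear remainder" presupposes that the next-order coefficient is real and bounded away from zero; Lemma \ref{lambdar_sqrt} gives neither, and without it the lower bound $\gtrsim\kappa_z$ in the regime $\kappa_z^{3/2}\gg\eta$ is unproved. (Your parenthetical "$\asymp\eta$ when $E$ is inside the support" is also backwards: inside the support the integral is $\asymp\rho_{1c}(E)$, while the $\asymp\eta$ behavior occurs at fixed distance to the right of $\lambda_+$.)

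The second gap is uniformity in $n$. The quantities $m_{1c}$, $m_{2c}$, $\mathsf M_{1c}$, $\mathsf M_{2c}$ and $\lambda_+$ are defined through $\pi_A$, $\pi_B$ and $\beta_n$, hence change with $n$; an argument of the form "bounded analytic function on a compact set, bounded away from zero" gives constants that may degenerate as $n\to\infty$, which is exactly what the lemma must rule out. This is why \cite{yang2019edge,DY2019} prove these bounds quantitatively from the self-consistent equation under the regularity conditions \eqref{ass3_eq1}--\eqref{ass3_eq2}, rather than by compactness. Your part (ii) skeleton is the right one (worst case $j=1$ at $z=\lambda_+$ via \eqref{ass3_eq1}, negativity and monotonicity of $\mathsf M_{1c}$ on $(\lambda_+,\infty)$, square-root continuity near the edge; plus, for small $\sigma_j^b$, the trivial observation that $1+\mathsf M_{1c}(z)\sigma_j^b$ is close to $1$), but its off-edge, complex-$z$ step suffers from the same $n$-uniformity issue. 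The economical fix is the paper's: quote \cite[Lemma S.3.5]{DY2019} for (i)--(ii) as stated for $m_{1c},m_{2c},\mathsf M_{1c},\mathsf M_{2c}$, and only argue the transfer to $\mathcal{T}=zm_{1c}m_{2c}$ via \eqref{sqroot5} and the identity $\textup{Im}\,\mathcal{T}=\textup{Im}(zm_{1c}m_{2c})$ with the already-established order-one bounds.
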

\begin{proof}
The estimates for $m_{1c}$ and $m_{2c}$ have been proved in Lemma S.3.5 in \cite{DY2019}. Estimates for $\mathcal{T}$ in \eqref{eq_estimm} and \eqref{eq_realestimate} can be directly derived from \eqref{sqroot5}, and we omit details. 
\end{proof}

The following lemma describes the behavior of $\theta$ and the derivatives of $m_{1c}$, $m_{2c}$ and $\mathcal{T}$ on the real line.

\begin{lemma}\label{s36_DY2019}
Suppose Assumption \ref{assum_main} holds. For {$d\downarrow\alpha$, where $d$ is real,} we have
\begin{equation} \label{eq_s36_1}
     |\theta(d) - \lambda_+| \asymp (d - \alpha)^2.
\end{equation}
For {$x \downarrow \lambda_+$, where $x$ is real,} we have
\begin{equation}\label{eq_s36_2}
    m'_{1c}(x) \asymp \kappa_x^{-1/2}, \quad  m'_{2c}(x) \asymp \kappa_x^{-1/2} \ \mbox{ and }\  \mathcal{T}'(x) \asymp \kappa_x^{-1/2}\,.
\end{equation}
\end{lemma}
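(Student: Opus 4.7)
For the first claim, the plan is to invert the square-root expansion of $\mathcal{T}$ at $\lambda_+$. Since $\theta(d) = \mathcal{T}^{-1}(d^{-2})$ and $\alpha = 1/\sqrt{\mathcal{T}(\lambda_+)}$, substituting $x = \theta(d)$ into \eqref{sqroot5} gives
\[
\mathcal{T}(\lambda_+) - \mathcal{T}(\theta(d)) \;=\; \alpha^{-2} - d^{-2} \;=\; \frac{d^2-\alpha^2}{\alpha^2 d^2} \;=\; \pi c_\mathcal{T}\bigl(\theta(d)-\lambda_+\bigr)^{1/2} + \OO\bigl(|\theta(d)-\lambda_+|\bigr).
\]
As $d\downarrow \alpha$, the left side is $\asymp (d-\alpha)$ since $d+\alpha\to 2\alpha$ and $\alpha^2 d^2\to \alpha^4$ are bounded away from $0$. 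Dropping the lower-order $O(|\theta(d)-\lambda_+|)$ term on the right and solving yields $|\theta(d)-\lambda_+|\asymp (d-\alpha)^2$.

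For the estimates of $m'_{1c}(x)$ and $m'_{2c}(x)$, I would work directly from the integral representation \eqref{eq_defm1m2}, which for real $x>\lambda_+$ gives
\[
m'_{1c}(x) \;=\; \int_0^{\lambda_+}\frac{\rho_{1c}(t)}{(t-x)^2}\,dt\,.
\]
Let $\epsilon = \kappa_x = x-\lambda_+$ and substitute $u = \lambda_+ - t$, so $(t-x)^2 = (u+\epsilon)^2$. Split the integral into a near-edge piece $u\in[0,\delta]$ for some small fixed $\delta>0$, and a bounded remainder on $u\in[\delta,\lambda_+]$ whose contribution is $\OO(1)$. On the near-edge piece, Lemma \ref{lambdar_sqrt} (equation \eqref{sqroot3}) gives $\rho_{1c}(\lambda_+ - u) = c_1 u^{1/2} + \OO(u)$ with $c_1>0$. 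Rescaling $u = \epsilon v$ converts the leading integral to
\[
c_1\,\epsilon^{-1/2}\int_0^{\delta/\epsilon}\frac{v^{1/2}}{(v+1)^2}\,dv \;\longrightarrow\; c_1\,\epsilon^{-1/2}\cdot\frac{\pi}{2}
\]
as $\epsilon\downarrow 0$, while the error contribution from $\OO(u)$ is easily shown to be $\OO(\log(1/\epsilon)) = \oo(\epsilon^{-1/2})$. This yields $m'_{1c}(x)\asymp \kappa_x^{-1/2}$, and the same argument with $c_2$ in place of $c_1$ gives $m'_{2c}(x)\asymp \kappa_x^{-1/2}$.

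Finally, for $\mathcal{T}'(x)$ I would apply the product rule to $\mathcal{T}(x) = x\, m_{1c}(x)\, m_{2c}(x)$:
\[
\mathcal{T}'(x) \;=\; m_{1c}(x)m_{2c}(x) + x\bigl[m'_{1c}(x)m_{2c}(x) + m_{1c}(x)m'_{2c}(x)\bigr].
\]
By Lemma \ref{s35_DY2019}(i), $|m_{1c}(x)|,|m_{2c}(x)|\asymp 1$ near $\lambda_+$, and $x\asymp 1$. Since $m_{1c}(x),m_{2c}(x)<0$ for $x>\lambda_+$ while $m'_{1c}(x),m'_{2c}(x)>0$, the two terms in the bracket have the same sign and thus cannot cancel, so their sum is $\asymp \kappa_x^{-1/2}$. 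The standalone term $m_{1c}m_{2c}$ is $\OO(1)$ and hence negligible compared to $\kappa_x^{-1/2}$, giving $|\mathcal{T}'(x)|\asymp \kappa_x^{-1/2}$. The main subtlety in the whole argument is verifying the rescaling step produces the right $\epsilon^{-1/2}$ order and confirming that no sign cancellation occurs in the product-rule expansion; both follow from the monotonicity of $m_{1c},m_{2c}$ on $(\lambda_+,\infty)$ noted after \eqref{eq_defm1m2}.
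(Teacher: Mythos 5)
Your argument is correct. For \eqref{eq_s36_1} you do exactly what the paper does: plug $x=\theta(d)$ into the square-root expansion \eqref{sqroot5}, note $\alpha^{-2}-d^{-2}=(d-\alpha)\frac{d+\alpha}{d^2\alpha^2}\asymp d-\alpha$, and invert. For \eqref{eq_s36_2} you take a genuinely different (more self-contained) route on the first two estimates: the paper simply cites the corresponding result in \cite{DY2019} (their Lemma S.3.6) for $m_{1c}'(x)\asymp\kappa_x^{-1/2}$ and $m_{2c}'(x)\asymp\kappa_x^{-1/2}$, whereas you derive them directly from the representation \eqref{eq_defm1m2} together with the edge behavior $\rho_{1c}(\lambda_+-u)=c_1u^{1/2}+\OO(u)$ of Lemma \ref{lambdar_sqrt}, via the rescaling $u=\kappa_x v$ and the convergent integral $\int_0^\infty v^{1/2}(v+1)^{-2}dv$; since the integrand is nonnegative and $c_1,c_2>0$, this gives matching upper and lower bounds, and the off-edge and error contributions are $\OO(1)$ and $\OO(\log(1/\kappa_x))$ respectively, so the argument is sound (the only cosmetic omission is the possible atom of $\mu_{2c}$ at $0$ when $\beta_n<1$, whose contribution to $m_{2c}'$ is $\OO(1)$ and hence harmless). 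The final step for $\mathcal T'(x)$ is the same product-rule computation as in the paper, and you are in fact more careful than the paper's one-line ``$\asymp C_1+C_2\kappa_x^{-1/2}$'': you observe that $m_{1c},m_{2c}<0$ and $m_{1c}',m_{2c}'>0$ on $(\lambda_+,\infty)$, so the two $\kappa_x^{-1/2}$-order terms have the same sign and cannot cancel. The trade-off is that your route is longer but removes the external dependence on \cite{DY2019} for the derivative estimates; the paper's route is shorter at the cost of importing that lemma.
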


\begin{proof}
{ By the definition of $\theta(\cdot)$ in \eqref{eq_functions} and $\alpha$ in \eqref{eq_defnalpha}, $d \downarrow \alpha$ implies $\theta(d) \downarrow \lambda_+$. 
Let $\varsigma>0$ be a small constant.
When $0 <\theta(d)-\lambda_+< \varsigma$, 
by
(\ref{sqroot5}) we obtain that 
$$\alpha^{-2} -d^{-2}=\mathcal{T}(\lambda_+) -  \mathcal{T}(\theta(d)) = C\sqrt{\theta(d)-\lambda_+} + \OO(|\theta(d)-\lambda_+|). $$
The above estimation implies $\alpha^{-2}- d^{-2} \asymp \sqrt{\theta(d)-\lambda_+}$ when $\varsigma$ is sufficiently small. Since $\alpha^{-2}-d^{-2}=(d-\alpha)\frac{d+\alpha}{d^2\alpha^2} \asymp d-\alpha$, we obtain \eqref{eq_s36_1}.}   
The first two statements of \eqref{eq_s36_2} have been proved in Lemma S.3.6 in \cite{DY2019}, and the third statement then can be derived by  
\begin{align}
\mathcal{T}'(x) &\, = m_{1c}(x)m_{2c}(x) + x m'_{1c}(x)m_{2c}(x) + x m_{1c}(x)m'_{2c}(x) \nonumber\\
&\, \asymp C_1 + C_2\kappa_x^{-1/2} \asymp \kappa_x^{-1/2}, 
\end{align}
where we used the first statement of \eqref{eq_s36_2} and \eqref{eq_estimm} in the first $\asymp$ and the fact that $\kappa_x^{-1/2}$ blows up when $x\to \lambda_+$ in the last approximation. 
\end{proof}

Note that by \eqref{eq_s36_1} and \eqref{eq_s36_2}, for {$d\downarrow \alpha$, where $d$ is real,} we have
\begin{equation}\label{eq_s36_3}
    m_{1c}'(\theta(d)) \asymp \frac{1}{d-\alpha}, \quad 
     m_{2c}'(\theta(d)) \asymp \frac{1}{d-\alpha}, \quad  
    \mathcal{T}'(\theta(d)) \asymp \frac{1}{d-\alpha}
\end{equation}
and
\begin{equation}\label{eq_gderivative}
    \theta'(d)\asymp d-\alpha\,.
\end{equation}
{In the next lemma, more quantifications are provided in a more general spectral domain.}

\begin{lemma}\cite[Lemma S.3.7]{DY2019}\label{s37_DY2019} Suppose Assumption \ref{assum_main} holds. Then for any constant $\varsigma>\lambda_+$, there exist constants $\tau_1, \tau_2>0$ such that the following statements hold.
\begin{enumerate}
\item [(i)] $\mathcal{T}$ is a holomorphic homeomorphism on the spectral domain
\begin{equation}
    \mathbf{D}_1(\tau_1,\varsigma):= \{z = E+i\eta: \lambda_+<E<\varsigma,\, -\tau_1<\eta<\tau_1\}.
\end{equation}
As a consequence, the inverse of $\mathcal{T}$ exists and we denote it by $\theta$.

\item [(ii)] $\theta$ is holomorphic homeomorphism on $\mathbf{D}_2(\tau_2, \varsigma)$, where
\begin{equation}\label{eq_d}
    \mathbf{D}_2(\tau_2,\varsigma):= \{\zeta = E+i\eta: \alpha<E<1/\sqrt{\mathcal{T}(\varsigma)},\, -\tau_2<\eta<\tau_2\},
\end{equation}
such that $\mathbf{D}_2(\tau_2, \varsigma) \subset 1/\sqrt{\mathcal{T}(\mathbf{D}_1(\tau_1,\varsigma))}$.

\item [(iii)] For $z \in \mathbf{D}_1(\tau_1,\varsigma)$, we have \begin{equation}\label{eq_iii1}
    |\mathcal{T}(z)-\mathcal{T}(\lambda_+)| \asymp|z-\lambda_+|^{1/2}
\ \mbox{ and }\ 
    |\mathcal{T}'(z)|\asymp|z-\lambda_+|^{-1/2}.
\end{equation}

\item [(iv)] For $\zeta \in \mathbf{D}_2(\tau_2,\varsigma)$, we have \begin{equation}\label{eq_gcomplex}
    |\theta(\zeta)-\lambda_+| \asymp|\zeta-\alpha|^2
\ \mbox{ and }\ 
    |\theta'(\zeta)|\asymp|\zeta-\alpha|.
\end{equation}

\item [(v)] For $z_1,z_2 \in \mathbf{D}_1(\tau_1,\varsigma)$ and $w_1, w_2 \in \mathbf{D}_2(\tau_2,\varsigma)$, we have \begin{equation}\label{Tz1-Tz2 bound}
    |\mathcal{T}(z_1)-\mathcal{T}(z_2)| \asymp \frac{|z_1-z_2|}{\max_{i=1,2}|z_i-\lambda_+|^{1/2}},
\end{equation}
and 
\begin{equation}\label{eq_thetadiff}
    |\theta(w_1)-\theta(w_2)|\asymp |w_1-w_2|\cdot \max_{i=1,2}|w_i-\alpha|.
\end{equation}
\end{enumerate}
\end{lemma}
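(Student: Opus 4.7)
The proof rests on the square-root expansion of $\mathcal{T}$ at $\lambda_+$ provided by Lemma \ref{lambdar_sqrt}, combined with the complex inverse function theorem. Since $\mathcal{T}(z) = z\,m_{1c}(z)m_{2c}(z)$ extends holomorphically to a complex neighborhood of $(\lambda_+, \varsigma]$ (as $\rho_{1c}$ vanishes on this interval), I would first differentiate the expansion \eqref{sqroot5} termwise, which immediately yields both asymptotics in (iii); the termwise differentiation is justified by holomorphy and the uniformity of the expansion on small complex neighborhoods.

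For (i), the sole delicate point is the behavior of $\mathcal{T}$ near $\lambda_+$, where $\mathcal{T}'$ blows up. My plan is to introduce the auxiliary biholomorphism $w = (z - \lambda_+)^{1/2}$ (principal branch), which maps $\mathbf{D}_1(\tau_1, \varsigma)$ injectively onto a region contained in the right half-plane $\{\operatorname{Re} w > 0\}$. The expansion \eqref{sqroot5} then reads $\mathcal{T}(\lambda_+) - \mathcal{T}(z) = \pi c_{\mathcal{T}}\, w\, g(w)$ with $g(w) = 1 + O(w)$ holomorphic and $g(0) = 1$, so $\mathcal{T}$ viewed in the $w$-coordinate is locally biholomorphic at $w = 0$; away from $\lambda_+$ the inverse function theorem applies directly via (iii), and strict monotonicity on the real interval propagates injectivity to the entire thin strip, with $\tau_1$ chosen small enough to glue these local pieces. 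For (ii), I would compose with $\zeta = 1/\sqrt{\mathcal{T}(z)}$, which is holomorphic on $\mathbf{D}_1$ and maps $(\lambda_+, \varsigma]$ bijectively onto $(\alpha, 1/\sqrt{\mathcal{T}(\varsigma)}]$, then shrink $\tau_2$ so that $\mathbf{D}_2(\tau_2, \varsigma)$ lies inside the image.

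Statement (iv) transfers (iii) through the inverse: from $\mathcal{T}(\theta(\zeta)) = \zeta^{-2}$ and $\mathcal{T}(\lambda_+) = \alpha^{-2}$ one has $\mathcal{T}(\lambda_+) - \mathcal{T}(\theta(\zeta)) = \alpha^{-2} - \zeta^{-2} \asymp |\zeta - \alpha|$ for $\zeta$ near $\alpha$, so $|\theta(\zeta) - \lambda_+|^{1/2} \asymp |\zeta - \alpha|$ by the first bound in (iii). Differentiating the identity yields $\theta'(\zeta) = -2\zeta^{-3}/\mathcal{T}'(\theta(\zeta))$, and combining with $|\mathcal{T}'(\theta(\zeta))| \asymp |\theta(\zeta) - \lambda_+|^{-1/2} \asymp |\zeta - \alpha|^{-1}$ gives the derivative bound.

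The main technical obstacle is (v), where the single-point derivative bound must be promoted to a two-point Lipschitz estimate with the correct $\max$-type denominator. The cleanest derivation proceeds directly from the expansion rather than via a path integral: subtracting $\mathcal{T}(\lambda_+) - \mathcal{T}(z_i) = \pi c_{\mathcal{T}}(z_i - \lambda_+)^{1/2} + O(|z_i - \lambda_+|)$ for $i = 1, 2$ and factoring $a^{1/2} - b^{1/2} = (a-b)/(a^{1/2} + b^{1/2})$ gives $|\mathcal{T}(z_1) - \mathcal{T}(z_2)| \asymp |z_1 - z_2|/(|z_1 - \lambda_+|^{1/2} + |z_2 - \lambda_+|^{1/2}) \asymp |z_1 - z_2|/\max_i |z_i - \lambda_+|^{1/2}$. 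An analogous computation for $\theta$, using $\theta(w) - \lambda_+ = c(w - \alpha)^2 + O(|w - \alpha|^3)$ from (iv) and factoring $(w_1 - \alpha)^2 - (w_2 - \alpha)^2 = (w_1 - w_2)(w_1 + w_2 - 2\alpha)$ with $|w_1 + w_2 - 2\alpha| \asymp \max_i |w_i - \alpha|$ on $\mathbf{D}_2$, yields \eqref{eq_thetadiff}. The delicate bookkeeping step I expect to be the hardest is controlling the $O$-error terms uniformly when one endpoint sits close to $\lambda_+$ (respectively $\alpha$) and the other lies deeper in the domain: the leading-order cancellation is no longer tight there, and one must split by whether $|z_1 - \lambda_+|$ and $|z_2 - \lambda_+|$ are comparable, falling back on the trivial bound $|\mathcal{T}(z_1) - \mathcal{T}(z_2)| \lesssim |z_1 - z_2|$ in the far-from-edge regime.
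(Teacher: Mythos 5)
Your plan --- the square-root expansion of Lemma \ref{lambdar_sqrt} together with the holomorphic extension of $m_{1c},m_{2c}$ across $(\lambda_+,\infty)$ (and Schwarz reflection to $\eta<0$), Cauchy estimates to differentiate the expansion, the uniformizing variable $w=(z-\lambda_+)^{1/2}$ for invertibility at the edge, composition with $\zeta=1/\sqrt{\mathcal T(z)}$ for (ii), and transferring \eqref{eq_iii1} through $\mathcal T(\theta(\zeta))=\zeta^{-2}$ for (iv) --- is the right skeleton, and in fact more detailed than the paper, which offers no argument of its own and simply defers to the analogous Lemma S.3.7 of DY2019 ``with the same approach''. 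One point you should make explicit: \eqref{sqroot5} is only a local expansion at $\lambda_+$, so to get \eqref{eq_iii1} on all of $\mathbf D_1(\tau_1,\varsigma)$ you also need the far-from-edge input $|\mathcal T'(z)|\asymp 1$ and $|\mathcal T(z)-\mathcal T(\lambda_+)|\asymp 1$ for $|z-\lambda_+|\gtrsim 1$; this follows from $\mathcal T'(x)<0$ on $(\lambda_+,\varsigma]$ (e.g.\ via $m_{1c}(x)+x m_{1c}'(x)=\int t\rho_{1c}(t)(t-x)^{-2}dt>0$), compactness, and continuity after shrinking $\tau_1$, an argument you invoke elsewhere but not here.

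The genuine gap is in (v). Subtracting the two pointwise expansions cannot give the lower bound when $|z_1-z_2|\ll\max_i|z_i-\lambda_+|$: the remainders enter as $\OO(|z_1-\lambda_+|)+\OO(|z_2-\lambda_+|)$, not as $\OO(|z_1-z_2|\cdot(\cdots))$, so they swamp the factored main term $|z_1-z_2|\big/\big(|z_1-\lambda_+|^{1/2}+|z_2-\lambda_+|^{1/2}\big)$; and your fallback in the far-from-edge regime, the ``trivial bound'' $|\mathcal T(z_1)-\mathcal T(z_2)|\lesssim|z_1-z_2|$, is only one side of a two-sided claim. The repair is the same Cauchy-estimate device you already use for (iii): the remainder $h(z):=\mathcal T(\lambda_+)-\mathcal T(z)-\pi c_{\mathcal T}(z-\lambda_+)^{1/2}$ is holomorphic near the edge with $h'=\OO(1)$, hence $|h(z_1)-h(z_2)|\lesssim|z_1-z_2|$, which is negligible against $|z_1-z_2|/\max_i|z_i-\lambda_+|^{1/2}$ once $\max_i|z_i-\lambda_+|$ is small (here $|\sqrt{z_1-\lambda_+}+\sqrt{z_2-\lambda_+}|\asymp\max_i|z_i-\lambda_+|^{1/2}$ because $\operatorname{Re}(z_i-\lambda_+)>0$ on $\mathbf D_1$); away from the edge, $\mathcal T'$ lies in a small neighborhood of a negative real number bounded away from $0$ on a thin strip, so integrating $\mathcal T'$ along the segment gives the two-sided $\asymp|z_1-z_2|$, and the mixed regime follows from the triangle inequality and \eqref{eq_iii1}. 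The same caveat applies to your expansion-subtraction proof of \eqref{eq_thetadiff}, and to the injectivity of $w\mapsto w\,g(w)$ near $w=0$ in (i), where $g-1=\OO(w)$ alone does not control $|w_1(g(w_1)-1)-w_2(g(w_2)-1)|$ by $\tfrac12|w_1-w_2|$ without a derivative bound. For \eqref{eq_thetadiff} there is a shortcut that avoids all of this: apply \eqref{Tz1-Tz2 bound} at $z_i=\theta(w_i)$ and use $\mathcal T(\theta(w_i))=w_i^{-2}$, so $|w_1-w_2|\asymp|w_1^{-2}-w_2^{-2}|\asymp|\theta(w_1)-\theta(w_2)|\big/\max_i|\theta(w_i)-\lambda_+|^{1/2}$, and conclude with \eqref{eq_gcomplex}.
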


Note that in \eqref{eq_functions}, $\theta$ is only defined on the real line, but in Lemma \ref{s37_DY2019}(i), we extend its definition to the complex plane. The relationship between $\theta$ and $\mathcal T$ is summarized in Figure \ref{IMG_9221}.

\begin{proof}
Similar results for $m_{1c}$ and $m_{2c}$ and their proofs can be found in  \cite[Lemma S.3.7]{DY2019}. We derive this Lemma specifically using the definition of $\mathcal{T}(z)$ and $\theta(\zeta)$ with the same approach, and we omit the detail here.
\end{proof}

\begin{figure}[t]
\includegraphics[trim=0 50 90 50, width=\textwidth]{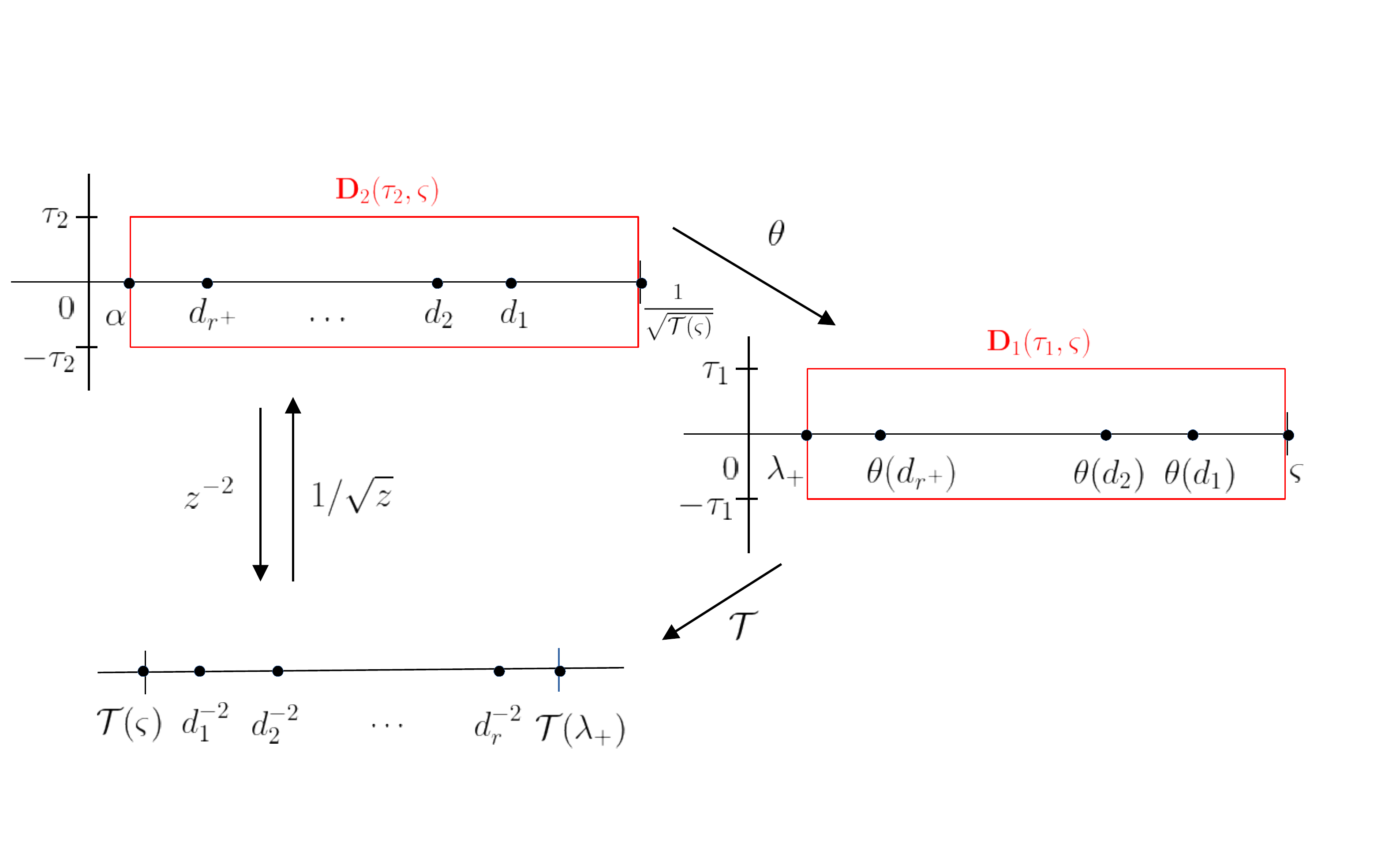}
\caption{\small An illustration of the relationship of $\mathcal{T}$, $\theta$ and several quantities used in the proof. Note that $\theta(\alpha)=\lambda_+$. \label{IMG_9221}}
\end{figure}

\subsection{Some existing random matrix tools and results}

We record some results from the random matrix theorem that we will repeatedly apply in the following proofs. 
Denote \cite[(3.18)]{yang2019edge}
\begin{equation}\label{eq_defPsi}
\Psi(z):= \sqrt{\frac{\text{Im } \mathsf M_{2c}(z)}{n\eta}} +\frac{1}{n\eta}.
\end{equation}
Note that for any $z=E+i\eta\in S(\varsigma_1,\varsigma_2)$, from (i) of Lemma \ref{s35_DY2019} and \eqref{Piii}, we have \cite[(3.19)]{yang2019edge}
\begin{equation}
\Psi(z) \gtrsim n^{-1/2}, \quad \Psi^2(z) \lesssim \frac{1}{n\eta}\,.
\end{equation} 
{Clearly, for any fixed $E$}, $\Psi^2(E+i\eta)+\frac{\phi_n}{n\eta}$ is monotonically decreasing with respect to $\eta$, so for $E\leq \lambda_+$, find the unique $\eta_l(E)$ such that 
\begin{equation}\label{definition eta_l(E)}
n^{1/2} \left[\Psi^2(E+i\eta_l(E))+ \frac{\phi_n}{n\eta_l(E)}\right] =1\,.
\end{equation}
{and set $\eta_l(E):=\eta_l(\lambda_+)$ for $E>\lambda_+$. The following approximation holds \cite[(S.54)]{DY2019}:} 
\begin{equation}\label{etalE}
 \eta_l(E)\asymp n^{-3/4} + n^{-1/2} \left(\sqrt{\kappa_E}+\phi_n\right) 
\end{equation}
{for $E\le \lambda_+$ and 
\[
\eta_l(E)=\OO(n^{-3/4} +  n^{-1/2}\phi_n)
\] 
for $E>\lambda_+$. }
Then we have the following rigidity result. Recall the definition of classical location $\gamma_j$ in \eqref{eq_classical} and the eigenvalues of $ZZ^\top $ are $\lambda_1 \geq \lambda_2 \geq \cdots \geq \lambda_{p \wedge n}$.  

\begin{lemma}\label{lem_rigidty}(Rigidity of eigenvalues {\cite[Lemma S.3.11]{DY2019}}).
Suppose Assumption \ref{assum_main} holds. Then, 
{for a fixed small constant $\varsigma>0$},  
for any $j$ such that $\lambda_+-\varsigma \leq \gamma_j \leq \lambda_+$, we have 
\begin{equation}\label{rigidity}
\vert \lambda_j - \gamma_j \vert \prec n^{-2/3}\left( j^{-1/3}+\mathbbm 1(j \le n^{1/4} \phi_n^{3/2})\right) + \eta_{l}(\gamma_j) + n^{2/3} j^{-2/3} \eta_l^2(\gamma_j)\,,
\end{equation}
where $\eta_{l}(\gamma_j)\asymp  n^{-3/4} + n^{-5/6}j^{1/3}+ n^{-1/2}\phi_n$.
Moreover, if {either (a) 
$\mathbb E x_{ij}^3 = 0$ for all $1\le i \le p$ and $1\le j \le n$,
or} (b) either $A$ or $B$ is diagonal, the bound \eqref{rigidity} is improved to 
\begin{equation}\label{rigidity2}
|\lambda_j-\gamma_j| \prec n^{-2/3}j^{-1/3}. 
\end{equation}
\end{lemma}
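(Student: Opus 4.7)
The plan is to reduce the statement to the averaged local law for the sample Stieltjes transform $m_1(z)$ and then convert it into eigenvalue rigidity by the standard Helffer--Sj\"ostrand contour argument. Concretely, the first step is to establish the averaged local law
\[
|m_1(z) - m_{1c}(z)| \prec \Psi^2(z) + \frac{\phi_n}{n\eta}
\]
uniformly on the domain $S(\varsigma_1,\varsigma_2) \cap \{\eta \ge \eta_l(E)\}$, where $\Psi$ and $\eta_l(E)$ are given in \eqref{eq_defPsi} and \eqref{definition eta_l(E)}. For the separable covariance model $Z=A^{1/2}XB^{1/2}$ this has already been proved in \cite{yang2019edge,xi2020convergence} via a self-consistent equation analysis for the entrywise resolvent together with fluctuation averaging; the bounded support condition of Definition \ref{bsc} produces the $\phi_n/(n\eta)$ term coming from the higher-moment contributions of the entries of $X$.

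The second step is to control the empirical eigenvalue counting function $N(E):=\#\{i:\lambda_i\ge E\}$ using the Stieltjes transform estimate. By integrating $m_1(z)-m_{1c}(z)$ along a rectangular contour at height $\eta_l(E)$ and applying the Helffer--Sj\"ostrand representation, one obtains a bound of the form
\[
\bigl|N(E) - n\textstyle\int_E^{\lambda_+}\rho_{1c}(x)\,dx\bigr|
\prec n\eta_l(E)\bigl[\Psi^2(E+i\eta_l(E)) + \tfrac{\phi_n}{n\eta_l(E)}\bigr]
= n^{1/2}\eta_l(E),
\]
where the identity uses the very definition of $\eta_l(E)$ in \eqref{definition eta_l(E)}. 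Inverting this counting-function estimate by combining it with the square-root decay of $\rho_{1c}$ near $\lambda_+$ from Lemma \ref{lambdar_sqrt} turns a bound of size $\mathsf{E}$ on $|N(E)-n\int_E^{\lambda_+}\rho_{1c}|$ into the eigenvalue bound $|\lambda_j-\gamma_j|\prec \mathsf{E}\cdot (\rho_{1c}(\gamma_j))^{-1}/n$ with $\rho_{1c}(\gamma_j)\asymp \sqrt{\kappa_{\gamma_j}}\asymp (j/n)^{1/3}$. Plugging in the expression \eqref{etalE} for $\eta_l$ and tracking the three regimes of $j$ (very close to the edge, far from the edge, and intermediate) produces exactly the three summands in \eqref{rigidity}, with the indicator $\mathbbm 1(j\le n^{1/4}\phi_n^{3/2})$ corresponding to the range where the $\phi_n/(n\eta)$ contribution dominates the Tracy--Widom scale $n^{-2/3}j^{-1/3}$.

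The third step handles the improved bound \eqref{rigidity2}. Under vanishing third moment, a direct cumulant expansion (or Gaussian interpolation) of $\mathbb E\, \mathrm{Tr}\,\mathcal G_1(z)$ kills the leading $\phi_n/(n\eta)$ term, yielding the sharp averaged local law $|m_1(z)-m_{1c}(z)|\prec (n\eta)^{-1}$. Under the diagonal assumption on $A$ or $B$, the same improvement follows because the isotropic structure of $A^{1/2}XB^{1/2}$ reduces to the ``anisotropic'' model already treated in \cite{Knowles2017,yang2019edge}, where the $\phi_n$-loss does not appear. With this improved local law, repeating the Helffer--Sj\"ostrand argument above at the optimal scale $\eta \asymp n^{-2/3-1/3}j^{1/3}\wedge\cdots$ produces the Tracy--Widom-type rigidity $|\lambda_j-\gamma_j|\prec n^{-2/3}j^{-1/3}$.

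The main obstacle is the first step: obtaining the averaged local law with the precise error $\Psi^2+\phi_n/(n\eta)$ near the right edge. Because our model does not fit directly into the covariance-matrix framework of \cite{DY2019} (which is exploited elsewhere in the paper for signal analysis), one must carefully redo the self-consistent equation analysis for the pair $(\mathsf M_{1c},\mathsf M_{2c})$, track the stability of the system near $\lambda_+$ using the square-root behaviour in Lemma \ref{lambdar_sqrt} and the boundedness from \eqref{Piii}, and apply fluctuation averaging to the resolvent entries. Once that analysis is in place, steps two and three are standard and follow \cite[Appendix]{yang2019edge} essentially verbatim.
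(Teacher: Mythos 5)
The paper does not actually prove this lemma: it is imported verbatim as \cite[Lemma S.3.11]{DY2019}, which is stated and proved there for exactly the pure-noise separable model $Z=A^{1/2}XB^{1/2}$ under the bounded support condition, including the dichotomy (a)/(b) for the improved bound \eqref{rigidity2}. Your worry that ``our model does not fit directly into the covariance-matrix framework of \cite{DY2019}'' is misplaced for this particular statement: that caveat concerns the spiked matrix $\wt S=S+Z$ elsewhere in the paper, while Lemma \ref{lem_rigidty} involves only the noise matrix $Z$, so the citation applies with no modification. Your overall strategy (averaged local law, then Helffer--Sj\"ostrand/counting-function argument, then inversion via the square-root edge behavior of Lemma \ref{lambdar_sqrt}) is in fact the same route taken in the cited works, so you are essentially re-deriving a quotable result with the hard analytic input again delegated to \cite{yang2019edge,DY2019}.

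There is, moreover, a genuine quantitative gap in your second step. From the claimed counting-function bound $|N(E)-n\int_E^{\lambda_+}\rho_{1c}|\prec n^{1/2}\eta_l(E)$ and the inversion $|\lambda_j-\gamma_j|\prec \mathsf{E}\,(n\rho_{1c}(\gamma_j))^{-1}$ with $\rho_{1c}(\gamma_j)\asymp (j/n)^{1/3}$, one gets, using \eqref{etalE}, a bound of order $n^{-11/12}j^{-1/3}+n^{-1}+n^{-2/3}j^{-1/3}\phi_n$. This is \emph{strictly smaller} than the right-hand side of \eqref{rigidity} in essentially every regime (for $j=1$ it is below the Tracy--Widom scale $n^{-2/3}$), so it cannot be correct, and in particular it does not ``produce exactly the three summands'': the terms $\eta_l(\gamma_j)$, $n^{2/3}j^{-2/3}\eta_l^2(\gamma_j)$ and the indicator $\mathbbm 1(j\le n^{1/4}\phi_n^{3/2})$ do not emerge from this bookkeeping. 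In the actual argument these terms come from (i) the uncertainty in the location of the spectral edge itself, i.e.\ the bound $|\lambda_1-\lambda_+|\prec \phi_n^2+n^{-2/3}$ together with a separate ``no outliers outside the bulk'' statement that your sketch omits, and (ii) the $\phi_n/(n\eta)$ and $\Psi^2$ portions of the local-law error tracked through the contour integral down to the $j$-dependent scale $\eta_l(\gamma_j)$, which is why the error cannot be summarized by a single evaluation at $\eta=\eta_l(E)$ via \eqref{definition eta_l(E)}. Similarly, in your third step the improvement under (b) is not a reduction to \cite{Knowles2017}; both improvements (a) and (b) are specific refinements of the bounded-support analysis in \cite{DY2019}, and asserting them without that analysis leaves the claim unsupported. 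As written, the proposal would need either these corrections or, more economically, a direct appeal to \cite[Lemma S.3.11]{DY2019} as the paper does.
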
  

Also, we have the delocalization result for eigenvectors.
\begin{lemma}(Isotropic delocalization of eigenvectors \cite[Lemma S.3.13]{DY2019}) \label{delocal_rigidity}
Suppose Assumption \ref{assum_main} hold. Then, {for a fixed small constant $\varsigma>0$},
for any deterministic unit vectors $\mathbf u \in \mathbb C^{p}$ and $\mathbf v  \in \mathbb C^{ n}$, we have 
\begin{equation}\label{delocal_claim}
 \left|\langle \mathbf u,\bxi_k\rangle \right|^2+\left|\langle \mathbf v , \bzeta_k\rangle \right|^2\prec  n^{-1} + \eta_l(\gamma_k) \left( \frac{k}{n}\right)^{1/3} +\eta_l(\gamma_k) \phi_n
\end{equation}
for all $k$ such that $\lambda_+ - \varsigma \le \gamma_k \le \lambda_+$, where $\eta_l(\gamma_k) {\asymp} n^{-3/4} + n^{-5/6}k^{1/3}+\phi_n n^{-1/2}$. 
{If either (a) 
$\mathbb E x_{ij}^3 = 0$ for all $1\le i \le p$ and $1\le j \le n$,
or (b) either $A$ or $B$ is diagonal, the bound is improved to}
\begin{equation}
\left|\langle \mathbf u,{\bm \xi}_k\rangle \right|^2+\left|\langle \mathbf v ,{\bm \zeta}_k\rangle \right|^2\prec n^{-1}\label{delocals}
\end{equation}
for all $k$ such that $\lambda_+ - \varsigma \le \gamma_k \le \lambda_+$.
\end{lemma}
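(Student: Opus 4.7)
The plan is to derive this isotropic delocalization from an isotropic local law for the resolvents $\mathcal{G}_1, \mathcal{G}_2$, combined with the standard spectral identity bounding a singular vector component by the imaginary part of the Green's function.

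First I would establish, for $z = E+i\eta$ in a neighborhood of $\lambda_+$ and $\eta \geq \eta_l(E)$, the isotropic local law
\begin{equation*}
\bigl|\langle \mathbf{u}, \mathcal{G}_1(z)\mathbf{u}\rangle - \Pi_1(z)_{\mathbf{u}\mathbf{u}}\bigr| \prec \Psi(z), \quad \bigl|\langle \mathbf{v}, \mathcal{G}_2(z)\mathbf{v}\rangle - \Pi_2(z)_{\mathbf{v}\mathbf{v}}\bigr| \prec \Psi(z),
\end{equation*}
where the deterministic equivalents are $\Pi_1(z) = -z^{-1}(I + \mathsf{M}_{2c}(z)A)^{-1}$ and $\Pi_2(z) = -z^{-1}(I + \mathsf{M}_{1c}(z)B)^{-1}$. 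This parallels \cite{yang2019edge, Knowles2017}: one controls resolvent fluctuations via a recursive cumulant expansion using Assumption \ref{assum_main}(i), with $\phi_n$ supplying the truncation of high moments, and stability of the coupled self-consistent equation for $(\mathsf{M}_{1c}, \mathsf{M}_{2c})$ follows from \eqref{Piii} together with the square-root behavior \eqref{sqroot4} near $\lambda_+$; this is what lets the analysis reach the optimal scale $\eta_l(E)$. Random quadratic form concentration from Lemma \ref{hanson} provides the isotropic upgrade from the averaged/entrywise local law.

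Next I would use the spectral identity $|\langle \mathbf{u}, \bxi_k\rangle|^2 \leq \eta \cdot \operatorname{Im}\langle \mathbf{u}, \mathcal{G}_1(\lambda_k + i\eta)\mathbf{u}\rangle$ and its $\mathcal{G}_2$ analogue, invoke rigidity (Lemma \ref{lem_rigidty}) to replace $\lambda_k$ by $\gamma_k$ up to an error absorbed by taking $\eta$ slightly above the rigidity scale, and apply the isotropic local law at $z = \gamma_k + i\eta$. Since $\operatorname{Im}\Pi_1(z)_{\mathbf{u}\mathbf{u}} \lesssim \operatorname{Im}\mathsf{M}_{2c}(z) \asymp \sqrt{\kappa_{\gamma_k} + \eta}$ by Lemma \ref{s35_DY2019}, and $\kappa_{\gamma_k} \asymp (k/n)^{2/3}$ from the square-root edge behavior in Lemma \ref{lambdar_sqrt}, choosing $\eta = \eta_l(\gamma_k)$ yields
\begin{equation*}
\eta \operatorname{Im}\Pi_1(\gamma_k+i\eta)_{\mathbf{u}\mathbf{u}} + \eta\Psi(\gamma_k+i\eta) \lesssim \eta_l(\gamma_k)(k/n)^{1/3} + \eta_l(\gamma_k)\phi_n + n^{-1},
\end{equation*}
matching \eqref{delocal_claim}. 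For the improved bound \eqref{delocals} under (a) or (b), the sharpened rigidity \eqref{rigidity2} combined with an improved local law (the third cumulant drops out in (a), while the separable structure simplifies when one factor is diagonal in (b)) allows $\eta$ to be pushed to nearly the optimal scale $n^{-1+\varepsilon}$, eliminating both the $\phi_n$ and $(k/n)^{1/3}$ contributions.

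The main obstacle is Step 1, the isotropic local law in this coupled setting: because $Z = A^{1/2}XB^{1/2}$ carries deterministic matrices on both sides, the self-consistent equations for $(\mathsf{M}_{1c}, \mathsf{M}_{2c})$ form a genuinely two-component system, and the stability analysis must be run simultaneously on both entries rather than componentwise as in the $B = I$ case of \cite{pillai2014universality}. This is exactly the difficulty resolved in \cite{yang2019edge} for the averaged local law; adapting it to the isotropic level requires controlling random quadratic forms built from both $A$- and $B$-dependent resolvent blocks with test vectors $\mathbf{u}, \mathbf{v}$, which is where Lemma \ref{hanson} plus a careful expansion of $(\mathcal{G}_1 + \Pi_1)$-type identities enter. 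Once the local law holds down to scale $\eta_l(E)$ uniformly in the test directions, the delocalization follows by the routine reduction sketched above.
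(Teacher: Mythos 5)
Your sketch is correct in outline and follows the same route as the source of this lemma: the paper does not prove it but imports it verbatim from \cite[Lemma S.3.13]{DY2019}, whose argument is exactly your combination of the spectral identity $|\langle \mathbf u,\bxi_k\rangle|^2\le \eta\,\operatorname{Im}\langle \mathbf u,\mathcal G_1(\lambda_k+\ri\eta)\mathbf u\rangle$ with the anisotropic local law at scale $\eta\asymp\eta_l(\gamma_k)$ (already recorded in the paper as Theorem \ref{LEM_SMALL}, with error $\phi_n+\Psi(z)$ rather than $\Psi(z)$ alone), rigidity (Lemma \ref{lem_rigidty}), and the square-root edge behavior of Lemmas \ref{lambdar_sqrt} and \ref{s35_DY2019}. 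The only imprecision is in the improved case (a)/(b): the gain to \eqref{delocals} comes from the local law holding on the larger domain $S_0$ (so $\eta$ can be taken down to $n^{-1+\omega}$, making the $\phi_n$ and $(k/n)^{1/3}$ contributions negligible), not primarily from the sharpened rigidity \eqref{rigidity2}, and there is no need to re-derive the isotropic local law since it is already available as a cited input.
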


Next, we discuss some properties of resolvents and local laws.
For any $z \in \mathbb{C}^+,$  denote
\begin{equation*}
H:=H(z)
:=
\begin{pmatrix}
0 & { z^{1/2}} Z \\
z^{1/2}Z^\top  & 0
\end{pmatrix}\in \mathbb{C}^{(p+n)\times (p+n)}
\end{equation*}
and
\begin{equation}
\widetilde{H}:=\widetilde{H}(z):=
\begin{pmatrix}
0 & z^{1/2} \widetilde{S} \\
z^{1/2} \widetilde{S}^\top  & 0
\end{pmatrix}\in \mathbb{C}^{(p+n)\times (p+n)},
\end{equation}
where $z^{1/2}$ is defined with the positive imaginary part as the branch cut.
Note that we have
\begin{equation}
\widetilde{H}(z)=
\begin{pmatrix}
0 & z^{1/2} Z \\
z^{1/2}Z^\top  & 0
\end{pmatrix}
+
\begin{pmatrix}
0 & z^{1/2}S\\
z^{1/2}S^\top  & 0
\end{pmatrix}
=H(z)+\Ub \Db \Ub^\top ,
\end{equation}
where 
\begin{equation}\label{definition D and U}
\Db=
\begin{pmatrix}
0 & {z^{1/2}} D \\
z^{1/2} D & 0
\end{pmatrix}\in {\mathbb{C}^{2r\times 2r}}
\ \mbox{ and }\ 
\Ub=\begin{pmatrix}
U & 0 \\
0 &  V
\end{pmatrix}\in \mathbb{R}^{(p+n)\times 2r}
\end{equation}
such that $D=\textup{diag} \{d_1, \cdots, d_r \}\in \mathbb{R}^{r\times r}$ is a diagonal matrix formed by singular values, and $U=(\ub_1, \cdots, \ub_r)\in \mathbb{R}^{p\times r}$ and $V=(\vb_1, \cdots, \vb_r)\in \mathbb{R}^{n\times r}$ are matrices formed by left and right singular vectors defined in \eqref{eq_model}. 
The eigenvalues of $\widetilde{H}$ coincide with the singular values of $z^{1/2} \widetilde{S}.$ Therefore, it suffices to study the matrix $\widetilde{H}$. Further, we denote the Green functions of $H$ and $\widetilde{H}$ respectively as 
\begin{equation*}
G:=G(z):=(H-zI_{p+n})^{-1}\ \mbox{ and }\  \widetilde{G}:=\widetilde{G}(z):=(\widetilde{H}-zI_{p+n})^{-1}\,,
\end{equation*}
where $z\in \mathbb{C}^+$.
By the Schur's complement, we have 
\begin{align} \label{green2}
G(z) &\,= \left( {\begin{array}{*{20}c}
   { \mathcal{G}_1(z)} &  z^{-1/2}\mathcal{G}_1(z)Z \\
   z^{-1/2}Z^{\T} \mathcal{G}_1(z) & \mathcal{G}_2(z) \\
\end{array}} \right)\nonumber \\
 \wtG(z) &\,= \left( {\begin{array}{*{20}c}
   { \wt{\mathcal{G}}_1(z)} &  z^{-1/2}\wt{\mathcal{G}}_1(z)Z \\
   z^{-1/2}Z^{\T} \wt{\mathcal{G}}_1(z) & \wt{\mathcal{G}}_2(z) \\
\end{array}} \right),
\end{align}
where $\mathcal{G}_1$ and $\mathcal{G}_2$ are defined in \eqref{eq_g1g2}. Further, by the spectral decomposition, we have 
\begin{align}\label{main_representation}
& {G}(z) = \sum_{k=1}^{p \wedge n} \frac{1}{{\lambda}_k-z} \left( {\begin{array}{*{20}c}
   { {\bxi}_k {\bxi}^\top _k} &  z^{-1/2} \sqrt{{\lambda}}_k {\bxi}_k {\bzeta}_k^\top  \\
   z^{-1/2} \sqrt{{\lambda}}_k{\bzeta}_k {\bxi}_k^\top  & {\bzeta}_{k} {\bzeta}_{k}^\top   \\
\end{array}} \right)\in \mathbb{C}^{(p+n)\times (p+n)} \nonumber\\
& \wt{G}(z) = \sum_{k=1}^{p \wedge n} \frac{1}{{\wt\lambda}_k-z} \left( {\begin{array}{*{20}c}
   { {\wt \bxi}_k {\wt \bxi}^\top _k} &  z^{-1/2} \sqrt{{\wt\lambda}}_k {\wt \bxi}_k {\wt \bzeta}_k^\top  \\
   z^{-1/2} \sqrt{{\wt\lambda}}_k{\wt \bzeta}_k {\wt \bxi}_k^\top  & {\wt \bzeta}_{k} {\wt \bzeta}_{k}^\top  \\ 
\end{array}} \right)\in \mathbb{C}^{(p+n)\times (p+n)}\,.
\end{align}
The following lemma characterizes the locations of the outlier eigenvalues of $\widetilde{H}$. 

\begin{lemma}\label{lem_mastereq}\cite[Lemma 6.1]{knowles2013isotropic} 
Assume $\mu \in \mathbb{R} - \text{Spec}(H)$. Then $\mu \in \text{Spec}(\widetilde{H})$ if and only if
\begin{equation} \label{lem_perbubationequation}
\det(\mathbf{U}^\top  G(\mu)\mathbf{U}+\mathbf{D}^{-1})=0\,.
\end{equation}
\end{lemma}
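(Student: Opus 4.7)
The plan is to reduce the spectral condition $\mu \in \mathrm{Spec}(\widetilde H)$ to a finite-dimensional determinant condition via a standard Woodbury/Sylvester-type identity for low-rank additive perturbations. This is the familiar linearization trick: once we observe that $\widetilde H - \mu I_{p+n} = (H - \mu I_{p+n}) + \mathbf{U}\mathbf{D}\mathbf{U}^\top$ is a rank-at-most-$2r$ additive perturbation of the invertible matrix $H - \mu I_{p+n}$, the $(p+n)$-dimensional determinant collapses to a $2r$-dimensional one, which is precisely the object in \eqref{lem_perbubationequation}.

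First, I would note that $\mu \in \mathrm{Spec}(\widetilde H)$ is equivalent to $\det(\widetilde H(\mu) - \mu I_{p+n}) = 0$, which by the definition of $\widetilde H$ becomes
\begin{equation*}
\det\bigl((H(\mu) - \mu I_{p+n}) + \mathbf{U}\mathbf{D}\mathbf{U}^\top\bigr) = 0 \,.
\end{equation*}
Since the hypothesis $\mu \notin \mathrm{Spec}(H)$ guarantees that $H(\mu) - \mu I_{p+n}$ is invertible with inverse $G(\mu)$, I can factor out $\det(H(\mu) - \mu I_{p+n}) \neq 0$ from the left to rewrite the condition as $\det(I_{p+n} + G(\mu)\mathbf{U}\mathbf{D}\mathbf{U}^\top) = 0$.

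Second, I would apply Sylvester's determinant identity $\det(I_m + XY) = \det(I_k + YX)$ with $X = G(\mu)\mathbf{U}\mathbf{D} \in \mathbb{C}^{(p+n)\times 2r}$ and $Y = \mathbf{U}^\top \in \mathbb{R}^{2r\times (p+n)}$ to collapse the $(p+n)\times(p+n)$ determinant to a $2r\times 2r$ one:
\begin{equation*}
\det\bigl(I_{2r} + \mathbf{D}\mathbf{U}^\top G(\mu)\mathbf{U}\bigr) = 0 \,.
\end{equation*}
Finally, from the block form of $\mathbf{D}$ in \eqref{definition D and U} together with $\mu > 0$ (we work near $\lambda_+ > 0$) and $d_i > 0$, the matrix $\mathbf{D}$ is invertible, so $\det(\mathbf{D}) \neq 0$ and I can factor $\mathbf{D}$ out to get the equivalent condition $\det(\mathbf{D}^{-1} + \mathbf{U}^\top G(\mu)\mathbf{U}) = 0$. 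Since each implication is reversible, we obtain the ``if and only if'' statement.

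There is essentially no obstacle: this is a purely algebraic fact and does not require any probabilistic input or the specific random-matrix structure of $Z$. The only subtlety is bookkeeping around the $z^{1/2}$ factors baked into the definitions of $H(z)$ and $\mathbf{D}$ (both are evaluated at $z=\mu$ throughout), and ensuring $\mathbf{D}$ is invertible — which reduces to $\mu \neq 0$, automatic in the regime $\mu$ near $\lambda_+$ that we will apply the lemma to later.
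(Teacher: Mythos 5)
Your argument is correct and is essentially the same as the standard proof of the cited result \cite[Lemma 6.1]{knowles2013isotropic}, which the paper invokes without reproving: factor out $\det(H(\mu)-\mu I)\neq 0$, apply Sylvester's identity to collapse to a $2r\times 2r$ determinant, and pull out $\det(\mathbf{D})\neq 0$ (automatic here since $d_i>0$ and $\mu=0$ is already excluded by $\mu\notin\mathrm{Spec}(H)$, as $H(0)=0$). No gaps.
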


Next lemma provides a link between the Green functions $G(z)$ and $\widetilde{G}(z).$ The proof is straightforward depending on the Woodbury formula and the basic identity $A-A(A+B)^{-1}A=B-B(A+B)^{-1}B$ when $A$, $B$ and $A+B$ are all invertible. 

\begin{lemma}\label{lem_gtitle}{\cite[Lemma 4.8]{ding2020high}}
For $z \in \mathbb{C}^+,$ we have
\begin{equation}\label{defn_greenrep1}
\widetilde{G}(z)=G(z)-G(z)\mathbf{U}(\mathbf{D}^{-1}+\mathbf{U}^\top  G(z)\mathbf{U})^{-1} \mathbf{U}^\top  G(z)\,.
\end{equation}
\end{lemma}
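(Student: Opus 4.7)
The plan is to derive \eqref{defn_greenrep1} as a direct Woodbury-style rank-$2r$ perturbation computation. The starting observation is that
\[
\wt H(z)-zI_{p+n}=\big(H(z)-zI_{p+n}\big)+\mathbf{U}\mathbf{D}\mathbf{U}^\top,
\]
so $\widetilde G(z)^{-1}$ is a low-rank perturbation of $G(z)^{-1}$, with the perturbation factored through the tall-thin matrix $\mathbf{U}\in\mathbb{R}^{(p+n)\times 2r}$ and the $2r\times 2r$ block $\mathbf{D}$ defined in \eqref{definition D and U}.

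Before inverting, I would check that the auxiliary $2r\times 2r$ matrix $\mathbf{D}^{-1}+\mathbf{U}^\top G(z)\mathbf{U}$ is indeed invertible for $z\in\mathbb{C}^+$. Since $d_i>0$ and $z^{1/2}\ne 0$, the block off-diagonal matrix $\mathbf{D}$ is invertible; moreover $G(z)$ has positive imaginary part on $\mathbb{C}^+$ (because $\mathrm{Im}\,G(z)=\eta\,G(z)G(z)^*$), so $\mathbf{U}^\top G(z)\mathbf{U}$ has positive-definite imaginary part on the range of $\mathbf{U}$, and hence the sum is invertible.

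With invertibility secured, I would use the elementary resolvent identity
\[
\wt G(z)-G(z)=-\wt G(z)\big(\wt H(z)-H(z)\big)G(z)=-\wt G(z)\mathbf{U}\mathbf{D}\mathbf{U}^\top G(z).
\]
Right-multiplying by $\mathbf{U}$ isolates $\wt G(z)\mathbf{U}$ through the equation
\[
\wt G(z)\mathbf{U}\big(I+\mathbf{D}\mathbf{U}^\top G(z)\mathbf{U}\big)=G(z)\mathbf{U},
\]
which, using $(I+\mathbf{D}M)^{-1}=(\mathbf{D}^{-1}+M)^{-1}\mathbf{D}^{-1}$ for any compatible invertible $M$, gives
\[
\wt G(z)\mathbf{U}=G(z)\mathbf{U}\big(\mathbf{D}^{-1}+\mathbf{U}^\top G(z)\mathbf{U}\big)^{-1}\mathbf{D}^{-1}.
\]
Substituting this back into the resolvent identity collapses the $\mathbf{D}^{-1}\mathbf{D}$ and yields exactly \eqref{defn_greenrep1}.

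Honestly, there is no real obstacle here — this is a standard Schur complement/Woodbury manipulation, and the only subtle point is the invertibility of $\mathbf{D}^{-1}+\mathbf{U}^\top G(z)\mathbf{U}$, addressed above. Alternatively, one could verify \eqref{defn_greenrep1} in one line by right-multiplying the proposed formula for $\wt G(z)$ by $\wt H(z)-zI=G(z)^{-1}+\mathbf{U}\mathbf{D}\mathbf{U}^\top$ and checking that the cross terms telescope using the basic identity $A-A(A+B)^{-1}A=B-B(A+B)^{-1}B$ mentioned in the statement, but the route above via the resolvent identity is the most transparent.
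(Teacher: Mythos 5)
Your overall route---the resolvent identity $\wt G(z)-G(z)=-\wt G(z)\,\Ub\Db\Ub^\top G(z)$, solving for $\wt G(z)\Ub$, and substituting back---is exactly the standard Woodbury manipulation that the paper itself invokes (its ``proof'' is essentially a citation of the Woodbury formula and the identity $A-A(A+B)^{-1}A=B-B(A+B)^{-1}B$), and the algebra in your derivation is correct. The one genuine gap is your justification of the invertibility of $\Db^{-1}+\Ub^\top G(z)\Ub$. You argue that $G(z)$ has positive imaginary part because $\operatorname{Im}G(z)=\eta\,G(z)G(z)^{*}$, but that identity holds for the resolvent of a \emph{Hermitian} matrix at spectral parameter $z=E+\ri\eta$. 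Here $G(z)=(H(z)-zI)^{-1}$ where $H(z)$ is $z^{1/2}$ times the real symmetric linearization of $Z$; since $z^{1/2}\in\mathbb{C}^+$, $H(z)$ is complex symmetric but not Hermitian, so $H(z)^{*}\neq H(z)$ and one only gets $\operatorname{Im}G(z)=G(z)\bigl(\eta I-\operatorname{Im}(z^{1/2})\,z^{-1/2}H(z)\bigr)G(z)^{*}$, which is not obviously positive. Hence the claim that $\Ub^\top G(z)\Ub$ has positive-definite imaginary part is unsubstantiated as written.

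The invertibility you need is nevertheless true and can be repaired cheaply. For $z\in\mathbb{C}^+$ both $H(z)-zI$ and $\wt H(z)-zI$ are invertible: the eigenvalues of $H(z)$ are $0$ and $\pm z^{1/2}\sqrt{\lambda_k}$, and $z=\pm z^{1/2}\sqrt{\lambda_k}$ would force $z^{1/2}=\pm\sqrt{\lambda_k}\in\mathbb{R}$, contradicting the branch choice $\operatorname{Im}z^{1/2}>0$ (same argument for $\wt H(z)$); moreover $\Db$ is invertible since $d_i>0$ and $z\neq 0$. Then by Sylvester's identity,
\begin{equation*}
\det\bigl(\wt H(z)-zI\bigr)=\det\bigl(H(z)-zI\bigr)\det\bigl(I+\Db\Ub^\top G(z)\Ub\bigr)
=\det\bigl(H(z)-zI\bigr)\det(\Db)\det\bigl(\Db^{-1}+\Ub^\top G(z)\Ub\bigr),
\end{equation*}
and since the left-hand side is nonzero, $\Db^{-1}+\Ub^\top G(z)\Ub$ is invertible. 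With this substitute for your imaginary-part argument, the rest of your proof goes through verbatim and yields \eqref{defn_greenrep1}.
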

This lemma immediately leads to the following relationship.
\begin{equation} \label{defn_greenrep2}
\mathbf{U}^\top  \widetilde{G}(z) \mathbf{U}=\mathbf{D}^{-1}-\mathbf{D}^{-1}(\mathbf{D}^{-1}+\mathbf{U}^\top  G(z)\mathbf{U})^{-1}\mathbf{D}^{-1}.
\end{equation}
Next, for any $z \in \mathbb{C}^+$, denote   
\begin{equation}\label{eq_Pi}
\Pi (z):=\left( {\begin{array}{*{20}c}
   { \Pi_1}(z) & 0  \\
   0 & { \Pi_2}(z)  \\
\end{array}} \right)\in \mathbb{C}^{(p+n)\times (p+n)}, 
\end{equation}
{where
$$ 
\Pi_1(z):  =-z^{-1}\left(1+\mathsf{M}_{2c}(z)A \right)^{-1}\ \mbox{ and }\  \Pi_2(z):=- z^{-1} (1+\mathsf{M}_{1c}(z)B )^{-1}\,.
$$
}
Denote 
\begin{equation}\label{eq_Pibar}
\overline\Pi(z):=\begin{pmatrix}
m_{1c}(z)I_r & 0 \\
0 & m_{2c}(z) I_r
\end{pmatrix}\in \mathbb{C}^{2r\times 2r}\,.
\end{equation}
Note that {by \eqref{eq_m1cm2c},} $\Pi(z)$ and $\overline\Pi(z)$ are related via
\[
m_{1c}(z) = \frac{1}{p}\tr{\Pi_1}(z),\ \ m_{2c}(z) = \frac{1}{n}\tr{\Pi_2}(z)\,, 
\]
and the relationship between $\Pi(z)$ and $\overline\Pi(z)$ and how $\Ub^\top  \Pi(z)\Ub$ converges to $\overline{\Pi}(z)$ {will be stated below}.
Denote the difference 
\begin{equation}\label{definition omega z}
\Omega(z):= \Ub^\top  G(z)\Ub-\overline{\Pi}(z)\,,
\end{equation}
which will be used to control the noise part in the analysis.
With $\Omega(z)$, we have the following identity  
\begin{align}\label{eq_resid}
    &(\Db^{-1}+\Ub^\top G(z)\Ub)^{-1} \\
    =&\, (\Db^{-1}+\overline\Pi(z))^{-1} + (\Db^{-1}+\overline\Pi(z))^{-1}\Omega(z)(\Db^{-1}+\Ub^\top G(z)\Ub)^{-1}.\nonumber
\end{align}
Iteration leads to the following $\ell$-th resolvent expansion. 

\begin{lemma}(Resolvant identity \cite{benaych2016lectures}) For $\ell\geq 1$, we have the following $\ell$-th order resolvent expansion: 
\begin{align}
    (\Db^{-1}+\Ub^\top G(z)\Ub)^{-1} = & \sum_{k=0}^{\ell-1}
    (\Db^{-1}+\overline\Pi(z))^{-1}[\Omega(z)(\Db^{-1}+\overline\Pi(z))]^k \label{eq_resexp} \\ & +  [(\Db^{-1}+\overline\Pi(z))^{-1}\Omega(z)]^{\ell}[\Db^{-1}+\Ub^\top G(z)\Ub]^{-1}.\nonumber
\end{align}
\end{lemma}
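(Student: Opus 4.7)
The plan is to prove the expansion \eqref{eq_resexp} by induction on $\ell$, using the single-step identity \eqref{eq_resid} as the engine. The key algebraic observation underlying both is that by the definition of $\Omega(z)$ in \eqref{definition omega z}, we have the decomposition
\begin{equation*}
\Db^{-1}+\Ub^\top G(z)\Ub = (\Db^{-1}+\overline{\Pi}(z)) + \Omega(z),
\end{equation*}
so applying the basic matrix identity $X^{-1} = Y^{-1} - Y^{-1}(X-Y)X^{-1}$ with $X = \Db^{-1}+\Ub^\top G(z)\Ub$ and $Y = \Db^{-1}+\overline{\Pi}(z)$ yields \eqref{eq_resid} directly. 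For this step to make sense we need $\Db^{-1}+\overline{\Pi}(z)$ to be invertible on the spectral domain of interest; this is guaranteed by \eqref{Piii} in Lemma \ref{s35_DY2019} together with the lower bound on signal strengths in Assumption \ref{assum_main}, which keeps the diagonal entries of $\Db^{-1}+\overline{\Pi}(z)$ away from zero.

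For the base case $\ell=1$, the expansion \eqref{eq_resexp} is precisely \eqref{eq_resid}. For the inductive step, suppose the expansion holds for some $\ell \geq 1$; I would then substitute \eqref{eq_resid} into the remainder term $[(\Db^{-1}+\overline\Pi(z))^{-1}\Omega(z)]^{\ell}[\Db^{-1}+\Ub^\top G(z)\Ub]^{-1}$, which produces the $k=\ell$ summand of \eqref{eq_resexp} together with a new remainder of the form $[(\Db^{-1}+\overline\Pi(z))^{-1}\Omega(z)]^{\ell+1}[\Db^{-1}+\Ub^\top G(z)\Ub]^{-1}$. This gives the $(\ell+1)$-th order expansion and closes the induction.

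Since this statement is essentially a bookkeeping identity, I do not anticipate any real obstacle. The only subtlety worth being careful about is the invertibility of $\Db^{-1}+\overline{\Pi}(z)$ on the chosen spectral domain, which as noted above follows from the quantitative estimates already established in Lemma \ref{s35_DY2019}. No probabilistic input is needed here; the identity is purely deterministic and holds for every fixed $z$ where both $(\Db^{-1}+\overline{\Pi}(z))^{-1}$ and $(\Db^{-1}+\Ub^\top G(z)\Ub)^{-1}$ are defined.
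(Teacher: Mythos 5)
Your proposal is correct and follows essentially the same route as the paper: the paper obtains \eqref{eq_resexp} simply by iterating the one-step identity \eqref{eq_resid}, which is exactly your induction on $\ell$. One minor remark: the invertibility of $\Db^{-1}+\overline\Pi(z)$ is seen most directly from the explicit inverse \eqref{biginverse}, i.e.\ it requires $\mathcal{T}(z)\neq d_i^{-2}$ for all $i$, rather than from \eqref{Piii}; but this is immaterial, since the lemma is a purely algebraic identity valid wherever both inverses exist, as you note.
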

We define the following spectral regions.
Recall the definition of $S(\varsigma_1,\varsigma_2)$ in \eqref{eq_S}, {where $\varsigma_1>0$ and $\varsigma_2>1$, and $\Psi$ defined in \eqref{eq_defPsi}. Fix any constant $\omega>0$.} Denote the following spectral domains of parameter $z$ as
\begin{align}
S_0(\varsigma_1,\varsigma_2,\omega):=&\, S(\varsigma_1,\varsigma_2) \cap \{E+i\eta: \eta\geq n^{-1+\omega}\}\,,\label{tildeS}\\
\wt S_0(\varsigma_1,\varsigma_2,\omega):=&\, S_0(\varsigma_1,\varsigma_2,\omega) \cap \left\{E+ i \eta:  n^{1/2}\left( \Psi^2(z)+\frac{\phi_n}{n\eta}\right) \le n^{-\omega/2}\right\}\nonumber
\end{align}
and 
\begin{equation}\label{eq_sout}
    S_{out}(\varsigma_2,\omega):=\{E+i\eta:\lambda_+ + n^{\omega}(n^{-1/3}\phi_n^2+n^{-2/3}) \leq E \leq \varsigma_2 \lambda_+, \eta \in [0,1]\}.
\end{equation}
Inside these domains, {the following local laws have been established in the literature.}

\begin{theorem}\label{LEM_SMALL}(Local laws ``near'' $\lambda_+$ \cite[Theorem S.3.9]{DY2019}). Suppose that Assumption \ref{assum_main} holds. Fix constant $\varsigma_1>0$ and $\varsigma_2>1$ as those in Lemma \ref{s35_DY2019}. Then for any fixed $\omega>0$, the following estimates hold.
\begin{enumerate}
\item [(1)] {Anisotropic local law}: For any $z \in \wt S_0(\varsigma_1,\varsigma_2,\omega)$ and deterministic unit vectors $\bm{\mu}, \bm{\nu} \in \mathbb{C}^{p+n}$, we have 
\begin{equation}\label{eq_locallaw}
    \left| \langle \bm{\mu}, (G(z)-\Pi(z)) \bm{\nu} \rangle \right| \prec \phi_n + \Psi(z).
\end{equation}

\item [(2)] {Averaged local law}: For any $z \in \wt S_0(\varsigma_1,\varsigma_2,\omega)$, we have
\begin{equation}\label{eq_avlocalaw_1}
  |m_1(z)-m_{1c}(z)| +  |m_2(z)-m_{2c}(z)| \prec \frac{1}{n\eta}\,.
\end{equation}
Moreover, when $z \in \wt S_0(\varsigma_1,\varsigma_2,\omega)\cap \{z = E+i\eta: E \geq \lambda_+, n\eta\sqrt{\kappa_z + \eta}\geq {n^\omega}\}$, we have the following stronger bound
\begin{equation}\label{eq_avlocalaw_2}
 |m_1(z)-m_{1c}(z)| +  |m_2(z)-m_{2c}(z)| \prec \frac{n^{-\omega/4}}{n\eta}+\frac{1}{n(\kappa_z+\eta)}+\frac{1}{(n\eta)^2\sqrt{\kappa_z+\eta}}\,.
\end{equation}
\end{enumerate}
{If either (a) 
$\mathbb E x_{ij}^3 = 0$ for all $1\le i \le p$ and $1\le j \le n$,
or (b) either $A$ or $B$ is diagonal, the bounds \eqref{eq_locallaw}, \eqref{eq_avlocalaw_1} and \eqref{eq_avlocalaw_2} hold for $z\in S_0(\varsigma_1,\varsigma_2,\omega)$.}
\end{theorem}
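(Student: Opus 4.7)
The statement is precisely \cite[Theorem S.3.9]{DY2019} transcribed to our notation, and the plan is to follow that proof strategy, using the structural results already collected in the preceding lemmas (in particular Lemmas \ref{lambdar_sqrt}, \ref{s35_DY2019}, \ref{s36_DY2019}, and \ref{s37_DY2019}) so that only the separable-covariance specific modifications need to be tracked. The first task is to set up an approximate self-consistent equation for the linearized resolvent $G(z)$. Using the block form \eqref{green2} and the identity $Z=A^{1/2}XB^{1/2}$, one expands $H=H(z)$ row-by-row via a Schur complement / cumulant expansion (or equivalently the ``self-consistent comparison'' / second-order cumulant method of Knowles--Yin and Xi--Yang--Yin) and shows that $G(z)$ is close to the deterministic matrix $\Pi(z)$ defined in \eqref{eq_Pi}, provided the pair $(\mathsf M_{1c},\mathsf M_{2c})$ solves the matrix Dyson equation dictated by the separable structure. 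This reduces the problem to a stability question for that equation.

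The second step is to establish stability of the self-consistent equation on the relevant spectral domain $\wt S_0(\varsigma_1,\varsigma_2,\omega)$. Linearizing around $(\mathsf M_{1c},\mathsf M_{2c})$ produces a $2\times 2$ operator whose inverse is controlled by the quantities $1+\mathsf M_{1c}\sigma_j^b$ and $1+\mathsf M_{2c}\sigma_i^a$; by \eqref{Piii} these are bounded below by $\tau'>0$, so the linearized operator is invertible in the bulk and has an explicit square-root type singularity only at $\lambda_+$, whose precise behavior is given by Lemma \ref{lambdar_sqrt}. Combined with the deterministic size estimates in Lemma \ref{s35_DY2019}, this yields the needed stability, and a standard bootstrap in $\eta$ (starting from $\eta=1$ where the law is trivial and descending to $\eta\ge n^{-1+\omega}$) then propagates the initial estimate down to $\wt S_0(\varsigma_1,\varsigma_2,\omega)$, producing the anisotropic bound \eqref{eq_locallaw} with the control size $\phi_n+\Psi(z)$. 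The $\phi_n$ term is the truncation loss from Definition \ref{bsc}, and the $\Psi(z)$ term is the usual fluctuation scale from \eqref{eq_defPsi}.

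The third step is to upgrade the anisotropic law to the two averaged laws. Taking $\boldsymbol\mu=\boldsymbol\nu=\mathbf e_k$ and averaging over $k$ gives the basic averaged bound $|m_i-m_{ic}|\prec (n\eta)^{-1/2}$; the genuine $(n\eta)^{-1}$ rate of \eqref{eq_avlocalaw_1} is obtained via fluctuation averaging, i.e.\ exploiting cancellations among the diagonal resolvent entries when one sums them. For the sharper edge bound \eqref{eq_avlocalaw_2} valid on $\{E\ge\lambda_+,\, n\eta\sqrt{\kappa_z+\eta}\ge n^\omega\}$, one plugs the square-root behavior \eqref{sqroot4}--\eqref{sqroot5} into the fluctuation averaging identity: the gain comes from $\operatorname{Im} m_{ic}(z)\asymp \eta/\sqrt{\kappa_z+\eta}$ outside the bulk, which shrinks $\Psi(z)^2$ and yields the three-term decomposition in \eqref{eq_avlocalaw_2}.

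I expect the main obstacle to be the fluctuation averaging step in the averaged law near the edge, which is exactly where the square-root singularity from Lemma \ref{lambdar_sqrt} must be coupled with the high moment bounds on the resolvent to extract the improvement $n^{-\omega/4}/(n\eta)$; this is where the separable structure $Z=A^{1/2}XB^{1/2}$ (as opposed to the Wishart case $B=I$) requires the more delicate two-resolvent argument from \cite{yang2019edge}. The improvement under conditions (a) or (b) (vanishing third moment or one of $A,B$ diagonal) removes the $\phi_n$ loss in the third-order cumulant expansion: either the third cumulant vanishes identically, killing the only term of order $\phi_n$, or diagonality of $A$ (respectively $B$) aligns the random matrix with its own eigenbasis so that the offending third-order terms cancel upon summation, allowing the estimates to be extended from $\wt S_0$ to the larger region $S_0(\varsigma_1,\varsigma_2,\omega)$ as claimed.
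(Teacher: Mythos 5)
There is an important mismatch of expectations here: the paper does not prove Theorem \ref{LEM_SMALL} at all. It is imported verbatim as \cite[Theorem S.3.9]{DY2019} (which in turn rests on the local-law machinery of \cite{yang2019edge}), and the surrounding text only records the deterministic inputs (Lemmas \ref{lambdar_sqrt}--\ref{s37_DY2019}) that make the statement meaningful in the separable-covariance setting. So there is no in-paper argument to compare against; the relevant question is whether your sketch would actually reconstitute the cited proof. At the level of strategy it does: the matrix Dyson equation for the linearization of $Z=A^{1/2}XB^{1/2}$, stability via the lower bounds $|1+\mathsf M_{1c}\sigma_j^b|,|1+\mathsf M_{2c}\sigma_i^a|\ge\tau'$ from \eqref{Piii}, a bootstrap in $\eta$ from $\eta=1$ down to $n^{-1+\omega}$, and fluctuation averaging plus the square-root behavior of Lemma \ref{lambdar_sqrt} for the edge-improved averaged bound \eqref{eq_avlocalaw_2} — this is faithful to how \cite{yang2019edge,DY2019} proceed.

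As a proof, however, the proposal is a roadmap rather than an argument: the genuinely hard steps are invoked by name but not executed. In particular, (i) passing from an entrywise/diagonal law to the fully anisotropic bound \eqref{eq_locallaw} for arbitrary deterministic directions $\bm\mu,\bm\nu$ is not a routine averaging step — in the cited works it requires either a polynomialization/high-moment argument or a self-consistent (Lindeberg-type) comparison with a reference ensemble, and this is exactly where the separable structure and the bounded support $\phi_n$ enter nontrivially; (ii) the fluctuation-averaging mechanism that upgrades $\Psi(z)$ to $(n\eta)^{-1}$ and then to the three-term bound in \eqref{eq_avlocalaw_2} needs the two-resolvent expansions adapted to $A^{1/2}XB^{1/2}$, which you only gesture at; and (iii) your account of conditions (a)/(b) is slightly off: they do not remove the $\phi_n$ loss from the bounds (all three estimates retain $\phi_n+\Psi(z)$ or the corresponding error); their only effect, as stated, is to enlarge the admissible domain from $\wt S_0(\varsigma_1,\varsigma_2,\omega)$ to $S_0(\varsigma_1,\varsigma_2,\omega)$, the restriction to $\wt S_0$ in the general case being forced by the interaction of the large support $\phi_n$ with the comparison/cumulant expansion, not by the presence of a $\phi_n$ term in the final estimate. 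None of this makes your outline wrong, but if the theorem were not available as a citation, the proposal as written would not constitute a proof; since the paper's stance is simply to quote \cite{DY2019}, the appropriate treatment here is the citation, with your sketch serving as an accurate summary of how that reference proves it.
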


The term ``anisotropic'' means that the resolvent $G(z)$ is well approximated by a deterministic matrix that is not a multiple of an identity matrix.

\begin{theorem}\label{lem_locallaw} (Anisotropic local law beyond $\lambda_+$ \cite[Theorem S.3.12]{DY2019}). Suppose Assumption \ref{assum_main} holds. For any $0<\omega<2/3$, $z\in S_{out}(\varsigma_2,\omega)$ and deterministic unit vectors $\vb, \wb \in \mathbb{C}^{p+n},$ we have
\begin{equation}
\left| \langle \vb, (G(z)-\Pi(z)) \wb \rangle \right| \prec \phi_n + \sqrt{\frac{\textup{Im} m_{2c}(z)}{n\eta}} \asymp \phi_n + n^{-1/2}(\kappa_z+\eta)^{-1/4}.
\end{equation}
\end{theorem}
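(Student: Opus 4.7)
The overall strategy is to use the near-edge anisotropic local law (Theorem \ref{LEM_SMALL}) as input and extend it to the entire region $S_{out}(\varsigma_2,\omega)$ via an interpolation argument in the imaginary part $\eta$. Fix $z = E + i\eta \in S_{out}$ and consider the starting point $z_0 := E + i\eta_0$ with $\eta_0 = 1$. Since $\eta_0 = 1$, the point $z_0$ lies in $\wt S_0(\varsigma_1,\varsigma_2,\omega)$, so Theorem \ref{LEM_SMALL}(1) applied at $z_0$ yields the initial estimate $|\langle \vb, (G(z_0) - \Pi(z_0))\wb\rangle| \prec \phi_n + \Psi(z_0)$, which is already of the desired form. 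The remaining task is to propagate this bound down to the target $\eta \in [0, \eta_0]$.

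A key ingredient is a rigidity-based no-outlier estimate: with high probability $\lambda_1 \leq \lambda_+ + \tfrac{1}{2}n^{\omega}(n^{-1/3}\phi_n^2 + n^{-2/3})$. I would derive this from the averaged local law \eqref{eq_avlocalaw_2} by a standard argument, e.g., writing the indicator of an interval slightly above $\lambda_+$ as a Helffer--Sj\"ostrand contour integral of $m_1 - m_{1c}$ and using that $\rho_{1c}$ vanishes above $\lambda_+$. On this high-probability event, every $E$ in the admissible window is separated from $\operatorname{spec}(H)$ by at least $\tfrac{1}{2}n^{\omega}(n^{-1/3}\phi_n^2 + n^{-2/3})$, so $\|G(E+i\eta)\|$ stays uniformly bounded for all $\eta \in [0,1]$.

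To propagate the bound, let $g(\eta) := \langle \vb, (G(E + i\eta) - \Pi(E + i\eta))\wb\rangle$ and control $|g'(\eta)|$ through the spectral representation \eqref{main_representation} combined with the Ward-type identity
\[
\bigl|\partial_\eta \langle \vb, G(z) \wb\rangle\bigr| \leq \frac{1}{\eta}\sqrt{\operatorname{Im}\langle \vb, G(z) \vb\rangle \cdot \operatorname{Im}\langle \wb, G(z) \wb\rangle}\,.
\]
Assuming the target bound holds on a subset of $S_{out}$, the anisotropic law implies $\operatorname{Im}\langle \vb, G \vb\rangle \approx \operatorname{Im}\langle \vb, \Pi \vb\rangle \asymp \operatorname{Im} m_{2c}(z)$, giving $|\partial_\eta g| \lesssim \operatorname{Im} m_{2c}(z)/\eta$ up to lower-order terms. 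For the deterministic piece, Lemma \ref{s35_DY2019} plus \eqref{eq_estimm} controls $\partial_\eta \langle \vb,\Pi(z)\wb\rangle$ by a quantity of the same order. Integrating from $\eta_0 = 1$ down to the target $\eta$ and using $\operatorname{Im} m_{2c}(z)/\eta \asymp (\kappa_z + \eta)^{-1/2}$ together with $\sqrt{\operatorname{Im} m_{2c}/(n\eta)} \asymp n^{-1/2}(\kappa_z + \eta)^{-1/4}$, we recover precisely the target bound. A self-consistent bootstrap on a mesh of size $n^{-C}$ in $S_{out}$ (upgraded to the full domain by the trivial Lipschitz estimate on each small box) converts this into a uniform statement with stochastic domination.

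The main obstacle is running this bootstrap so that the two error contributions $\phi_n$ and $\sqrt{\operatorname{Im} m_{2c}/(n\eta)}$ are preserved separately along the descent in $\eta$, since a naive Gr\"onwall-type integration tends to amplify the second term by a factor of $\log n$ or worse. The resolution exploits the monotonicity $\eta \mapsto \operatorname{Im} m_{2c}(z)/\eta$ outside the spectrum, which ensures that integration over a descending $\eta$-trajectory yields a bound of the same order as the instantaneous one. A secondary technical point is verifying the no-outlier estimate at the correct scale $n^{\omega}(n^{-1/3}\phi_n^2 + n^{-2/3})$, which is slightly above the Tracy--Widom scale $n^{-2/3}$ precisely because the entries of $X$ only enjoy finite $a$-th moment with $a > 4$ through the bounded-support parameter $\phi_n$. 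The $\phi_n$ term in the final bound is then tracked through the resolvent expansion \eqref{eq_resexp} applied to $\Omega(z)$ in \eqref{definition omega z}.
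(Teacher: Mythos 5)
The paper does not prove this statement at all: it is imported verbatim as \cite[Theorem S.3.12]{DY2019}, so there is no internal argument to match, and your attempt has to stand on its own as a proof of the cited result. Judged that way, it has a genuine quantitative gap at its central step. You propose to start at $z_0=E+i$, where Theorem \ref{LEM_SMALL} applies, and descend in $\eta$ by integrating the Ward-type bound $|\partial_\eta \langle \vb,G\wb\rangle|\le \eta^{-1}\sqrt{\operatorname{Im}\langle \vb,G\vb\rangle\,\operatorname{Im}\langle \wb,G\wb\rangle}$, with the self-consistent input $\operatorname{Im}\langle \vb,G\vb\rangle\approx \operatorname{Im} m_{2c}$. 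But outside the spectrum $\operatorname{Im}m_{2c}(E+i\eta)\asymp \eta(\kappa_z+\eta)^{-1/2}$, so your derivative bound is $\asymp(\kappa_z+y)^{-1/2}$, and
\begin{equation*}
\int_{\eta}^{1}(\kappa_z+y)^{-1/2}\,dy \asymp \sqrt{\kappa_z+1}-\sqrt{\kappa_z+\eta}\asymp 1 ,
\end{equation*}
i.e.\ the accumulated error along the descent is $O(1)$, which exceeds the target $\phi_n+n^{-1/2}(\kappa_z+\eta)^{-1/4}$ by a factor as large as $\sim n^{1/2}$; this is far beyond what $\prec$ can absorb. The monotonicity of $\eta\mapsto \operatorname{Im}m_{2c}(z)/\eta$ that you invoke does not repair this: the integral above is dominated by scales $y\asymp 1$, not by the instantaneous value at the target $\eta$, so no Gr\"onwall-type descent from $\eta_0=1$ can reproduce the stated bound unless you first prove cancellation in $\partial_\eta(G-\Pi)$ at the level of the error itself — which is essentially the theorem you are trying to prove. (Your Helffer–Sj\"ostrand no-outlier step and the choice $\eta_0=1$ are fine, but they do not address this loss.)

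A workable route — and essentially what \cite{DY2019} (and the analogous arguments in \cite{principal}) do — is a \emph{one-step} comparison at an intermediate scale rather than an integration from $\eta=1$: fix $\wt\eta\asymp n^{-1/2}(\kappa_z+\eta)^{1/4}$ (which lies inside the domain of Theorem \ref{LEM_SMALL} precisely because $\kappa_z\ge n^{\omega}(n^{-1/3}\phi_n^2+n^{-2/3})$), apply the near-edge anisotropic law at $E+i\wt\eta$, and bound $\langle\vb,(G(E+i\eta)-G(E+i\wt\eta))\wb\rangle$ through the spectral decomposition \eqref{main_representation}. On the no-outlier event one has $\operatorname{dist}(E,\operatorname{Spec}(H))\gtrsim\kappa_z$, and the key estimate is $\sum_k|\langle\vb,\bxi_k\rangle|^2/|\lambda_k-z|^2\prec(\kappa_z+\eta)^{-1/2}$ (up to $\phi_n$-corrections), obtained by combining isotropic delocalization of the $\bxi_k$ (Lemma \ref{delocal_rigidity}) with rigidity (Lemma \ref{lem_rigidty}) and a dyadic summation in $|\gamma_k-E|$; Cauchy–Schwarz then gives a difference of size $\wt\eta\,(\kappa_z+\eta)^{-1/2}\asymp n^{-1/2}(\kappa_z+\eta)^{-1/4}$. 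Without this delocalization-plus-rigidity input, the crude bound $\sum_k|\langle\vb,\bxi_k\rangle|^2/|\lambda_k-z|^2\le \kappa_z^{-2}$ forces $\wt\eta\lesssim n^{-1/2}\kappa_z^{7/4}$, which is incompatible with the validity domain of the near-edge law when $\kappa_z$ is near $n^{-2/3+\omega}$ and $\omega<2/3$ — so the delocalization step is not optional, and it is the main ingredient missing from your sketch.
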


Finally, we derive the following lemma to describe the difference between $\Pi(z)$ and $\overline\Pi(z)$ {when $z$ is away from} $\lambda_+$.

\begin{lemma}\label{lem_pipi} 
Under Assumption \ref{assum_main}, for any $z \in \wt S_0(\varsigma_1,\varsigma_2,\omega)\cup S_{out}(\varsigma_2,\omega)$ and $i,j= 1,\ldots,r$, we have 
\begin{align*}
|\langle \ub_i,\,{-}z^{-1}(1+& \mathsf{M}_{2c}(z)A)^{-1}\ub_j\rangle - \delta_{ij} m_{1c}(z) | \\
&\vee |\langle \vb_i,\,{-}z^{-1}(1+ \mathsf{M}_{1c}(z)B)^{-1}\vb_j\rangle - \delta_{ij} m_{2c}(z)|  \prec n^{-1}\,;
\end{align*}
that is,
\begin{equation}
\norm{\Ub^\top  \Pi(z) \Ub - \overline \Pi(z)} \prec n^{-1}\,.
\end{equation}
\end{lemma}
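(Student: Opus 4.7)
The statement is a concentration-of-quadratic-forms estimate for the random singular vectors $\ub_i,\vb_i$. Because $\Pi(z)$ is block-diagonal with blocks $\Pi_1(z)\in\mathbb C^{p\times p}$ and $\Pi_2(z)\in\mathbb C^{n\times n}$ and $\Ub=\mathrm{diag}(U,V)$, the off-diagonal blocks of $\Ub^\top\Pi(z)\Ub-\overline\Pi(z)$ vanish identically, so the plan is to control $U^\top\Pi_1(z)U-m_{1c}(z)I_r$ and the analogous $V$-block separately. For the $\Pi_1$-block the starting point is the trace identity $\tfrac{1}{p}\tr\Pi_1(z)=m_{1c}(z)$, which follows from the eigendecomposition $A=Q^a\Sigma^a(Q^a)^\top$ together with the integral formula in \eqref{eq_m1cm2c}. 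Combining \eqref{Piii} from Lemma \ref{s35_DY2019}(ii) with $|z|\gtrsim 1$ on $\wt S_0\cup S_{out}$ gives the uniform bound $\|\Pi_1(z)\|_{op}\lesssim 1$, hence $\|\Pi_1\|_F^2\lesssim p$.

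Under Assumption \ref{assum_main}(v)a, write $\ub_i=g_i/\sqrt{p}$ with independent $g_i$ whose entries are i.i.d.\ from a measure satisfying the log-Sobolev inequality, and where the $g_i$ are independent of $A$. Conditioning on $A$ makes $\Pi_1(z)$ deterministic, and Lemma \ref{hanson} applies directly to $g_i^\top\Pi_1(z)g_i-\tr\Pi_1(z)$ with $C=\|\Pi_1\|_F$, controlling the diagonal entries. For off-diagonal entries $i\neq j$ the key structural point is that $\Pi_1(z)$ is transpose-symmetric because $A$ is symmetric and $\mathsf M_{2c}(z)$ is a scalar; the polarization identity
\begin{equation*}
\langle\ub_i,\Pi_1(z)\ub_j\rangle=\tfrac14\bigl(\langle\ub_i+\ub_j,\Pi_1(z)(\ub_i+\ub_j)\rangle-\langle\ub_i-\ub_j,\Pi_1(z)(\ub_i-\ub_j)\rangle\bigr)
\end{equation*}
reduces the bilinear form to two diagonal quadratic forms in $(g_i\pm g_j)/\sqrt{2p}$, whose entries remain i.i.d.\ with log-Sobolev tails, so the previous bound applies.

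Under the QR model in Assumption \ref{assum_main}(v)b, the columns of $U$ are the $Q$-factor of the QR decomposition of $G^p_\ub$. For fixed $r$, standard concentration $\|g_i\|^2=p+O_\prec(\sqrt{p})$ and $\langle g_i,g_j\rangle=O_\prec(\sqrt{p})$ implies that $U$ differs from the column-normalization of $G^p_\ub$ by an $r\times r$ correction of operator norm $O_\prec(p^{-1/2})$. Together with $\|\Pi_1\|_{op}=O(1)$ this transfers the case-(a) estimate to case (b) without loss. The analogous argument on the $V$-side, using $\|\Pi_2(z)\|_{op}\lesssim 1$ and $\tfrac{1}{n}\tr\Pi_2(z)=m_{2c}(z)$, disposes of the $\Pi_2$-block.

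The main technical obstacle is matching the rate $\prec n^{-1}$ asserted in the statement. A plain application of Lemma \ref{hanson} with $C=\|\Pi_1\|_F=\Theta(\sqrt{p})$ yields only $\prec p^{-1/2}\asymp n^{-1/2}$ per entry, and sharpening this by an additional factor of $n^{1/2}$ must come either from an explicit higher-moment/cumulant computation that exploits $\|\Pi_1\|_{op}=O(1)$ together with the independence of distinct $\ub_i$'s (so that the variance of the bilinear form is $\|\Pi_1\|_F^2/p^2 \lesssim 1/p$ and chaos expansions give polynomial decay at any order), or from a structural cancellation obtained by replacing $\mathsf M_{2c}$ with its self-consistent defining equation. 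I expect this quantitative improvement to be the principal technical step, with the block reduction, trace identity, polarization, and QR-to-Gaussian passage above being essentially routine.
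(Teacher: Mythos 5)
Your route is the same as the paper's: the paper also proves this by rescaling $\ub_i$ to $\sqrt{p}\,\ub_i$, invoking the trace identity $\frac1p\tr\bigl(-z^{-1}(1+\mathsf{M}_{2c}(z)A)^{-1}\bigr)=m_{1c}(z)$ from \eqref{eq_m1cm2c}, and applying the log-Sobolev concentration bound of Lemma \ref{hanson} with the operator-norm control coming from \eqref{Piii}; your polarization step for $i\neq j$ and your treatment of the Gram--Schmidt case (v)b are details the paper's proof silently skips, and they are welcome additions. The one place you stop short --- declining to claim the rate $\prec n^{-1}$ because Lemma \ref{hanson} with $C=\|\Pi_1(z)\|_F\asymp\sqrt{p}$ only yields $\prec n^{-1/2}$ --- is precisely the point at which the paper's own argument is not sound: the paper takes $\delta=n^{\omega}$ and treats $C=\sqrt{\tr\bigl(|z|^{-2}(1+\mathsf{M}_{2c}(z)A)^{-1}(1+\mathsf{M}_{2c}(z)^*A)^{-1}\bigr)}$ as an $n$-independent constant, whereas this trace is of order $p$, so the exponent $\delta/C\wedge\delta^2/C^2$ tends to zero and the claimed probability bound does not follow.

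Do not spend effort trying to upgrade your bound to $n^{-1}$: it cannot hold in general. Take $A=I_p$ under Assumption \ref{assum_main}(v)a; then $\langle\ub_i,\Pi_1(z)\ub_i\rangle=m_{1c}(z)\,\|g_i\|_2^2/p$, and the central limit theorem forces fluctuations of order $n^{-1/2}$, so $\prec n^{-1/2}$ is the correct (and sharp) statement of the lemma; your variance computation $\|\Pi_1\|_F^2/p^2\lesssim 1/p$ already says this, and the suggested chaos-expansion or self-consistency cancellations cannot beat it. The same comment applies to your QR step, where the correction $G^\top G/p-I$ is itself of size $O_\prec(n^{-1/2})$, consistent with the corrected rate but not with $n^{-1}$. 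Importantly, the weaker rate is harmless downstream: wherever Lemma \ref{lem_pipi} is used (e.g., in \eqref{aniso_lawev0} and \eqref{eq_cond0}) its error is added to $\phi_n+\Psi(z)$ or $\phi_n+n^{-1/2}(\kappa_z+\eta)^{-1/4}$, and since $\phi_n=n^{2/a-1/2}\geq n^{-1/2}$ and $\Psi(z)\gtrsim n^{-1/2}$ by \eqref{eq_defPsi}, an $n^{-1/2}$ bound is absorbed. So your proposal, as written, proves the correct version of the lemma by essentially the paper's method; the remaining ``gap'' is an overclaim in the statement rather than a missing idea on your side.
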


\begin{proof}
When $z \in \wt S_0(\varsigma_1,\varsigma_2,\omega)\cup {S}_{out}(\varsigma_2,\omega)$, by Lemma \ref{hanson}, for any $\delta>0$ we have that
\begin{align*}
&\mathbb P\left( |\langle \ub_i,\, {-}z^{-1}(1+\mathsf{M}_{2c}(z)A)^{-1} \ub_j\rangle-\mathbb \delta_{ij} m_{1c}(z) 
 |>p^{-1}\delta\right)\\
=\,& \mathbb P\left( \Big|\langle \sqrt{p}\ub_i,\, {-}z^{-1}(1+\mathsf{M}_{2c}(z)A)^{-1} \sqrt{p}\ub_j\rangle-\mathbb \delta_{ij}{pm_{1c}(z)}
\Big|>\delta\right)\\
\leq \,&4 e^{-c\left(\frac{\delta}{C}\wedge\frac{\delta^2}{C^2}\right)}
\end{align*}
for some constants $c>0$ {and $C=\sqrt{\tr(|z|^{-2}(1+\mathsf{M}_{2c}(z)A)^{-1})(1+\mathsf{M}_{2c}(z)^*A)^{-1})}<\infty$ since $z \in \wt S_0(\varsigma_1,\varsigma_2,\omega)\cup {S}_{out}(\varsigma_2,\omega)$}, where $p$ is used to normalize $\ub_i$ and $\ub_j$ to fulfill the condition in Lemma \ref{hanson}.
Thus, given any small constant $\epsilon>0$ and large constant $D>0$,
by letting $\delta= n^{\omega}$, we have 
$$\mathbb P\left( |\langle \ub_i, {-}z^{-1}(1+ \mathsf{M}_{2c}(z)A)^{-1} \ub_j\rangle-\delta_{ij} m_{1c}(z)
 |>n^{-1+\epsilon}\right)
\leq  4n^{-\frac{cn^{\omega}}{C}}\leq n^{-D}$$
when $n$ is sufficiently large.
By a similar approach we have the same control for the other term, and hence the proof.
\end{proof}

\section{Proof of Theorem \ref{thm_value}} 

{Fix a sufficiently small constant $\epsilon>0$.} With Lemmas \ref{lem_rigidty}, \ref{LEM_SMALL}, \ref{lem_locallaw} and \ref{lem_pipi} and the triangle inequality, there exists an event $\Xi_\epsilon$ of high probability such that the followings hold when conditional on $\Xi_\epsilon$: 
\begin{enumerate}
\item [(i)] For $z \in \wt S_0(\varsigma_1,\varsigma_2,\epsilon)$, from Lemmas \ref{LEM_SMALL} and \ref{lem_pipi}, and since $r$ is fixed, we have that
\begin{equation}\label{aniso_lawev0}
    \norm{\mathbf{U}^\top  G(z)\mathbf{U}-\overline\Pi(z)} \leq  n^{\epsilon/2}(\phi_n + \Psi(z)). 
\end{equation}

\item [(ii)] For $z \in S_{out}(\varsigma_2,\epsilon),$ from Lemmas \ref{LEM_SMALL} and \ref{lem_pipi}, and since $r$ is {fixed}, we have that
\begin{align}\label{eq_cond0}
  \norm{\mathbf{U}^\top G(z)\mathbf{U}-\overline\Pi(z)}
  \leq   n^{\epsilon/2}(\phi_n + n^{-1/2}(\kappa_z+\eta)^{-1/4})\,.
\end{align} 

\item [(iii)] From Lemma \ref{lem_rigidty}, there exists a large integer $\varpi$, such that for $1 \leq i \leq \varpi,$ we have
\begin{equation}\label{eq_rigi0} 
|\lambda_i-\lambda_+| \leq n^\epsilon(n^{-1/3}\phi_n^2 + n^{-2/3})\,,
\end{equation}
where $\varpi>r$ is a fixed integer. 
\end{enumerate}

Below, the proof {of Theorem \ref{thm_value} is conditioned on $\Xi_\epsilon$} such that \eqref{aniso_lawev0}, \eqref{eq_cond0} and \eqref{eq_rigi0} hold.
\\
\\
\textbf{Step I. Find asymptotic outlier locations $\theta(d_i)$ and the threshold $\alpha$.}
We start with finding the \textit{asymptotic outlier locations} for eigenvalues of $\wt S \wt S^\top $. By combining \eqref{eq_cond0} and the continuity of determinant, $\mu>\lambda_+$ is asymptotically an outlier location if and only if 
\begin{equation}\label{eq_det0}
\lim_{n \to \infty}\det(\mathbf{U}^\top  G(\mu)\mathbf{U}+\mathbf{D}^{-1}) = \det \left(\overline{\Pi}(\mu)+\Db^{-1}\right) = 0 
\end{equation}
{due to Lemma \ref{lem_mastereq}}.
By Lemma \ref{lemma det diag formula}, 
\begin{align}\label{eq_pb}
&\det \left(\overline{\Pi}(\mu)+\Db^{-1}\right) 
= \det \left(\begin{pmatrix}
m_{1c}(\mu) I_r & 0 \nonumber \\
0 & m_{2c}(\mu) I_r
\end{pmatrix}+\Db^{-1}\right)\nonumber \\
 =\,& \mu^{-r} \prod_{i=1}^r ( \mu m_{1c}(\mu) m_{2c}(\mu)-d_i^{-2}) = \mu^{-r} \prod_{i=1}^r ( \mathcal{T}(\mu)-d_i^{-2}) = 0\,.
\end{align}
By the definition of $\theta(\cdot)$, the determinant is zero when $\mu = \theta(d_i)$ for $1\leq i\leq r$. 
Note that since $\mathcal{T}(x)$ is a monotonically decreasing function for $x>\lambda_+$, we have $\theta(d_i)$ as a solution of \eqref{eq_pb} if and only if 
$d_i^{-2} < \lambda_+ m_{1c}(\lambda_+)m_{2c}(\lambda_+){=\mathcal{T}(\lambda_+)}$, which is equivalent to $d_i >\alpha=1/\sqrt{\mathcal{T}(\lambda_+)}$ {defined in \eqref{eq_defnalpha}}.
In other words, $\alpha$ is the desired signal strength threshold.
\newline
\newline
\textbf{Step II. Define the permissible intervals for the spectrum.} Now the strategy is to prove that
with high probability there are no eigenvalues outside a neighbourhood of each $\theta(d_i)$. Define the index set
\begin{equation}\label{dfn_indexset}
    \mathbb{O}_{\epsilon}:= \{i:d_i -\alpha  \geq n^{\epsilon}(\phi_n+n^{-1/3})\} = \{1,2, \ldots, r_{\epsilon}\} \subset \mathbb{O}_+
\end{equation}
for some $r_\epsilon \leq r^+$. {We can understand $d_1,\ldots,d_{r_{\epsilon}}$ as ``strong'' outliers, and $d_{r_{\epsilon}},\ldots,d_{r^+}$ as ``weak'' outliers if $r_\epsilon<r^+$.}
For $1 \leq i \leq {r_\epsilon}$, define the interval
\begin{equation}\label{interval}
I_i:=[\theta(d_i)-n^{\epsilon}\omega(d_i),\, \theta(d_i)+n^{\epsilon}\omega(d_i)]\,,
\end{equation}
where 
\begin{equation}
\omega(d_i):= \phi_n\Delta(d_i)^2+n^{-1/2}\Delta(d_i)\,.
\end{equation}
Also, define 
\begin{align*}
I_0:=[0,\, \lambda_+ + n^{3\epsilon}(\phi_n^2+n^{-2/3})]\quad\mbox{and}\quad
   I:= I_0 \cup \bigcup_{i\in \mathbb{O}_\epsilon} I_i\,.
\end{align*}
{Note that we can choose a sufficiently small $\epsilon$ so that $|I_i|\to 0$ and $n^{3\epsilon}(\phi_n^2+n^{-2/3})\to 0$. Thus, when $n$ is sufficiently large, $I_i\subset S_{out}(\varsigma_2,\epsilon)$ for $i\in \mathbb{O}_\epsilon$.}
\newline
\newline
\textbf{Step III. Show that $I$ contains all eigenvalues of $\wt H$.} 
Note that for $\mu \in S_{out}(\varsigma_2,\epsilon)\cap \mathbb{R}$, by Lemma \ref{lem_mastereq} and \eqref{eq_cond0}, $\mu\in\textup{Spec}(\wt H)$ if and only if
\begin{align}
0=&\,\det(\Ub^\top G(\mu) \Ub+\Db^{-1})\label{eq_Gpi_diff}\\
=&\,\det(\overline\Pi(\mu) +\Db^{-1}) +O(n^{\epsilon/2}(\phi_n+n^{-1/2}\kappa_\mu^{-1/4}) )\nonumber \\
=&\,\mu^{-r} \prod_{i=1}^r ( \mu m_{1c}(\mu)  m_{2c}(\mu)-d_i^{-2})+O(n^{\epsilon/2}(\phi_n+n^{-1/2}\kappa_{\mu}^{-1/4})) \,,\nonumber
\end{align}
where the {second} bound comes from Lemma \ref{eq_lemmadeter} {with the fact that there are} bounded $2r$ eigenvalues of $\overline\Pi(\mu) +\Db^{-1}$. Therefore, to prove $\text{Spec}(\widetilde{H}) \subset I$, by \eqref{eq_Gpi_diff} and the fact that $\lambda_+ = \OO(1)$, it suffices to show that 
\begin{equation}\label{eq_toprove}
    \min_{1\leq i\leq r}| \mu m_{1c}(\mu)m_{2c}(\mu)-d_i^{-2}|\gg n^{\epsilon/2}(\phi_n+n^{-1/2}\kappa_{\mu}^{-1/4})
\end{equation}
when $\mu \notin I$.
Before proving \eqref{eq_toprove}, we claim that for any $\mu \notin I$ and $1\leq i \leq r$,
\begin{equation}\label{eq_diff}
|\mu-\theta(d_i)| > n^{\epsilon}\omega({d_i})
\end{equation}
{when $n$ is sufficiently large.}
\newline
\newline
\textbf{Step IV. Prove Claim \eqref{eq_diff}.}
To show that the claim is true, we consider two cases. \textbf{(i)} When $i\in \mathbb{O}_{\epsilon}$, (\ref{eq_diff}) is true by the definition of $I_i$. \textbf{(ii)} When $i \notin \mathbb{O}_{\epsilon}$, {by the definition of $\mathbb{O}_{\epsilon}$ in} \eqref{dfn_indexset} we have 
\begin{equation}\label{eq_deltabd}
\Delta(d_i)^2 = d_i - \alpha < n^{\epsilon}(\phi_n+n^{-1/3})\,. 
\end{equation}
Thus, by \eqref{eq_s36_1}, we have 
\begin{equation}
    \theta(d_i)- \lambda_+ \asymp (d_i-\alpha)^2< 2n^{2\epsilon}(\phi_n^2 +n^{-2/3 })\,,
\end{equation}
{and hence $\theta(d_i)\in I_0$. As a result,} when $\mu\notin I$, $|\mu-\theta(d_i)| > n^{3\epsilon}(\phi_n^2 + n^{-2/3})$ by the definition of $I_0$. Moreover, by the definition of $\omega(d_i)$ and \eqref{eq_deltabd}, we have
\begin{equation}
    \omega(d_i) < n^{2\epsilon}(\phi_n^{2}+n^{-2/3})
\end{equation}
for $i \notin \mathbb{O}_{\epsilon}$, which leads to $|\mu-\theta(d_i)| > n^{\epsilon}\omega(d_i)$.
We thus obtain the claim.
\newline
\newline
\textbf{Step V. Prove \eqref{eq_toprove}.}
Note that 
\begin{equation}\label{eq_g}
    | \mu m_{1c}(\mu)m_{2c}(\mu)-d_i^{-2}| = | \mathcal{T}(\mu)-d_i^{-2}| = | \mathcal{T}(\mu)-\mathcal{T}(\theta(d_i))|\,.
\end{equation}
We decompose the problem into the following two cases {when $\mu \notin I$}.
\medskip

\textbf{Case (a):} Suppose $\theta(d_i) \in [\mu-c\kappa_{\mu}, \mu+c\kappa_{\mu}]$ for {a positive} constant $c$, which is chosen sufficiently small so that $x - \lambda_+ = \kappa_x \asymp \kappa_\mu$ for $x\in I_i$, {where $i\in \mathbb{O}_\epsilon$}. From {\eqref{eq_gcomplex}}, we have 
\begin{equation}
|\theta(d_i) - \lambda_+| \asymp  \Delta(d_i)^4\,.\label{thetadi-lambda+ control in proof thm 3.2}
\end{equation} 
Together with  $\mathcal{T}'(x) \asymp \kappa_x^{-1/2}$ from {\eqref{eq_iii1}}, for $x \in I_i$, we have 
\begin{equation}\label{eq_simT}
    |\mathcal{T}'(x)| \asymp |\mathcal{T}'(\theta(d_i)))| \asymp \Delta(d_i)^{-2}\,.
\end{equation}
From the claim in (\ref{eq_diff}), when $\mu \notin I_i$, it is either $\mu<\theta(d_i)-n^{\epsilon}\omega(d_i)$ or $\mu> \theta(d_i)+n^{\epsilon}\omega(d_i)$. When $\mu<\theta(d_i)-n^{\epsilon}\omega(d_i)$, we have
\begin{align}\label{eq_g2}
    &\mathcal{T}(\mu)-\mathcal{T}(\theta(d_i)) \nonumber\\
    > &\,  \mathcal{T}(\theta(d_i) - n^{\epsilon}\omega(d_i))-\mathcal{T}(\theta(d_i)) \gtrsim n^{\epsilon}\omega(d_i)\Delta(d_i)^{-2}  \nonumber\\ 
    =&\, n^{\epsilon}\phi_n+n^{-1/2+\epsilon}\Delta(d_i)^{-1} \asymp n^{\epsilon}\phi_n + n^{-1/2+\epsilon}(\theta(d_i) - \lambda_+)^{-1/4} \nonumber\\
     \gg&\, n^{\epsilon/2}(\phi_n+ n^{-1/2}\kappa_{\mu}^{-1/4})\,,
\end{align}
where we use the monotonicity of $\mathcal{T}$ in the first step, the mean value theorem and \eqref{eq_simT} in the second step, the definition of $\omega(d_i)$ in the third step, and {\eqref{thetadi-lambda+ control in proof thm 3.2}} in the fourth step.
Similarly, when $\mu \notin I_i$ such that $\mu> \theta(d_i)+n^{\epsilon}\omega(d_i)$, the same argument leads to
\begin{align}\label{eq_g3}
    \mathcal{T}(\theta(d_i)) - \mathcal{T}(\mu)  
    \gg n^{\epsilon/2}(\phi_n+ n^{-1/2}\kappa_{\mu}^{-1/4})\,.
\end{align}
By \eqref{eq_g2} and \eqref{eq_g3}, the relationship \eqref{eq_g} leads to (\ref{eq_toprove}) when $\theta(d_i) \in [\mu-c\kappa_{\mu}, \mu+c\kappa_{\mu}]$.  
\medskip

\textbf{Case (b):} Suppose $\theta(d_i) \notin [\mu - c\kappa_{\mu}, \mu+ c\kappa_{\mu}]$ for the same constant $c$ in the previous case. For $\theta(d_i)>\mu + c\kappa_{\mu}$, since $\mathcal{T}$ is monotonically decreasing on $(\lambda_+,+\infty)$, we have that 
\begin{align}\label{eq_g0}
    \mathcal{T}(\mu)-\mathcal{T}(\theta(d_i))  
     > \mathcal{T}(\mu) - \mathcal{T}(\mu+c\kappa_{\mu})
    \asymp \kappa_{\mu}^{1/2}\gg n^{\epsilon/2}(\phi_n+n^{-1/2}\kappa_{\mu}^{-1/4}).
\end{align}
where we use {\eqref{eq_iii1}} and the mean value theorem in the second step and $\kappa_{\mu} > n^{3\epsilon}(\phi_n^2+n^{-2/3})$ for $\mu \notin I_0$ by the definition of $I_0$ in the last step. 
By a similar argument, when $\theta(d_i)<\mu-c\kappa_{\mu}$, we have
\begin{align}\label{eq_g1}
    \mathcal{T}(\theta(d_i)) - \mathcal{T}(\mu) 
    \gg n^{\epsilon/2}(\phi_n+n^{-1/2}\kappa_{\mu}^{-1/4})\,.
\end{align}
By \eqref{eq_g0} and \eqref{eq_g1}, we obtained \eqref{eq_toprove} when $\theta(d_i) \notin [\mu - c\kappa_{\mu}, \mu+ c\kappa_{\mu}]$, and hence the proof of \eqref{eq_toprove}. 
\newline
\newline
\textbf{Step VI. Show that each $I_i$ contains exactly the right number of {strong} outliers.} 
{Based on $\text{Spec}(\widetilde{H}) \subset I$, to finalize the proof, we need that each $I_i$ contains exactly the right number of outliers so that the relationship between $\wt\lambda_i$ and $\theta(d_i)$ is established.} We apply the continuity argument used in \cite[Section 6.5]{knowles2013isotropic} to this end. 
Set
\[
\wt S(t)=S(t)+A^{1/2}XB^{1/2}\,, 
\]
where the singular values of $S(t)$ are $(d_1(t),\ldots,d_r(t))$ so that $d_i(t)$ is a continuous function on $[0,1]$ for $i=1,\ldots,r$ satisfying some conditions detailed below. 

We set the conditions for the {continuous} paths $d_i(t)$ for $t\in [0,1]$. Assume $(d_1(1),\ldots,d_r(1))=(d_1,\ldots,d_r)$ satisfy Assumption \ref{assum_main}(iv). Consider $d_1(0), \cdots, d_{r}(0)>0$ so that Assumption \ref{assum_main}(iv) is satisfied but {\em independent} of $n$. 
Assume further that $d_1(0)>d_2(0) > \cdots > d_{r_\epsilon}(0)$ and $d_i(0) - d_{i+1}(0) \gtrsim 1$ for $1\leq i \leq r_{\epsilon}-1$, such that 
\begin{equation}\label{eq_multione}
\theta(d_1(0))>\theta(d_2(0)) > \cdots > \theta(d_{r_\epsilon}(0))\,.
\end{equation}
We require that $d_i(t)$, $i=1,\ldots,r_\epsilon$, {possibly $n$-dependent,} satisfies the following properties:
\begin{itemize}
\item[(i)] For all $t\in [0,1]$, the number $r_\epsilon$ is unchanged. Moreover, we always have the following order of the outliers: 
\[
\theta(d_1(t)) \geq \theta(d_2(t)) \geq \cdots \ge \theta(d_{r_\epsilon}(t))>\lambda_+\,.
\]

\item[(ii)] For all $t\in [0,1]$, denote the permissible intervals as $I_i(t)$,
where 
\[
I_i(t):=[\theta(d_i(t))-n^{\epsilon}\omega(d_i(t)),\, \theta(d_i(t))+n^{\epsilon}\omega(d_i(t))]\,,
\] 
and set 
\[
I(t):= I_0\cup \bigcup_{1\leq i \leq r_\epsilon}I_i(t)\,.
\]
If $I_i(1)\cap I_j(1)=\emptyset$ for $1\leq i < j\leq r_\epsilon$, then {we set $d_i(t)$ so that} $I_i(t)\cap I_j(t)=\emptyset$ for all $t\in [0,1)$. The interval $I_0$ is unchanged along the path. 
\end{itemize}
It is straightforward that such paths $d_i(t)$ exist.
The corresponding continuous path of outliers is denoted as $\{\wt \lambda_i(t)\}_{i=1}^{r_{\epsilon}}$. Since $t \to \wt S(t) \wt S(t)^\top $ is continuous, we find that $\wt\lambda_i(t)$ is continuous in $t \in [0,1]$ for all $i$.
Denote 
\[
\mathbf{x}(t): =(x_1(t), x_2(t), \cdots, x_{r_\epsilon}(t))=(\theta(d_1(t)),\ldots,\theta(d_{r_\epsilon}(t))\,. 
\]
By the continuity of $\theta$, $\mathbf{x}(t)$ is a continuous path over $[0,1]$.

We first claim that when $n$ is sufficiently large, each $I_i(0)$, 
$1\le  i \le r_\epsilon$, contains only the $i$-th eigenvalue of $\wt S(0)\wt S(0)^\top $. To show this, fix any $1\le i \le r_\epsilon$ and choose a small positively oriented closed contour $\mathcal{C} \subset \mathbb{C}\backslash [0, \lambda_+]$ so that 
$\mathcal{C}$ only enclose $I_i(0)$ but no other intervals $I_j(0)$ for $j\neq i$. Define two functions, 
\begin{equation*}
f_0(z):=\det (\Ub^\top  G(z) \Ub+\Db(0)^{-1})\ \mbox{ and }\   g_0(z):=\det (\overline{\Pi}(z)+\Db(0)^{-1})\,,
\end{equation*}
where $\Db(0)$ is defined in the same way as \eqref{definition D and U} with $d_1(0),\ldots,d_r(0)$.
Both $f(z)$ and $g(z)$ are holomorphic on and inside $\mathcal{C}$ by definition. By Lemma \ref{lemma det diag formula} we have $g_0(z)=z^{-r} \prod_{i=1}^r ( z m_{1c}(z)  m_{2c}(z)-d_i(0)^{-2})=z^{-r} \prod_{i=1}^r (\mathcal{T}(z)-\mathcal{T}(\theta(d_i(0))))$. Thus, $g_0(z)$ has precisely one zero at $z = \theta(d_i(0))$ inside $\mathcal{C}.$ So, by a proper choice of the contour $\mathcal{C}$, we have 
\begin{align*}
\min_{z \in \mathcal{C}} |g_0(z)|&\,=\min_{z \in \mathcal{C}}|z|^{-r} \prod_{i=1}^r |\mathcal{T}(z)-\mathcal{T}(\theta(d_i(0)))| \gtrsim   \frac{n^\epsilon \omega(d_i(0))}{\prod_{i=1}^r|\theta(d_i(0))-\lambda_+|^{1/2}}\\
&\,\gtrsim n^\epsilon \omega(d_i(0))\gtrsim n^{3\epsilon}\phi_n(\phi_n+n^{-1/3})^2+n^{2\epsilon}n^{-1/2}(\phi_n+n^{-1/3})\,,
\end{align*}
where first bound comes from the fact that $\lambda_+\leq |z|\leq \tau^{-1}$, the control of $\mathcal{T}(z)-\mathcal{T}(\theta(d_j(0)))$ by \eqref{Tz1-Tz2 bound} for any $1\leq j\leq r$, and the control of $|z-\theta(d_j(0))|\geq |\theta(d_i(0))-\theta(d_j(0))|/2\gtrsim 1$ for any $j\neq i$ by \eqref{eq_thetadiff} and the assumption $d_i(0)-d_{i+1}(0)\gtrsim 1$, the second bound comes from \eqref{eq_s36_1} since $|\theta(d_i(0))-\lambda_+|^{1/2}\asymp |d_i(0)-\alpha|^{1/2}\lesssim \tau^{-1}$, and the last bound comes from the definition of $\omega(d_i(0))$ and the lower bound assumption of $d_i(0)$.
Also, by \eqref{eq_cond0}, we have for any $z\in \mathcal{C}$ that
\begin{equation*}
|f_0(z)-g_0(z)| \lesssim  n^{\epsilon/2}(\phi_n + n^{-1/2}(\kappa_z+\eta)^{-1/4})\,,
\end{equation*}
which is clearly dominated by $\min_{z\in \mathcal{C}}|g_0(z)|$.
Hence, the claim follows from Rouch{\' e}'s theorem. 
In other words, when $t=0$, we have 
$\wt\lambda_i(0) \in I_i(0)$ for $1\leq i \le r_\epsilon$ and $\wt\lambda_i(0) \in I_0$ for $i>  r_\epsilon$.

By the same approach shown in \textbf{Step III} and \textbf{IV}, we can also show that all the eigenvalues $\{\wt\lambda_i(t)\}\subset I(t)$ for all $t \in [0,1]$.

With the above preparation, we can now show that {each $I_i=I_i(1)$ contains exactly the right number of outliers.} We prove this in two cases.
\medskip

\textbf{Case (a):} If $I_1(1),\ldots,I_{r_{\epsilon}}(1)$ are disjoint, then $I_1(t),\ldots,I_{r_{\epsilon}}(t)$ are disjoint for all $t \in [0,1)$ by property (ii). {Together with the results that each $I_i(0)$, $1\le  i \le r_\epsilon$, contains only the $i$-th eigenvalue of $\wt S(0)\wt S(0)^\top$} and the continuity of $\wt\lambda_i(t),$ we conclude that 
$\wt\lambda_i(t) \in I_i(t)$, $1\leq i \leq r_{\epsilon}$ 
for all $t \in [0,1]$, and hence the claim. 
\medskip

\textbf{Case (b):} If some of the intervals are not disjoint at $t=1$, let $\mathcal{B}$ denote the finest partition of $\{1,\cdots, r_{\epsilon}\}$ such that $i$ and $j$ belong to the same block if $I_i (1) \cap I_j(1)\neq\emptyset$. {This is the case when two outliers ``cannot be distinguished''.} Denote by $B_i$ the block of $\mathcal{B}$ that contains  $i$. Note that elements of $B_i$ are sequences of consecutive integers. We now pick any $1\leq i \leq r_{\epsilon}$ so that $|B_i|>1$, and let $j \in B_i$ such that it is not the smallest index in $B_i$. Note that
\begin{equation}\label{deltajj-1}  
x_{j-1}(1) - x_{j}(1) \leq 2n^\epsilon \omega(d_j)
\end{equation}
by assumption. Since the number of elements in $B_i$ is bounded by $r_\epsilon$, we obtain that
\begin{equation}\label{diamBi}
\Big| \bigcup_{j \in B_i} I_j(1) \Big| \leq r_\epsilon n^{ \epsilon} \omega(d_{\min\{j:j\in B_i\} }) = r_\epsilon  n^{ \epsilon} \omega(d_i)\,,
\end{equation}
where $\left| \bigcup_{j \in B_i} I_j(1) \right| $ stands for the length of $\bigcup_{j \in B_i} I_j(1)$. Thus, by the continuity construction, we have 
\begin{equation}
    |\wt\lambda_j(1)- \theta(d_j)| \leq r_\epsilon n^\epsilon \omega(d_i), \quad j \in B_i\,,
\end{equation}
and hence the claim.
\newline
\newline
\textbf{Step VII. Locations of {weak outlier and} non-outlier signals.} First, we fix a configuration $\mathbf x(0)$ satisfying the same setup in \textbf{Step VI}. In this setup, when $t=0$, {for $i=r_\epsilon+1,\ldots,r$, we set}
\begin{equation}\label{eq_upper}
\wt\lambda_i(0) \in I_0\ \ \text{and}\ \  \wt\lambda_i (0) \geq \lambda_+ - n^\epsilon (n^{-1/3}\phi_n^2 + n^{-2/3})\,,
\end{equation} 
and hence 
\begin{equation*}
|\widetilde{\lambda}_i(0)-\lambda_+| \leq {n^{\epsilon}}(\phi_n^2+n^{-2/3})\,.
\end{equation*}
Next we employ a similar continuity argument as that in \textbf{Step VI}. 
For $t \in [0,1],$ by \eqref{lem_rigidty} and Lemma \ref{lem_weyl}, for any $t\in (0,1]$, we always have that
\begin{equation}\label{iterlacing_t}
\wt\lambda_i(t) \geq \lambda_+-n^\epsilon (n^{-1/3}\phi_n^2 + n^{-2/3}), \quad i \notin \mathbb{O_\epsilon}.
\end{equation}
If $I_0$ is disjoint from the other $I_j$'s, then by the continuity of $\wt\lambda_i(t)$ and $\text{Spec}(\wt S\wt S^\top ) \subset I$, we can conclude that $\wt\lambda_i(t) \in I_0(t)$ for all $t \in [0,1]$. Otherwise, we again consider the partition $\mathcal B$ as in \textbf{Step VI}, and let $B_0$ be the block of $\mathcal B$ that contains $i$. With the same arguments, we can prove that 
$$
I_0(1) \cup \left( \bigcup_{j\in B_0} I_{j}(1)\right)\subset [0, \lambda_+ + C{n^{\epsilon}}(\phi_n^2 +n^{-2/3 })]
$$
{for some $C>0$.}
Then using (\ref{eq_upper}), \eqref{iterlacing_t} and the continuity of the eigenvalues along the path, we obtain that for all $r_{\epsilon}< i \leq r$,
\begin{equation}\nonumber
\big|\wt\lambda_i(t) - \lambda_+\big| \le C{n^{\epsilon}}(\phi_n^2 + n^{-2/3})
\end{equation}
for all $t \in [0,1]$. 
{We thus finish the proof.}

\section{Proof of Theorem \ref{thm_eigenvaluesticking}}\label{sec_pf_sticking}
{Similar to the beginning of the proof of Theorem \ref{thm_value},}
by Theorem \ref{thm_value}, Lemma \ref{LEM_SMALL}, Lemma \ref{lem_rigidty}, Theorem \ref{lem_locallaw} and Lemma \ref{delocal_rigidity}, 
for any small constant $\varepsilon>0$, we can choose a high-probability event $\Xi_\varepsilon$ in which \eqref{aniso_lawev0}-\eqref{eq_rigi0} and the following estimates hold:
\begin{equation}\label{eq_stickingrigi}
|\wt\lambda_{i}-\lambda_+| \leq n^{\varepsilon/2}(\phi_n^2 + n^{-2/3}), \quad \text{ for } \ r^+ +1 \le i \le \varpi, 
\end{equation}
for some fixed large integer $\varpi\ge r$ and 
\begin{equation}\label{eq_stickingrigi_st}
\begin{split}
|\lambda_i-\gamma_i| \leq&\,  n^{-2/3+\varepsilon/2}\big( i^{-1/3}+\mathbf 1(i \le n^{1/4} \phi_n^{3/2})\big)  + n^{\varepsilon/2} \eta_l(\gamma_i)\\
&+ n^{2/3+\varepsilon/2} i^{-2/3} \eta_l^2(\gamma_i)
\end{split}
\end{equation}
for $ i \leq \tau p$, {where $\tau>0$ is a small constant.} 
For any $i$, define a set 
\begin{equation}\label{eq_omega}
\begin{split}
\Omega_i :=
\Big\{x \in &\,[\lambda_{i-r-1},\, \lambda_+ + c_0n^{2\varepsilon} (\phi_n^2 +n^{-2/3})]\Big| \\
 &\text{dist} \big (x, \,\text{Spec}({ZZ^\top }) \big )>n^{-1+\varepsilon} \alpha_+^{-1} + n^{\varepsilon}\eta_l(x) \Big\},
\end{split}
\end{equation}
where {$\eta_l(x)$ is defined in \eqref{definition eta_l(E)},} $\lambda_i := \infty $ if $i<1$, $\lambda_i = 0$ if $i >p$, $\text{Spec}(ZZ^\top)$ stands for the spectrum of $ZZ^\top$ and $c_0>0$ is a small constant. Note that {$\Omega_i=\emptyset$ for $i=1,\ldots,r+1$, and} by \eqref{eq_stickingrigi_st}, we have
\begin{equation}\label{eq_st3}
|x-\lambda_+|>n^{-1+\varepsilon} \alpha_+^{-1} + n^{\varepsilon}\eta_l(x)
\end{equation} 
for all $x \in \Omega_i$.
We then have the following Lemm.
\begin{lemma}\label{prop_eigensticking} 
For $\alpha_+ \geq n^\varepsilon (\phi_n + n^{-1/3})$ and $i \leq n^{1-2\varepsilon} \alpha_+^3,$ there exists a constant $c_0>0$ such that the set $\Omega_i$ contains no eigenvalue of $\wt S \wt S^\top .$
\end{lemma}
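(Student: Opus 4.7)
\medskip
\noindent\textbf{Proof proposal for Lemma \ref{prop_eigensticking}.}

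The strategy is to show that for any real $x \in \Omega_i$, the master equation of Lemma \ref{lem_mastereq} fails, i.e.\
\[
\det\bigl(\mathbf{U}^\top G(x)\mathbf{U}+\mathbf{D}^{-1}\bigr)\neq 0,
\]
so $x\notin\mathrm{Spec}(\wt S\wt S^\top)$. Since $x$ is real but avoids $\mathrm{Spec}(ZZ^\top)$ by at least $n^{-1+\varepsilon}\alpha_+^{-1}+n^\varepsilon\eta_l(x)$ thanks to the definition of $\Omega_i$, $G(x)$ is well defined and we may legitimately work on the real line (equivalently, by passing to $z=x+i\eta$ for $\eta=n^{\varepsilon/2}\eta_l(x)$, which lies in the domain $\wt S_0(\varsigma_1,\varsigma_2,\varepsilon')$ for an appropriate $\varepsilon'$ and then sending $\eta\downarrow 0$).

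First, I compute the ``limiting'' determinant via Lemma \ref{lemma det diag formula},
\[
\det\bigl(\overline\Pi(x)+\mathbf D^{-1}\bigr)=x^{-r}\prod_{j=1}^{r}\bigl(\mathcal T(x)-d_j^{-2}\bigr)
=x^{-r}\prod_{j=1}^{r}\bigl(\mathcal T(x)-\mathcal T(\theta(d_j))\bigr),
\]
and bound it from below. For $x\in\Omega_i$ the condition $x\le\lambda_++c_0 n^{2\varepsilon}(\phi_n^2+n^{-2/3})$ combined with the hypothesis $\alpha_+\ge n^{\varepsilon}(\phi_n+n^{-1/3})$ and \eqref{eq_s36_1} force $\theta(d_j)-\lambda_+\asymp\Delta(d_j)^4\ge\alpha_+^2\gg c_0 n^{2\varepsilon}(\phi_n^2+n^{-2/3})$ once $c_0$ is small enough, so $x$ stays at distance $\gtrsim\Delta(d_j)^2$ from $\theta(d_j)$. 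Combined with \eqref{Tz1-Tz2 bound} this yields $|\mathcal T(x)-\mathcal T(\theta(d_j))|\gtrsim \Delta(d_j)^2/\Delta(d_j)\asymp\Delta(d_j)\gtrsim\alpha_+^{1/2}$, so
\[
\bigl|\det(\overline\Pi(x)+\mathbf D^{-1})\bigr|\gtrsim \prod_{j=1}^{r}\Delta(d_j)\gtrsim \alpha_+^{r/2}.
\]

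Next I control the difference $\Omega(x):=\mathbf U^\top G(x)\mathbf U-\overline\Pi(x)$. Combining the anisotropic local law of Theorem \ref{LEM_SMALL} (and of Theorem \ref{lem_locallaw} in the outlier region) with Lemma \ref{lem_pipi}, evaluated at $z=x+i\eta$ with $\eta\asymp n^{\varepsilon/2}\eta_l(x)$ and then taken to the real line using the distance lower bound from $\mathrm{Spec}(ZZ^\top)$, I get
\[
\|\Omega(x)\|\prec \phi_n+\Psi(x)\lesssim \phi_n+\sqrt{\tfrac{\mathrm{Im}\,\mathsf M_{2c}(x+i\eta)}{n\eta}}+\tfrac{1}{n\eta},
\]
and a careful bookkeeping using \eqref{definition eta_l(E)} and the distance gap $n^{-1+\varepsilon}\alpha_+^{-1}+n^\varepsilon\eta_l(x)$ translates this into the quantitative bound $\|\Omega(x)\|\prec n^{-\varepsilon/2}\alpha_+^{1/2}$, i.e.\ a factor $n^{-\varepsilon/2}$ smaller than $\alpha_+^{1/2}\asymp\Delta(d_j)$. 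Feeding this into the perturbation inequality of Lemma \ref{eq_lemmadeter} (with $A=\overline\Pi(x)+\mathbf D^{-1}$ and $E=\Omega(x)$, noting that the singular values of $A$ are bounded above and its smallest singular value is $\gtrsim\alpha_+^{1/2}$ by the product formula above) gives
\[
\bigl|\det(\mathbf U^\top G(x)\mathbf U+\mathbf D^{-1})-\det(\overline\Pi(x)+\mathbf D^{-1})\bigr|
\prec n^{-\varepsilon/2}\alpha_+^{r/2}\ll\alpha_+^{r/2},
\]
so the master determinant is bounded away from zero, yielding $x\notin\mathrm{Spec}(\wt S\wt S^\top)$.

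The main obstacle is the quantitative translation of the spectral gap $n^{-1+\varepsilon}\alpha_+^{-1}+n^\varepsilon\eta_l(x)$ built into the definition of $\Omega_i$ into the effective bound $\|\Omega(x)\|\prec n^{-\varepsilon/2}\alpha_+^{1/2}$: one must carefully choose $\eta$ as a function of $x$, use the averaged and anisotropic local laws simultaneously (including their refinement near the edge via \eqref{eq_avlocalaw_2}), and exploit $\mathrm{Im}\,\mathsf M_{2c}\asymp\sqrt{\kappa_x+\eta}$ to compare with $\alpha_+$. The range restriction $i\le n^{1-2\varepsilon}\alpha_+^3$ is precisely what is needed so that $n^{-1+\varepsilon}\alpha_+^{-1}$ dominates the natural fluctuation scale of $\lambda_i$ given by Lemma \ref{lem_rigidty}, keeping $x$ safely outside $\mathrm{Spec}(ZZ^\top)$ for all $x\in\Omega_i$.
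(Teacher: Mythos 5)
Your overall strategy is the same as the paper's: rule out eigenvalues in $\Omega_i$ through the master equation of Lemma \ref{lem_mastereq}, bounding the deterministic determinant from below and the fluctuation $\Omega(x)=\mathbf U^\top G(x)\mathbf U-\overline\Pi(x)$ from above via the local laws at $z_x=x+\ri\eta$ (the paper uses $\eta_x=n^{-1+\varepsilon}\alpha_+^{-1}+n^\varepsilon\eta_l(x)$ and controls $G(x)-G(z_x)$ by $\operatorname{Im}G_{\ub\ub}(z_x)+\operatorname{Im}G_{\vb\vb}(z_x)$ through the spectral gap, rather than "sending $\eta\downarrow0$"). However, the quantitative core of your argument has a genuine gap. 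Your lower bound on the deterministic factors is internally inconsistent: you correctly recall $\theta(d_j)-\lambda_+\asymp\Delta(d_j)^4$ from \eqref{eq_s36_1}, but then assert the distance $|x-\theta(d_j)|\gtrsim\Delta(d_j)^2$ and use $\Delta(d_j)$ as the denominator in \eqref{Tz1-Tz2 bound}. The consistent computation gives $|x-\theta(d_j)|\asymp\kappa_x+\Delta(d_j)^4$ and denominator $\asymp\Delta(d_j)^2$, hence $|\mathcal T(x)-\mathcal T(\theta(d_j))|\gtrsim\Delta(d_j)^2\ge\alpha_+$, not $\gtrsim\Delta(d_j)\asymp\alpha_+^{1/2}$. (Moreover \eqref{Tz1-Tz2 bound} is only stated on $\mathbf D_1(\tau_1,\varsigma)$, i.e.\ to the right of $\lambda_+$, whereas most of $\Omega_i$ lies below $\lambda_+$; the paper instead uses the square-root expansions \eqref{sqroot4} on both sides of the edge and writes $|\mathcal T(x)-d_j^{-2}|\ge|\mathcal T(\lambda_+)-d_j^{-2}|-|\mathcal T(x)-\mathcal T(\lambda_+)|$ with $|\mathcal T(\lambda_+)-d_j^{-2}|\asymp|d_j-\alpha|\ge\alpha_+$.) With the factor size corrected to $\alpha_+$, your claimed fluctuation bound $\|\Omega(x)\|\prec n^{-\varepsilon/2}\alpha_+^{1/2}$ no longer closes the argument: e.g.\ for $\alpha_+\asymp n^{-1/3+\varepsilon}$ one has $n^{-\varepsilon/2}\alpha_+^{1/2}\asymp n^{-1/6}\gg\alpha_+$. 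What must be shown is $n^{\varepsilon/2}\bigl(\operatorname{Im}m_{2c}(z_x)+\tfrac1{n\eta_x}+\phi_n\bigr)\ll\alpha_+$.

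That sharper estimate is exactly where the hypothesis $i\le n^{1-2\varepsilon}\alpha_+^3$ enters, and your proposal misattributes its role: staying outside $\mathrm{Spec}(ZZ^\top)$ is built into the definition of $\Omega_i$ and has nothing to do with the range of $i$. The constraint on $i$ is used, via the rigidity bound \eqref{eq_stickingrigi_st}, $\gamma_i\asymp(i/n)^{2/3}$, and the lower endpoint $x\ge\lambda_{i-r-1}$ of $\Omega_i$, to prove $\lambda_+-x\lesssim n^{-4\varepsilon/3}\alpha_+^2$ uniformly on $\Omega_i$. This is the step that yields $\sqrt{\kappa_x}\lesssim n^{-2\varepsilon/3}\alpha_+$, hence $\operatorname{Im}m_{2c}(z_x)\ll\alpha_+$ and $|m_{1c}(x)-m_{1c}(\lambda_+)|\vee|m_{2c}(x)-m_{2c}(\lambda_+)|\ll\alpha_+$ (the latter also being where the smallness of $c_0$ is used, through the portion $x>\lambda_+$ of $\Omega_i$), while the gap $\eta_x\ge n^{-1+\varepsilon}\alpha_+^{-1}$ gives $\tfrac1{n\eta_x}\le n^{-\varepsilon}\alpha_+$ and $\alpha_+\ge n^\varepsilon(\phi_n+n^{-1/3})$ gives $\phi_n\le n^{-\varepsilon}\alpha_+$. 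Your write-up never carries out this rigidity-based control of $\kappa_x$, and every error term is compared against $\alpha_+^{1/2}$ instead of $\alpha_+$; as written the proof therefore does not go through, although the framework you set up is the right one and becomes the paper's proof once these estimates are supplied.
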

With Lemma \ref{prop_eigensticking}, the rest of the proof is exactly the same as the proof for \cite[Lemma S.4.5]{DY2019} by letting $s=0$. Thus we omit it here. Below is the proof of Lemma \ref{prop_eigensticking}.

\begin{proof}(Proof of Lemma \ref{prop_eigensticking})
Define
\begin{equation}\label{etax2}
\eta_x:=n^{-1+ \varepsilon} \alpha_+^{-1} + n^{\varepsilon}\eta_l(x) \ \mbox{and} \  z_x:= x+i \eta_x \,,
\end{equation}
where $x \in \Omega_i$. Denote $G_{\mathbf a \mathbf b}(z) = \mathbf a' G(z)\mathbf b$ for $z\in \mathbb{C}^+$ and $\mathbf a, \mathbf b \in \mathbb C^{p+n}$.
Recall \eqref{main_representation} and set 
$\ub = \begin{pmatrix}
     \ub_1\\0
 \end{pmatrix}$ 
 and 
$\vb = \begin{pmatrix}
     0\\\vb_2
 \end{pmatrix}$,  
we have
\begin{align*}
& |G_{\ub \vb}(z_x)-G_{\ub \vb}(x)|   \\
\lesssim &\,\eta_x|G_{\ub \vb}(z_x)| + \sum_{k = 1}^{p\wedge n} \sqrt{\lambda_k}\left|\langle \ub,{ \xi}_k\rangle \langle { \zeta}_k,\vb\rangle\right| \left|\frac{\eta_x}{(\lambda_k-x-i\eta_x)(\lambda_k - x)}\right|\\
\lesssim &\,\sum_{k} \left(\left|\langle \ub,{ \xi}_k\rangle \right|^2+\left|\langle { \zeta}_k,\vb\rangle\right|^2\right) \frac{\eta_x}{(\lambda_k-x)^2+(\eta_x)^2} \\
= &\,\operatorname{Im} G_{\ub \ub}(z_x)+\operatorname{Im} G_{\vb \vb}(z_x),
\end{align*}
where in the second step we use $|x-\lambda_k|\ge \eta_x$ for $x\in \Omega_i$. 
We have similar bounds for $G_{\ub\ub}(\cdot)$, $G_{\vb\ub}(\cdot)$ and $G_{\vb\vb}(\cdot)$. 
Now together with (\ref{aniso_lawev0}), we obtain that 
\begin{align}
& \mathcal{D}^{-1}+\mathbf{U}^* G(x) \mathbf{U}  \label{bound_im}\\
=\,& \mathcal{D}^{-1}+ \mathbf{U}^* G(z_x) \mathbf{U}+\mathbf{U}^*(G(x)-G(z_x)) \mathbf{U}\nonumber\\
=\,& \mathcal{D}^{-1}+ \mathbf{U}^* \overline\Pi(z_x) \mathbf{U}+ \OO\left( n^{\varepsilon/2}\Psi(z_x)+ n^{\varepsilon/2}\phi_n + \im m_{2c}(z_x) \right) \nonumber\\
=\,& \mathcal{D}^{-1}+\mathbf{U}^* \overline\Pi(x) \mathbf{U}+ \OO\left( n^{\varepsilon/2} \im m_{2c}(z_x) +\frac{n^{\varepsilon/2}}{n\eta_x}+ n^{\varepsilon/2}\phi_n\right) ,  \nonumber
\end{align}
where in the last step we used (\ref{eq_estimm}) and 
$$\Psi(z_x) \lesssim \im m_{2c}(z_x) + ({n\eta_x})^{-1}.$$
Therefore, by Lemma \ref{lem_mastereq} and Lemma \ref{eq_lemmadeter}, we conclude that for $x \in \Omega_i$,  $x$ is not an eigenvalue of $\wt Z \wt Z^\top $ if 
\begin{align} \label{sticking_dj}
\min_{1\leq j\leq r}| \mu m_{1c}(x)m_{2c}(x)-d_j^{-2}|
\gg n^{\varepsilon/2} \im m_{2c}(z_x)  +\frac{n^{\varepsilon/2}}{n\eta_x}+ n^{\varepsilon/2}\phi_n\,.
\end{align}
Since $i \leq n^{1-2\varepsilon} \alpha_+^3$, by \eqref{eq_stickingrigi_st} we have
\begin{equation}\label{kappax2}
\begin{split}
 -c_0n^{2\varepsilon}&(\phi_n^2 +n^{-2/3}) \le \lambda_+-x \\
 &\lesssim \left(\frac{i}{n}\right)^{2/3} + n^{-2/3+\varepsilon/2} + n^{\varepsilon/2}\eta_l(\gamma_i) + \frac{n^{2/3+\varepsilon/2}}{ i^{2/3}} \eta_l^2(\gamma_i)\lesssim   n^{-4\varepsilon/3}\alpha_+^2 
 \end{split}
 \end{equation}
 for $x\in \Omega_i$, where we also used $\gamma_i\asymp (i/n)^{2/3}$ and $\alpha_+\ge n^\varepsilon(\phi_n +n^{-1/3})$. Then by (\ref{sqroot4}), we have
$$|m_{2c}(x)-m_{2c}(\lambda_+)| \le C n^{-2\varepsilon/3}{\alpha_+} 
\ll \alpha_+$$
for $x\in \Omega_i \cap \{x: x\le \lambda_+\}$
and 
$$ |m_{2c}(x)-m_{2c}(\lambda_+)| \le C\sqrt{c_0}n^{ \varepsilon}\left(n^{-1/3 }+\phi_n\right) \le C\sqrt{c_0} \alpha_+
$$
for $x\in \Omega_i \cap \{x: x> \lambda_+\}$, where
the constant $C>0$ is independent of $c_0$. Using similar approach, we can derive the same bounds for $m_{1c}(x)$. Plugging the above two estimates into \eqref{sticking_dj} and using $| \lambda_+ m_{1c}(\lambda_+)m_{2c}(\lambda_+)-d_j^{-2}| \asymp |d_j -\alpha|\ge \alpha_+$ \eqref{eq_st3} and the triangle inequality, we obtain that
\begin{equation} \nonumber 
\min_{1\leq j\leq r}| \mu m_{1c}(x)m_{2c}(x)-d_j^{-2}|\gtrsim \alpha_+ 
\end{equation}
as long as $c_0$ is sufficiently small. On the other hand, using \eqref{eq_estimm}, \eqref{etax2} and \eqref{kappax2}, we can verify that for $x\in \Omega_i$ and $x\le \lambda_+$,
$$
n^{\varepsilon/2} \left(\im m_{2c}(z_x) +\frac{1}{n\eta_x} +  \phi_n\right) 
\lesssim n^{\varepsilon/2}\left(\sqrt{\kappa_x+\eta_x} +\frac{1}{n\eta_x} + \phi_n\right)\ll \alpha_+\,,
$$ 
and for $x\in \Omega_i$ and $x> \lambda_+$,
$$
n^{\varepsilon/2} \left(\im m_{2c}(z_x) +\frac{1}{n\eta_x}+  \phi_n\right) \lesssim n^{\varepsilon/2} \left(\frac{\eta_x}{\sqrt{\kappa_x+\eta_x}} +\frac{1}{n\eta_x} +  \phi_n\right) \ll \alpha_+ .$$
This proves \eqref{sticking_dj}, which further concludes the proof of Lemma \ref{prop_eigensticking}.
\end{proof}

\section{Proof of Theorem \ref{thm_noneve}}
\label{sec_pf_evout}

The proof is composed of several steps.
\newline
\newline
\textbf{Step I. Prepare an event.}
Let $\epsilon>0$ be a small positive constant and take $0<\omega<{2/3}$. 
By {Theorems \ref{thm_value} and \ref{lem_locallaw}, and  Lemmas \ref{lem_rigidty}  and \ref{lem_pipi}}, we find that there exists some high probability event $\Xi$ 
such that the followings hold when conditional on $\Xi$. 
\begin{enumerate}
\item [(i)] Recall the definition of $S_{out}$ in \eqref{eq_sout}. {Fix $\varsigma_1>0$ and let $\varsigma_2=(\lambda_+\omega)^{-1}$, 
such that $\wt \lambda_i \in  S(\varsigma_1 , (\lambda_+\omega)^{-1})$ for $1\leq i \leq cn$
and \eqref{eq_estimm} applies for all $z \in S(\varsigma_1, (\lambda_+\omega)^{-1})$. Also,
\begin{align}
   &S_{out}((\lambda_+\omega)^{-1},\omega)\nonumber \\
   =\,& \{E+i\eta: \lambda_+ + n^{\omega}(n^{-2/3}+n^{-1/3}\phi_n^2)\leq E \leq \omega^{-1}, \eta \in [0,1]\}\,.
\end{align}}
{
For all $z \in \wt S_0(\varsigma_1,(\lambda_+\omega)^{-1},\omega)$, from Lemmas \ref{LEM_SMALL} and \ref{lem_pipi}, and since $r$ is fixed, we have that
\begin{equation}\label{aniso_lawev0_2}
    \|\Omega(z)\| = \norm{\mathbf{U}^\top  G(z)\mathbf{U}-\overline\Pi(z)} \leq  n^{\epsilon/2}(\phi_n + \Psi(z)). 
\end{equation}
} 
Alos, for all $z \in S_{out}((\lambda_+\omega)^{-1},\omega)$, {by Theorem \ref{lem_locallaw} and Lemma \ref{lem_pipi}, we have} 
 \begin{equation}\label{eq_2.5localaw}
     \|\Omega(z)\| =\norm{\mathbf{U}^\top  G(z)\mathbf{U}-\overline\Pi(z)}\leq n^\epsilon[\phi_n + n^{-1/2}(\kappa_z+\eta)^{-1/4}]\,.
 \end{equation}

\item [(ii)] From Theorem \ref{thm_value} and the rigidity of eigenvalues in Lemma \ref{lem_rigidty}, we have
\begin{equation}\label{rigid_outXi0}
|\wt\lambda_i-\theta(d_i)| \leq n^\epsilon(n^{-1/2}\Delta(d_i)+ \phi_n \Delta^2(d_i)),\quad 1\leq i\leq r^+,
\end{equation}
\begin{equation}\label{eq_stickingrigi}
|\wt\lambda_{i}-\lambda_+| \leq n^{\epsilon/2}\left(\phi_n^2 + n^{-2/3}\right), \quad  r^+ +1 \le i \le \varpi 
\end{equation}
for some fixed large integer $\varpi\ge r$, and
\begin{align}\label{eq_stickingrigi2}
&|\lambda_i-\gamma_i|\\
 \leq \,& n^{\epsilon/2}\left[n^{-2/3}\left( i^{-1/3}+\mathbbm 1(i \le n^{1/4} \phi_n^{3/2})\right)  +  \eta_l(\gamma_i)+ n^{2/3} i^{-2/3} \eta_l^2(\gamma_i)\right]\nonumber
\end{align}
for $ 1\leq i \leq \varpi$. 
\end{enumerate}
The rest of the proof is restricted to the event $\Xi$.
\newline
\newline
\textbf{Step II. Prepare some quantities.}
We prepare some notations that will be used in the following proofs.
For $i = 1, \cdots, r$, let $\ub_i^e=(\ub_i^\top , \mathbf{0})^\top $ be the embedding of $\ub_i$ in $\mathbb{R}^{p+n}.$  By (\ref{green2}), it is easy to see that 
\begin{equation}\label{eq_gG NEW}
\ub_i^\top  \ctG_1(z) \ub_j=(\ub_i^e)^\top  \widetilde{G}(z) \ub_j^e=e_i^\top  \Ub^\top \widetilde{G}(z)\Ub  e_j\,, 
\end{equation}
where $e_j\in \mathbb{R}^{2r}$ is a unit vector with the {$j$-th} entry $1$. When $1\leq i,j\leq r$, by (\ref{defn_greenrep2}), we have
\[
\ub_i^\top  \ctG_1(z) \ub_j=e_i^\top [\Db^{-1}-\Db^{-1}( \Db^{-1}+\Ub^\top  G(z) \Ub )^{-1}\Db^{-1}]e_j\,.
\]
To study $( \Db^{-1}+\Ub^\top  G(z) \Ub )^{-1}$, by the resolvent expansion from \eqref{eq_resexp}, we will encounter the term $(\mathbf{D}^{-1}+\overline{\Pi}(z))^{-1}$ {and $\Omega(z)$}.
By an elementary computation based on the Schur complement, we have 
\begin{equation} \label{biginverse}
[(\mathbf{D}^{-1}+\overline{\Pi}(z))^{-1}]_{i j}=
\begin{cases}
\delta_{ij}\frac{zm_{2c}(z)}{z m_{1c}(z)m_{2c}(z)-d_i^{-2}}, & 1 \leq i, j \leq r; \\
\delta_{ij}\frac{zm_{1c}(z)}{z m_{1c}(z)m_{2c}(z)-d_{i-r}^{-2}}, & r \leq i, j \leq 2r; \\
-\delta_{\bar{i} j} \frac{z^{1/2}d_{i}^{-1}}{zm_{1c}(z)m_{2c}(z)-d_{i}^{-2}},& 1 \leq i \leq r, \ r \leq j \leq 2r ;\\
-\delta_{i \bar{j}} \frac{z^{1/2}d_j^{-1}}{zm_{1c}(z)m_{2c}(z)-d_j^{-2}} , & r \leq i \leq 2r, \ 1 \leq j \leq r. 
\end{cases}
\end{equation}

{For the following, we will show the approximation of $ \|\Omega(z_i)\|$ for all $ \wt\lambda_i + i \eta \in S(\varsigma_1,(\lambda_+\omega)^{-1})$ and $1\leq i \leq cn$.} We claim that for any {$1\leq i\leq cn$}, the equation
\begin{equation}\label{eq_defneta}
 \eta \operatorname{Im} m_{1c}(\wt\lambda_i + i \eta)=n^{2\epsilon}\phi_n \eta + n^{-1+6\epsilon}
\end{equation}
over $\eta\in [0,1]$ has a unique solution.
Indeed, note that $\eta\operatorname{Im} m_{1c}(\wt\lambda_i + i\eta)$ is a strictly monotonically increasing function of $\eta$. Since {$\wt\lambda_i + i \eta\in S(\varsigma_1,(\lambda_+ \omega)^{-1})$ with a proper selection of $\varsigma_1$ and $\omega$}, the behavior of $\eta\operatorname{Im} m_{1c}(\wt\lambda_i + i\eta)$ is detailed in \eqref{eq_estimm}. First, consider the case $1\leq i\leq r_+$. Since $\wt\lambda_i$ is of order $1$ in this case, $\eta\operatorname{Im} m_{1c}(\wt\lambda_i + i\eta)$ grows quadratically when $\eta\leq n^{3\epsilon}\phi_n$. Thus we find a unique solution $\widehat {\eta}_i$ over the region $0\leq \eta\leq n^{3\epsilon}\phi_n$. When $\eta>n^{3\epsilon}\phi_n$, we cannot find a solution since $\eta^2$ dominates $n^{2\epsilon}\phi_n \eta$.
Second, consider the case $r_++1\leq i\leq \varpi$. In this case, we should consider the case when ${\wt \lambda}_i$ is less than or greater than $\lambda_+$. When ${\wt \lambda}_i\geq\lambda_+$, $\eta\operatorname{Im} m_{1c}(\wt\lambda_i + i\eta)$ grows approximately like $\eta^{3/2}$ when $\eta>n^\epsilon \phi_n$, and we can find a unique solution $\widehat {\eta}_i$ of order $n^{4\epsilon}\phi_n^2$. When $0\leq \eta<n^\epsilon \phi_n$, we cannot find a solution since $n^{2\epsilon}\phi_n \eta$ dominates $\eta\operatorname{Im} m_{1c}(\wt\lambda_i + i\eta)$ in this region. The case ${\wt \lambda}_i<\lambda_+$ can be argued in the same way. We thus obtain the claim.
More precisely,
with \eqref{eq_estimm}, one can check that 
\begin{align}\label{eq_etaestimate}
\widehat\eta_i \asymp \begin{cases}
  n^{4 \epsilon}\left( \phi_n^2 + n^{-2/3}\right) \ &\text{if } \ |\wt\lambda_i - \lambda_+|\le n^{4 \epsilon}\left( \phi_n^2 + n^{-2/3}\right) \\
 n^{2\epsilon}\phi_n \sqrt{\kappa_{\wt\lambda_i}}+ n^{-1/2+3 \epsilon} \kappa_{\wt\lambda_i}^{1/4}  \ &\text{if }\ \wt\lambda_i \geq  \lambda_++n^{4 \epsilon}\left( \phi_n^2 + n^{-2/3}\right)\\
 n^{-1+6 \epsilon}\kappa_{\wt\lambda_i}^{-1/2}    \ &\text{if }\ \wt\lambda_i \leq \lambda_+-n^{4 \epsilon}\left( \phi_n^2 + n^{-2/3}\right)
\end{cases}.
\end{align}

Based on the above claim, in the discussion afterwards, { we fix $z_i=\wt\lambda_i+\ri \eta_i\in  S(\varsigma_1,(\lambda_+\omega)^{-1})$ with a proper selection of $\omega$}, where $\eta_i : = \widehat \eta_i \vee n^{\epsilon}\eta_{l}(\gamma_i)$. 
Next, we claim that for such $z_i$, we have
\[
\|\Omega(z_i)\|=\norm{\mathbf U^\top  G(z_i)\mathbf U-\overline\Pi(z_i)}\lesssim n^{-\epsilon}\operatorname{Im} m_{1c}(z_i)\,.
\]
To show this claim, we discuss two cases: 
(i) $\widehat\eta_i \ge n^\epsilon \eta_l(\gamma_i)$ and (ii) $\widehat\eta_i < n^\epsilon \eta_l(\gamma_i)$. In case (i), { note that by the definition of $\wt S_0(\varsigma_1, (\lambda_+\omega)^{-1},\omega)$ in \eqref{tildeS} and the definition of $\eta(\gamma_i)$ in \eqref{definition eta_l(E)}, we conclude that $z_i\in  {\wt S_0(\varsigma_1,(\lambda_+\omega)^{-1},\omega)}$. Thus, 
\eqref{aniso_lawev0_2} can be applied to $z_i$ and we have}
\begin{equation}\label{ll1}
\begin{split}
\|\Omega(z_i)\|  \leq n^{\epsilon/2}(\phi_n + \Psi(z_i))  \lesssim n^{-3\epsilon/2}\operatorname{Im} m_{1c}(z_i)\,,
\end{split}
\end{equation}
where the last inequality is by \eqref{eq_defneta} and the definition of $\Psi(z_i)$ in \eqref{eq_defPsi}. 
In case (ii), {note that $\eta_l(\gamma_i)\asymp n^{-3/4} + n^{-1/2} \left(\sqrt{\kappa_{\gamma_i}}+\phi_n\right)$ from \eqref{etalE}.}  By comparing $\widehat  \eta_i$ and $n^{\epsilon}\eta_l(\gamma_i)$ using \eqref{eq_etaestimate}, we conclude that $\wt\lambda_i \leq \lambda_+-n^{4 \epsilon}\left( \phi_n^2 + n^{-2/3}\right)$, and hence $\kappa_{\wt\lambda_i} \gtrsim n^{4\epsilon}(\phi_n^2+n^{-2/3})$. 
{Moreover, 
with the interlacing lemma from Lemma \ref{lem_weyl} and the definition of the classical location $\gamma_i$ in \eqref{eq_classical}, we have $\gamma_i \approx\lambda_{i} \leq \wt \lambda_i \leq \lambda_{i-r} \leq \lambda_+$ and thus $\kappa_{\gamma_i}\gtrsim \kappa_{\wt\lambda_i}$. 
By a direct comparison, we derive
\begin{equation}\label{gammaisim0}
n^{\epsilon}\eta_l(\gamma_i) \ll \kappa_{\gamma_i}\,. 
\end{equation}
Also, by a direct comparison using \eqref{eq_etaestimate}, we have
\begin{equation}\label{gammaisim}
\frac{n^{5\epsilon}}{n\eta_l(\gamma_i)} \lesssim \sqrt{\kappa_{\gamma_i}}.
\end{equation}
Next, from \eqref{eq_estimm}, we derive that  
\begin{equation}\label{Imm2cziapprox kappa gammai}
\im m_{1c}(z_i) \asymp \sqrt{\kappa_{\wt\lambda_i} + n^{\epsilon}\eta_l(\gamma_i)}\lesssim \sqrt{\kappa_{\gamma_i}}\,,
\end{equation}
where the last approximation comes from \eqref{gammaisim0} and $\kappa_{\gamma_i}\gtrsim \kappa_{\wt\lambda_i}$.}
By putting the above together, we have 
\begin{equation}\label{ll2}
\begin{split}
&\|\Omega(z_i)\| \leq n^{\epsilon/2}\big(\phi_n +\Psi(\wt\lambda_i + i n^{\epsilon}\eta_l(\gamma_i))\big) \\
\lesssim &\, n^{\epsilon/2}\left(\phi_n +\sqrt{\frac{\sqrt{\kappa_{\gamma_i}}}{n\eta_l(\gamma_i)}} + \frac{1}{n\eta_l(\gamma_i)}\right) 
\le n^{\epsilon/2}\phi_n + n^{-2\epsilon}\sqrt{\kappa_{\gamma_i}} \lesssim n^{-\epsilon}\operatorname{Im} m_{1c}(z_i),
\end{split}
\end{equation}
where the second bound comes from the definition of $\Psi$ and \eqref{Imm2cziapprox kappa gammai}, the third bound comes from \eqref{gammaisim}, and the last inequality comes from applying \eqref{eq_defneta} and \eqref{Imm2cziapprox kappa gammai}.

\textbf{Step III. Prove the theorem.}
With the above bounds, we start the proof. For $j = 1, \cdots, r$, set $\ub_j^e=(\ub_j^\top , \mathbf{0})^\top $ be the embedding of $\ub_j$ in $\mathbb{R}^{p+n}.$
{ We let $z_i = \wt\lambda_i + \eta_i \in S(\varsigma_1, (\lambda_+\omega)^{-1})$ for $1\leq i \leq cn$ as in Step II and thus all the estimations provided can apply.}
With the spectral decomposition \eqref{main_representation}, we have that for $1\leq i \leq cn$,
\begin{equation}
\operatorname{Im}\, \langle \mathbf u_j^e, {\wt G}(z_i)\mathbf u_j^e\rangle=\sum_{k=1}^{p \wedge n} \frac{ \eta_i  \vert \langle \mathbf u_j, \wt{\bm \xi}_k\rangle \vert^2}{(\wt\lambda_k-\wt\lambda_i)^2+\eta^2_i }=\frac{\vert \langle \mathbf u_j, \wt{\bm \xi}_i\rangle \vert^2}{\eta_i }+\sum_{k\neq i} \frac{ \eta_i  \vert \langle \mathbf u_j, \wt{\bm \xi}_k\rangle \vert^2}{(\wt\lambda_k-\wt\lambda_i)^2+\eta^2_i } \,,
\end{equation}
and thus 
\begin{equation}\label{eq_nonspike1}
|\langle \ub_j, \wt\bxi_{i} \rangle|^2 \leq \eta_i \operatorname{Im} \langle \ub_j^e, \widetilde{G}(z_i) \ub_j^e\rangle\,. \end{equation}
By Lemma \ref{lem_gtitle}, we obtain another identity
\begin{equation*}
\langle\mathbf{u}_j^e, \wt G(z_i)\mathbf{u}_j^e\rangle=-\frac{1}{z_id_j^2}[(\mathbf{D}^{-1}+\mathbf{U}^\top  G(z_i)\mathbf{U})^{-1}]_{\bar{j}\bar{j}}.
\end{equation*}
Using the second order resolvent expansion shown in \eqref{eq_resexp} for $(\mathbf{D}^{-1}+\mathbf{U}^\top  G(z_i)\mathbf{U})^{-1}$ and (\ref{biginverse}), we have
\begin{align} \label{bigexpansionnonoutlier}
\langle\ub_j^e, \, & \wt G(z_i)\ub_j^e\rangle 
= -\frac{1}{z_id_j^2} \Big[ \frac{z_im_{1c}(z_i)}{\mathcal{T}(z_i)-d^{-2}_{j}}+\frac{z_if(z_i)}{(\mathcal{T}(z_i)-d_j^{-2})^2} \nonumber \\
& +  \left( [ (\mathbf{D}^{-1}+\overline\Pi(z_i))^{-1} \Omega(z_i) ]^2 (\mathbf{D}^{-1}+ \mathbf{U}^\top G(z_i)\mathbf{U})^{-1} \right)_{\bar{j}\bar{j}}\Big] \,,
\end{align}
where $f(z)=f_1(z)+f_2(z)$ and
\begin{equation*}
f_1(z):=m_{1c}(z)[zm_{1c}(z)\Omega(z)_{\bar{j}\bar{j}}+(-1)^{r}z^{1/2}d_j^{-1}\Omega(z)_{j\bar{j} }],
\end{equation*}
\begin{equation*}
f_2(z):=d_j^{-1}[(-1)^{r}z^{1/2}m_{1c}(z)\Omega(z)_{\bar{j}j}+d_j^{-1}\Omega(z)_{jj}].
\end{equation*}
To estimate the right-hand side of (\ref{bigexpansionnonoutlier}), note that by \eqref{eq_estimm} and the definition of $f$,
\begin{equation}\label{eq_b1}
|f(z_i)| \lesssim \norm{\Omega(z_i)}\,
\end{equation}
and
\begin{equation}\label{eq_b2}
\min_j |\mathcal{T}(z_i)-d_j^{-2}| \geq \operatorname{Im} \mathcal{T}(z_i) \underset{(\ref{eq_estimm})}{\asymp} \operatorname{Im} m_{1c}(z_i) \underset{\eqref{ll1}}{\gg}  \norm{\Omega(z_i)}\,.
\end{equation}
Jointly by \eqref{eq_b1} and \eqref{eq_b2}, the second term on the right-hand side of (\ref{bigexpansionnonoutlier}) is dominated by the first term. For the third term in (\ref{bigexpansionnonoutlier}), we apply the second order resolvent expansion on $(\mathbf{D}^{-1}+\mathbf{U}^\top  G(z_i)\mathbf{U})^{-1}$, then we acquire a similar formula as in (\ref{bigexpansionnonoutlier}) times  $ [ (\mathbf{D}^{-1}+\overline\Pi(z_i))^{-1} \Omega(z_i) ]^2$.  
By \eqref{biginverse} and \eqref{eq_b2}, we have that 
\begin{align}\label{eq_b3}
\left\| (\mathbf D^{-1}+ \mathbf U^\top  G(z_i) \mathbf U)^{-1}\right\| \lesssim \frac{1}{ |\mathcal{T}(z_i)-d_j^{-2}|}  \lesssim \frac{1}{\text{Im} \ m_{1c}(z_i)} \ll \| \Omega(z_i)\|^{-1}\,.
\end{align} 
Inserting bounds from \eqref{eq_b1}-\eqref{eq_b3} into (\ref{bigexpansionnonoutlier}), we obtain that
\begin{equation} \label{eq_reduceim}
\langle\mathbf{u}_j^e, \wt G(z_i)\mathbf{u}_j^e\rangle =\frac{m_{1c}(z_i)}{1-d_j^2\mathcal{T}(z_i)}+O\left(\frac{d_j^2\|\Omega(z_i)\|}{|1-d_j^2 \mathcal{T}(z_i)|^2} \right)\,.
\end{equation}
The next lemma provides a lower bound for $\left|1-d_j^2\mathcal{T}(z_i)\right|$. Its proof is the same as the one for \cite[Lemma 5.6]{ding2020high}, so we omit it.

\begin{lemma}\label{lem_denominator} 
Take $z_i = \wt\lambda_i + i \eta_i$ and $1\leq j \leq r$. For any fixed $\epsilon_0 <1/3$ and $\delta \in [0,1/3-\epsilon_0)$, when
$\wt\lambda_i \in [0, \,\theta(\alpha+(\phi_n+n^{-1/3})n^{\delta+\omega})]$, there exists a constant $c>0$ such that 
\begin{equation*}
|1-d_j^2\mathcal{T}(z_i)| \geq c d_j^2(n^{-2\delta}|d_j^{-2}-\mathcal T(\lambda_+)|+\operatorname{Im} \mathcal{T}(z_i)).
\end{equation*} 
\end{lemma}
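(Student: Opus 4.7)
The plan is to factor $1-d_j^2\mathcal{T}(z_i)=d_j^2(d_j^{-2}-\mathcal{T}(z_i))$, reducing the problem to a lower bound for $|d_j^{-2}-\mathcal{T}(z_i)|$. Because $d_j$ is real while $z_i\in\mathbb{C}^+$ forces $\operatorname{Im}\mathcal{T}(z_i)>0$, the trivial inequality $|d_j^{-2}-\mathcal{T}(z_i)|\ge\operatorname{Im}\mathcal{T}(z_i)$ already handles the second summand on the right-hand side. So the real task is to establish $|d_j^{-2}-\mathcal{T}(z_i)|\gtrsim n^{-2\delta}|d_j^{-2}-\mathcal{T}(\lambda_+)|$, which will rely on the precise control of $\mathcal{T}$ near $\lambda_+$ afforded by Lemmas \ref{lambdar_sqrt}, \ref{s35_DY2019}, \ref{s36_DY2019}, and \ref{s37_DY2019}.

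My main decomposition will be $d_j^{-2}-\mathcal{T}(z_i)=(d_j^{-2}-\mathcal{T}(\lambda_+))-(\mathcal{T}(z_i)-\mathcal{T}(\lambda_+))$, together with the elementary fact $|d_j^{-2}-\mathcal{T}(\lambda_+)|=|d_j^{-2}-\alpha^{-2}|\asymp|d_j-\alpha|$. The hypothesis $\wt\lambda_i\le\theta(\alpha+(\phi_n+n^{-1/3})n^{\delta+\omega})$ combined with \eqref{eq_gcomplex} yields the crucial bound $\kappa_{\wt\lambda_i}\lesssim((\phi_n+n^{-1/3})n^{\delta+\omega})^2$ whenever $\wt\lambda_i>\lambda_+$, and \eqref{sqroot5}, \eqref{eq_realestimate} then give $|\mathcal{T}(z_i)-\mathcal{T}(\lambda_+)|\lesssim\sqrt{\kappa_{\wt\lambda_i}+\eta_i}$.

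From here I would split into two regimes. In the outlier-separated regime where $|d_j-\alpha|\ge Cn^{\delta}\sqrt{\kappa_{\wt\lambda_i}+\eta_i}$ for a sufficiently large $C$, the real-part triangle inequality already yields $|\operatorname{Re}(d_j^{-2}-\mathcal{T}(z_i))|\ge\tfrac12|d_j^{-2}-\mathcal{T}(\lambda_+)|$, which is even stronger than required. In the complementary near-edge regime $|d_j-\alpha|<Cn^{\delta}\sqrt{\kappa_{\wt\lambda_i}+\eta_i}$, I would instead show that $\operatorname{Im}\mathcal{T}(z_i)$ itself dominates $n^{-2\delta}|d_j-\alpha|$. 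Splitting the two sub-cases $\wt\lambda_i\le\lambda_+$ and $\wt\lambda_i>\lambda_+$ from Lemma \ref{s35_DY2019}, this reduces to verifying that $n^{2\delta}$ times $\eta_i/\sqrt{\kappa_{\wt\lambda_i}+\eta_i}$, respectively $\sqrt{\kappa_{\wt\lambda_i}+\eta_i}$, remains large enough relative to $n^\delta\sqrt{\kappa_{\wt\lambda_i}+\eta_i}$.

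The hard part will be this near-edge regime, where both sides are genuinely small and the cancellations must be tracked carefully. Here the parameter constraints $\epsilon_0<1/3$ and $\delta\in[0,1/3-\epsilon_0)$ are essential: they guarantee that the allowed window $\kappa_{\wt\lambda_i}\lesssim n^{2(\delta+\omega)}(\phi_n+n^{-1/3})^2$ is still narrow enough that $\operatorname{Im}\mathcal{T}(z_i)$, evaluated at the particular spectral parameter $\eta_i$ chosen in Step~II of the proof of Theorem \ref{thm_noneve}, absorbs the factor $n^{2\delta}$. This is the same algebraic balance exploited in \cite[Lemma 5.6]{ding2020high}, and the analogous computation transfers to our setting once $\mathcal{T}$ is substituted for the Stieltjes transform used there.
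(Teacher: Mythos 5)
Your overall strategy is the right one, and it is the argument behind the proof the paper omits (the paper simply defers to \cite[Lemma 5.6]{ding2020high}): factor out $d_j^2$, use $|d_j^{-2}-\mathcal{T}(z_i)|\ge \operatorname{Im}\mathcal{T}(z_i)$ for the second summand, and run a real/imaginary dichotomy on $d_j^{-2}-\mathcal{T}(z_i)=(d_j^{-2}-\mathcal{T}(\lambda_+))-(\mathcal{T}(z_i)-\mathcal{T}(\lambda_+))$ using \eqref{eq_estimm}, \eqref{eq_realestimate} and the window bound $\kappa_{\wt\lambda_i}\lesssim n^{2(\delta+\omega)}(\phi_n+n^{-1/3})^2$ coming from \eqref{eq_s36_1}/\eqref{eq_gcomplex}. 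Your Regime A (real part dominates) is handled correctly.

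The gap is the placement of the threshold between the two regimes. Splitting at $|d_j-\alpha|\ge Cn^{\delta}\sqrt{\kappa_{\wt\lambda_i}+\eta_i}$ forces you, in the complementary regime, to prove $\operatorname{Im}\mathcal{T}(z_i)\gtrsim n^{-2\delta}\cdot n^{\delta}\sqrt{\kappa_{\wt\lambda_i}+\eta_i}$, which for $\wt\lambda_i>\lambda_+$ is by \eqref{eq_estimm} equivalent to $\kappa_{\wt\lambda_i}\lesssim n^{\delta}\eta_i$; this is false in part of the admissible window once $\delta>0$. Concretely, take $\phi_n\asymp n^{-1/2}$ (large $a$) and $\wt\lambda_i$ near the top of the window, $\kappa_{\wt\lambda_i}\asymp n^{2(\delta+\omega)}n^{-2/3}$; then \eqref{eq_etaestimate} gives $\eta_i\asymp n^{-1/2+3\epsilon}\kappa_{\wt\lambda_i}^{1/4}$, so $n^{\delta}\eta_i/\kappa_{\wt\lambda_i}\asymp n^{-\delta/2+3\epsilon-3\omega/2}\to 0$, and for $d_j$ with $|d_j-\alpha|\asymp n^{\delta}\sqrt{\kappa_{\wt\lambda_i}}$ (which your partition puts in Regime B) one has $\operatorname{Im}\mathcal{T}(z_i)\asymp \eta_i/\sqrt{\kappa_{\wt\lambda_i}}\ll n^{-2\delta}|d_j-\alpha|$. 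The lemma is still true there, but only through the real part, since in that configuration $|d_j-\alpha|\gg\sqrt{\kappa_{\wt\lambda_i}+\eta_i}$ — i.e., your Regime B is too wide for the imaginary-part argument to cover it. The repair is small and standard: split at $|d_j-\alpha|\ge C_0\sqrt{\kappa_{\wt\lambda_i}+\eta_i}$ for a large \emph{constant} $C_0$. Regime A is unchanged, and Regime B then only requires $\operatorname{Im}\mathcal{T}(z_i)\gtrsim n^{-2\delta}\sqrt{\kappa_{\wt\lambda_i}+\eta_i}$, i.e.\ $\kappa_{\wt\lambda_i}\lesssim n^{2\delta}\eta_i$, and this weaker inequality is the one you must verify from the explicit form of $\eta_i$ in \eqref{eq_etaestimate} together with the window bound (it is here, not in your stronger version, that the smallness of $\omega$ relative to $\delta,\epsilon$ and the constraint on $\delta$ actually close the argument, as in the paper's application with $\delta=\wt\tau-\epsilon$). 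A final minor point: $|d_j^{-2}-\mathcal{T}(\lambda_+)|\asymp|d_j-\alpha|$ requires $d_j$ bounded below; the case of very small $d_j$ should be dispatched separately, which is trivial since $\mathcal{T}(z_i)\asymp 1$ while $d_j^{-2}$ is large.
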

Now we fix the $\delta$ in Lemma \ref{lem_denominator}. By (\ref{eq_nonspike1}) and (\ref{eq_reduceim}), we have that
\begin{align}\label{eq_finalbound}
\big|\langle \ub_j, \wt\bxi_{i} \rangle\big|^2 &  \leq \eta_i \left(\operatorname{Im} \left[ \frac{m_{1c}(z_i)}{1-d_j^2\mathcal{T}(z_i)} \right]+\frac{C d_j^2\| \Omega(z_i)\| }{|1-d_j^2 \mathcal{T}(z_i)|^2}\right) \nonumber\\
& = \frac{\eta_i}{|1-d^2_j\mathcal{T}(z_i)|^2} [ \operatorname{Im} m_{1c}(z_i) (1- \re(d_j^2 \mathcal{T}(z_i))) \nonumber\\
&\qquad +d_j^2\operatorname{Re} m_{1c}(z_i) \operatorname{Im}\mathcal{T}(z_i)+ Cd_j^2 \| \Omega(z_i)\| ] \nonumber\\
& = \frac{\eta_i}{|1- d^2_j\mathcal{T}(z_i)|^2} \big[ \operatorname{Im} m_{1c}(z_i) (1- d^2_j\mathcal T(\lambda_+) +d_j^2\re(\mathcal T(\lambda_+) - \mathcal{T}(z_i))) \nonumber \\
&\qquad +d_j^2\operatorname{Re} m_{1c}(z_i) \operatorname{Im}\mathcal{T}(z_i)+ Cd_j^2 \| \Omega(z_i)\| \big]\,.
\end{align}
We next bound the terms in (\ref{eq_finalbound}) one by one. For the first term, by {\eqref{eq_estimm} and \eqref{eq_realestimate}}, we have
\begin{align}\label{eq_671}
|\eta_i\operatorname{Im}& m_{1c}(z_i)[(1-d_j^2 \mathcal{T}(\lambda_+))+ d_j^{2}\operatorname{Re}(\mathcal T(\lambda_+)- \mathcal{T}(z_i))] | \\
& \lesssim |\eta_i\operatorname{Im}m_{1c}(z_i) |\left(|d_j-\alpha|+\sqrt{\kappa_{\wt\lambda_i}+\eta}_i\vee\Big( \frac{\eta_i}{\sqrt{\kappa_{\wt\lambda_i}+\eta_i}}+\kappa_{\wt\lambda_i}\Big)\right)\,.\nonumber
\end{align}
For the second item of (\ref{eq_finalbound}), by Lemma \ref{s35_DY2019} we have
\begin{equation}\label{eq_672}
|\eta_i\operatorname{Re}m_{1c}(z_i) \operatorname{Im} \mathcal{T}(z_i)| \asymp |\eta_i\operatorname{Im}m_{1c}(z_i)|\,.
\end{equation}
Note that by \eqref{eq_defneta}, \eqref{gammaisim}, and \eqref{eq_estimm}, we have
\begin{equation}\label{eq_imestimate}
|\eta_i\operatorname{Im}m_{1c}(z_i)| \lesssim
\begin{cases}n^{2\epsilon}\phi_n\widehat\eta_i+n^{-1+6 \epsilon}, \ &\text{if } \widehat\eta_i \ge n^\epsilon \eta_l(\gamma_i)\\
n^{\epsilon}\eta_l(\gamma_i)\sqrt{\kappa_{\gamma_i}}, \ &\text{if } \widehat\eta_i < n^\epsilon \eta_l(\gamma_i)
\end{cases}.
\end{equation}
For the third term, by \eqref{ll1}, \eqref{ll2} and \eqref{eq_imestimate}, we have
\begin{equation}\label{ll3}
 \|\eta_i\Omega(z_i)\| \leq  \begin{cases} n^{\epsilon}\phi_n\widehat\eta_i +n^{-1+5 \epsilon}, \ &\text{if } \widehat\eta_i \ge n^\epsilon \eta_l(\gamma_i)\\
{\eta_l(\gamma_i)\sqrt{\kappa_{\gamma_i}}}, \ &\text{if } \widehat\eta_i < n^\epsilon \eta_l(\gamma_i)
\end{cases} .
\end{equation}
Inserting the estimates from \eqref{eq_etaestimate}, \eqref{eq_671}, \eqref{eq_672}, \eqref{eq_imestimate} and  \eqref{ll3} into \eqref{eq_finalbound} together, we have
\begin{align}\label{final_estimate}
|\langle \ub_j, \wt\bxi_{i} \rangle|^2 &\, \lesssim \frac{n^{4\epsilon}\phi_n \widehat\eta_i + n^{\epsilon}\eta_l(\gamma_i)\sqrt{\kappa_{\gamma_i}} +n^{6\epsilon}n^{-1}}{ |1-d_i^2\mathcal{T}(z_i)|^2} \\
&\, \lesssim \frac{n^{6\epsilon+\delta}(\phi_n^3 + \eta_l(\gamma_i)\sqrt{\kappa_{\gamma_i}}+ n^{-1})}{ |1-d_i^2\mathcal{T}(z_i)|^2}\,,\nonumber
\end{align}
where in the last inequality we used that  for $\widehat\eta_i \ge n^\epsilon \eta_l(\gamma_i)$,
\begin{align}
\phi_n \widehat \eta_i &\lesssim n^{4\epsilon+\delta}\phi_n (\phi_n^2 + n^{-2/3}) \lesssim n^{4\epsilon+\delta} (\phi_n^3 + n^{-1})\,.
\end{align}
We still need to bound the denominator of \eqref{final_estimate} from below using Lemma \ref{lem_denominator}, which requires a lower bound on $\operatorname{Im}\mathcal{T} (z_i)$. When $i \notin \mathbb O^+$, {with (\ref{eq_estimm}) and (\ref{eq_etaestimate}) and similar approaches in Step II, we find that} $\operatorname{Im}\mathcal{T}(z_i)\asymp\operatorname{Im} m_{1c}(z_i) \gtrsim  \phi_n + \sqrt{\kappa_{\gamma_i}}$. Together with \eqref{final_estimate}, this concludes the proof of (\ref{eq_evebulka}) for $|\langle \ub_j, \wt\bxi_{i} \rangle|^2$ by choosing $\delta = 0$ in Lemma \ref{lem_denominator}. The proof of $|\langle \vb_j, \wt\bzeta_{i} \rangle|^2$ is based on the same steps and we omit details. 
On the other hand, when $i \in \mathbb O^+$ such that (\ref{eq_propohold}) holds, with \eqref{rigid_outXi0} and \eqref{eq_etaestimate} we can verify that 
$\wt\lambda_i \lesssim \theta(d_i) + n^{\epsilon+\wt\tau}(\phi_n^2+n^{-/3})$
and by \eqref{eq_thetadiff} we have
$\theta\left(\alpha+n^{\wt\tau+\epsilon}(\phi_n+n^{-1/3})\right)-\theta(d_i) \asymp n^{2\epsilon+2\wt\tau}(\phi_n^2+n^{-2/3}).$
By the two inequalities, we have
\begin{equation*}
\wt\lambda_i \lesssim \theta\left(\alpha+n^{\wt\tau+\epsilon}(\phi_n+n^{-1/3})\right),
\end{equation*}
and thus by \eqref{eq_iii1} we have
\begin{equation}
|\wt\lambda_i -\lambda_+|\lesssim n^{2\wt\tau+2\epsilon}(n^{-2/3}+\phi_n^2).
\end{equation}
Moreover, together with \eqref{eq_etaestimate} and \eqref{eq_estimm}, we conclude that
\begin{equation}
\operatorname{Im} \mathcal{T}(z_i) \asymp \operatorname{Im} m_{1c}(z_i) \ge n^{2\epsilon+2\wt\tau}(\phi_n +n^{-1/3})\ge n^{2\epsilon-2\wt\tau}(\phi_n + \sqrt{\kappa_{\gamma_i}}). 
\end{equation}
We can therefore conclude the proof of (\ref{eq_nonspikeeg2}) with \eqref{final_estimate} by letting $\delta=\wt\tau-\epsilon$ in Lemma \ref{lem_denominator}.

\section{Proof of Theorem \ref{thm_vector}}

We only show the detailed proof for the control of $| \langle \ub_i, {\mathcal P}_{\textbf{A}}\ub_j \rangle- \delta_{ij}\mathbbm 1(i\in \textbf{A}) a_1(d_i) |$ in \eqref{eq_spikedvector}. The control of the other term is the same and we omit details.
The proof is decomposed into three main parts. First, we prove Theorem \ref{thm_vector} under two stronger assumptions, which is stated in Proposition \ref{prop_outev0}. Then, we remove these two assumptions. 
We start with introducing these two assumptions.

\begin{assumption}\label{assum_nonoverlap}
(Non-overlapping condition). For some fixed constant $\widetilde{\tau}>0$, we assume that for all $i \in \textbf{A}$,
\begin{equation}\label{eq_assnonover}
\nu_i(\textbf{A}) \geq n^{\wt\tau}([\Delta(d_i)]^{-1}n^{-1/2}+\phi_n)= n^{\wt\tau}\psi_1(d_i).
\end{equation}
\end{assumption}
By Assumption \ref{assum_nonoverlap}, an outlier indexed by $\textbf{A}$ does not overlap with an outlier indexed by $\textbf{A}^c$. That is, when $d_i\neq d_j$, they are sufficiently separated if $i\in \mathbf{A}$ and $j\in \mathbf{A}^c$. However, outliers indexed by $\textbf{A}$ can overlap among themselves.

\begin{assumption}\label{assump_strong}
For some fixed small constant $0<\tau'<1/3$, we assume that for $i \in \textbf{A}$,
\begin{equation}\label{eq_assstrong}
   d_i -\alpha \geq n^{\tau'}(\phi_n+n^{-1/3})\,.
\end{equation}
\end{assumption}

The necessary argument to remove this assumption will be given in Section \ref{sec_nonoutliereve} after we complete the proof of Theorem \ref{thm_noneve}, since we need the delocalization bounds there.

\subsection{Proof of Theorem \ref{thm_vector} Under Stronger Assumptions
}\label{pf_thm_vector_strong}

We first provide the following proposition, which is needed for Theorem \ref{thm_vector}. 

\begin{proposition}\label{prop_outev0}
Grant the assumptions and notations in Theorem \ref{thm_vector}. Then under Assumptions  \ref{assum_nonoverlap} and \ref{assump_strong}, we have that for all $i,j = 1, \ldots, r$, 
\begin{align}
\label{eq_spikedvector0}
&|\langle \ub_i, {\mathcal{P}}_{\textbf{A}} \ub_j \rangle-  \delta_{ij}\mathbbm 1(i\in \textbf{A})a_1(d_i) | \\
 \prec &\, \mathbbm 1(i\in \textbf{A},j\in \textbf{A}) \left(\phi_n +    {n^{-1/2}(\Delta(d_i)\Delta(d_j)})^{-1/2}\right) \nonumber\\
&+  n^{-1} \left(\frac1{\nu_{i}(\textbf{A})} + \frac{\mathbbm 1(i\in \textbf{A})}{\Delta^2(d_i)} \right)\left(\frac1{\nu_{j}(\textbf{A})} + \frac{\mathbbm 1(j\in \textbf{A})}{\Delta^2(d_j)} \right) \nonumber\\
& + \phi_n^2  \left[ \left(\frac{\Delta^2(d_i)}{\nu_{i}(\textbf{A})}+1\right) \left(\frac1{\nu_{j}(\textbf{A})} + \frac{\mathbbm 1(j\in \textbf{A})}{\Delta^2(d_j)} \right)\wedge  \left(\frac{\Delta^2(d_j)}{\nu_{j}(\textbf{A})}+1\right) \left(\frac1{\nu_{i}(\textbf{A})} + \frac{\mathbbm 1(i\in \textbf{A})}{\Delta^2(d_j)} \right)\right] \nonumber \\
&+ \mathbbm 1(i\in \textbf{A},j\notin \textbf{A})  \frac{ \psi_1(d_i)\Delta^2(d_i)}{|d_j - d_i|} + \mathbbm 1(i\notin \textbf{A},j\in \textbf{A}) \frac{\psi_1(d_j)\Delta^2(d_j)}{|d_j - d_i|}\,. \nonumber
\end{align}
\end{proposition}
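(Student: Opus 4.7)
The plan is to represent $\langle \ub_i, \mathcal{P}_{\textbf{A}} \ub_j \rangle$ via a Cauchy contour integral of $\langle \ub_i^e, \wt G(z) \ub_j^e \rangle$, where $\ub_i^e := (\ub_i^\top, \mathbf{0})^\top \in \pR^{p+n}$. From the spectral decomposition \eqref{main_representation} of the $(p,p)$ block of $\wt G(z)$,
\begin{equation*}
\langle \ub_i, \mathcal{P}_{\textbf{A}} \ub_j \rangle = -\frac{1}{2\pi \ri} \oint_{\mathcal{C}} \langle \ub_i^e, \wt G(z) \ub_j^e \rangle \, dz,
\end{equation*}
where $\mathcal{C}$ is a positively oriented simple closed contour enclosing exactly $\{\wt\lambda_k : k \in \textbf{A}\}$. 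Under Assumptions \ref{assum_nonoverlap} and \ref{assump_strong}, combined with the outlier location estimate in Theorem \ref{thm_value}, such a $\mathcal{C}$ can be drawn staying bounded away from $\lambda_+$ and at distance $\gtrsim \nu_l(\textbf{A}) \Delta(d_l)$ from every non-enclosed $\theta(d_l)$, while remaining in the validity region of Theorem \ref{lem_locallaw}, so that Lemma \ref{lem_pipi} yields $\|\Omega(z)\| \prec \phi_n + n^{-1/2}\kappa_z^{-1/4}$ uniformly on $\mathcal{C}$.

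For the leading term, Lemma \ref{lem_gtitle} with \eqref{defn_greenrep2} and the block structure of $\Db^{-1}$ (whose $(i,j)$ entry vanishes for $i,j \in \{1,\ldots,r\}$) give
\begin{equation*}
\langle \ub_i^e, \wt G(z) \ub_j^e \rangle = -\frac{1}{z d_i d_j}\bigl[(\Db^{-1} + \Ub^\top G(z)\Ub)^{-1}\bigr]_{\bar i \bar j}, \qquad \bar k := k + r.
\end{equation*}
Expanding this inverse via the resolvent expansion \eqref{eq_resexp} with $\ell = 2$ and using the explicit formula \eqref{biginverse}, the zeroth-order piece contributes $\delta_{ij} m_{1c}(z)/(1 - d_i^2 \mathcal{T}(z))$ to the integrand. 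A residue computation at the simple pole $z = \theta(d_i)$ based on $1 - d_i^2 \mathcal{T}(z) \sim -d_i^2 \mathcal{T}'(\theta(d_i))(z - \theta(d_i))$ and the definition \eqref{eq_functions} of $a_1$ recovers the deterministic contribution $\delta_{ij} \mathbbm{1}(i \in \textbf{A}) a_1(d_i)$.

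The error comes from the first- and second-order terms of \eqref{eq_resexp}. Since $\Omega(z)$ is holomorphic but random, we do not compute residues but instead bound the integrand uniformly on $\mathcal{C}$ and multiply by the contour length. The linear-in-$\Omega$ integrand is a sum of pieces of the shape $m_{1c}(z)^a m_{2c}(z)^b \Omega(z)_{pq}/\bigl[(1 - d_i^2 \mathcal{T}(z))(1 - d_j^2 \mathcal{T}(z))\bigr]$; the quadratic remainder carries an extra factor $\|\Omega(z)\|$ together with the tail $\|(\Db^{-1} + \Ub^\top G \Ub)^{-1}\| \lesssim |1 - d_i^2 \mathcal{T}(z)|^{-1}$, bounded as in \eqref{ll1}. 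Lower bounds of the form $|1 - d_l^2 \mathcal{T}(z)| \gtrsim \Delta^2(d_l) + |\mathcal{T}(z) - \mathcal{T}(\theta(d_l))|$ are supplied by Lemmas \ref{s35_DY2019} and \ref{s37_DY2019}; one uses the contour radius around enclosed $\theta$-points, and $|d_i - d_l| \ge \nu_i(\textbf{A})$ for non-enclosed ones. Optimizing the contour radius locally around each enclosed $\theta$-point and integrating reproduces, term by term, the six contributions on the right-hand side of \eqref{eq_spikedvector0}.

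The hardest part will be the combinatorial bookkeeping required to match the uniform pointwise bound times contour length against the precise six-term expression in \eqref{eq_spikedvector0}. In particular, the asymmetric terms involving $\psi_1(d_i)\Delta^2(d_i)/|d_j - d_i|$ arise from splitting the contour into pieces around different enclosed $\theta$-points and tracking the two cases $i \in \textbf{A}, j \notin \textbf{A}$ and $i \notin \textbf{A}, j \in \textbf{A}$ separately, while the mixed $\nu_l(\textbf{A})/\Delta^2(d_l)$ interplay in the product terms demands a careful balancing of local contour radii against the global separation parameter. Assumptions \ref{assum_nonoverlap} and \ref{assump_strong} are precisely what allow such a contour to be drawn; their removal in the passage from Proposition \ref{prop_outev0} to Theorem \ref{thm_vector} will rely on the delocalization bound \eqref{eq_nonspikeeg2} of Theorem \ref{thm_noneve} applied to the near-threshold or overlapping outliers.
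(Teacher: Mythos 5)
Your skeleton matches the paper's proof: a Cauchy contour integral of $\langle \ub_i^e,\wt G(z)\ub_j^e\rangle$ around the outliers indexed by $\textbf{A}$, the identity \eqref{defn_greenrep2} from Lemma \ref{lem_gtitle}, the second-order resolvent expansion \eqref{eq_resexp} with \eqref{biginverse}, a residue computation for the deterministic (zeroth-order) piece yielding $\delta_{ij}\mathbbm 1(i\in\textbf{A})a_1(d_i)$, and a sup-times-length bound for the quadratic remainder using $\|(\Db^{-1}+\Ub^\top G\Ub)^{-1}\|\lesssim |1-d_k^2\mathcal T(z)|^{-1}$; the paper does all of this (working with contours $\Gamma_i=\partial B_{\rho_i}(d_i)$ in the $\zeta$-plane pushed forward by $\theta$, with $\rho_i\asymp \nu_i(\textbf{A})\wedge\Delta^2(d_i)$, which is only cosmetically different from your $z$-plane contour).

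However, there is a genuine gap in your treatment of the \emph{linear-in-$\Omega$} term: you propose to ``bound the integrand uniformly on $\mathcal C$ and multiply by the contour length,'' but this is lossy precisely when both $i,j\in\textbf{A}$ (in particular $i=j$), and it cannot produce the first line of \eqref{eq_spikedvector0}. Concretely, the admissible contour radius around $\theta(d_i)$ is forced by the excluded outliers to be $\lesssim \nu_i(\textbf{A})\Delta^2(d_i)$ (via \eqref{eq_thetadiff}), on which $|1-d_i^2\mathcal T(z)|\asymp \nu_i(\textbf{A})$, so the sup-times-length bound for the linear term gives at best $\psi_1(d_i)\Delta^2(d_i)/\nu_i(\textbf{A})$. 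Assumption \ref{assum_nonoverlap} only constrains the distance from $\textbf{A}$ to $\textbf{A}^c$, so take $\Delta^2(d_i)\asymp 1$, $\nu_i(\textbf{A})\asymp n^{\wt\tau}\psi_1(d_i)\asymp n^{\wt\tau-1/2}$, $\phi_n\asymp n^{-1/2}$: your bound is $\asymp n^{-\wt\tau}$, whereas the right-hand side of \eqref{eq_spikedvector0} is $\asymp n^{-1/2}+n^{-2\wt\tau}$, strictly smaller for $\wt\tau<1/2$. The fix, which is the paper's key step, is to exploit that $\Omega(z)$ (hence the numerator $f_{ij}$) is holomorphic \emph{inside} the contour: evaluate the linear term exactly by the residue theorem, obtaining the divided difference $\bigl(f_{ij}(d_i)-f_{ij}(d_j)\bigr)/(d_i-d_j)$ (or $f_{ij}'(d_i)$ when $d_i=d_j$), and then bound $f_{ij}'$ by Cauchy's differentiation formula on a disc of radius $\asymp |\zeta-\alpha|/2\asymp\Delta^2(d_i)$ — a scale independent of, and possibly much larger than, the contour radius. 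This is what yields the sharp $\phi_n+n^{-1/2}(\Delta(d_i)\Delta(d_j))^{-1/2}$ term for $i,j\in\textbf{A}$ (the cross terms $\psi_1(d_i)\Delta^2(d_i)/|d_i-d_j|$ for $i\in\textbf{A}$, $j\notin\textbf{A}$ also come out of the exact residue $f_{ij}(d_i)/(d_i-d_j)$, and the term vanishes identically when $i,j\notin\textbf{A}$). Your minor imprecisions (the contour is not ``bounded away from $\lambda_+$'' under Assumption \ref{assump_strong}, only inside $S_{out}$; the separation scale to non-enclosed $\theta(d_l)$ is $\nu_l(\textbf{A})\Delta^2(d_l)$ rather than $\nu_l(\textbf{A})\Delta(d_l)$) are secondary to this point.
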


\begin{proof}

Consider the event space $\Xi$ 
introduced in the proof of Theorem \ref{thm_noneve} in Section \ref{sec_pf_evout},
where $\epsilon>0$ is a small positive constant satisfying $0 < \epsilon < \min\{\tau',\widetilde{\tau}\}/10$, and let $\omega <\tau'/2$. 
The rest of the proof is restricted to the event $\Xi$.

For $i \in \textbf{A},$ we define the contour
$\Gamma_i:=\partial B_{\rho_i} (d_i)$, where 
\begin{equation}\label{eq_radius}
\rho_i :=  c_i(\nu_i(\textbf{A})\wedge (d_i-\alpha)) =  c_i(\nu_i(\textbf{A})\wedge \Delta^2(d_i))
\end{equation}
for some small constant $0<c_i<1$. Define
\begin{align}\label{eq_contour}
    \Upsilon: = \cup_{i \in \bf{A}}B_{\rho_i} (d_i) \ \ \mbox{and} \ \ 
    \Gamma: = \cup_{i \in \bf{A}}\Gamma_i\,.
\end{align}
By choosing small enough $c_i$, we can assume that $\Upsilon \subset \mathbf{D}_2(\tau_2,\varsigma)$, where $\mathbf{D}_2$ is defined in Lemma \ref{s37_DY2019}.  The following lemma shows that (i) $\overline{\theta(\Upsilon)}$ is a subset of $S_{out}(\omega)$ so that we can use the estimates of \eqref{eq_2.5localaw}; (ii) $\partial \theta(\Upsilon)  = \theta(\Gamma)$ only encloses the outlier eigenvalues indexed by $\textbf{A}$. The proof will be provided in Section \ref{section_proof of technical lem1}.

\begin{lemma}\label{lem_contour}
Suppose that Assumptions  \ref{assum_nonoverlap} and \ref{assump_strong} hold. The set $\overline{\theta(\Upsilon)}$ lies in the spectral set $S_{out}(\omega)$ as long as the $c_i's$  are sufficiently small. Moreover, by selecting proper $\omega$ and $\epsilon$, we have $\{\widetilde{\lambda}_a\}_{a \in \textbf{A}} \subset \theta(\Upsilon)$ and all the other eigenvalues lie in the complement of $\overline{\theta(\Upsilon)}$ .
\end{lemma}

For $i = 1, \cdots, r$, let $\ub_i^e=(\ub_i^\top , \mathbf{0})^\top $ be the embedding of $\ub_i$ in $\mathbb{R}^{p+n}.$  By (\ref{green2}), it is easy to see that 
\begin{equation}\label{eq_gG}
\ub_i^\top  \ctG_1(z) \ub_j=(\ub_i^e)^\top  \widetilde{G}(z) \ub_j^e\,. 
\end{equation}
By expanding $\widetilde{G}(z)$ by (\ref{main_representation}), Cauchy's integral formula over $\theta(\Gamma)$ and Lemma \ref{lem_contour}, we have
\begin{equation}\label{eq_pa0}
-\frac{1}{2 \pi \ri}  \oint_{\theta(\Gamma)} \langle \ub_i, \ctG_1(z) \ub_j \rangle dz=\langle \ub_i, {\mathcal{P}}_\textbf{A} \ub_j \rangle\,.  
\end{equation} 
We first analyze the condition when  $ 1\leq  i, j \leq r$, and then extend the result to $1\leq i,j\leq p$ afterwards.

Assume $1\leq  i, j \leq r$. By \eqref{eq_gG} and \eqref{eq_pa0}, since $(\ub_i^e)^\top  \widetilde{G}(z) \ub_j^e=e_i^\top  \Ub^\top \widetilde{G}(z)\Ub  e_j$, where $e_j\in \mathbb{R}^{2r}$ is a unit vector with the $i$-th entry $1$, by (\ref{defn_greenrep2}),  we obtain that 
\begin{equation}\label{eq_pa1}
\langle \ub_i, {\mathcal{P}}_\textbf{A} \ub_j \rangle =\frac{1}{ 2\pi \ri d_i d_j } \oint_{\theta(\Gamma)} \left[( \Db^{-1}+\Ub^\top  G(z) \Ub )^{-1}\right]_{\bar{i}\,\bar{j}} \frac{dz}{z}\,,
\end{equation}
where $\bar{i}:=i+r$ and $\bar{j}:=j+r.$ 
By the resolvent expansion from \eqref{eq_resexp}, we have 
\begin{equation}\label{eq_decomrel}
\langle \ub_i, {\mathcal{P}}_\textbf{A} \ub_j \rangle =s_0^{ij}+s_1^{ij}+s_2^{ij},
\end{equation} 
where $s_\ell^{ij}, \ell=0,1,2,$ are respectively defined as
\begingroup
\allowbreak
\begin{align*}
& s_0^{ij}=\frac{1}{2 \pi \ri d_i d_j} \oint_{\theta(\Gamma)} \left[( \Db^{-1}+ \overline{\Pi}(z)  )^{-1}\right]_{\bar{i}\,\bar{j}} \frac{dz}{z}, \\
& s_1^{ij}=\frac{1}{2 \pi \ri d_i d_j} \oint_{\theta(\Gamma)} \left[ (\Db^{-1}+ \overline{\Pi}(z))^{-1} \Omega(z) (\Db^{-1}+ \overline{\Pi}(z))^{-1}  \right]_{\bar{i}\,\bar{j}} \frac{dz}{z} ,\\
& s_2^{ij}=\frac{1}{2 \pi \ri d_i d_j} \oint_{\theta(\Gamma)}\left[ ({\mathbf{D}^{-1}+ \overline{\Pi}(z)})^{-1}\Omega(z) (\mathbf{D}^{-1}+\overline{\Pi}(z))^{-1} \Omega(z) (\mathbf{D}^{-1}+\mathbf{U}^\top  G(z)\mathbf{U})^{-1} \right]_{\bar{i}\,\bar{j}}\frac{dz}{z}.
\end{align*}
\endgroup 
From now on, we write $s_\ell^{ij}$ as $s_\ell$ to simplify the notation. We start with $s_0$. By using the expansion in \eqref{biginverse},
we have
\begin{align}\label{estimate_s0finish}
s_0 & =  \delta_{ij}\frac{1}{2 \pi \ri d_i^2} \oint_{\theta(\Gamma)}  \frac{m_{1c}(z)}{\mathcal{T}(z)-d_{i}^{-2}}  dz  =  \delta_{ij}\frac{1}{2 \pi \ri  d_i^2} \oint_{\Gamma}  \frac{m_{1c}(\theta(\zeta))}{\zeta^{-2}-d_{i}^{-2}}  \theta'(\zeta) d\zeta \nonumber \\
&= -\delta_{ij}\frac{d_i m_{1c}(\theta(d_i))\theta'(d_i)}{ 2 } = \delta_{ij}\frac{m_{1c}(\theta(d_i))}{  d_i^2 \mathcal{T}'(\theta(d_i))} = \delta_{ij} a_1(d_i),
\end{align}
where in the second equality we use the change of variable $z = \theta(\zeta)$, in the third equality of use the residual theorem, and in the fourth equality we simply use $\theta'(d_i) = 2d_i^{-3} (\mathcal{T}^{-1})'(d_i^{-2})= 2d_i^{-3}/\mathcal{T'}(\theta(d_i))$.

Next, we control $s_1$. Similarly, for $s_1$ we can write it as
\begin{align}\label{defn_sij1}
s_1 & =   \frac{1}{2 \pi \ri d_i d_j} \oint_{\theta(\Gamma)} \left[ ( \Db^{-1}+ \overline{\Pi}(z))^{-1} \Omega(z) (\Db^{-1}+ \overline{\Pi}(z))^{-1}  \right]_{\bar{i}\bar{j}} \frac{dz}{z} \nonumber\\ & = 
\frac{d_i d_j}{2\pi\ri} \oint_{\Gamma} \frac{f(\zeta)\theta'(\zeta)\zeta^4}{(\zeta^{2}-d_i^{2})(\zeta^{2}-d_j^{2})}  d\zeta,
\end{align}
where $f(\zeta)=f_1(\zeta)+f_2(\zeta)$ and
\begin{equation*}
f_1(\zeta):=m_{1c}(\theta(\zeta))[\theta(\zeta)m_{1c}(\theta(\zeta))\Omega(\theta(\zeta))_{\bar{i}\bar{j}}+(-1)^{\bar{i}+j}\theta(\zeta)^{1/2}d_j^{-1}\Omega(\theta(\zeta))_{i\bar{j} }],
\end{equation*}
\begin{equation*}
f_2(\zeta):=d_i^{-1}[(-1)^{i+\bar{j}}\theta(\zeta)^{1/2}m_{1c}(\theta(\zeta))\Omega(\theta(\zeta))_{\bar{i}j}+(-1)^{i+j+\bar{i}+\bar{j}}d_i^{-1}\Omega(\theta(\zeta))_{ij}].
\end{equation*}
To continue to bound $s_1$, we prepare a bound. Denote 
\begin{equation}
f_{ij}(\zeta)=\frac{f(\zeta)\theta^{\prime}(\zeta) \zeta^4}{(d_i+\zeta)(d_j+\zeta)}.
\end{equation}
We know $f(\zeta)$ is holomorphic inside the contour $\Gamma$ by the assumption on $\epsilon$ and $\omega$ and $\theta'(\zeta)$ is holomorphic as well by Lemma \ref{s37_DY2019}. So, by Cauchy's differentiation formula, we have
\begin{equation}\label{fcauchydiff}
f_{ij}^{\prime}(\zeta)=\frac{1}{2 \pi i} \int_{\mathcal{C}} \frac{f_{ij}(\xi)}{(\xi-\zeta)^2} d\xi, 
\end{equation}
where $\mathcal{C}$ is the circle of radius $|\zeta-\alpha|/2$ centered at $\zeta.$
For $\zeta = E + \ri \eta \in \Gamma$, 
\begin{align} \label{eq_hjjbound}
    |f(\zeta)\theta'(\zeta)\zeta^4| 
    & \lesssim  n^{\epsilon}(\phi_n+n^{-1/2}|\kappa_E + \eta|^{-1/4})|\zeta - \mathcal{T}(\lambda_+)| \nonumber \\
    &\lesssim n^{\epsilon}(\phi_n+n^{-1/2}|\theta(\zeta) - \lambda_+|^{-1/4})|\zeta - \mathcal{T}(\lambda_+)| \nonumber \\
    & \lesssim n^{\epsilon}(\phi_n|\zeta - \mathcal{T}(\lambda_+)|+n^{-1/2}|\zeta - \mathcal{T}(\lambda_+)|^{1/2})\,,
\end{align}
where in the first inequality we use the bound of $\norm{\Omega}$ provided in (\ref{eq_2.5localaw}) and $\theta'(\zeta) \asymp|\zeta-\alpha|$ given by \eqref{eq_gcomplex},  in the second inequality we use $(\kappa_E+\eta)|_{z = \theta(\zeta)}\gtrsim |\theta(\zeta)-\lambda_+|$, and in the third inequality we use $|\theta(\zeta)-\lambda_+| \asymp|\zeta-\alpha|^2$ given in \eqref{eq_gcomplex}.  
As a consequence, we conclude that 
\begin{equation}\label{eq_hjjbound'}
|f'_{ij}(\zeta)| \leq C n^{\epsilon}(\phi_n+n^{-1/2}|\zeta - \mathcal{T}(\lambda_+)|^{-1/2})\,.
\end{equation}
We consider three different cases. 
\begin{enumerate}
\item [(i)] Suppose that $i\in \textbf{A}$ and $j\in \textbf{A}$. If $d_i \ne d_j$, we have
\begin{equation}
\label{estimate_s1finish1}
\begin{split}
|s_1| & \leq  \left|\frac{d_i d_j}{2\pi\ri}\right| \left|\oint_{\Gamma} \frac{f_{ij}(\zeta)}{(\zeta-d_i)(\zeta-d_j)}  d\zeta \right| \leq C\left|\oint_{\Gamma} \frac{1}{d_i-d_j}\left\{\frac{f_{ij}(\zeta)}{\zeta - d_i} - \frac{f_{ij}(\zeta)}{\zeta - d_j}\right\}d\zeta\right|\\
&= C\left| \frac{f_{ij}(d_i)-f_{ij}(d_j)}{d_i-d_j}\right| \leq \frac{C}{|d_i-d_j|}\left|\int_{d_i}^{d_j}|f'_{ij}(\zeta)|d\zeta \right| \\
& \underset{\eqref{eq_hjjbound'}}{\leq} Cn^\epsilon\left[\phi_n +\frac{n^{-1/2}}{{\Delta(d_i)+\Delta(d_j)}}\right] \leq  Cn^\epsilon\left[\phi_n +\frac{n^{-1/2}}{\sqrt{\Delta(d_i)\Delta(d_j)}}\right]
\end{split}
\end{equation}
for some constant $C>0$, where we use the arithmetic and geometric means in the last inequality. If $d_i = d_j$, then by the application of the residue's theorem we get a similar bound
\begin{equation}
\label{estimate_s1finish1_2}
\begin{split}
|s_1| & \leq  \left|\frac{d_i^2}{2\pi\ri}\right| \left|\oint_{\Gamma} \frac{f_{ii}(\zeta)}{(\zeta-d_i)^2}  d\zeta \right| \leq C\left| f'_{ij}(d_i)\right| \leq  Cn^\epsilon\left[\phi_n +\frac{n^{-1/2}}{\Delta(d_i)}\right]\,.
\end{split}
\end{equation} 

\item [(ii)] Suppose $i\in \textbf{A}$ and $j\notin \textbf{A}$. Then we get from 
\eqref{eq_hjjbound} that
\begin{equation}\label{estimate_s1finish2}
|s_1|\le C\frac{|f_{ij}(d_i)|}{|d_i-d_j|} \leq Cn^\epsilon \frac{n^{-1/2 }\Delta(d_i) + \phi_n \Delta^2(d_i)}{|d_i-d_j|} = C n^{\epsilon} \frac{\psi_1(d_i)\Delta^2(d_i)}{|d_i-d_j|}.
\end{equation}
We have a similar estimate if $i\notin \textbf{A}$ and $j\in \textbf{A}$. 

\item [(iii)] If $i\notin \textbf{A}$ and $j \notin \textbf{A}$, we have $s_1=0$ by Cauchy's residue theorem since there is no poles inside the contour. We thus conclude the bound of $s_1$.
\end{enumerate}

It remains to estimate the second order error $s_2$. Recall (\ref{eq_contour}) that $\Gamma=\bigcup_{i\in \textbf{A}} \Gamma_i$
We have the following basic estimates on each of these components, whose proof is given in Section \ref{section_pflem}. 
\begin{lemma}\label{lem_distance}
For any $k \in \textbf{A}$, $1\leq \ell \leq r$ and $\zeta \in \Gamma_k$, we have 
\begin{equation}\label{zeta_rhoi}
    |\zeta -d_\ell| \asymp \rho_k + |d_k - d_\ell|\,.
\end{equation}
\end{lemma}

Now we finish the estimate of $s_2$. By a trivial bound, we have
\begin{align}\label{eq_s2bd}
|s_2| \leq  \frac{1}{2 \pi \ri d_i d_j} \oint_{\theta(\Gamma)}\left|\left( ({\mathbf{D}^{-1}+ \overline{\Pi}(z)})^{-1}\Omega(z) (\mathbf{D}^{-1}+\overline{\Pi}(z))^{-1} \Omega(z) ({\mathbf{D}^{-1}+\mathbf{U}^\top  G(z)\mathbf{U}})^{-1} \right)_{\bar{i}\bar{j}}\frac{1}{z}\right|dz.
\end{align}
By the bound of $\norm{\Omega}$ in (\ref{eq_2.5localaw}), $(\kappa_E+\eta)|_{z = \theta(\zeta)}\gtrsim |\theta(\zeta)-\lambda_+|$ and $|\theta(\zeta)-\lambda_+| \asymp|\zeta-\alpha|^2$ given by \eqref{eq_gcomplex}, the entries of $({\mathbf{D}^{-1}+ \overline{\Pi}(z)})^{-1}$ in \eqref{biginverse} together with \eqref{eq_estimm}, and a simple change of variable, \eqref{eq_s2bd} is bounded by
\begin{align}
&  C \oint_{\Gamma} \frac{n^{2\epsilon}(\phi_n^2+n^{-1}|\zeta-\alpha|^{-1})}{|\zeta - d_i||\zeta-d_j|}   \times \left\| \left( \mathbf{D}^{-1}+ \mathbf{U}^\top  G(\theta(\zeta)) \mathbf{U}\right)^{-1} \right\| |\theta'(\zeta)|d \zeta
\end{align}
for some constant $C>0$, which is further bounded by
\begin{align}
C \oint_{\Gamma} \frac{n^{2\epsilon}(\phi_n^2|\zeta-\alpha|+n^{-1})}{|\zeta-d_i||\zeta-d_j|}  \left\| \left( \mathbf{D}^{-1}+ \mathbf{U}^\top  G(\theta(\zeta)) \mathbf{U}\right)^{-1} \right\|  d \zeta\label{eq_s3bound}
\end{align}
by using the fact that $|\theta'(\zeta)|\asymp|\zeta-\alpha|$ from \eqref{eq_gcomplex}.
To continue to bound  \eqref{eq_s3bound}, we use the same method as that for \eqref{eq_b3}. Assume $\zeta \in \Gamma_k$, we can bound $\norm{ \Omega(\theta(\zeta))} $ using (\ref{eq_2.5localaw}) and obtain
\begin{equation}\label{eq_localcontrol}
\norm{\Omega(\theta(\zeta))} \lesssim  n^\epsilon \left[\phi_n +n^{-1/2 } \Delta(d_k)^{-1} \right].
\end{equation}
By \eqref{eq_iii1} and \eqref{eq_gcomplex}, we have for any $1\le \ell \le r$, 
\begin{align}\label{sij2bound2}
|\mathcal{T}(\theta(\zeta))-d_\ell^{-2}|  & = |\zeta^{-2} - d_\ell^{-2}| \gtrsim  |\zeta-d_\ell| \geq |\zeta-d_k| \nonumber \\ & = \rho_k \geq n^{\wt\tau} (\phi_n+n^{-1/2}\Delta(d_k)^{-1})\,,
\end{align}
where the last bound comes from Assumption \eqref{assum_nonoverlap}.
By \eqref{eq_localcontrol} and \eqref{sij2bound2}, we have $ |\mathcal{T}(\theta(\zeta))-d_\ell^{-2}| \gg \norm{\Omega(\theta(\zeta))}$ since we have assumed $\wt \tau>\epsilon$. Hence, as we derive the bound in \eqref{eq_b3}, by the resolvent expansion and \eqref{biginverse}, we have
\begin{equation} \label{sij2matrixnormbound}
 \left\| \left( \mathbf{D}^{-1}+ \mathbf{U}^\top  G(\theta(\zeta)) \mathbf{U}\right)^{-1} \right\| \lesssim 1/|\mathcal{T}(\theta(\zeta))-d_\ell^{-2}| \lesssim \frac{1}{\rho_k},
\end{equation}
where the last bound comes from \eqref{sij2bound2}. Together with Assumptions  \ref{assum_nonoverlap} and \ref{assump_strong}, and the fact that $\Gamma_k$ has length $2\pi \rho_k$, we obtain 
\begin{align}\label{estimate_s2}
|s_2| &\,\lesssim \sum_{k \in \textbf{A}} \sup_{\zeta \in \Gamma_k}  \frac{n^{-1+2\epsilon}+n^{2\epsilon}\phi_n^2\Delta^2(d_k) }{|\zeta - d_i||\zeta - d_j| } \\
&\,\asymp \sum_{k \in \textbf{A}}   \frac{n^{-1+2\epsilon}+n^{2\epsilon}\phi_n^2\Delta^2(d_k) }{(\rho_k + |d_k - d_i|)(\rho_k + |d_k - d_j|) }, \nonumber
\end{align} 
where we apply Lemma \ref{lem_distance} in the last inequality.
Finally, we bound  \eqref{estimate_s2}. First, by triangle inequality we have
$$ 
\Delta^2(d_k) = |d_k - \alpha|   \leq |d_i-\alpha|+|d_k-d_i| = \Delta^2(d_i) +|d_k - d_i|.
$$
For $i\notin \textbf{A}$, $k\in \textbf{A}$, we have
\begin{align*}
\frac{1}{(\rho_k + |d_k - d_i|) }  \le \frac{1}{|d_k - d_i| } \le \frac{1}{\nu_i(\textbf{A})} \,.
\end{align*}
For $i,k\in \textbf{A}$, by triangle inequality, we have $$\rho_k + |d_k - d_i|\geq\rho_i.$$ Then we have 
$$ 
\frac{1}{(\rho_k + |d_k - d_i|) }  \leq  \frac{1}{\rho_i}\lesssim \frac{1}{\nu_{i}(\textbf{A})} + \frac{1}{\Delta(d_i)^2}\,,
$$
where the last inequality is by the definition of $\rho_i$.
Plugging the above estimates into \eqref{estimate_s2}, we get that 
\begin{align}\label{estimate_s2finish}
|s_2| \lesssim&\, n^{-1+2\epsilon}\left(\frac1{\nu_{i}(\textbf{A})} + \frac{\mathbbm 1(i\in \textbf{A})}{\Delta(d_i)^2} \right)\left(\frac1{\nu_{j}(\textbf{A})} + \frac{\mathbbm 1(j\in \textbf{A})}{\Delta(d_j)^2} \right)\\
&+ n^{2\epsilon}\phi_n^2  \left[\left(\frac{\Delta(d_i)^2}{\nu_{i}(\textbf{A})} +1 \right)\left(\frac1{\nu_{j}(\textbf{A})} + \frac{\mathbbm 1(j\in \textbf{A})}{\Delta(d_j)^2} \right)\wedge  \left(\frac{\Delta(d_j)^2}{\nu_{j}(\textbf{A})} +1 \right)\left(\frac1{\nu_{i}(\textbf{A})} + \frac{\mathbbm 1(i\in \textbf{A})}{\Delta(d_i)^2} \right) \right] \,.\nonumber
\end{align}
Combining \eqref{estimate_s0finish} for the bound of $s_0$, \eqref{estimate_s1finish1}, \eqref{estimate_s1finish1_2} {and} \eqref{estimate_s1finish2} for the bound of $s_1$, and \eqref{estimate_s2finish} for the bound of $s_2$, we obtain (\ref{eq_spikedvector0}) for $1\le i,j \le r$ since $\epsilon$ can be arbitrarily small.

Finally, we extend the above results to $1\leq i,j\leq p$.
Define $\mathcal R:=\{1,\cdots, r\}\cup \{i,j\}$. Then we define a perturbed model with SVD as
\begin{equation}\nonumber S_{\wt\epsilon} := \sum_{i\in \mathcal R} \wt d_i \ub_i \vb_i^\top ,
\end{equation}
where $\wt \epsilon>0$ and 
$\wt d_{k}= d_k$ when  $1\le k\le r$ and $\wt d_k=
\wt\epsilon$ when $k>r$ and $k\in \mathcal R$.
Then all the previous proof goes through for the perturbed model. Taking $\wt\epsilon\downarrow 0$ and using continuity, we get (\ref{eq_spikedvector0}) for general $i,j \in \{1,\cdots, p\}$.

\end{proof}

Note that Proposition \ref{prop_outev0} is essentially Theorem \ref{thm_vector} with stronger assumptions.
Thus, to finish the proof of Theorem \ref{thm_vector}, we need to remove Assumptions \ref{assum_nonoverlap} and \ref{assump_strong}, and this is done in the following two subsections.

\subsection{Removing Assumption \ref{assum_nonoverlap}}\label{sec_remove}

\begin{proof}
Recall the constants $\tau'$ in Assumption \ref{assump_strong} and $\wt\tau$ in Assumption \ref{assum_nonoverlap} and set $\wt\tau<\tau'/4$. Recall \eqref{dfn_indexset}, we write an index set $\mathbb O_{\tau'/2} = \{i:d_i - \alpha \geq n^{\tau'/2}(\phi_n+n^{-1/3})\} $.
We say that $\fa, \fb\in \mathbb{O}_{\tau'/2}$, $\fa \neq \fb$, overlap if 
\begin{align}\label{eq_overlap}
|d_\fa - d_\fb| \leq  n^{\wt\tau}(\psi_1(d_\fa)\vee \psi_1(d_\fb))\,.
\end{align}
For $\textbf{A}$ satisfying Assumption \ref{assump_strong}, we define sets $L_1(\textbf{A})$, $L_2(\textbf{A})\subset \mathbb{O}_{\tau'/2}$, such that $L_1(\textbf{A})\subset \textbf{A}\subset L_2(\textbf{A})$. 
$L_1(\textbf{A})$ is constructed by successively removing $k \in \textbf{A}$, such that $k$ overlaps with an index of $\textbf{A}^c$. This process is repeated until no such $k$ exists. In other words,
$L_1(\textbf{A})$ is the largest subset of $\textbf{A}$ that do not overlap with $L_1(\textbf{A})^c$. on the other hand, $L_2(\textbf{A})$ is constructed by successively adding $k \in \mathbb{O}_{\tau'/2}\backslash \mathbf{A}$ into $\mathbf{A}$, where $k$ overlaps with an index
of $\mathbf{A}$. This process is repeated until no such k exists. In other words, $L_2(\textbf{A})$ is the smallest subset of $\mathbb{O}_{\tau'/2}$ that do not overlap with $L_2(\textbf{A})^c$. See Figure $5.2$ in \cite{principal} for an illustration of construction of $L_1(\textbf{A})$ and $L_2(\textbf{A})$. It is easy to see that $L_1(\textbf{A})$ and $L_2(\textbf{A})$ exist and are unique. The main reason for defining these two sets is that (\ref{eq_spikedvector0}) now holds under Assumption \ref{assump_strong} with the parameter sets as $(\tau'/2,L_1(\textbf{A}))$ or $(\tau'/2,L_2(\textbf{A}))$. 
Now we are ready to prove \eqref{eq_spikedvector}. There are four cases, (a)-(d), to consider.

\begin{enumerate}
\item [(a)] $i,j\notin \textbf{A}$ and $i=j$. If $i\notin L_2(\textbf{A})$, then using $r$ is bounded, we see that $\nu_{i}(\textbf{A})\asymp \nu_{i}(L_2(\textbf{A}))$. Then using Proposition \ref{prop_outev0} and the definition of $\psi_1$, we have
\begin{align}\label{proofout_a1}
 &\langle \ub_i, {\mathcal{P}}_\textbf{A}\ub_i\rangle \le\langle \ub_i, {\mathcal{P}}_{L_2(\textbf{A})}\ub_i\rangle \\
 \underset{\eqref{eq_spikedvector0}}{\prec} & \frac1{n\nu^2_{i}(L_2(\textbf{A}))} + \phi_n^2 \frac{\Delta^2(d_i)+ \nu_{i}(L_2(\textbf{A}))}{\nu_{i}^2(L_2(\textbf{A}))}\nonumber\\
\lesssim &\, \frac{\psi_1^2(d_i)\Delta^2(d_i)}{\nu^2_{i}(\textbf{A}) } +  \frac{\phi_n^2}{\nu_{i}(\textbf{A})} \,.\nonumber
\end{align}
If $i\in L_2(\textbf{A})$, which implies that $L_2(\textbf{A})\setminus \textbf{A}\ne \emptyset$. Since $\textbf{A}$ overlaps with $L_2(\textbf{A})$, this gives that 
\begin{equation}\label{proofout_a2}
\nu_{i}(\textbf{A})\underset{\eqref{eq_overlap}}{\leq} \left[\Delta(d_i)\right]^{-1} n^{-1/2+\wt\tau} + n^{\wt\tau}\phi_n  = n^{\wt\tau}\psi_1(d_i) 
\underset{\eqref{eq_assnonover}}{\le} \nu_{i}(L_2(\textbf{A})),
\end{equation}
Then Proposition \ref{prop_outev0} gives that
\begin{equation}\label{proofout_a3}
\begin{split}
&\left|\langle \ub_i, {\mathcal{P}}_\textbf{A}\ub_i\rangle- a_1(d_i) \right|\le \langle \ub_i, {\mathcal{P}}_{L_2(\textbf{A})}\ub_i\rangle  +  a_1(d_i) \\
\underset{\eqref{eq_spikedvector0}}{\prec}&\, \frac{m_{1c}(\theta(d_i))}{d_i^2\mathcal T'(\theta(d_i))} + \phi_n +   \frac{1}{n^{1/2} \Delta(d_i)}+ \frac1{n\nu^2_{i}(L_2(\textbf{A}))} \\
&+ \frac{\phi_n^2 \Delta^2(d_i)}{\nu_{i}^2(L_2(\textbf{A}))} + \frac{\phi_n^2}{\Delta^2(d_i)}
{\prec}
 \Delta^2(d_i) \underset{\eqref{proofout_a2}}{\le} \frac{n^{2\wt\tau}\psi_1^2(d_i)\Delta^2(d_i)}{\nu^2_{i}(\textbf{A})} ,
\end{split}
\end{equation}
where in the second step we used \eqref{eq_s36_3} such that $1/\mathcal T'(\theta(d_i)) \asymp (d_i - \alpha) = \Delta^2(d_i)$, and \eqref{proofout_a2} such that $ \frac1{n\nu^2_{i}(L_2(\textbf{A}))} + \frac{\phi_n^2 \Delta^2(d_i)}{\nu_{i}^2(L_2(\textbf{A}))} \leq \Delta^2(d_i)$, and \eqref{eq_assstrong} for the rest terms. From \eqref{proofout_a1} and \eqref{proofout_a3}, we conclude
\begin{equation}\label{proofout_a4}
\begin{split}
\left|\langle \ub_i, {\mathcal{P}}_\textbf{A}\ub_i\rangle- a_1(d_i) \right| \prec \frac{n^{2\wt\tau}\psi_1^2(d_i)\Delta^2(d_i)}{\nu^2_{i}(\textbf{A})}, \quad i \notin \textbf{A}.
\end{split}
\end{equation}

\item [(b)] $i,j\in \textbf{A}$ and $i=j$. We first consider the case $i\in L_1( \textbf{A})$. We can write
\begin{equation}\label{proofout_b1}
\langle \ub_i, {\mathcal{P}}_ \textbf{A}\ub_i\rangle = \langle \ub_i, {\mathcal{P}}_{L_1( \textbf{A})}\ub_i\rangle +  \langle \ub_i, {\mathcal{P}}_{ \textbf{A}\setminus L_1( \textbf{A})}\ub_i\rangle .
\end{equation}
Using \eqref{eq_spikedvector0} and the fact that $\nu_{i}( \textbf{A})\asymp \nu_{i}(L_1( \textbf{A}))$ (because $i$ do not overlap with either $\textbf{A}^c$ or $L_1(\textbf{A})^c$), we can estimate the first term as 
\begin{align}
&\left| \langle \ub_i, {\mathcal{P}}_{L_1( \textbf{A})}\ub_i\rangle- a_1(d_i) \right| \nonumber\\
\prec &\,\psi_1(d_i) + \psi^2_1(d_i) \Delta^2(d_i) \left(\frac1{\nu^2_{i}( L_1(\textbf{A}))} + \frac{1}{\Delta^4(d_i)} \right)\label{proofout_b2} \\
\prec  &\, \psi_1(d_i) + \psi^2_1(d_i) \Delta^2(d_i) \left(\frac1{\nu^2_{i}( \textbf{A})} + \frac{1}{\Delta^4(d_i)} \right)
\prec \psi_1(d_i) +  \frac{\psi^2_1(d_i) \Delta^2(d_i) }{\nu^2_{i}( \textbf{A})} \,,\nonumber
\end{align}
where we used that $\nu_i(\textbf A)\le \Delta^2(d_i) $ (by Assumption \ref{assump_strong}) in the last step. For the second term in \eqref{proofout_b1}, it suffices to assume that $\textbf{A}\setminus L_1(\textbf{A})\ne \emptyset$ (otherwise it is equal to zero). Then we observe that $\nu_{i}(\textbf{A}) \asymp \nu_{i}(\textbf{A}\setminus L_1(\textbf{A}))$. By \eqref{eq_spikedvector0}, similar to \eqref{proofout_a1} with $\textbf{A}$ replaced by $\textbf{A}\setminus L_1(\textbf{A})$, we obtain that
\begin{equation}\label{proofout_b3}
\langle \ub_i, {\mathcal{P}}_{\textbf{A}\setminus L_1(\textbf{A})}\ub_i\rangle \prec  \frac{\psi_1^2(d_i)\Delta^2(d_i)}{\nu^2_{i}(\textbf{A}) } +  \frac{\phi_n^2}{\nu_{i}(\textbf{A})} \prec  \phi_n +  \frac{\psi^2_1(d_i) \Delta^2(d_i) }{\nu^2_{i}(\textbf{A})}\,,
\end{equation}
where the last step comes from the fact that $i$ does not overlap with its complement since $i \in L_1(\textbf{A})$.
Next, for the case $i\notin L_1(\textbf{A})$, this implies $\textbf{A}\setminus L_1(\textbf{A})\ne \emptyset$, $\textbf{A}$ overlaps with its complement, and thus  $L_2(\textbf{A})\setminus \textbf{A}\ne \emptyset$. With these conditions, \eqref{proofout_a2} holds by the same arguments, and with similar steps in deriving \eqref{proofout_a3}, we get
\begin{equation}\label{proofout_b4}
\left|\langle \ub_i, {\mathcal{P}}_\textbf{A}\ub_i\rangle- a_1(d_i) \right|\le \langle \ub_i, {\mathcal{P}}_{L_2(\textbf{A})}\ub_i\rangle  +  a_1(d_i) \prec \frac{n^{2\wt\tau}\psi_1^2(d_i)\Delta^2(d_i)}{\nu^2_{i}(\textbf{A})}.
\end{equation}
Combining \eqref{proofout_a4} and \eqref{proofout_b2}-\eqref{proofout_b4}, we conclude that 
\begin{equation}\label{conclude_i=j}
\begin{split}
&\left| \langle \ub_i, {\mathcal{P}}_\textbf{A}\ub_i\rangle- \mathbbm 1(i\in \textbf{A})a_1(d_i) \right| \prec n^{2\wt\tau}R(i, \textbf{A}).
\end{split}
\end{equation}
This concludes \eqref{eq_spikedvector} for the $i=j$ case since $\wt \tau$ can be chosen arbitrarily small. 

\item [(c)] $i\ne j$ and $i\notin \textbf{A}$ or $j\notin  \textbf{A}$. Using Cauchy-Schwarz inequality such that
\begin{equation}\label{CS new}
\left| \langle \ub_i, {\mathcal{P}}_ \textbf{A}\ub_j\rangle\right|^2 \le \langle \ub_i, {\mathcal{P}}_ \textbf{A}\ub_i\rangle\langle \ub_j, {\mathcal{P}}_ \textbf{A}\ub_j\rangle ,
\end{equation}
and combine with \eqref{proofout_a4} and \eqref{conclude_i=j},
we find that in this case \eqref{eq_spikedvector0} holds since $\wt \tau$ can be chosen arbitrarily small. 

\item [(d)] $i\ne j$ and $i,j\in \textbf{A}$. Our goal is to prove that  
\begin{equation}\label{eq_pfgoal}
\left| \langle \ub_i, {\mathcal{P}}_\textbf{A}\ub_j\rangle \right|\prec n^{2\wt\tau} \left[\psi_{1}^{1/2}(d_i)+  \frac{\psi_{1}(d_i)\Delta(d_i) }{\nu_{i}(\textbf{A})} \right] \left[\psi_{1}^{1/2}(d_j)+  \frac{\psi_{1}(d_j)\Delta(d_j) }{\nu_{j}(\textbf{A})}  \right]\,.
\end{equation}  
We again split ${\mathcal{P}}_\textbf{A}$ into
\begin{equation}\label{proofout_d1}
\langle \ub_i, {\mathcal{P}}_\textbf{A}\ub_j\rangle = \langle \ub_i, {\mathcal{P}}_{L_1(\textbf{A})}\ub_j\rangle +  \langle \ub_i, {\mathcal{P}}_{\textbf{A}\setminus L_1(\textbf{A})}\ub_j\rangle .
\end{equation}
There are four cases: (i) $i,j\in L_1(\textbf{A})$; (ii) $i\in L_1(\textbf{A})$ and $j\notin L_1(\textbf{A})$; (iii) $i\notin L_1(\textbf{A})$ and $j\in L_1(\textbf{A})$; (iv) $i,j\notin L_1(\textbf{A})$. 
In case (i), we can bound the first term in \eqref{proofout_d1} using Proposition \ref{prop_outev0} and the estimates that $\nu_{i}(\textbf{A})\asymp \nu_{i}(L_1(\textbf{A}))$ and $\nu_{j}(\textbf{A})\asymp \nu_{j}(L_1(\textbf{A}))$. The second term in \eqref{proofout_d1} can be bounded as in case (c) above (with $\textbf{A}$ replaced by $\textbf{A}\setminus L_1(\textbf{A})$) together with the estimates $\phi_n\le \nu_{i}(\textbf{A}) \le C\nu_{i}(\textbf{A}\setminus L_1(\textbf{A}))$ and $\phi_n\le \nu_{j}(\textbf{A}) \le C\nu_{j}(\textbf{A}\setminus L_1(\textbf{A}))$.
In case (ii), we have
\begin{equation}\label{revisionadd}
\begin{split}
&\nu_{i}(\textbf{A})\asymp \nu_{i}(L_1(\textbf{A})) \asymp \nu_{i}(\textbf A\setminus L_1(\textbf{A})),\quad \nu_{i}(\textbf{A}) \le C|d_i-d_j|, \\
 &\nu_{j}(\textbf{A})\lesssim \nu_{j}(\textbf{A}\setminus L_1(\textbf{A})) \lesssim  n^{\wt\tau}\psi_1(d_j) 
 \lesssim \nu_{j}(L_1(\textbf{A})).
 \end{split}
 \end{equation}
 Then with Proposition \ref{prop_outev0}, we can bound the first term in \eqref{proofout_d1} as
\begin{align*}
&\left| \langle \ub_i, {\mathcal{P}}_{L_1(\textbf{A})}\ub_j\rangle\right| \\
\prec &\,  \frac{1}{n\nu_{i}(L_1(\textbf{A}))\nu_{j}(L_1(\textbf{A}))} + \frac{1 }{n\nu_{j}(L_1(\textbf{A}))\Delta^2(d_i)} \\
& + \phi_n^2 \Delta(d_i)\Delta(d_j) \left[\left(\frac{1}{\nu_{i}(L_1(\textbf{A})) } +\frac1{\Delta^2(d_i)}   \right) \left(\frac{1}{\nu_{j}(L_1(\textbf{A})) } +\frac1{\Delta^2(d_j)}   \right) \right] \\
& + \frac{\psi_1(d_i)\Delta^2(d_i)}{|d_i-d_j|}\\
\lesssim &\, \left[\psi_{1}^{1/2}(d_i)+ \frac{\psi_{1}(d_i)\Delta(d_i) }{\nu_{i}(\textbf{A})}   \right]  \left[\psi_{1}^{1/2}(d_j)+\frac{\psi_{1}(d_j)\Delta(d_j) }{\nu_{j}(\textbf{A})} \right] + \frac{\psi_1(d_i)\Delta^2(d_i)}{|d_i-d_j|}\,.
\end{align*}
For the last term, we first assume that $d_j\le d_i$ and $ d_i -\alpha \le 2|d_i-d_j|$. Then
\begin{align*}
  \frac{\psi_1(d_i)\Delta^2(d_i)}{|d_i - d_j|}\leq 2\psi_1(d_i) \le \sqrt{\psi_1(d_i)\psi_1(d_j)}. 
 \end{align*}
On the other hand, if $d_j\ge d_i$ or $ d_i -\alpha \ge 2|d_i-d_j|$, we have $\Delta(d_i) \lesssim \Delta(d_j)$. Hence using \eqref{revisionadd}, we get
\begin{align*}
  \frac{\psi_1(d_i)\Delta^2(d_i)}{|d_i-d_j|}\lesssim n^{\wt\tau} \frac{\psi_1(d_i)\Delta(d_i)\psi_1(d_j)\Delta(d_j) }{\nu_{i}(\textbf{A})\nu_{j}(\textbf{A})}. 
 \end{align*}
 The above estimates show that $ | \langle \ub_i, {\mathcal{P}}_{L_1(\textbf{A})}\ub_j\rangle | $ can be bounded by the right-hand side of \eqref{eq_pfgoal}. 
 The second term in \eqref{proofout_d1} can be bounded as in case (c) above (with $\textbf{A}$ replaced by $\textbf{A}\setminus L_1(\textbf{A})$) together with the estimates in \eqref{revisionadd} such that
$$\nu_{i}(\textbf{A})\asymp \nu_{i}(\textbf{A}\setminus L_1(\textbf{A}))\gtrsim n^{\wt\tau}\phi_n, \ \ \nu_{j}(\textbf{A}) \lesssim \nu_{j}(\textbf{A}\setminus L_1(\textbf{A})) \lesssim n^{\wt\tau}\psi_1(d_j). $$
Then we get that
\begin{equation}\nonumber
\begin{split}
&\left| \langle \ub_i, {\mathcal{P}}_{\textbf{A}\setminus L_1(\textbf{A})}\ub_j\rangle\right|  \\
\prec&\, n^{2\wt\tau} \left[  \frac{\phi_n}{\nu^{1/2}_{i}(\textbf{A}\setminus L_1(\textbf{A}))}+\frac{\psi_{1}(d_i)\Delta(d_i) }{\nu_{i}(\textbf{A}\setminus L_1(\textbf{A}))} \right] \left[\psi_{1}^{1/2}(d_j)+  \frac{\psi_{1}(d_j)\Delta(d_j) }{\nu_{j}(\textbf{A}\setminus L_1(\textbf{A}))}  \right] \\
\prec &\, n^{2\wt\tau} \left[ \psi_{1}^{1/2}(d_i) +\frac{\psi_{1}(d_i)\Delta(d_i) }{\nu_{i}(\textbf{A} )} \right]\left[\psi_{1}^{1/2}(d_j)+  \frac{\psi_{1}(d_j)\Delta(d_j) }{\nu_{j}(\textbf{A})} \right].
\end{split}
\end{equation}
This concludes the proof of \eqref{eq_pfgoal} for case (ii). The case (iii) can be handled in the same way as case (ii) by interchanging $i$ and $j$. Finally, we deal with case (iv). Again we again split ${\mathcal{P}}_{\textbf A}$ as \eqref{proofout_d1}. For the first term in \eqref{proofout_d1}, we have
$$\nu_{i}(\textbf{A})\lesssim \nu_{i}(L_1(\textbf{A})), \quad \nu_{i}(L_1(\textbf{A})) \gtrsim \psi_1(d_i), $$ 
and similar estimates for the $j$ case. Then using Proposition \ref{prop_outev0}, we can obtain that 
\begin{align}
&\left| \langle \ub_i, {\mathcal{P}}_{L_1(\textbf{A})}\ub_j\rangle \right| \prec  \frac{1}{n\nu_{i}(L_1(\textbf{A}))\nu_{j}(L_1(\textbf{A}))} \nonumber \\
&\qquad+  \phi_n^2  \left[\left(\frac{\Delta^2(d_i)}{\nu_{i}(L_1(\textbf{A}))} +1 \right) \frac1{\nu_{j}(L_1(\textbf{A}))} \right] \wedge  \left[\left(\frac{\Delta^2(d_j)}{\nu_{j}(L_1(\textbf{A}))} +1 \right) \frac1{\nu_{i}(L_1(\textbf{A}))} \right]\nonumber\\ 
\lesssim  &\, \frac{\psi_1(d_i)\psi_1(d_j)\Delta(d_i) \Delta(d_j)}{ \sqrt{\nu_{i}(L_1(\textbf{A}))\nu_{j}(L_1(\textbf{A}))}}   \left[\left(\frac{1}{\nu_{i}(L_1(\textbf{A}))} +\frac1{\Delta^2(d_i)} \right) \left(\frac{1}{\nu_{j}(L_1(\textbf{A}))} +\frac1{\Delta^2(d_j)} \right)   \right] ^{1/2}\nonumber \\
\lesssim &\, \left[ \psi_{1}^{1/2}(d_i)+ \frac{\psi_{1}(d_i)\Delta(d_i) }{\nu_{i}(\textbf{A})} \right]  \left[ \psi_{1}^{1/2}(d_j)+\frac{\psi_{1}(d_j)\Delta(d_j) }{\nu_{j}(\textbf{A})}  \right] .\nonumber
\end{align}
For the second term in \eqref{proofout_d1}, we use the estimate
$$\nu_{i}(\textbf{A})\lesssim \nu_{i}(\textbf{A}\setminus L_1(\textbf{A}))\lesssim n^{\wt\tau}\psi_1(d_i) $$
and the same discussion in case (b) to get that
\begin{align*}
\langle \ub_i, {\mathcal{P}}_{\textbf{A}\setminus L_1(\textbf{A})}\ub_i\rangle & \prec \Delta^2 (d_i)+\psi_{1}(d_i)+ n^{2\wt\tau}\left( \psi_{1} (d_i)+\frac{\psi_{1}^2(d_i)\Delta^2(d_i) }{\nu^2_{i}(\textbf{A}\setminus L_1(\textbf{A}))}  \right)\\
&\lesssim  n^{2\wt\tau}\left(\psi_{1} (d_i)+\frac{\psi_{1}^2(d_i)\Delta^2(d_i) }{\nu^2_{i}(\textbf{A} )} \right).
\end{align*}
A similar estimate holds for $\langle \ub_j, {\mathcal{P}}_{\textbf{A}\setminus L_1(\textbf{A})}\ub_j\rangle$. Then we conclude that 
\begin{align*}
\left| \langle \ub_i, {\mathcal{P}}_\textbf{A}\ub_j\rangle\right| &\le \langle \ub_i, {\mathcal{P}}_\textbf{A}\ub_i\rangle^{1/2}\langle \ub_j, {\mathcal{P}}_\textbf{A}\ub_j\rangle^{1/2}  \\
& \prec n^{2\wt\tau}  \left[\psi_{1}^{1/2}(d_i)+  \frac{\psi_{1}(d_i)\Delta(d_i) }{\nu_{i}(\textbf{A})} \right] \left[\psi_{1}^{1/2}(d_j)+  \frac{\psi_{1}(d_j)\Delta_1(d_j) }{\nu_{j}(\textbf{A})} \right].
\end{align*}
This proves \eqref{eq_pfgoal} for case (iv), and hence concludes the proof for case (d).

\end{enumerate}

Combining cases (c) and (d), we conclude \eqref{eq_spikedvector} for the $i\ne j$ case since $\wt \tau$ can be chosen arbitrarily small. This finishes the proof of Theorem \ref{thm_vector} under Assumption \ref{assump_strong} together with \eqref{conclude_i=j}.
\end{proof}

\subsection{Removing Assumption \ref{assump_strong}}\label{sec_nonoutliereve}
By (iv) of Assumption \ref{assum_main}, for all $i \in \textbf{A}\subset \mathbb{O}_+$, we have
\begin{equation}\label{eq_spikeweak}
\Delta^2(d_i) = d_i -\alpha \geq \phi_n+n^{-1/3}.
\end{equation}
Recall in Assumption \ref{assump_strong} we had a stronger condition that $d_i - \alpha \geq n^{\tau'}(\phi_n + n^{-1/3})$ for $0<\tau'<1/3$, and now we remove it. 

\begin{proof}
The proof is devoted to showing that \eqref{eq_spikedvector} holds for $\textbf{A}\subset \mathbb O^+$. 
Fix a small constant $0<\epsilon<1/3$. Note that the following {\em gap property} can be checked by contradiction; that is, there exists some $x_0 \in [1, r]$ so that for all $k$ such that $d_k> \alpha+x_0 n^{\epsilon}(\phi_n+n^{-1/3})$, we have $d_k\geq \alpha+(x_0+1)n^{\epsilon}(\phi_n+n^{-1/3})$. Following the idea in \cite[Section 6.2]{principal}, for such $x_0$, we split $\textbf{A}=S_{0} \cup S_1$ such that $d_k \le \alpha +x_0 n^{\epsilon}(\phi_n+n^{-1/3})$ for $k \in S_0$, and $d_k \geq \alpha+(x_0+1)n^{\epsilon}(\phi_n+n^{-1/3})$ for $k\in S_1$. Note that Assumption \ref{assump_strong} fit in $S_1$ by letting $\tau' = \epsilon$, thus Theorem \ref{thm_vector} is valid when $\mathbf{A} = S_1$ as we proved in Section \ref{sec_remove}. Therefore, without loss of generality, we assume that $S_0\ne \emptyset$.
There are totally six cases: (a) $i,j\in S_0$; (b) $i\in S_0$ and $j\in S_1$; (c) $i\in S_0$ and $j\notin \textbf{A}$; (d) $i,j\in S_1$; (e) $i\in S_1$ and $j\notin \textbf{A}$; (f) $i,j\notin \textbf{A}$.  

\begin{enumerate}
\item [(a)] $i,j\in S_0$. We have the splitting 
\begin{equation}\label{split_general}
\langle \ub_i, {\mathcal{P}}_\textbf{A}\ub_j\rangle = \langle \ub_i, {\mathcal{P}}_{S_0}\ub_j\rangle +  \langle \ub_i, {\mathcal{P}}_{S_1}\ub_j\rangle.
\end{equation}
Applying Cauchy-Schwarz inequality as in \eqref{CS new} and (\ref{eq_nonspikeeg2}) to the first term, and Theorem \ref{thm_vector} to the second term, we get that
\begin{align*}
&\left |\langle \ub_i, {\mathcal{P}}_\textbf{A}\ub_j\rangle-\delta_{ij}a_1(d_i) \right| \\ 
\prec &\, \frac{n^{3\epsilon}\left( \phi_n^3+n^{-1}\right)}{\Delta^2(d_i) \Delta^2(d_j)} + \left(  \frac{\phi_n}{ \nu^{1/2}_{i}(S_1)}+\frac{\psi_{1}(d_i)\Delta(d_i) }{\nu_{i}(S_1)} \right)\left(  \frac{\phi_n}{ \nu^{1/2}_{j}(S_1)}+\frac{\psi_{1}(d_j)\Delta(d_j) }{\nu_{j}(S_1)} \right)\\
\lesssim &\,
n^{4\epsilon}\psi_1^{1/2}(d_i)\psi_1^{1/2}(d_j)\,,
\end{align*}
where in the first step we use $\eta_l (\gamma_i) \sqrt{\kappa_{\gamma_i}}  \lesssim n^{-1} +\phi_n n^{-5/6}\lesssim n^{-1}+\phi_n^3$ for $i\in \mathbf{A}$ derived by \eqref{etalE} and $\kappa_{\gamma_i} \asymp n^{-2/3}$, and 
$$
\frac{n^{3\epsilon}\left( \phi_n^3+n^{-1}\right)}{\Delta^2(d_i) \Delta^2(d_j)} \lesssim n^{3\epsilon}(\phi_n + n^{-1/3}) \lesssim n^{4\epsilon}\psi_1^{1/2}(d_i)\psi_1^{1/2}(d_j),
$$ 
and 
$$
\frac{\phi_n}{ \nu^{1/2}_{i}(S_1)}+\frac{\psi_{1}(d_i)\Delta(d_i) }{\nu_{i}(S_1)} \lesssim \frac{\psi_1(d_i)}{\nu_i^{1/2}(S_1)} \lesssim \phi_n^{1/2} + \frac{n^{-1/3}}{\Delta(d_i)} \lesssim (\phi_n + n^{-1/3})^{1/2} \lesssim \psi_1^{1/2}(d_i)
$$
in the second step, since for $d= d_i$ or $d = d_j$, we have $\phi_n+n^{-1/3} \le \Delta^2(d) \lesssim n^\epsilon(\phi_n+n^{-1/3}) \lesssim \nu_{i}(S_1) $ and $\psi_1(d) = \phi_n+n^{-1/2}/\Delta(d) \gtrsim \phi_n + \phi_n^{-1/2}n^{-1/2-\epsilon/2}+n^{-1/3-\epsilon/2} \gtrsim n^{-\epsilon/2}(\phi_n + n^{-1/3})$.

\item [(b)] $i\in S_0$ and $j\in S_1$. Similar to the previous case, by applying Cauchy-Schwarz and Theorem \ref{thm_noneve} to the first term in \eqref{split_general}, we get that
\begin{align}
 &|\langle \ub_i, {\mathcal{P}}_{S_0}\ub_j\rangle|\prec \frac{n^{3\epsilon}(n^{-1}+\phi_n^3)}{\Delta^2 (d_i)\Delta^2(d_j)} \lesssim n^{4\epsilon} \psi_1^{1/2}(d_i)\psi_1^{1/2}(d_j)\,.
\end{align}
For the second term, we first let Assumption \ref{assum_nonoverlap} hold and apply Proposition \ref{prop_outev0} and get that
\begin{align}
&|\langle \ub_i,\, {\mathcal{P}}_{S_1}\ub_j\rangle| \nonumber\\
\prec &\, \frac{\psi_1(d_j)\Delta^2(d_j)}{|d_j - d_i|} +\psi_1^2(d_i)\Delta^2(d_i)  \left(\frac{1}{\nu_{i}(S_1)}+ \frac{1}{\Delta^2(d_i)}\right)\left(\frac1{\nu_{j}(S_1)} + \frac{1}{\Delta^2(d_j)} \right)\nonumber\\
\lesssim &\,\left[\psi_{1}^{1/2}(d_i)+  \frac{\psi_{1}(d_i)\Delta(d_i) }{\nu_{i}(\textbf{A})}   \right] \left[\psi_{1}^{1/2}(d_j)+  \frac{\psi_{1}(d_j)\Delta(d_j) }{\nu_{j}(\textbf{A})}  \right] ,
\end{align}
where we use
\begin{align*}
\nu_{i}(S_1)  \gtrsim \Delta^2(d_{i}),\ \  \nu_{j}(S_1)\gtrsim \Delta^2(d_{j}) \wedge \nu_{j}(\textbf{A}),\ \  \psi_1(d_j) \lesssim \psi_1(d_i)\,,\\
|d_j - d_i|\gtrsim \Delta^2(d_j) \gtrsim \Delta^2(d_i),\ \  \psi_1(d_j) \Delta(d_j)\gtrsim \psi_1(d_i)\Delta(d_i)\,.
\end{align*}
This concludes the proof of case (b) if the non-overlapping condition hold. Otherwise, we can remove the non-overlapping condition as in Section \ref{sec_remove}. 

\item [(c)] Since (e) and (f) follow the same proof, we prove them together here. Note that in all cases $j \notin \mathbf{A}$, and we have $\nu_{j}(\textbf{A})\le \nu_{j}(S_1)$. In case (c) with $i\in S_0$ and $j \notin \textbf{A}$, we use the splitting in \eqref{split_general} and apply Cauchy-Schwarz, Theorem \ref{thm_noneve}, and Proposition \ref{prop_outev0} to the first term to obtain that
\begin{align}
 |\langle \ub_i, {\mathcal{P}}_{S_0}\ub_j\rangle| 
  \lesssim n^{5\epsilon}  \psi_1^{1/2}(d_i)  \left[  \frac{\phi_n}{ \nu^{1/2}_{j}(\textbf{A})}+\frac{\psi_{1}(d_j)\Delta(d_j) }{\nu_{j}(\textbf{A})} \right] ,
\end{align}
and then we use Theorem \ref{thm_vector} to the second term and obtain that  
\begin{align}
 |\langle \ub_i, {\mathcal{P}}_{S_1}\ub_j\rangle| &\prec \left[  \frac{\phi_n}{ \nu^{1/2}_{i}(S_1)}+\frac{\psi_{1}(d_i)\Delta(d_i) }{\nu_{i}(S_1)} \right] \left[  \frac{\phi_n}{ \nu^{1/2}_{j}(S_1)}+\frac{\psi_{1}(d_j)\Delta(d_j) }{\nu_{j}(S_1)} \right]\nonumber \\
&  \lesssim    \psi_1^{1/2}(d_i) \left[  \frac{\phi_n}{ \nu^{1/2}_{j}(\textbf{A})}+\frac{\psi_{1}(d_j)\Delta(d_j) }{\nu_{j}(\textbf{A})} \right]\,,
\end{align}
where in the last step we also used the last inequality in solving case (a).
In case (e) with $i\in S_1$ and $j\notin \textbf{A}$, $ |\langle \ub_i, {\mathcal{P}}_{S_0}\ub_j\rangle|$ can be bounded in the same way as case (c). On the other hand, by Theorem \ref{thm_vector} we have
\begin{align*}
|\langle \ub_i, {\mathcal{P}}_{S_1}\ub_j\rangle|  \prec&\,  \Delta(d_i)\left[  \frac{\phi_n}{ \nu^{1/2}_{j}(S_1)}+\frac{\psi_{1}(d_j)\Delta(d_j) }{\nu_{j}(S_1)} \right] \\
&+ \left[   \psi^{1/2} (d_i)+\frac{\psi_{1}(d_i)\Delta(d_i) }{\nu_{i}(S_1)}  \right] \left[  \frac{\phi_n}{ \nu^{1/2}_{j}(S_1)}+\frac{\psi_{1}(d_j)\Delta(d_j) }{\nu_{j}(S_1)} \right] \\
\lesssim &\, \Delta(d_i)\left[  \frac{\phi_n}{ \nu^{1/2}_{j}(\textbf{A})}+\frac{\psi_{1}(d_j)\Delta(d_j) }{\nu_{j}(\textbf{A})} \right]  \\
&+\left[   \psi^{1/2} (d_i)+\frac{\psi_{1}(d_i)\Delta(d_i) }{\nu_{i}(\textbf{A})} \right] \left[  \frac{\phi_n}{ \nu^{1/2}_{j}(\textbf{A})}+\frac{\psi_{1}(d_j)\Delta(d_j) }{\nu_{j}(\textbf{A})} \right] ,
\end{align*}
where we used $ \nu_{i}(S_1)\gtrsim \Delta^2(d_i) \wedge \nu_{i}(\textbf{A})$ in the second bound. In case (f) with $i,j\notin \textbf{A}$, by Theorem \ref{thm_noneve} we obtain that
\begin{align*}
 |\langle \ub_i, {\mathcal{P}}_{S_0}\ub_j\rangle| &\prec \frac{n^{3\epsilon}(n^{-1}+\phi_n^3)}{\left(\Delta^2(d_i)+\phi_n+n^{-1/3}\right) \left(\Delta^2(d_j)+\phi_n+n^{-1/3}\right)} \\
&  \lesssim n^{5\epsilon}  \left[  \frac{\phi_n}{\nu_{i}^{1/2}(\textbf{A})}+\frac{\psi_{1}(d_i)\Delta(d_i) }{\nu_{i}(\textbf{A})} \right]   \left[  \frac{\phi_n}{\nu^{1/2}_{j}(\textbf{A})}+\frac{\psi_{1}(d_j)\Delta(d_j) }{\nu_{j}(\textbf{A})} \right] , 
\end{align*}
where in the second bound we use the fact that for some $k \in S_0$, 
\begin{align*}
\nu_{i}(\textbf{A})&\,\lesssim |d_i-\alpha|+|d_k-\alpha|\lesssim n^\epsilon (\Delta^2(d_i) +\phi_n + n^{-1/3} ),\\
\nu_{j}(\textbf{A})&\,\lesssim |d_j-\alpha|+|d_k-\alpha|\lesssim n^\epsilon (\Delta^2(d_j) +\phi_n + n^{-1/3} ),
\end{align*}
and
\begin{align*}
\Delta(d_i)\psi_1(d_i) + \phi_n v_i(\mathbf{A})^{1/2}&\,\gtrsim \phi_n^{3/2}+\phi_n n^{-1/6}+n^{-1/2} +\phi_n v_i^{1/2}(\mathbf A) \gtrsim (\phi_n^3 + n^{-1})^{1/2},\\
\Delta(d_j)\psi_1(d_j) + \phi_n v_j(\mathbf{A})^{1/2}&\,\gtrsim \phi_n^{3/2}+\phi_n n^{-1/6}+n^{-1/2} +\phi_n v_j^{1/2}(\mathbf A) \gtrsim (\phi_n^3 + n^{-1})^{1/2},
\end{align*}
by $\Delta(d_k)^2 = d_k - \alpha \gtrsim \phi_n+n^{-1/3}$, for $k = 1,\ldots r$. 
For the ${\mathcal{P}}_{S_1}$ term, we have
\begin{align*}
 |\langle \ub_i, {\mathcal{P}}_{S_1}\ub_j\rangle| &\prec \left[  \frac{\phi_n}{\nu_{i}^{1/2}(S_1)}+\frac{\psi_{1}(d_i)\Delta(d_i) }{\nu_{i}(S_1)} \right]   \left[  \frac{\phi_n}{\nu^{1/2}_{j}(S_1)}+\frac{\psi_{1}(d_j)\Delta(d_j) }{\nu_{j}(S_1)} \right]  \\
& \le  \left[  \frac{\phi_n}{\nu_{i}^{1/2}(\textbf{A})}+\frac{\psi_{1}(d_i)\Delta(d_i) }{\nu_{i}(\textbf{A})} \right]   \left[  \frac{\phi_n}{\nu^{1/2}_{j}(\textbf{A})}+\frac{\psi_{1}(d_j)\Delta(d_j) }{\nu_{j}(\textbf{A})} \right] ,
\end{align*}
where we use $\nu_{i}(\textbf{A})\le \nu_{i}(S_1)$ and $\nu_{j}(\textbf{A})\le \nu_{j}(S_1)$ in the second bound.

\item [(d)] $i,j\in S_1$.  Again, using \eqref{split_general}, Theorem \ref{thm_noneve}, Theorem \ref{thm_vector}, and similar steps in case (a), we get that 
\begin{align*}
& \left|\langle \ub_i, {\mathcal{P}}_\textbf{A}\ub_j \rangle - \delta_{ij}a_1(d_i)\right|\\
\prec &\, n^{4\epsilon}\psi_1^{1/2}(d_i)\psi_1^{1/2}(d_j) +\left[   \psi^{1/2} (d_i)+\frac{\psi_{1}(d_i)\Delta(d_i) }{\nu_{i}(S_1)} \right]\left[   \psi^{1/2} (d_j)+\frac{\psi_{1}(d_j)\Delta(d_j) }{\nu_{j}(S_1)}\right] \\
\prec &\, n^{4\epsilon} \left[   \psi^{1/2} (d_i)+\frac{\psi_{1}(d_i)\Delta(d_i) }{\nu_{i}(\textbf{A})}\right]\left[   \psi^{1/2} (d_j)+\frac{\psi_{1}(d_j)\Delta(d_j) }{\nu_{j}(\textbf{A})} \right]\,,
\end{align*}
where we used $ \nu_{i}(S_1)\gtrsim \Delta^2(d_{i})\wedge \nu_{i}(\textbf{A})$ and $ \nu_{j}(S_1)\gtrsim \Delta^2(d_{j})\wedge \nu_{j}(\textbf{A})$ in the second step. 

\end{enumerate}

Combining all the above six cases, we conclude that even without Assumption \ref{assump_strong}, the estimate \eqref{eq_spikedvector} still holds with an additional factor $n^{6\epsilon}$ multiplying on the right hand side. Since $\epsilon$ can be arbitrarily small, we conclude the proof.
\end{proof}

\subsection{Proof of Lemmas \ref{lem_contour} and \ref{lem_distance}}\label{section_proof of technical lem1}

\begin{proof}[{Proof of Lemma \ref{lem_contour}}]\label{section_pflem}  
Let $\zeta \in \Gamma$. We first show that there exists $\wt c_0 :=  \wt c_0(c_i)$, such that when $c_i$ is sufficiently small, then $\zeta$ satisfies (1) $ \re \zeta \ge \alpha$, (2) $|\text{Im} \zeta | \leq \wt c_0 (\re \zeta-\alpha)$ and (3) $|\zeta| \leq C$, and then there exists a constant $\wt c_1 := \wt c_1(\wt c_0, C)$ such that 
\begin{equation}\label{claim_reg2c}
\re \theta(\zeta) \geq \lambda_+ + \wt c_1 (\re \zeta-\alpha)^2.
\end{equation}
By Assumption \ref{assump_strong} and the definition of $\rho_i$, (1) and (3) are satisfied. For (2), because we have that for all $\zeta \in \Gamma_i$, $$|\operatorname{Im} \zeta| \le \rho_i \le c_i (d_i -\alpha)$$
and
$$\re \zeta-\alpha \geq d_i - \rho_i-\alpha \geq d_i - c_i(d_i - \alpha) - \alpha = (1-c_i)(d_i - \alpha),$$
which together lead to $$|\text{Im}\zeta| \leq \frac{c_i}{1-c_i}(\re \zeta - \alpha).$$
Thus (1), (2) and (3) are fulfilled.  To show \eqref{claim_reg2c}, by (1) and (3), we have
$0 \leq \re \zeta-\alpha \leq c_0 $ for some constant $c_0>0$, then \eqref{claim_reg2c} follows from (\ref{eq_gcomplex}) that
$$\re \theta(\zeta) -\lambda_+ \asymp \re( \zeta-\alpha)^2 \asymp (\re \zeta-\alpha)^2\,,
$$
where the second $\asymp$ comes from (2) shown above. 
The claim \eqref{claim_reg2c} then follows by first choosing a sufficiently small constant $ \wt c_0$ and then choosing an appropriate constant $\wt c_1$.

Now we can finish the proof of the first statement in the Lemma. By (\ref{eq_gcomplex}), we have $|\theta(\zeta)| \leq \omega^{-1}$ for all $\zeta \in \Gamma$ as long as $\omega$ is sufficiently small. Also, using \eqref{claim_reg2c}, we can find the lower bound of $\Re \theta(\zeta)$. Thus we can conclude that $\theta(\Gamma_i) \subset S_{out}((\lambda_+\omega)^{-1},\omega)$ as long as $c_i$ is sufficiently small, so as $\overline{\theta(\Upsilon)}$.

To prove the second statement, it suffices to show that: 
\begin{itemize}
\item[(i)] $\wt\lambda_{i} \in \theta(\Upsilon_i)$ for all  $ i \in  \textbf{A}$;

\item[(ii)] $\wt\lambda_j \notin  \theta(\Upsilon_i)$ for all $ j \notin  \textbf{A}$ and  $ i \in  \textbf{A}$. 
\end{itemize}
To prove (i), we notice that under Assumptions \ref{assum_nonoverlap},
\begin{equation}\label{eq_rhobound0}
\rho_i \geq c_i(\left[\Delta(d_i)\right]^{-1}n^{-1/2}+\phi_n)n^{\wt\tau}\,. 
\end{equation}
Together with \eqref{eq_gderivative}, by mean value theorem we get that
$$\left|\theta\left(d_i + \rho_i\right) - \theta\left(d_i \right) \right| \gtrsim (\Delta(d_i)n^{-1/2}+\phi_n\Delta^2(d_i))n^{\wt\tau}$$
and 
$$\left|\theta\left(d_i - \rho_i\right) - \theta\left(d_i \right) \right|  \gtrsim (\Delta(d_i)n^{-1/2}+\phi_n\Delta^2(d_i))n^{\wt\tau}$$
for $i\in \textbf{A}.$ Then we conclude (i) using \eqref{rigid_outXi0}.
In order to prove (ii), we consider the two cases: (1) $ j\in \mathbb O^+ \setminus \textbf{A}$; (2) $j\notin \mathbb O^+$. In case (1), if $d_j > d_i$, we have $$\wt\lambda_j - \theta(d_i) > \theta(d_j-\rho_j) - \theta(d_i) \geq \theta'(d_j)(d_j - d_i - \rho_j) \gtrsim (\Delta(d_j) n^{-1/2} + \Delta^2(d_j)\phi_n)n^{\wt\tau},$$ where we used that $\theta$ is monotone in the first inequality, mean value theorem in the second inequality, and the definition of $\rho_i$, \eqref{eq_assnonover}, and \eqref{eq_gderivative} in the third inequality. Similarly, when $d_j<d_i$, we have  $$\theta(d_i)-\wt\lambda_j  \gtrsim (\Delta(d_j) n^{-1/2} + \Delta^2(d_j)\phi_n)n^{\wt\tau}.$$ In conclusion, we have  $$|\wt\lambda_j-\theta(d_i)|  \gtrsim (\Delta(d_j) n^{-1/2} + \Delta^2(d_j)\phi_n)n^{\wt\tau}.$$ Together with \eqref{rigid_outXi0} and that $\epsilon < \wt \tau$, case (1) is proved.
For case (2), the claim follows from \eqref{eq_stickingrigi} and \eqref{eq_gcomplex}. This concludes the proof. 
\end{proof}

\begin{proof}[{Proof of Lemma \ref{lem_distance}}] 
The upper bound in \eqref{zeta_rhoi} follows from the triangle inequality and the definition of $\rho_k$ in \eqref{eq_radius}:
\begin{equation*}
|\zeta-d_\ell|  \leq \rho_k+|d_k-d_\ell|.
\end{equation*}
It remains to prove a lower bound. For $\ell \notin \textbf{A}$, again by the definition of $\rho_k$, we trivially have $|d_k-d_\ell| \geq 2\rho_k$, from which we obtain that
\begin{equation*}
|\zeta-d_\ell| \geq |d_k-d_\ell|-\rho_k \geq \rho_k.
\end{equation*} 

Next we consider the case $\ell\in \textbf{A}$. Define $\delta:=|d_k-d_\ell|-\rho_\ell-\rho_k$, which is the
distance between $B_{\rho_k}
(d_k)$ and $B_{\rho_\ell}
(d_\ell)$.  First suppose that $C_0 \delta >|d_k-d_\ell|$ for some constant $C_0>1$. It then follows that $\rho_k+\rho_\ell \leq \frac{C_0-1}{C_0} |d_k-d_\ell|.$  
As a consequence, we obtain 
\begin{align}
|\zeta-d_\ell| &\, \geq \left| d_k-d_\ell \right|-\rho_k \geq \frac{1}{C_0} \left| d_k-d_\ell \right| + \rho_\ell \\
&\, >\frac{1}{C_0} \left| d_k-d_\ell \right| \gtrsim \rho_k+ \left| d_k-d_\ell \right|\,. \nonumber
\end{align}
Suppose now that $C_0 \delta \leq  \left| d_k-d_\ell \right|$. Then we have
\begin{equation*}
 \left| d_k-d_\ell \right| \leq \frac{C_0}{C_0-1}(\rho_k+\rho_\ell) .
\end{equation*}
We claim that for a sufficiently large constant $C_0>0$, there exists a constant $\wt C(c_k,c_\ell,C_0)>0$ such that   
\begin{equation}\label{comparablE_rho}
\wt C^{-1}\rho_k \leq \rho_\ell \leq \wt C \rho_k.
\end{equation}
If \eqref{comparablE_rho} holds, then we have
\begin{equation*}
|\zeta-d_\ell| \geq \rho_\ell \gtrsim \rho_k+\rho_\ell \gtrsim \rho_k +  \left| d_k-d_\ell \right| .
\end{equation*} 
This concludes \eqref{zeta_rhoi}.

It remains to prove \eqref{comparablE_rho}. Recall the definition of $\rho_k$ in \eqref{eq_radius}. Consider the following two cases. (i) If $\rho_\ell = c_\ell \left| d_\ell-d_s \right|$ for some $s$ such that $s\notin \textbf{A}$, we have
\begin{equation}\label{proof_example}
\frac{\rho_\ell}{c_\ell}  = \left| d_\ell-d_s \right| \le \left| d_k-d_s \right| + \left| d_s-d_\ell \right| \le \frac{\rho_k}{c_k} + \frac{C_0}{C_0-1}(\rho_k+\rho_\ell) .
\end{equation}
Thus as long as $c_\ell$ and $C_0$ is chosen such that $c_\ell^{-1} > \frac{C_0}{C_0-1}$, we obtain the upper bound in \eqref{comparablE_rho}. (ii) If $\rho_k = c_k (d_k - \alpha)$, the proof is the same as that in case (i), and we omit details. 
\end{proof}

\section{Proof of Theorems in Section \ref{section proposed algo}}
\label{section:more proof}
We have the following remarks that guide us toward the proof.
Suppose Assumption \ref{assum_main} and \eqref{alpha+} hold. Give any constant $0<\epsilon\leq\varepsilon$, we find that there exists an event $\Xi$ of high probability such that the followings hold when conditional on $\Xi$: 
\begin{enumerate}
\item By Theorem \ref{thm_value}, for $1\leq i \leq r^+ $, 
\begin{equation}\label{eq_adapt_evoutlier}
|\widetilde{\lambda}_i-\theta(d_i)| \leq  n^{\epsilon}(\phi_n \Delta^2(d_i)+n^{-1/2}\Delta(d_i)) \,. 
\end{equation} 
{
Together with \eqref{eq_s36_1} and \eqref{alpha+}, for $1\leq i \leq r^+ $ we have 
\begin{equation}\label{eq_adapt_ev}
    \wt\lambda_i - \lambda_+ \asymp  \Delta(d_i)^4\,.
\end{equation}
Moreover,} for a fixed integer $\varpi>r$, when $r^++1 \leq i \leq \varpi$ we have 
\begin{equation}\label{eq_evbulk_adapt}
|\widetilde{\lambda}_i-\lambda_+| \leq n^{\epsilon}(\phi_n^2+ n^{-2/3}).
\end{equation}

\item By Theorem \ref{thm_vector} {and \eqref{alpha+}, let $\textbf{A} = \mathbb O^+$}, we have that for {$i = 1\ldots r^+$ and for $j=1, \cdots, r$,}  
\begin{align}
&\left| \langle \ub_i, {\mathcal{P}}_\textbf{A}\ub_j\rangle- \delta_{ij}\mathbbm 1(i\in \textbf{A}) a_1(d_i)\right| \vee \left| \langle \vb_i, {\mathcal{P}}'_\textbf{A}\vb_j\rangle- \delta_{ij}\mathbbm 1(i\in \textbf{A})a_2(d_i)\right|\nonumber\\
 \leq &\, n^{\epsilon}(\phi_n+ n^{-1/2}/\Delta(d_i))\,.\label{eq_adapt_spikedvector}
\end{align}
{Moreover, by Theorem \ref{thm_noneve},} for  $ r_++1\leq i \leq r$, we have for $j = 1, \ldots, r$
\begin{equation}\label{eq_adapt_evebulka}
 |\langle \ub_j, \wt\bxi_{i}  \rangle|^2 \vee |\langle \vb_j, \wt\bzeta_{i}  \rangle|^2 \leq  n^{\epsilon}(\phi_n + n^{-1/3}).
\end{equation}

\item From Theorem \ref{lem_locallaw} and that $r$ is bounded, we have that for $z \in S_{out}(\varsigma_2,\epsilon)$
\begin{equation}\label{eq_ada_local}
|m_1(z) - m_{1c}(z)| +  |m_2(z) - m_{2c}(z)| \leq n^{\epsilon}(\phi_n + n^{-1/2}(\kappa_z+\eta)^{-1/4}).
\end{equation}

\item By Lemma \ref{lem_rigidty}, for {for a fixed small constant $\varsigma>0$}, for any $j$ such that $\lambda_+-\varsigma \leq \gamma_j \leq \lambda_+$, we have 
\begin{equation}\label{rigidity_adapt}
\vert \lambda_j - \gamma_j \vert \leq n^{\epsilon} [n^{-2/3}\left( j^{-1/3}+\mathbbm 1(j \le n^{1/4} \phi_n^{3/2})\right) + \eta_{l}(\gamma_j) + n^{2/3} j^{-2/3} \eta_l^2(\gamma_j)]\,,
\end{equation}
\item By Theorem \ref{thm_eigenvaluesticking}, for any sufficiently small constant $\tau>0.$ We have that for $ 1 \leq i \leq \tau n$,
\begin{equation} \label{eq_stickingeq_adapt}
\left|\wt\lambda_{i+ r^+ }-\lambda_i\right| \leq n^{\epsilon}(\frac{1}{n \alpha_+} + n^{-3/4} + i^{1/3}n^{-5/6} + n^{-1/2}\phi_n +  i^{-2/3}n^{-1/3}\phi_n^2 ). 
\end{equation}

\end{enumerate}

Hereafter, we restrict our discussion in the event space $\Xi$.

\subsection{Proof of Theorem \ref{thm_numbershrink} and \ref{thm_numbershrink2}}

\begin{proof}[Proof of Theorem \ref{thm_numbershrink}]
Fix a constant $0<c<1$. Note that $n^c\gg r$ holds when $n$ is sufficiently large, since $r$ is fixed.
From the definition of $\rho_{2c}$, we have  
\begin{align}\label{eq00}
n\int_{\gamma_{2n^c-r^++1}}^{\gamma_{n^c-r^++1}}\rho_{2c}(x)dx & = n\int_{\gamma_{n^c-r^++1}}^{\lambda_{+}}\rho_{2c}(x)dx +r^+= n^c.
\end{align}
Together with \eqref{sqroot3} from Lemma \ref{lambdar_sqrt} to approximate $\rho_{2c}(x)$ as $x\downarrow 0$, {after taking the integration} we have 
\begin{align}\label{eq1}
2(\lambda_{+}-\gamma_{n^c-r^++1})^{3/2}=  (\lambda_+-\gamma_{2n^c-r^++1})^{3/2}+ \OO((\lambda_+ - \gamma_{2n^c-r^++1} )^2+n^{-1})\,.
\end{align}
{Similarly, again from \eqref{sqroot3} and $n^c \gg r^+$,} we have
\begin{equation}\label{eq0}
\lambda_+ - \gamma_{2n^c-r^++1} =  \OO(n^{2(c-1)/3}).
\end{equation}
Thus, we have
\begin{align}\label{eq2}
\lambda_+ 
= \,& \gamma_{n^c-r^++1} + \frac{1}{2^{2/3}-1}(\gamma_{n^c-r^++1}-\gamma_{2n^c-r^++1}) +  \OO(n^{4(c-1)/3}+n^{-1})\\
 = \,&  \lambda_{n^c-r^++1} + \frac{1}{2^{2/3}-1}(\lambda_{n^c-r^++1}-\lambda_{2n^c-r^++1}) +  \OO(n^{4(c-1)/3}+n^{-1}) + \OO(\lambda_{n^c}-\gamma_{n^c})\nonumber\\
= \,&  \wt\lambda_{n^c+1} + \frac{1}{2^{2/3}-1}(\wt\lambda_{n^c+1}-\wt\lambda_{2n^c+1}) +  \OO(n^{4(c-1)/3}+n^{-1}) + \OO(\lambda_{n^c}-\gamma_{n^c})+\OO(\wt\lambda_{n^c+1}-\lambda_{n^c-r^++1})\nonumber\\
= \,&  {\widehat\lambda_+ + \OO\Big(n^{\epsilon}\big[\frac{1}{n\alpha^+}+n^{-\min\{\frac{2}{3}+\frac{c}{3},\frac{3}{4},\frac{5}{6}-\frac{c}{3},\frac{4}{3}-\frac{4c}{3}\}}+ n^{-1/2}\phi_n+n^{-1/3-2c/3}\phi_n^2 \big]\Big)}\,,\nonumber
\end{align}
{where in the first equation we used \eqref{eq1} and \eqref{eq0}, and in the last equality we used \eqref{rigidity_adapt} and \eqref{eq_stickingeq_adapt}.}
\end{proof}
\begin{proof}[Proof of Theorem \ref{thm_numbershrink2}] 
{Note that $n^{\varepsilon}(\phi_n+n^{-1/3}) > n^{-1/6}$ by assumption of $\varepsilon$. }
When $0<c<1/2$, {together with \eqref{eq2}}, with high probability we have that
\begin{equation}\label{eq9}
    |\widehat\lambda_+-\lambda_+| \leq n^{\epsilon}(\phi_n^2n^{-1/3} + n^{-2/3})\,.
\end{equation}
{Moreover, \eqref{eq9}, \eqref{alpha+} and \eqref{eq_adapt_ev} lead to 
\begin{equation}
    \wt\lambda_i-\widehat\lambda_+ \gtrsim n^{2\varepsilon}(\phi_n^2+n^{-2/3}) > n^{-1/3}\,,
\end{equation}
}
for all $1\leq i \leq r^+$. 
Also, by \eqref{eq9} and \eqref{eq_stickingeq_adapt}, we have
\begin{equation}
    |\wt\lambda_i-\widehat\lambda_+| \leq n^{\epsilon}(\phi_n^2n^{-1/3}+n^{-2/3}) \leq n^{-1/3}
\end{equation}
for $i=r^++1,\ldots,r$ when $\epsilon$ is chosen small enough. We thus conclude that $\Xi(r^+)$ is an event with high probability.
\end{proof}

\subsection{Proof of Theorem \ref{thm_shrink}} 
{In the event space $\Xi(r^+)$ we have $\widehat r^+ = r^+$, so from now on we will replace $\widehat r^+$ with $r^+$ in equation \eqref{eq_hatlanbda2} for the definition of $\widehat \lambda_i$, $i = 1 \ldots \lfloor n^c \rfloor$.} 
Denote the CDF of $ZZ^\top $ as $F_1(x) := \frac1p \sum_{j=1}^p \mathbbm 1(\lambda_{j}\leq x)$. Recall the following Lemma that compares $F_{e}$ and $F_1$.

\begin{lemma}\label{lem:main:approx-FZ}\cite[Lemma 4]{donoho2020screenot}
Suppose $k \ge r$. Then the Kolmogorov-Smirnov (KS) distance between $F_{e}$ and $F_1$ is controlled by 
\[
d_{\texttt{KS}}(F_{e},\,F_{1}) = \sup_{x} \left| F_{e}(x)-F_1(x) \right| \le \frac{k}{p}\,.
\]
\end{lemma}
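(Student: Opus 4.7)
\textbf{Proof proposal for Lemma \ref{lem:main:approx-FZ}.}

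The claim is essentially combinatorial: replacing at most $k$ of the $p$ atoms of an empirical CDF with arbitrary values cannot move the CDF by more than $k/p$ in Kolmogorov--Smirnov distance. I would not try to use any of the random matrix machinery developed earlier in the paper -- the statement is deterministic once the eigenvalues are fixed.

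The plan is a direct decomposition. Both $F_1$ and $F_e$ are built from the same bottom $p-k$ eigenvalues of $ZZ^\top$ (this is the role of the condition $k \ge r$: it forces the $r$ outlier eigenvalues of $ZZ^\top$, which might otherwise be distorted by the signal, to lie in the ``replaced'' group, so that the bottom $p-k$ atoms really are common to both CDFs). Writing
\[
G(x) := \frac{1}{p}\sum_{j=k+1}^{p}\mathbbm 1(\lambda_j \le x), \qquad H_1(x):=\frac{1}{p}\sum_{j=1}^{k}\mathbbm 1(\lambda_j \le x), \qquad H_e(x):=\frac{1}{p}\sum_{j=1}^{k}\mathbbm 1(\check\lambda_j \le x),
\]
we have $F_1 = G + H_1$ and $F_e = G + H_e$, so the common part $G$ cancels and
\[
F_1(x)-F_e(x) = H_1(x) - H_e(x).
\]
Both $H_1$ and $H_e$ are nondecreasing step functions taking values in $[0,k/p]$, starting at $0$ for $x$ small and ending at $k/p$ for $x$ large. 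Hence $|H_1(x)-H_e(x)| \le k/p$ uniformly in $x$. Equivalently and more explicitly, for each fixed $x$,
\[
|F_1(x)-F_e(x)| \le \frac{1}{p}\sum_{j=1}^{k}\bigl|\mathbbm 1(\lambda_j \le x) - \mathbbm 1(\check\lambda_j \le x)\bigr| \le \frac{k}{p},
\]
since each summand is in $\{-1,0,1\}$.

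I do not foresee any real obstacle. The only subtlety is bookkeeping the ``common part'' $G$ when the estimator $F_e$ is defined with a different normalization than $1/p$ (as in \eqref{eq_imp}); in that case one should align the two CDFs to the common denominator $p$ before subtracting, and the same counting bound goes through with only cosmetic changes. The hypothesis $k \ge r$ is used exactly to guarantee that the decomposition $F_1 = G + H_1$ uses only eigenvalues $\lambda_j$ that are structurally interchangeable with those appearing in $F_e$; if $k < r$, the ``common'' sum would include genuine outliers and the clean cancellation above would fail.
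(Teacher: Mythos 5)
There is a genuine gap, and it sits at the very first step. Your decomposition assumes that $F_1$ and $F_e$ share their bottom $p-k$ atoms, so that the common part $G$ cancels. Under the definitions in force here this is false: $F_1$ is the empirical CDF of the eigenvalues $\lambda_j$ of the pure-noise matrix $ZZ^\top$, which is never observed, whereas the non-imputed atoms of $\widehat F_{\texttt{e}}$ in \eqref{eq_imp} (and of $\widehat F_{\texttt{imp}}$ in \eqref{eq_imp0}) are the eigenvalues $\wt\lambda_j$ of the observed matrix $\wt S\wt S^\top$. These are two different point sets, so $F_1-F_e\neq H_1-H_e$; your argument silently replaces $\wt\lambda_j$ by $\lambda_j$ inside $F_e$, which is precisely the discrepancy the lemma is meant to control. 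Your reading of the hypothesis $k\ge r$ is also off: by \eqref{ass3_eq1} the matrix $ZZ^\top$ has no outlier eigenvalues at all; the condition $k\ge r$ is not about putting noise outliers into the replaced group, but about absorbing the index shift coming from the fact that $\wt S\wt S^\top$ is a rank-$r$ additive perturbation of $ZZ^\top$.

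The missing ingredient is interlacing. Applying Weyl's inequality for singular values (Lemma \ref{lem_weyl}) to $\wt S=S+Z$ with $\texttt{rank}(S)=r$ gives $\wt\lambda_{i+r}\le\lambda_i\le\wt\lambda_{i-r}$, hence the counting functions $\#\{j:\wt\lambda_j\le x\}$ and $\#\{j:\lambda_j\le x\}$ differ by at most $r$ for every $x$. Combining this with your (correct) observation that deleting the top $k$ observed eigenvalues and inserting $k$ imputed values moves a counting function by at most $k$ yields $\sup_x|F_e(x)-F_1(x)|\le (k+r)/p\le 2k/p$, up to an additional $\OO(r^+/p)$ bookkeeping term for the $1/(p-\widehat r^+)$ normalization in \eqref{eq_imp}; any bound of order $k/p$ suffices where the lemma is used in the proof of Theorem \ref{thm_shrink}. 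Note also that the paper itself does not prove this statement but quotes it from \cite[Lemma 4]{donoho2020screenot}, so the comparison here is with that standard argument: your counting step is half of it, but without invoking $k\ge r$ through interlacing the proof does not go through, and the sharper constant $k/p$ as stated requires the more careful accounting of the cited reference.
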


The proof is then divided into several steps for clearance.
\newline\newline
\textbf{Step 1: Bound
$|m_{1}(\wt\lambda_i) - \widehat m_{e,1,i}|$ and $|m_{2}(\wt\lambda_i) - \widehat m_{e,2,i}|$:}
As the bound derived in \cite[page 30]{donoho2020screenot}, if $F_a$ and $F_b$ are CDFs of probability measures supported on an interval $I$ and $g:I\to \mathbb R$ is bounded and continuously differentiable, with an integral by part we have 
\begin{align}\label{eq_bddCDF}
\left|  \int g(t)( dF_a(t)-dF_b(t) ) \right|
& = \left|  g(t)( F_a(t)-F_b(t) )|_{I} +  \int g'(t)( F_a(t)-F_b(t) )dt  \right|\nonumber \\
& \le  \|g'\|_{L^1(I)} d_{\texttt{KS}}(F_a,\,F_b)
\end{align}
Fix $1\leq i\leq r^+$. Note that $|\lambda_1 -\lambda_+| \leq n^{\epsilon}(\phi_n^2 + n^{-2/3})$ {by \eqref{rigidity_adapt}} and $|\widehat\lambda_{r^++1} - \lambda_+| \leq n^{\epsilon}(\phi_n^2 + n^{-2/3})$ { by the same approach for deriving \eqref{eq9}}. Together with \eqref{eq_adapt_ev}, we have $\lambda_1\vee\widehat\lambda_{r^++1}< \wt{\lambda}_i$.
Thus, if we take $F_a = F_{e}$, $F_b = F_1$, $g(t) = {1}/(t - \wt\lambda_i)$ and $I = [0, \lambda_1 \vee \widehat \lambda_{r^++1}]$ in \eqref{eq_bddCDF}, we have 
{
\begin{equation}
|m_{1}(\wt\lambda_i) - \widehat m_{e,1,i}| \lesssim \frac{1}{n(|\wt\lambda_i - \lambda_1|\vee |\wt\lambda_i -\widehat \lambda_{r^++1}|)}\,,
\end{equation}
}
where the last bound comes from Lemma \ref{lem:main:approx-FZ}.
{Together with \eqref{eq_adapt_ev} and $|\widehat\lambda_{r^++1} - \lambda_+| \leq n^{\epsilon}(\phi_n^2 + n^{-2/3})$,} we have 
\begin{equation}\label{eq_mdiff4}
    |m_{1}(\wt\lambda_i) - \widehat m_{e,1,i}| \lesssim  \frac{1}{n\Delta^4(d_i)}.
\end{equation}
With the same approach, we have 
\begin{equation}\label{eq_mdiff5}
    |m_{2}(\wt\lambda_i) - \widehat m_{e,2,i}| \lesssim  \frac{1}{n\Delta^4(d_i)}.
\end{equation}
\newline\newline
\textbf{Step 2: Bound $|\widehat m_{e,1,i} -  m_{1c}(\wt\lambda_i)|$ and $|\widehat m_{e,2,i} - m_{2c}(\wt\lambda_i)|$:}
By \eqref{eq_adapt_ev} and \eqref{eq_ada_local}, we have that for $1\leq i \leq r^+,$ 
\begin{align}\label{eq_mdiff10}
&|m_1(\wt\lambda_i)-m_{1c}(\wt\lambda_i)| \leq n^{\epsilon}(\phi_n + n^{-1/2}/\Delta(d_i)), \\ 
&|m_2(\wt\lambda_i)-m_{2c}(\wt\lambda_i)| \leq n^{\epsilon}(\phi_n + n^{-1/2}/\Delta(d_i)). \nonumber
\end{align}
By \eqref{eq_mdiff4}, \eqref{eq_mdiff5} and \eqref{eq_mdiff10}, 
we conclude that for $1 \leq i \leq r^+,$
\begin{align}\label{eq_mdiff_final}
&|\widehat{m}_{e,1,i}-m_{1c}(\wt\lambda_i)| \leq n^{\epsilon}(\phi_n + n^{-1/2}/\Delta(d_i)),\\  
& |\widehat{m}_{e,2,i}-m_{2c}(\wt\lambda_i)| \leq n^{\epsilon}(\phi_n + n^{-1/2}/\Delta(d_i)). \nonumber
\end{align}
\newline\newline
\textbf{Step 3: Finish the claim:}
With the above preparation and that $\wt{\lambda}_i$, $m_{1c}(\wt{\lambda}_i)$, $m_{2c}(\wt{\lambda}_i)$,  $\widehat {m}_{e,1,i}$, and $\widehat {m}_{e,2,i}$ are all $\asymp 1$ by \eqref{eq_estimm} and \eqref{eq_mdiff_final}, we immediately have 
\begin{align}
&|\widehat{\mathcal{T}}_{e,i}-\mathcal{T}(\wt\lambda_i)| \nonumber\\
=&\,  
|\wt\lambda_i(\widehat m_{e,1,i}\widehat m_{e,2,i}- m_{1c}(\wt\lambda_i)m_{2c}(\wt\lambda_i))| \nonumber \\ 
=&\, |\wt\lambda_i((\widehat m_{e,1,i}-m_{1c}(\wt\lambda_i))\widehat m_{e,2,i} + m_{1c}(\wt\lambda_i)(\widehat m_{e,2,i}-m_{2c}(\wt\lambda_i)))| \nonumber \\
 \lesssim &\, n^{\epsilon}(\phi_n + n^{-1/2}/\Delta(d_i))\,. \label{eq_diff_t}
\end{align}
Together with \eqref{eq_estimm} we also have $\widehat{\mathcal{T}}_{e,i}\asymp \mathcal{T}(\wt\lambda_i)\asymp 1$.
Recall the definition of $\widehat d_i$ in \eqref{eq_a1a2}. We have 
\begin{align}\label{eqf_1}
\left|\widehat{d}_{e,i} -1/\sqrt{\mathcal{T}(\wt\lambda_i)}\right| &\, \asymp \left|\sqrt{\mathcal{T}(\wt\lambda_i)} -\sqrt{ \widehat{\mathcal{T}}_{e,i}}\right|\\
&\,\asymp \left|\mathcal{T}(\wt\lambda_i) - \widehat{\mathcal{T}}_{e,i}\right| \lesssim n^{\epsilon}(\phi_n + n^{-1/2}/\Delta(d_i))\,.\nonumber 
\end{align}
Also, with \eqref{eq_adapt_evoutlier}, \eqref{eq_gderivative}, and the mean value theorem, for some $d$ between $d_i$ and $1/\sqrt{\mathcal{T}(\wt\lambda_i)}$ we have that
\begin{equation}\label{eqf_2}
    |d_i - 1/\sqrt{\mathcal{T}(\wt\lambda_i)}| =  |\theta(d_i)-\wt\lambda_i|/\theta'(d) \lesssim n^{\epsilon}(\phi_n+n^{-1/2}/\Delta(d_i)).
\end{equation}
Thus, by combining \eqref{eqf_1} and \eqref{eqf_2}, 
and the triangle inequality, we have
\begin{equation}\label{eq_dest}
    |\widehat{d}_{e,i} - d_i| \lesssim n^{\epsilon}(\phi_n + n^{-1/2}/\Delta(d_i)) \,, 
\end{equation}
which concludes the case of the operator norm. 

For the case of the Frobenius norm, by triangle inequality, we can show that for $1\leq i \leq r^+$
\begin{equation}\label{eq_diff_a_1}
    |a_{1,i}-\widehat{a}_{e,1,i}|  \leq |a_{1,i}-a_1(d_i)|+|a_1(d_i)-\widehat{a}_{e,1,i}|.
\end{equation}
Now we need to bound the two terms on the right hand side. For the first term, by \eqref{eq_adapt_spikedvector} we have
\begin{equation}\label{eq_diff_a_2}
 |a_{1,i}-a_1(d_i)| \leq n^{\epsilon}(\phi_n+ n^{-1/2}/\Delta(d_i)).
\end{equation}
For the second term, again by triangle inequality, we have
\begin{align}\label{eq_diff_a_3}
|a_1(d_i)-\widehat{a}_{e,1,i}| &\, =  \left|\frac{m_{1c}(\theta(d_i))}{d_i^2\mathcal T'(\theta(d_i))} -\frac{\widehat m_{e,1,i}}{\widehat {d}_i^2 \widehat{\mathcal T'_{e,i}}}\right| \\
&\,\leq \left|\frac{m_{1c}(\theta(d_i))}{d_i^2\mathcal T'(\theta(d_i))}-\frac{m_{1c}(\wt\lambda_i)}{d_i^2\mathcal T'(\wt\lambda_i)}\right|  + \left|\frac{m_{1c}(\wt\lambda_i)}{d_i^2\mathcal T'(\wt\lambda_i)}  -\frac{\widehat m_{e,1,i}}{\widehat {d}_i^2 \widehat{\mathcal T'_{e,i}}}\right| \,.\nonumber
\end{align}
For some $s$ between $\theta(d_i)$ and $\wt\lambda_i$, by mean value theorem and the order of $m_{1c}'$, $\mathcal T'$ and $\mathcal T''$ from \eqref{eq_s36_3}, together with the bound of $|\theta(d_i) - \wt\lambda_i|$ in \eqref{eq_adapt_evoutlier} we have
\begin{align}\label{eq_diff_a_4}
    \left|\frac{m_{1c}(\theta(d_i))}{d_i^2\mathcal T'(\theta(d_i))}-\frac{m_{1c}(\wt\lambda_i)}{d_i^2\mathcal T'(\wt\lambda_i)}\right| &\,= \left|\frac{m'_{1c}(s)T'(s) - m_{1c}(s)T''(s)}{d_i^2 \mathcal (T'(s))^2}\right||\theta(d_i) - \wt\lambda_i| \nonumber\\
    &\, \lesssim n^{\epsilon}(\phi_n \Delta^2(d_i)+n^{-1/2}\Delta(d_i))\,.
\end{align}
Also, again by triangle inequality,  the order of $\mathcal T'$ from \eqref{eq_s36_3}, and the error bounds from {\eqref{eq_imp_bd2}, \eqref{eq_mdiff_final} and \eqref{eq_dest}}, we have 
\begin{align}
&\left|\frac{m_{1c}(\wt\lambda_i)}{d_i^2\mathcal T'(\wt\lambda_i)}  -\frac{\widehat m_{e,1,i}}{\widehat {d}_i^2 \widehat{\mathcal T'_{e,i}}}\right|\label{eq_diff_a_5}\\ 
\leq &\,\left|\frac{m_{1c}(\wt\lambda_i)}{d_i^2\mathcal T'(\wt\lambda_i)} -\frac{\widehat m_{e,1,i}}{d_i^2 {\mathcal T'(\wt\lambda_i)}}\right|+ \left|\frac{\widehat m_{e,1,i}}{d_i^2\mathcal T'(\wt\lambda_i)} -\frac{\widehat m_{e,1,i}}{\widehat {d}_i^2 {\mathcal T'(\wt\lambda_i)}}\right|+ \left|\frac{\widehat m_{e,1,i}}{\widehat {d}_i^2 {\mathcal T'(\wt\lambda_i)}} -\frac{\widehat m_{e,1,i}}{\widehat d_i^2 \widehat{\mathcal T'_{e,i}}}\right| \nonumber\\
 \lesssim &\,n^{\epsilon}(\phi_n+n^{-1/3})\Delta^2(d_i)\,.\nonumber
\end{align}
Combine \eqref{eq_diff_a_1}-\eqref{eq_diff_a_5} 
and the upper bound from (iv) of Assumption \ref{assum_main}, we conclude that 
\begin{equation}\label{eq_adest1}
    |a_{1,i}-\widehat{a}_{e,1,i}|\lesssim n^{\epsilon}(\phi_n + n^{-1/2}/\Delta(d_i)).
\end{equation}
Also, with similar approach, we have 
\begin{equation}\label{eq_adest2}
    |a_{2,i}-\widehat{a}_{e,2,i}|\lesssim n^{\epsilon}(\phi_n + n^{-1/2}/\Delta(d_i)).
\end{equation}
Thus, with \eqref{eq_dest},\eqref{eq_adest1}, and \eqref{eq_adest2}, we have 
\begin{align}
   & |d_i\sqrt{a_{1,i}a_{2,i}} - \widehat d_i\sqrt{\widehat a_{1,i}\widehat a_{2,i}}|\\
    \leq &\, |d_i\sqrt{a_{1,i}a_{2,i}} - \widehat d_i\sqrt{ a_{1,i} a_{2,i}}|+|\widehat  d_i\sqrt{a_{1,i}a_{2,i}} - \widehat d_i\sqrt{\widehat a_{1,i} a_{2,i}}|\nonumber\\
    &+|\widehat d_i\sqrt{\widehat a_{1,i}a_{2,i}} - \widehat  d_i\sqrt{\widehat a_{1,i}\widehat a_{2,i}}|
    \leq n^{\epsilon}(\phi_n + n^{-1/2}/\Delta(d_i))\,,\nonumber
\end{align}
which concludes the case of the Frobenius norm. The proof for the nuclear norm follows the similar approach and we omit the detail.

\section{More simulated results}\label{section supp more numerical simulation}
Following the same approach in Section \ref{section simulation signal noise}, in this section we provide a more extensive numerical study with more $D_A$ and $D_B$ using the distributions proposed in \cite{donoho2020screenot}. We create three  types of {\em one-sided noises} so that $D_B$ is the identity matrix $I_n$ and $D_A$ follows the eigenvalue distribution of Mix2, Unif[1,10], or Fisher3n, where {Mix2} stands for an equal mixture of $1$ and $10$ as eigenvalues, Unif[1,10] stands for sampling eigenvalues uniformly from [1, 10], and {Fisher3n} stands generating eigenvalues from the eigenvalues of $W^\top W$, where $W \in \mathbb R^{3p\times p}$ is a random matrix with i.i.d Gaussian entries with mean $0$ and variance $1/(3p)$.  
Moreover, by the same approach mentioned in Section \ref{section simulation signal noise}, the resulting $\widehat \alpha$ for Mix2, Unif[1,10], or Fisher3n are $1.9160\pm 0.0211$, $1.2477\pm 0.0182$, and $1.2237 \pm 0.0156$ respectively, where we show the mean $\pm$ standard deviation over 100 realizations with $n'=10000$.
We also generate three types of {\em two-sided noises}, where $D_A$ and $D_B$ follow Mix2 and Unif[1,10], Mix2 and Fisher3n, and Unif[1,10] and Fisher3n respectively, and the resulting $\widehat \alpha$ has mean $\pm$ standard deviation as $2.0784\pm 0.0369$, $2.0105\pm 0.0446$, and $1.3860 \pm 0.0173$ respectively, where we show the mean $\pm$ standard deviation over 100 realizations with $n'=10000$. The signal matrix is designed in the same way as that in Section \ref{section simulation signal noise}. 

As in Section \ref{sec_compareTSE}, in Figure \ref{fig:rankesterror_sup}, we compare the estimated rank using \eqref{eq_adrk0}, the ScreeNOT rank, and the rank estimated by TRAD when $p/n = 1$ with different $n$. Similar to the results of {TYPE2 and TYPE3 noises}, TRAD always overestimates the rank for all combinations of noise types, and ScreeNOT rank often underestimates the rank with a larger error compared to eOptShrink.

\begin{figure}[hbt!]
\centering
\begin{minipage}{1\textwidth}
\begin{minipage}{0.3\textwidth}
\includegraphics[width=0.99\linewidth]{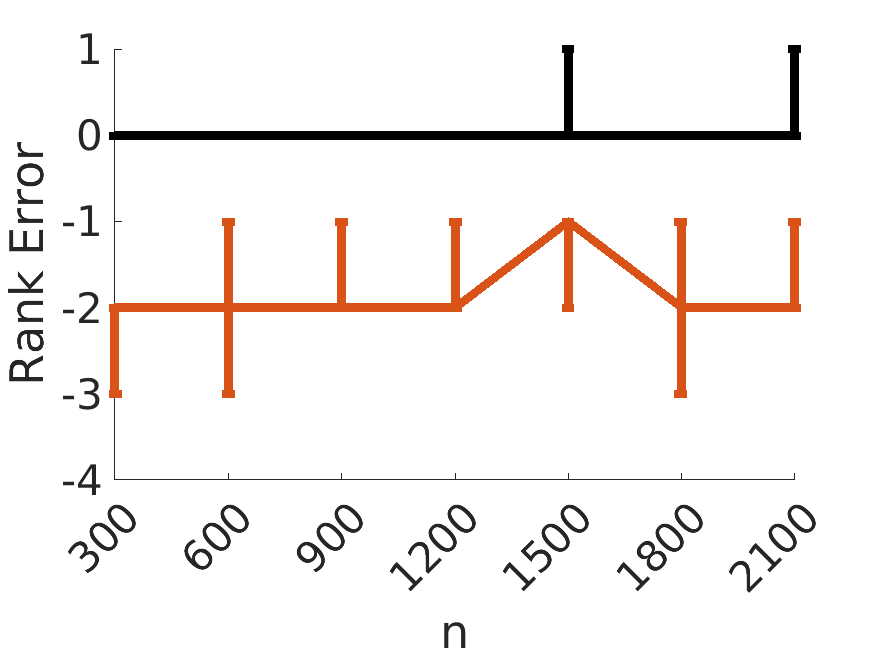}
\end{minipage}
\begin{minipage}{0.3\textwidth}
\includegraphics[width=0.99\linewidth]{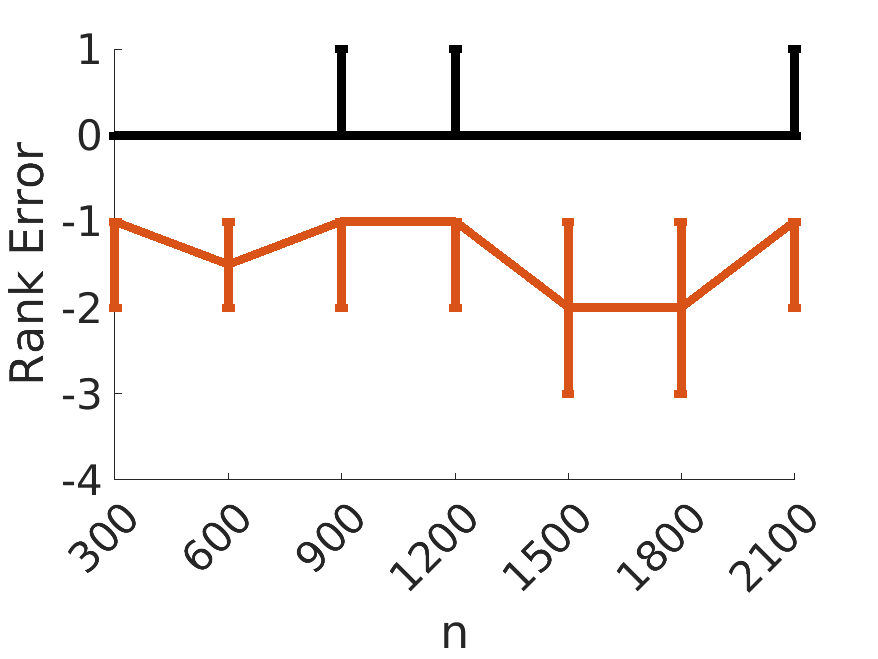}
\end{minipage}
\begin{minipage}{0.3\textwidth}
\includegraphics[width=0.99\linewidth]{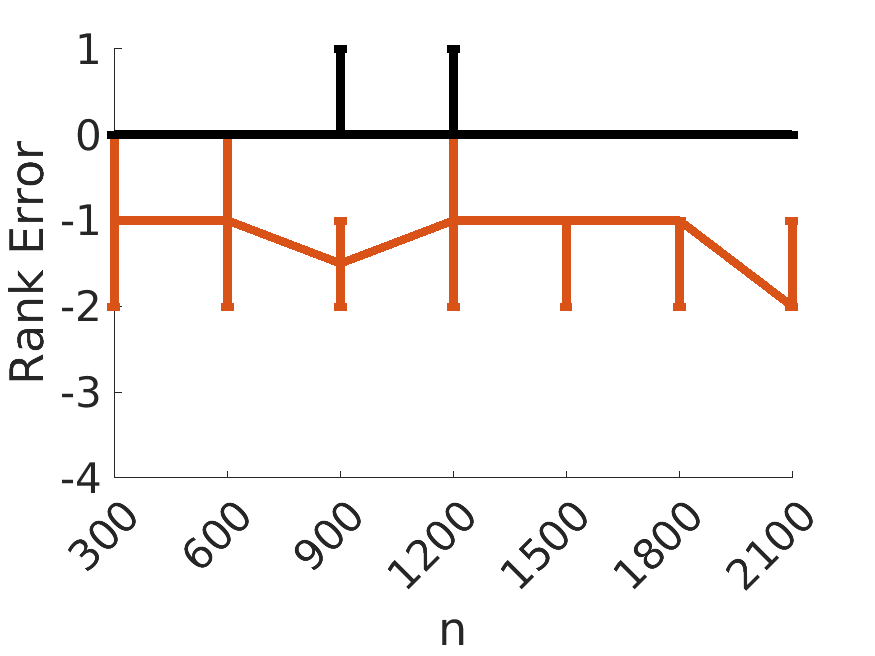}
\end{minipage}
\end{minipage}

\begin{minipage}{1\textwidth}
\begin{minipage}{0.3\textwidth}
\includegraphics[width=0.99\linewidth]{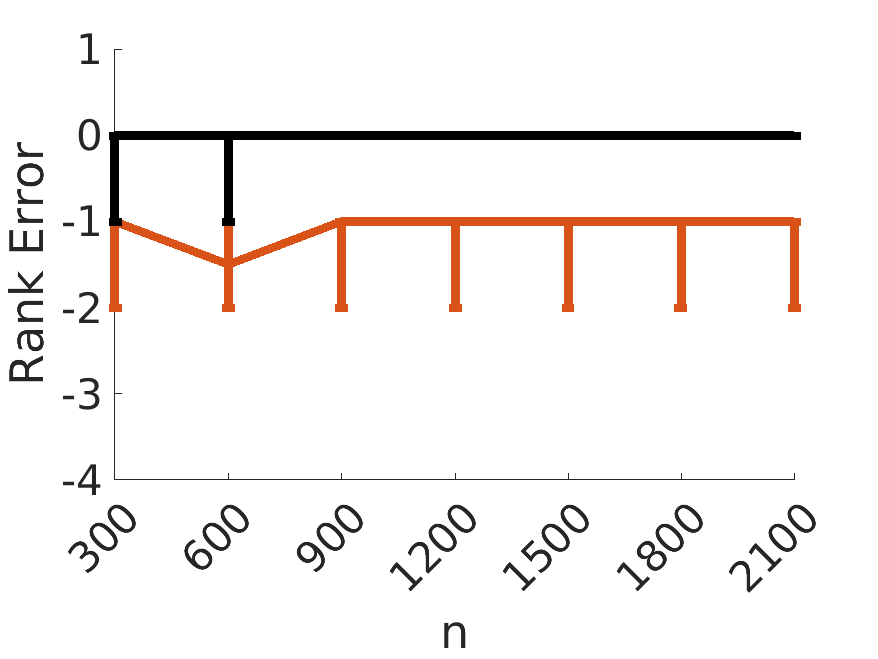}
\end{minipage}
\begin{minipage}{0.3\textwidth}
\includegraphics[width=0.99\linewidth]{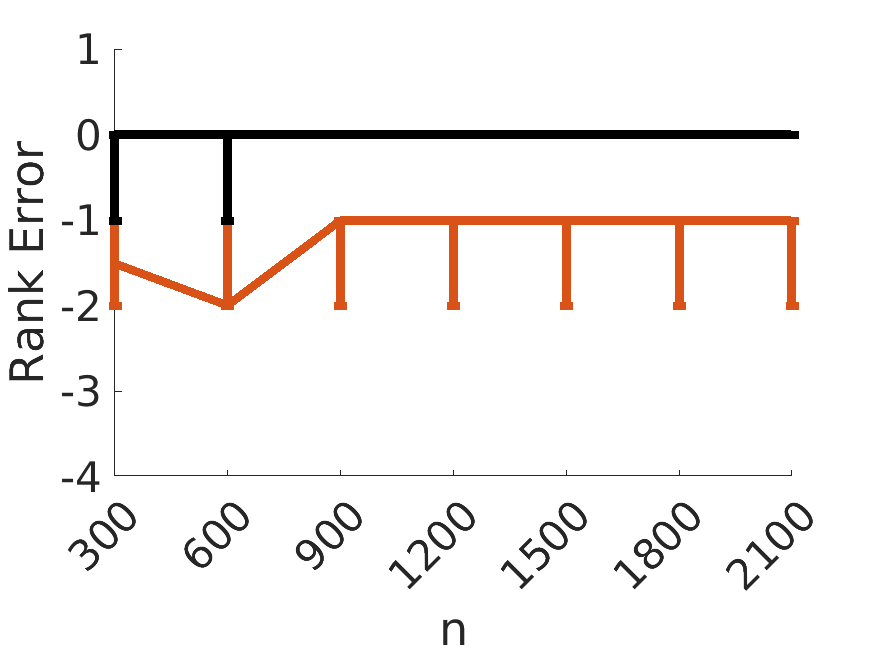}
\end{minipage}
\begin{minipage}{0.3\textwidth}
\includegraphics[width=0.99\linewidth]{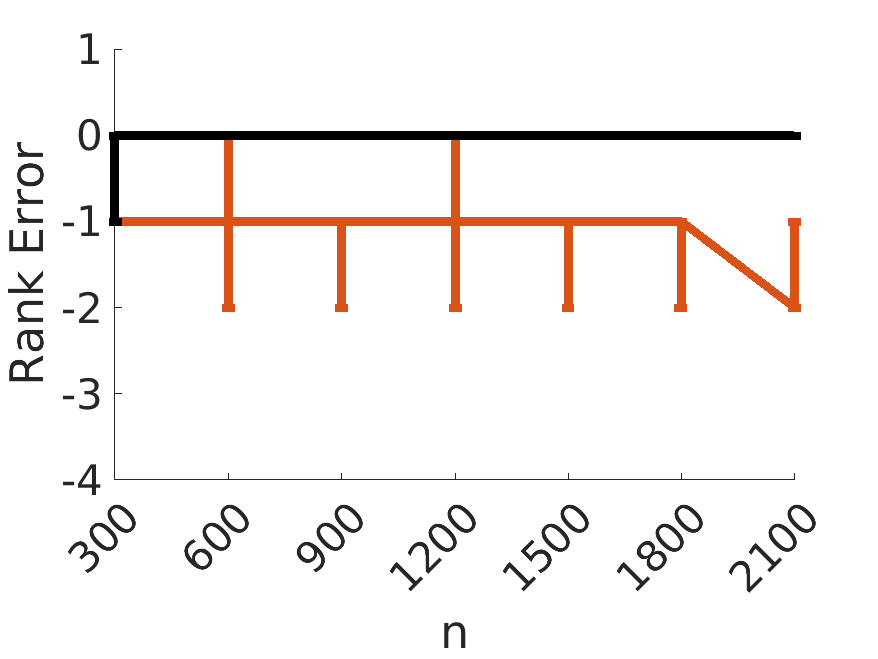}
\end{minipage}
\end{minipage}
\caption{\small A comparison of different rank estimators when $p/n=1$, where we show $\widehat r^+ - r^{+}$. The black (red and yellow respective) lines are errors of rank estimator from our rank estimator \eqref{eq_adrk0} (ScreeNot and TRAD respectively). 
The one-sided noises are shown in the first row, where from left to right are the results when $D_A$ follows Mix2, Unif[1,10], and Fisher3n respectively. The two-sided noises are shown in the second row, where from left to right are the results when $D_A$ and $D_B$ follow Mix2 and Unif[1,10], Mix2 and Fisher3n, and Unif[1,10] and Fisher3n respectively. If the corresponding error ratio is too high, the associated curve is not totally plotted to enhance the visualization. 
}\label{fig:rankesterror_sup}
 \end{figure}

In Figure \ref{fig_dr+error_sup} and \ref{fig_ar+error_sup}, we compare the error ratio of estimating $d_{\min\{r^+,\widehat r^+\}}$ and $\sqrt{a_{1,\min\{r^+,\widehat r^+\}}a_{2,\min\{r^+,\widehat r^+\}}}$ with different pseudo distributions as that in Section \ref{section:subsection:compare F} for different combinations of $A$ and $B$. We fix $n=600$ and $p/n = 1$. The black, yellow, and red lines indicate the estimator using $\widehat F_{\texttt{e}}(x)$, $\widehat F_{\texttt{T}}(x)$, and $\widehat F_{\texttt{imp}}(x)$ respectively. Clearly, our $\widehat F_{\texttt{e}}(x)$ always has a lower error ratio over every $\widehat r^+$ with statistical significance over every $\widehat r^+$ while comparing each estimator. This result shows that eOptShrink is robust to a slightly erroneous rank estimation.

\begin{figure}[hbt!]

\begin{minipage}{1\textwidth}
\begin{minipage}{0.3\textwidth}
\includegraphics[width=0.99\linewidth]{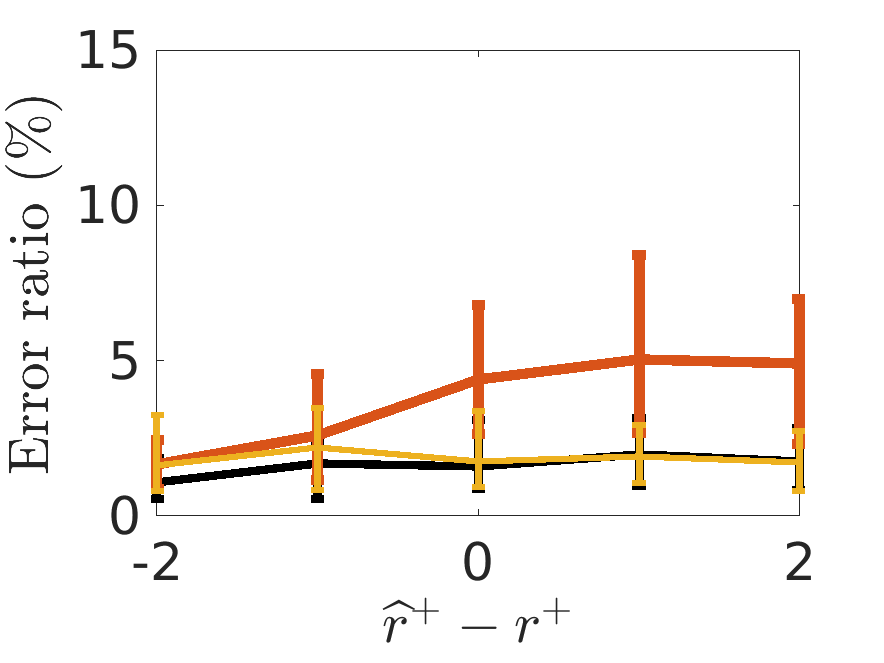}
\end{minipage}
\begin{minipage}{0.3\textwidth}
\includegraphics[width=0.99\linewidth]{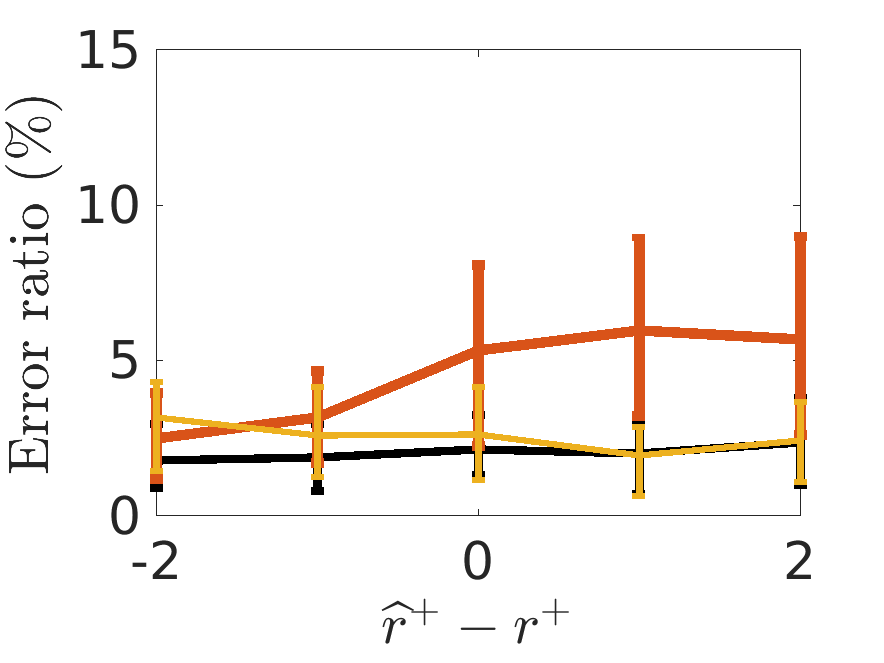}
\end{minipage}
\begin{minipage}{0.3\textwidth}
\includegraphics[width=0.99\linewidth]{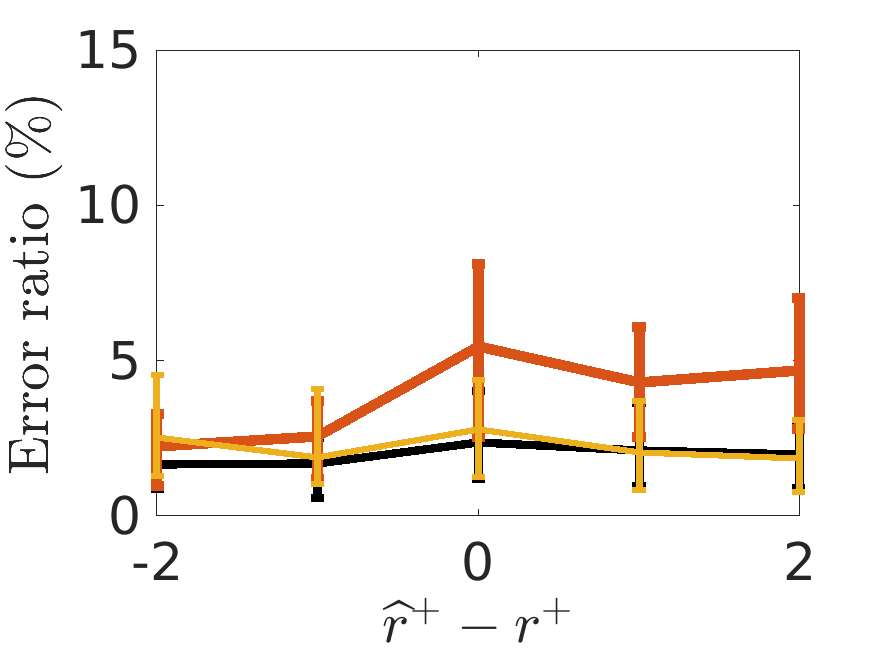}
\end{minipage}
\end{minipage}

\begin{minipage}{1\textwidth}
\begin{minipage}{0.3\textwidth}
\includegraphics[width=0.99\linewidth]{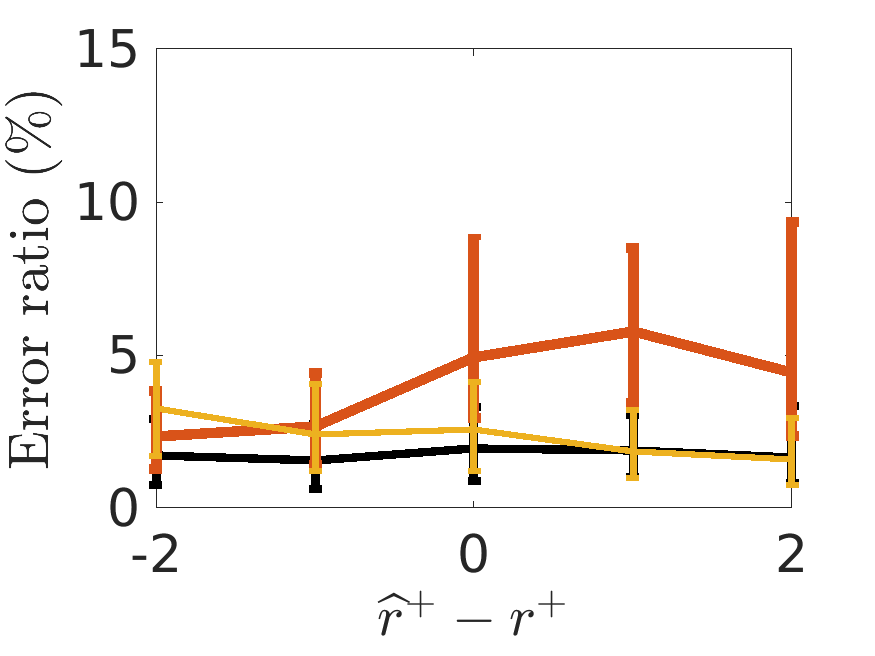}
\end{minipage}
\begin{minipage}{0.3\textwidth}
\includegraphics[width=0.99\linewidth]{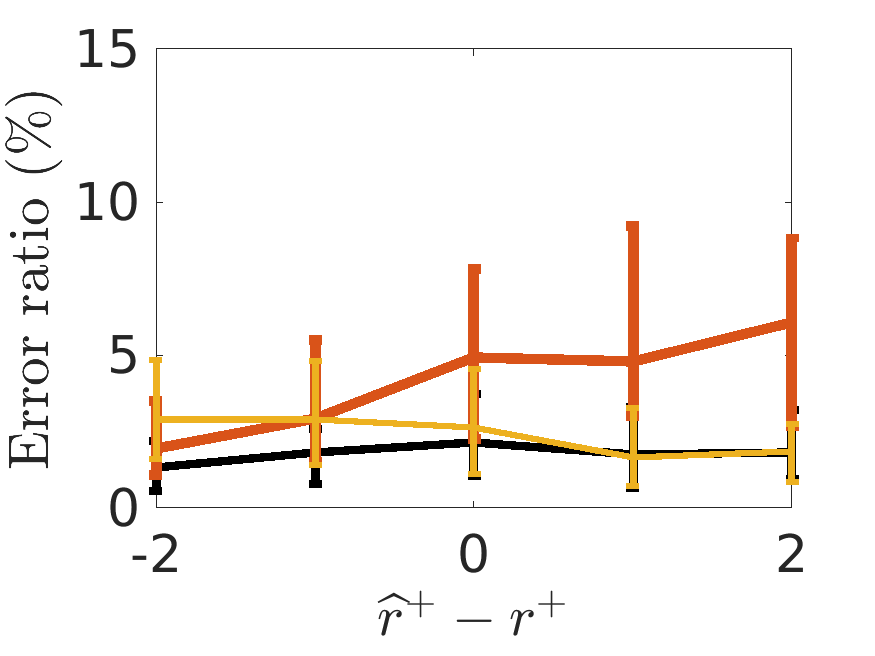}
\end{minipage}
\begin{minipage}{0.3\textwidth}
\includegraphics[width=0.99\linewidth]{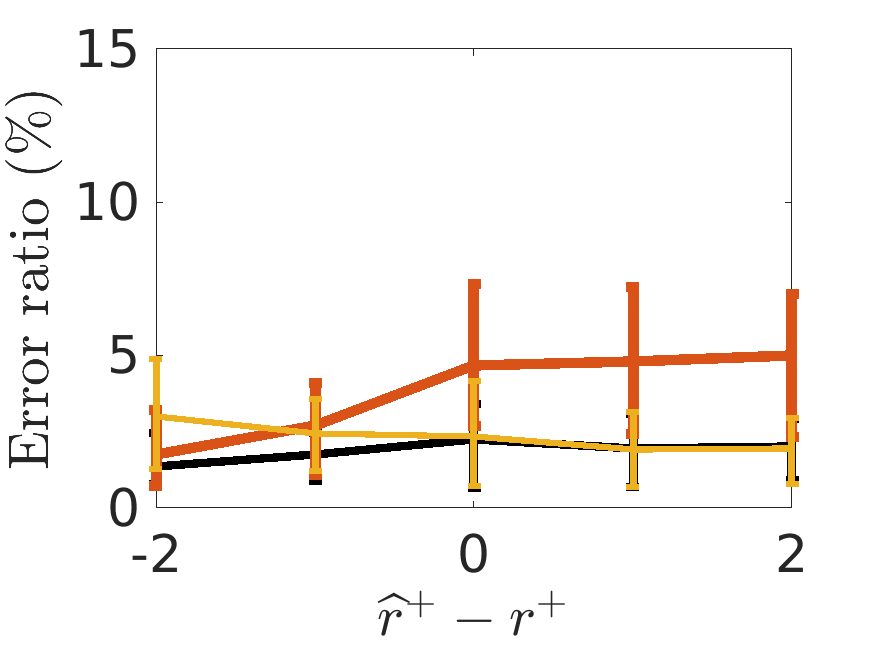}
\end{minipage}
\end{minipage}
\caption{\small Interquartile errorbars of error ratios of estimating  $d_{\min\{r^+,\widehat r^+\}}$ using $\widehat F_{\texttt{e}}(x)$, $\widehat F_{\texttt{T}}(x)$, and $\widehat F_{\texttt{imp}}(x)$, shown in black, yellow and red lines respectively. 
The one-sided noises are shown in the first row, where from left to right are the results when $D_A$ follows Mix2, Unif[1,10], and Fisher3n respectively. The two-sided noises are shown in the second row, where from left to right are the results when $D_A$ and $D_B$ follow Mix2 and Unif[1,10], Mix2 and Fisher3n, and Unif[1,10] and Fisher3n respectively.
We fix $n = 600$ and $p/n=1$.
\label{fig_dr+error_sup}}
\end{figure}

\begin{figure}[hbt!]

\begin{minipage}{1\textwidth}
\begin{minipage}{0.3\textwidth}
\includegraphics[width=0.99\linewidth]{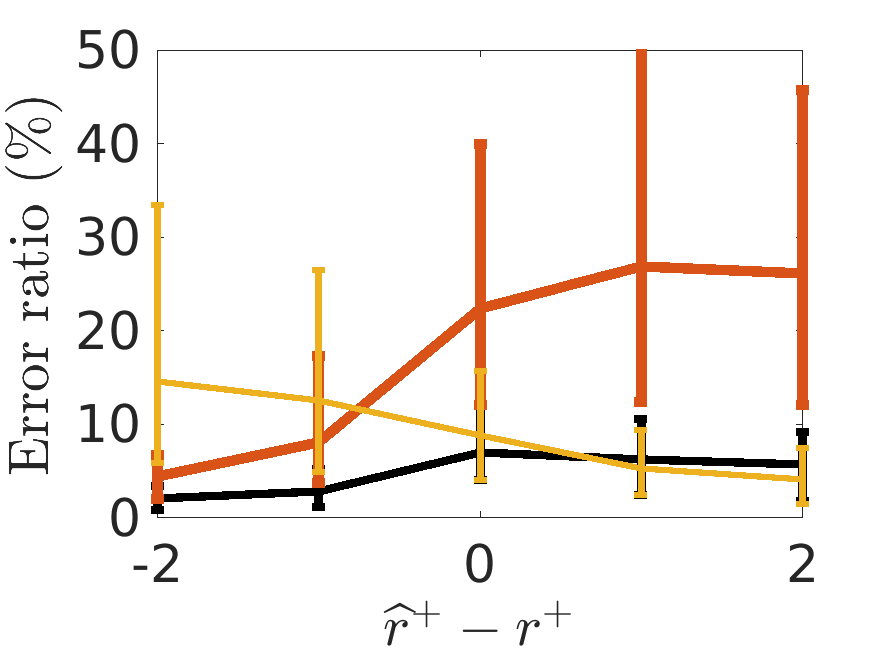}
\end{minipage}
\begin{minipage}{0.3\textwidth}
\includegraphics[width=0.99\linewidth]{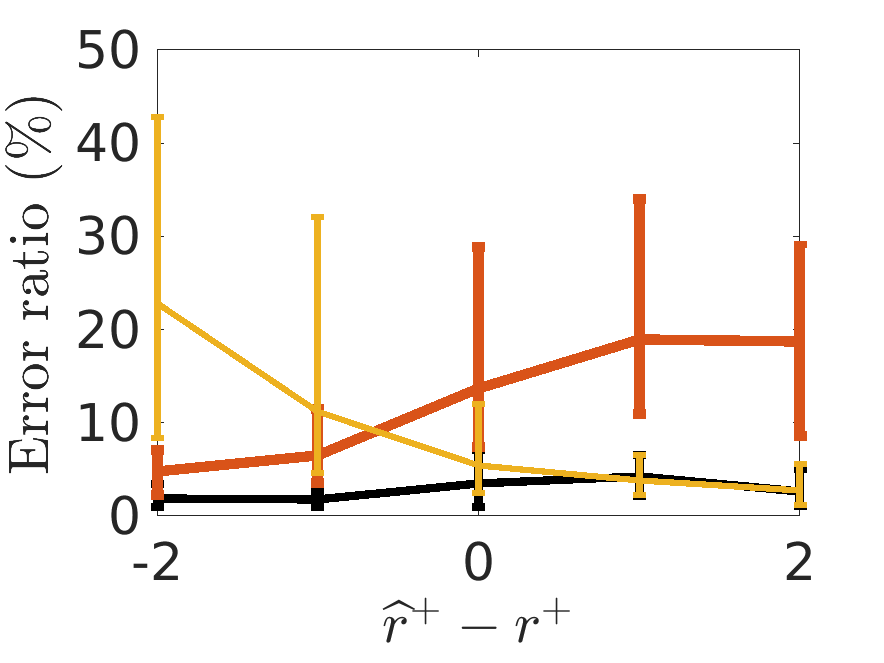}
\end{minipage}
\begin{minipage}{0.3\textwidth}
\includegraphics[width=0.99\linewidth]{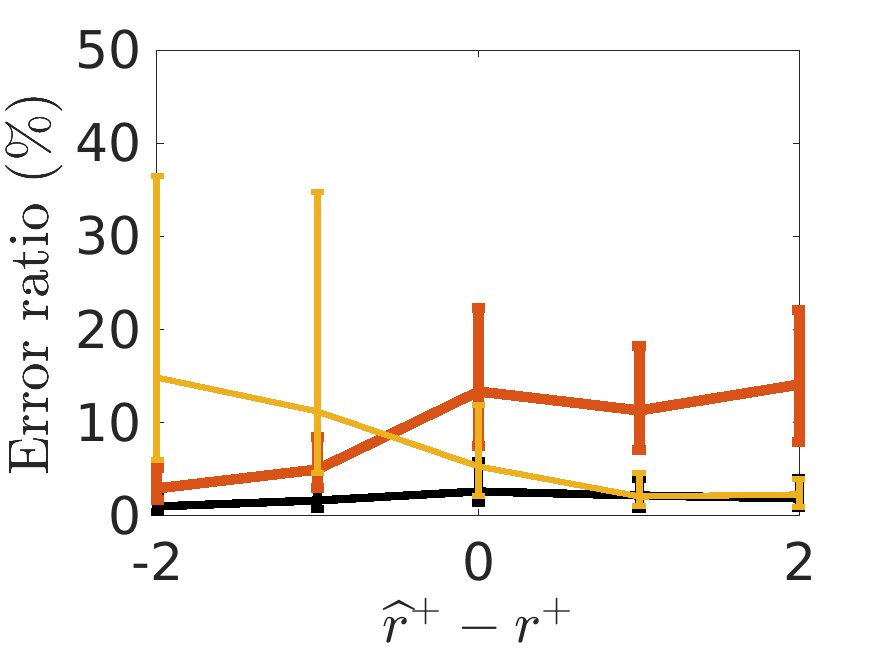}
\end{minipage}
\end{minipage}

\begin{minipage}{1\textwidth}
\begin{minipage}{0.3\textwidth}
\includegraphics[width=0.99\linewidth]{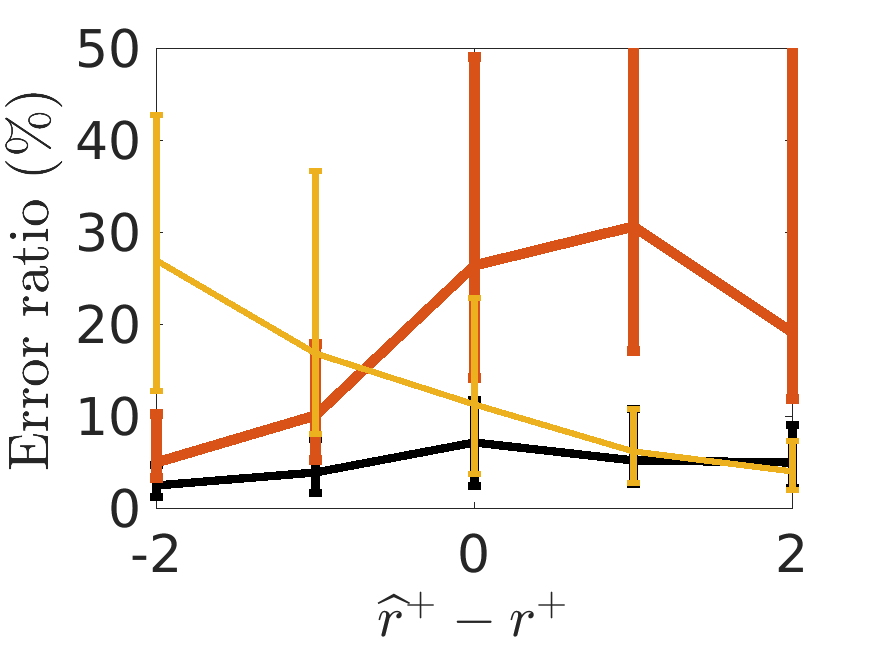}
\end{minipage}
\begin{minipage}{0.3\textwidth}
\includegraphics[width=0.99\linewidth]{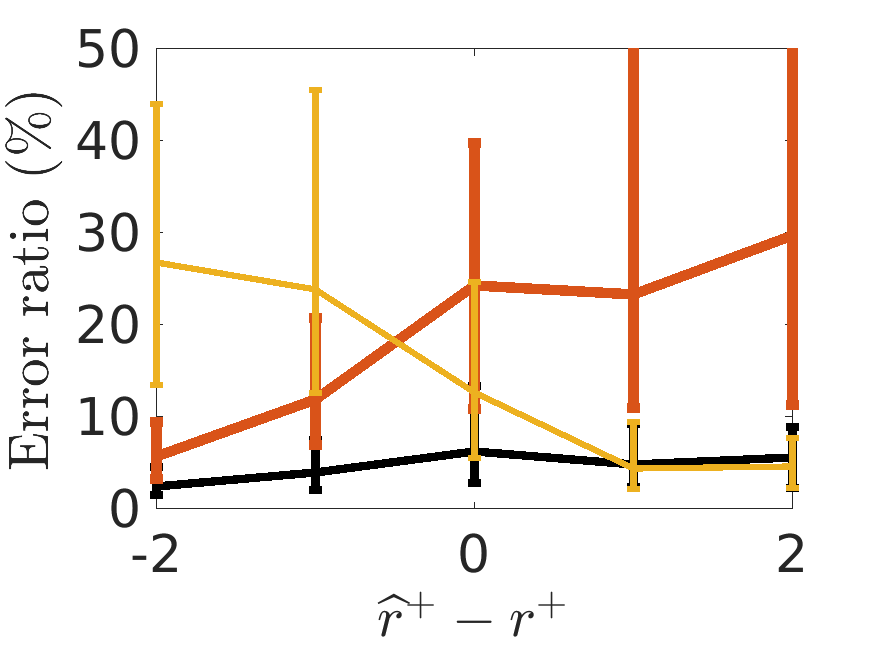}
\end{minipage}
\begin{minipage}{0.3\textwidth}
\includegraphics[width=0.99\linewidth]{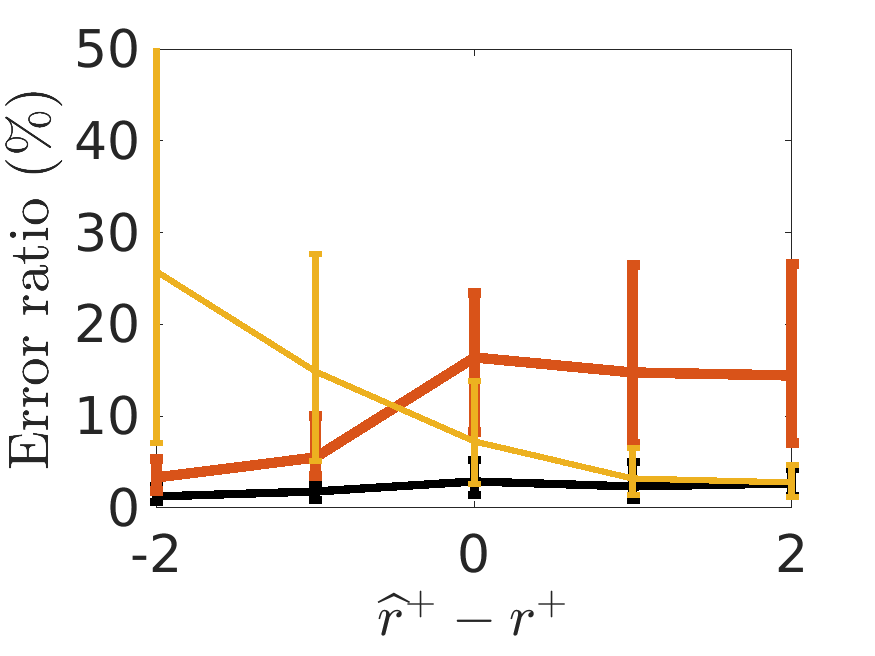}
\end{minipage}
\end{minipage}
\caption{\small Interquartile errorbars of error ratios of estimating  $\sqrt{a_{1,\min\{r^+,\widehat r^+\}}a_{2,\min\{r^+,\widehat r^+\}}}$ using $\widehat F_{\texttt{e}}(x)$, $\widehat F_{\texttt{T}}(x)$, and $\widehat F_{\texttt{imp}}(x)$, shown in black, yellow and red lines respectively. 
The one-sided noises are shown in the first row, where from left to right are the results when $D_A$ follows Mix2, Unif[1,10], and Fisher3n respectively. The two-sided noises are shown in the second row, where from left to right are the results when $D_A$ and $D_B$ follow Mix2 and Unif[1,10], Mix2 and Fisher3n, and Unif[1,10] and Fisher3n respectively.
We fix $n = 600$ and $p/n=1$.
\label{fig_ar+error_sup}}
\end{figure}

\clearpage
\begin{table}[bht!]
 \scriptsize
\caption{ List of default notations, part 1.}\label{tab:notations}
 \begin{tabular}{  p{2cm}  p{10cm}  p{2cm} }
 \hline
 $S$  &   clean data matrix of size $p \times n$ & p.2\\
 $\beta_n$,$\beta$ & $\beta_n:= p/n \to \beta$ as $p,\ n \to \infty$ & p.7, p.2\\
 $r$ & rank of $S$ & p.2\\
 $\{d_i\}_{i=1}^r$ & singular values of $S$, $d_1\geq d_2\geq \ldots \geq d_r$ & p.2\\
 $\{\ub_i, \ \vb_i\}_{i=1}^r$ & left and right singular vectors of $S$ & p.2\\
 $Z$  &  $Z = A^{1/2}XB^{1/2}$,  noise matrix of size $p \times n$ & p.4\\
 $X$     &  noise matrix with i.i.d. entries & p.4\\ 
 $A$,$B$ & colorness and dependence for noise & p.4\\
 $\{\sigma_i^a\}_{i=1}^p$ & eigenvalues of $A$ & p.9\\
 $\{\sigma_i^b\}_{i=1}^n$ & eigenvalues of $B$ & p.9\\ 
 $\{\lambda_i\}_{i=1}^p$ & eigenvalues of $Z Z^\top $  & p.7\\
 $\wt S$ & $\wt S = S+Z$, noisy data matrix  & p.2\\ 
 $\{\wt\lambda_i\}_{i=1}^p$ & eigenvalues of $\wt S \wt S^\top $  & p.2\\
 $\{\wt\bxi_i\}_{i=1}^p$ & left singular vectors of $\wt S$  & p.2\\ 
 $\{\wt\bzeta_i\}_{i=1}^n$ & right singular vectors of $\wt S$  & p.2\\ 
 $\varphi$, $\varphi^*$ & the shrinker and optimal shrinkger  & p.2\\
 $\widehat  S_{\varphi}$ & Estimator constructed by shrinker $\varphi$.  & p.2\\
$\vartheta_{\texttt{SN}}$, $\widehat{r}_{\texttt{SN}}$ & The hard threshold and estimated rank from ScreeNot & p.5\\
 $\pi^{(n)}_H$ &  empirical spectral distribution (ESD) of  an $n \times n $ symmetric matrix $H$  & p.7\\
 $m_\nu(z)$ & Stieljes transform for probability measure $\nu$ for $z \in \mathbb C^+$ & p.7\\
 
 $\mathcal{G}_1, \ \mathcal{G}_2$ & Green functions for $ZZ^\top $ and $Z^\top  Z$ & p.7 \\
 $m_1, \ m_2$ & Stieltjes transforms of ESD of $ZZ^\top $ and $Z^\top  Z$  & p.7\\
 $\wt{\mathcal{G}}_1, \ \wt{\mathcal{G}}_2$ & Green functions for $\wt S \wt S^\top $ and $\wt S^\top  \wt S$ & p.7\\
 $\wt m_1, \ \wt m_2$ & Stieltjes transforms of ESD of $\wt S \wt S^\top $ and $\wt S^\top  \wt S$ & p.7\\
$\rho_{A\infty},\rho_{B\infty}$ & $\pi_A \to \rho_{A\infty}$ and  $\pi_B \to \rho_{B\infty}$ weakly & p.7\\
 $\mathsf{M}_{1\infty}(z),\ \mathsf{M}_{2\infty}(z)$ & $\beta \int\frac{x}{-z\left[1+x\mathsf{M}_{2\infty}(z) \right]} \rho_{A\infty}(d x)$,
 $\int\frac{x}{-z\left[1+x\mathsf{M}_{1\infty}(z) \right]} \rho_{B\infty}(d x)$ & p.8\\
$\wp_{1\infty}$, $\wp_{2\infty}$ & Corresponding densities derived from $\mathsf{M}_{1\infty}(z)$ and   $\mathsf{M}_{2\infty}(z)$ & p.8\\
$m_{1\infty}(z),\ m_{2\infty}(z)$ &   $\int\frac{1}{-z\left[1+x\mathsf{M}_{2\infty}(z) \right]} \rho_{A\infty}(d x)$,
 $\int\frac{1}{-z\left[1+x\mathsf{M}_{1\infty}(z) \right]} \rho_{B\infty}(d x)$ & p.8\\
$ \rho_{1\infty},\ \rho_{2\infty}$ & 
  Corresponding densities derived from $m_{1\infty}(z)$ and   $m_{2\infty}(z)$ & p.8\\
 
 $\mathsf{M}_{1c}(z)$, $\mathsf{M}_{2c}(z)$ & $\beta_n \int\frac{x}{-z\left[1+x\mathsf{M}_{2c}(z) \right]} \pi_A(d x)$,
 $\int\frac{x}{-z\left[1+x\mathsf{M}_{1c}(z) \right]} \pi_B(d x)$ & p.9\\
$\wp_{1c}$, $\wp_{2c}$ & Corresponding densities derived from $\mathsf{M}_{1c}(z)$ and   $\mathsf{M}_{2c}(z)$ & p.9\\
$m_{1c}(z)$, $m_{2c}(z)$ &   $\int\frac{1}{-z\left[1+x\mathsf{M}_{2c}(z) \right]} \pi_A(d x)$,
 $\int\frac{1}{-z\left[1+x\mathsf{M}_{1c}(z) \right]} \pi_B(d x)$ & p.9\\
$ \rho_{1c},\ \rho_{2c}$ & 
  Corresponding densities derived from $m_{1c}(z)$ and   $m_{2c}(z)$ & p.9\\ 
  \hline
\end{tabular}
\end{table}

\begin{table}[bht!]
 \scriptsize
\caption{\label{tab:notations2} List of default notations, part 2.}
 \begin{tabular}{  p{2cm}  p{10cm} p{2cm}  }
 \hline

 $\lambda_+$ & the right most edge of support of $\wp_{1c}$, $\wp_{2c}$,  $\rho_{1c}$, and $\rho_{2c}$ on $(0,\infty)$ & p.10\\
 $\gamma_j$ & $\gamma_j:= \sup_x\{\int_x^{+\infty}\rho_{1c}(x)dx > \frac{j-1}{n}\}$, the classical locations & p.10\\
 $\mathcal{T}(z)$ & $\mathcal{T}(z) = zm_{1c}(z)m_{2c}(z)$, the $D$-transform of $\rho_{1c}$ & p.10\\
 $\alpha$ & threshold $\alpha = 1/\sqrt{\mathcal T(\lambda_+)}$ & p.10\\
 $\phi_n$ & bound for entries of $X$ & p.10\\
  $r^+$ & effective rank, $d_k -\alpha > \phi_n + n^{-1/3}$ if and only if $1\leq k\leq r^+$ & p.12\\
   $G_\ub^n$, $G_\vb^n $ & independent matrices for generating $\ub_i$ and $\vb_i$  & p.12\\
    $\nu$ & probability measure of entires in $G_\ub^n$ and $G_\vb^n $  & p.12\\
$\Delta(d_i)$ & $(d_i-\alpha)^{1/2}$ & p.15\\
 $\mathbb O_+$ & the set $\{1,\cdots,r^+\}$ & p.15\\ 
$\nu_i(\textbf{A})$ & defined as $\begin{cases}
   \min_{j \notin \textbf{A}} |d_j - d_i|, & \textup{ if } i \in \textbf{A}, \\
   \min_{j \in \textbf{A}} |d_j - d_i|, & \textup{ if } i \notin \textbf{A}
\end{cases}$ & p.16\\
${\mathcal{P}}_{\bf A},\ {\mathcal{P}}'_{\bf A}$ & the random projections for  $\textbf{A} \subset \mathbb O^+$, ${\mathcal{P}}_{\bf A}:=\sum_{k\in \bf A} \wt{\xi}_k\wt{\xi}_k^\top $ and ${\mathcal{P}}'_{\bf A}:=\sum_{k\in \bf A} \wt{\zeta}_k\wt{\zeta}_k^\top  $ & p.17\\
$\widehat{\lambda}_{+}$ &  $\widehat\lambda_+ := \wt\lambda_{\lfloor n^c\rfloor+1} + \frac{1}{2^{2/3}-1}\left( \wt\lambda_{\lfloor n^c\rfloor +1}-\wt\lambda_{2\lfloor n^c\rfloor+1} \right)$, the estimator for $\lambda_+$, & p.20\\ 
$\widehat r^+$ & $\widehat r^+: = \big|\{\wt\lambda_i| \wt\lambda_i>\widehat\lambda_++n^{-1/3}\} \big|$, the estimator of $r^+$, & p.20\\
$\widehat{\lambda}_{j}$ & $\widehat \lambda_j:=\wt\lambda_{\lfloor n^c\rfloor + \widehat r^++1} + \frac{1-\left(\frac{j-\widehat r^+-1}{\lfloor n^c\rfloor}\right)^{2/3}}{2^{2/3}-1}\left( \wt\lambda_{\lfloor n^c\rfloor + \widehat r^++1}-\wt\lambda_{2\lfloor n^c\rfloor + \widehat r^++1} \right)$, the estimator of $\lambda_j$  & p.20\\ 
$\widehat F_{e}$ &$\widehat F_{\texttt{e}}(x) := \frac{1}{p-\widehat r^+} \left( \sum_{j=\widehat r^++1}^{\lfloor n^c\rfloor+\widehat r^+} \mathbbm 1(\widehat\lambda_{j} \le x)+\sum_{j=\lfloor n^c\rfloor+\widehat r^++1}^p \mathbbm 1(\wt\lambda_{j}\leq x)\right)$,  the estimated CDF of $\pi_{ZZ^{\T}}$  & p.21 \\
$\widehat m_{e,1,i}$, $\widehat m_{e,2,i}$ & Estimators of $m_{1c}(\wt\lambda_i)$ and $m_{2c}(\wt\lambda_i)$ & p.21\\
$\widehat m_{e,1,i}'$, $\widehat m_{e,2,i}'$ & Estimators of $m_{1c}'(\wt\lambda_i)$ and $m_{2c}'(\wt\lambda_i)$ & p.21\\
$\widehat \varphi_{e,i}$ & estimator of $\varphi^*_i = \varphi^*(\wt\lambda_i)$ by eOptShrink & p.22\\
\hline
\multicolumn{2}{l} {$\theta(x) := \mathcal T^{-1}(x^{-2}), \ a_1(x):=\frac{m_{1c}(\theta(x))}{x^2 \mathcal{T}'(\theta(x))}, \   a_2(x):=\frac{m_{2c}(\theta(x))}{x^2 \mathcal{T}'(\theta(x))}$}  &p.14\\
\multicolumn{2}{l}{$\varkappa_i:=i^{2/3}n^{-2/3},$ for $i = 1, \ldots, n$} & p.16\\ 
\multicolumn{2}{l}{$\eta_i:=n^{-3/4}+n^{-5/6} i^{1/3} +  n^{-1/2}\phi_n$} & p.16\\

 \multicolumn{2}{l}{$\psi_1(d_i) := \phi_n + \frac{n^{-1/2}}{\Delta(d_i)}$} & p.17\\
\multicolumn{2}{l}{$\widehat{\mathcal{T}}_{e,i}= \wt\lambda_i \widehat{m}_{e,1,i} \widehat{m}_{e,2,i}, \quad \widehat{\mathcal{T}}'_{e,i}=\widehat{m}_{e,1,i} \widehat{m}_{e,2,i}+\wt\lambda_i \widehat{m}'_{e,1,i} \widehat{m}_{e,2,i}+\wt\lambda_i \widehat{m}'_{e,2,i} \widehat{m}_{e,1,i}$} & p.21\\
\multicolumn{2}{l} {$\widehat{d}_{e,i}=\sqrt{\frac{1}{\widehat{\mathcal{T}}_{e,i}}}, \quad \widehat{a}_{e,1,i}=\frac{\widehat{m}_{e,1,i}}{\widehat{d}_{e,i}^2 \widehat{\mathcal{T}}'_i} \ \mbox{ and } \  \widehat{a}_{e,2,i}=\frac{\widehat{m}_{e,2,i}}{ \widehat{d}_{e,i}^2\widehat{\mathcal{T}}'_i}$} & p.21\\
\multicolumn{2}{l}{$S(\varsigma_1,\varsigma_2):= \{z = E+i\eta: \lambda_+ -\varsigma_1 \leq E \leq \varsigma_2\lambda_+, 0<\eta \leq 1\}$} & p.S.2\\
\multicolumn{2}{l}{$\kappa_z:= |E - \lambda_+|$, for $z=E+i\eta\in \mathbb{C}^+$} & p.S.2\\
\multicolumn{2}{l}{$\Psi(z):= \sqrt{\frac{\text{Im } \mathsf m_{2c}(z)}{n\eta}} +\frac{1}{n\eta}$}& P.S.4\\
\multicolumn{2}{l}{$\qquad  \text{dist} \Big (x, \text{Spec}({ZZ^\top }) \Big )>n^{-1+\epsilon} \alpha_+^{-1} + n^{\epsilon}\eta_l(x) \Big\}$} \\
\multicolumn{2}{l}{ $   \mathbf{D}_1(\tau_1,\varsigma):= \{z = E+i\eta: \lambda_+<E<\varsigma,\, -\tau_1<\eta<\tau_1\}$} & p.S.4\\
\multicolumn{2}{l}{$ \mathbf{D}_2(\tau_2,\varsigma):= \{\zeta = E+i\eta: \alpha<E<1/\sqrt{\mathcal{T}(\varsigma)},\, -\tau_2<\eta<\tau_2\}$} & p.S.4\\
\multicolumn{2}{l}{$\eta_\ell(E)\mbox{ satisfies }n^{1/2} \left[\Psi^2(E+i\eta_l(E))+ \frac{\phi_n}{n\eta_l(E)}\right] =1$} & p.S.5\\
\multicolumn{2}{l}{$H(z)
:=
\begin{pmatrix}
0 & z^{1/2} Z  \\
z^{1/2}Z^\top  & 0
\end{pmatrix}, \quad   
\widetilde{H}(z):=
\begin{pmatrix}
0 & z^{1/2} \widetilde{S} \\
z^{1/2} \widetilde{S}^\top  & 0
\end{pmatrix}$}& p.S.6\\
\multicolumn{2}{l}{
$G(z):=(H-zI_{p+n})^{-1},\quad \widetilde{G}(z):=(\widetilde{H}-zI_{p+n})^{-1},\quad\Db=
\begin{pmatrix}
0 & z^{1/2} D \\
z^{1/2} D & 0
\end{pmatrix}, \quad
\Ub=\begin{pmatrix}
U & 0 \\
0 &  V
\end{pmatrix}$} & p.S.6\\
\multicolumn{2}{l}{
$\Pi (z):=\left( {\begin{array}{*{20}c}
   { \Pi_1}(z) & 0  \\
   0 & { \Pi_2}(z)  \\
\end{array}} \right),\quad \Pi_1(z):  =-z^{-1}\left(1+\mathsf{M}_{2c}(z)A \right)^{-1},\quad  \Pi_2(z):=- z^{-1} (1+\mathsf{M}_{1c}(z)B )^{-1}.$} & p.S.7\\
\multicolumn{2}{l}{$S_0(\varsigma_1,\varsigma_2,\omega):= S(\varsigma_1,\varsigma_2) \cap \{z = E+i\eta: \eta\geq n^{-1+\omega}\}$} & p.S.8\\

\multicolumn{2}{l}{$\wt S_0(\varsigma_1,\varsigma_2,\omega):=S_0(\varsigma_1,\varsigma_2,\omega) \cap \left\{z = E+ i \eta:  n^{1/2}\left( \Psi^2(z)+\frac{\phi_n}{n\eta}\right) \le n^{-\omega/2}\right\}$} & p.S.8\\
\multicolumn{2}{l}{$ S_{out}(\varsigma_2,\omega):=\{E+i\eta:\lambda_+ + n^{\omega}(n^{-2/3}+n^{-1/3}\phi_n^2) \leq E \leq \varsigma_2 \lambda_+, \eta \in [0,1]\}$} & p.S.8\\

\multicolumn{2}{l}{$\overline\Pi(z):=\begin{pmatrix}
m_{1c}(z)I_r & 0 \\
0 & m_{2c}(z) I_r
\end{pmatrix},\quad \Omega(z):= \Ub^\top  G(z)\Ub-\overline{\Pi}(z)$} & p.S.8\\
\multicolumn{2}{l}{$I_i:=[\theta(d_i)-n^{\epsilon}\omega(d_i), \theta(d_i)+n^{\epsilon}\omega(d_i)]$} & p.S.10\\
\multicolumn{2}{l}{$\omega(d_i):= \phi_n\Delta^2(d_i)+n^{-1/2}\Delta(d_i)$}& p.S.11\\
\multicolumn{2}{l}{$I_0:=[0, \lambda_+ + n^{3\epsilon}\phi_n^2+n^{-2/3+3\epsilon}]$} & p.S.11\\
\multicolumn{2}{l}{ $I:= I_0 \cup \bigcup_{i\in \mathbb{O}_\epsilon} I_i$} & p.S.11\\
\multicolumn{2}{l}{
$\Omega_i:=
\Big\{x \in [\lambda_{i-r-1}, \lambda_+ + c_0n^{2\epsilon} (\phi_n^2 +n^{-2/3})]: $} & p.S.15\\
\hline
\end{tabular}
\end{table}

\end{document}